\documentclass[leqno,11pt]{article}

\usepackage[english]{babel}
\usepackage[utf8]{inputenc} 

\DeclareUnicodeCharacter{03C6}{$\phi$}
\DeclareUnicodeCharacter{2082}{\textsubscript{2}}

\usepackage[T1]{fontenc} 
\usepackage{csquotes} 

\usepackage[
 backend=biber,
 maxnames=99,
 minnames=99
]{biblatex}
\renewbibmacro{in:}{}
\usepackage{anyfontsize} 
\usepackage[top=2.3cm,bottom=2.3cm,right=2.3cm,left=2.3cm]{geometry}
\usepackage{titling}
\usepackage{setspace} 
\usepackage{float} 

\usepackage{physics} 

\usepackage{mathtools} 
\usepackage{amsthm} 
\usepackage{amssymb} 
\usepackage{mathrsfs} 
\usepackage{bbm} 
\usepackage{array} 
\usepackage{upgreek} 

\usepackage{appendix}
\usepackage{chngcntr}

\usepackage[dvipsnames]{xcolor} 

\usepackage[pdfpagelabels,hypertexnames=false]{hyperref}

\definecolor{mylinkcolor}{named}{magenta}
\definecolor{myurlcolor}{named}{Blue}
\definecolor{mycitecolor}{named}{RoyalBlue}
\hypersetup{
colorlinks=true,citecolor=mycitecolor,linkcolor=mylinkcolor,urlcolor=Blue
}

\renewcommand{\eqref}[1]{%
 \textup{\textcolor{mylinkcolor}{\hyperref[#1]{(\ref*{#1})}}}%
}

\numberwithin{equation}{section}

\newcommand{\mytheoremname}{Theorem}
\newcommand{\mylemmaname}{Lemma}
\newcommand{\mypropname}{Proposition}
\newcommand{\mycorname}{Corollary}
\newcommand{\myexname}{Example}
\newcommand{\myremname}{Remark}
\newcommand{\mydefname}{Definition}

\newtheorem{theorem}{\protect\mytheoremname}[section]
\newtheorem{lemma}[theorem]{\protect\mylemmaname}
\newtheorem{prop}[theorem]{\protect\mypropname}
\newtheorem{cor}[theorem]{\protect\mycorname}
\theoremstyle{definition}
\newtheorem{example}[theorem]{\protect\myexname}
\newtheorem{remark}[theorem]{\protect\myremname}

\newcounter{hypo}
\newenvironment{hyp}{
 \begin{enumerate}
\setcounter{enumi}{\value{hypo}} \item}{\stepcounter{hypo} \end{enumerate}}

\newcommand{\egaldef}{=:}
\newcommand{\defegal}{:=}

\renewcommand{\tilde}{\widetilde}
\renewcommand{\hat}{\widehat}
\renewcommand{\bar}{\overline}
\renewcommand{\leq}{\leqslant}
\renewcommand{\geq}{\geqslant}

\newcommand\N{\mathbb{N}}
\newcommand\R{\mathbb{R}}
\newcommand\C{\mathbb{C}}
\newcommand\eps{\varepsilon}
\newcommand\rmi{\mathrm{i}}
\newcommand\e{\mathrm{e}}

\newcommand{\1}{\mathbbm{1}}
\newcommand{\ent}[1]{\left\lfloor #1 \right\rfloor}

\renewcommand\Tr{\mathrm{Tr}} 
\newcommand{\Vect}{\mathrm{Span}}
\newcommand{\Four}{\mathcal{F}} 

\newcommand{\ot}{\otimes}
\newcommand{\ots}{\otimes_{\mathrm{s}}}
\newcommand{\tensn}[2]{\ot^{#2} #1}
\newcommand{\tensymn}[2]{\ots^{#2} #1}

\newcommand{\symformsymb}{\mathsf{w}}
\newcommand{\symform}[2]{\symformsymb\left(#1,#2\right)}
\newcommand{\ad}[1][]{\mathrm{ad}_{#1}}

\newcommand{\Hi}{\mathcal{H}}
\newcommand{\hi}{\mathcal{Z}}
\newcommand{\normhi}[1]{\abs{#1}}
\newcommand{\conjug}{\mathfrak{c}}

\newcommand{\symop}[1][]{\mathcal{S}_{#1}}

\newcommand{\focks}{\Gamma_{\mathrm{s}}(\hi)}
\newcommand{\focksalg}{\Gamma_{\mathrm{s}}^{\mathrm{fin}}(\hi)}

\newcommand{\wickbigcore}{\mathscr{H}_0}
\newcommand{\wickcore}{\mathcal{D}_c}
\newcommand{\wickpol}[1][]{\mathcal{P}_{#1}(\hi)}
\newcommand{\wickhom}[1]{\mathcal{P}_{#1}(\hi)}
\newcommand{\wickleq}[1]{\mathcal{P}_{\leq #1}(\hi)}
\newcommand{\Wick}[2][]{\mathrm{Wick}_{#1}\left(#2\right)}
\newcommand{\wickdots}[1]{:#1:}
\newcommand{\compwick}[1][]{\ \sharp_{#1}\ } 

\newcommand{\ppde}{P(\phi)_2}
\newcommand{\FVclass}{\mathcal{K}(\hi)}

\newcommand{\creasymb}[1][]{\mathrm{a}^{*}_{#1}}
\newcommand{\crea}[2][]{\creasymb[#1]\left(#2\right)}
\newcommand{\annisymb}[1][]{\mathrm{a}_{#1}}
\newcommand{\anni}[2][]{\annisymb[#1]\left(#2\right)}
\newcommand{\fieldop}[2][]{\phi_{#1}\left(#2\right)}
\newcommand{\weylop}[2][]{\mathrm{W}_{#1}\left(#2\right)}
\newcommand{\nbop}[1][]{\mathrm{N}_{#1}}
\newcommand{\liftop}[1]{\Gamma\left(#1\right)}
\newcommand{\dG}[2][]{\dd \Gamma_{#1}\left(#2\right)}
\newcommand{\gat}[2]{\partial_z^{#1}#2}
\newcommand{\gatbar}[2]{\partial_{\bar{z}}^{#1}#2}
\newcommand{\gatgatbar}[3]{\partial_z^{#1}\partial_{\bar{z}}^{#2}#3}
\newcommand{\gatdotgatbar}[3]{\partial_z^{#1} #2 \cdot \partial_{\bar{z}}^{#1}#3}

\begin{document}

\title{\Large\textbf{Higher-Order Approximation of Coherent State Dynamics in Self-Interacting Quantum Field Theories\\ \vskip0.5cm}}

\makeatletter
\renewcommand{\@fnsymbol}[1]{%
 \ifcase#1\or
 $\,\sharp$\or
 $\,\natural$\or
 $\,\flat$\else
 \@arabic{#1}%
 \fi
}
\makeatother

\author{Zied~\textsc{Ammari}\thanks{LMB - UMR6623 CNRS, Université Marie \& Louis Pasteur. 16 route de Gray, 25030 Besançon Cedex, France. \href{mailto:zied.ammari@univ-fcomte.fr}{zied.ammari@umlp.fr}}\;, Julien~\textsc{Malartre}\thanks{Université Sorbonne Paris Nord, Laboratoire Analyse, Géométrie et Applications, LAGA,
CNRS, UMR 7539, F‐93430, Villetaneuse, France. \href{mailto:malartre@math.univ-paris13.fr}{malartre@math.univ-paris13.fr}}\;, Maher~\textsc{Zerzeri}\thanks{Université Sorbonne Paris Nord, Laboratoire Analyse, Géométrie et Applications, LAGA,
CNRS, UMR 7539, F‐93430, Villetaneuse, France. \href{mailto:zerzeri@math.univ-paris13.fr}{zerzeri@math.univ-paris13.fr}}}

\date{March 6, 2026}

\maketitle

\abstract{
\noindent We study the propagation of coherent states in self-interacting bosonic quantum field theories in the semi-classical (mean-field) regime. Relying on Hepp's method and a detailed analysis of the associated classical and quantum field dynamics, non-linear and linear respectively, we construct an asymptotic expansion of arbitrary order for the quantum evolution of coherent states. The results are first established for the spatially cutoff $\ppde$ model, under standard assumptions ensuring essential self-adjointness of the Hamiltonian and well-posedness of the classical flow, and are then extended to a class of non-polynomial analytic interactions. This work refines and generalizes earlier results, which identified only the leading-order term of the expansion.} 

\medskip\noindent
\textbf{Key words.} Classical limit, Coherent states, QFT, Wick quantization, Fock spaces, $\ppde$ model, Analytic interaction, Hypercontractivity. 
	
\medskip\noindent
\textbf{2020 Mathematics Subject Classification.} 81R30, 81Q20, 81T08, 34A34, 35A01, 81V80.
 
 \tableofcontents
	
	\section*{Introduction}\label{sect.intro}
	
	Bohr's correspondence principle establishes a profound bridge between classical and quantum mechanics, asserting that the predictions of these two theories coincide in the limit of large quantum numbers. From a mathematical perspective, this principle leads naturally to the study of Schrödinger-type equations of the form 
	\begin{equation}\label{schrodinger_type_equation_intro}
	 \left\{\begin{array}{l}
	 \rmi\, \eps\, \partial_t \psi_t = H_\eps\, \psi_t\\
	 \eval{\psi_t}_{t=0} = \varphi_\eps\,,
	 \end{array}\right.
	\end{equation} 
	in the semi-classical regime $\eps \to 0$. This is precisely the realm of semi-classical analysis (see for instance \cite{DiSj99,Zw12}), a domain that provides rigorous formulations of Bohr’s principle, often referred to as the classical limit.
	
	\smallskip\noindent
	The connection between classical and quantum dynamics becomes especially striking when the initial datum $\varphi_\eps$ in \eqref{schrodinger_type_equation_intro} is chosen to minimize the uncertainty principle. Such states, known as coherent states, were first introduced by E.~Schrödinger in 1926 \cite{Sc26} as the most classically localized quantum states. The terminology \emph{coherent states} was later coined by J.~R.~Glauber in 1963 in the context of laser physics \cite{Gl63}. Since then, coherent states have developed into a central theme of both physics and mathematics, see for example \cite{CoRo21}. Coherent states continue to attract increasing attention owing to their involvement in various mathematical and physical fields such as condensed matter, quantum optics and quantum information theory \cite{BoPePiSo21,HaJo00,LeMiRaScSe21,LaRDStLeTz25}.
	
	\medskip
	
	A decisive advance was made in 1974, when K.~Hepp \cite{He74} introduced a rigorous framework clarifying the crucial role of coherent states in the classical limit. His groundbreaking method proved remarkably fruitful, inspiring a wide range of subsequent works devoted to semi-classical dynamics with a coherent state approach. In particular, it allows one to derive a full asymptotic expansion in powers of $\eps^{1/2}$ for solutions of the Schrödinger equation \eqref{schrodinger_type_equation_intro} with coherent state initial data \cite{Ro07}. 
	
	Hepp's method has since been extended to a variety of asymptotic regimes, including the mean-field limit, where the semi-classical parameter $\eps$ corresponds to the inverse particle number, as well as to infinite-dimensional phase space analysis and quantum field models (see respectively \cite{GiVe79,RoSc09,AmBr12}, \cite{AmNi08}, and \cite{Do81}). In the latter setting, it was shown in \cite{AmZe14} that under a certain class of self-interacting bosonic quantum field Hamiltonians, coherent states remain localized along classical trajectories, while undergoing a squeezing governed by a unitary Bogoliubov transformation. 
	
	The aim of the present paper is to further develop this line of research by providing a complete asymptotic description of coherent-state evolution in the classical limit, using Hepp's method in the spirit of D.~Robert's analysis for Schrödinger operators \cite{Ro07}. We focus in particular on the spatially cutoff $\ppde$ model, chosen for its fundamental role in theoretical physics (see \cite{GlJa85} and also \cite{DeGe00}). For this model, the classical field equation is given by a non-linear Klein-Gordon equation \eqref{NLKG_Pphi2} which is globally well-posed for initial data in the energy space, see Theorem \ref{thm.existence_solution_mild_globale_eq_classique_avec_partie_libre_Pphi2}. More generally, we study certain models of self-interacting bosonic quantum field Hamiltonians with analytic interactions, including a variant of the H\o egh-Krohn model \cite{HK69}.
	
	\medskip
	
	 Following the framework of \cite{AmZe14}, we adopt a dynamical point of view to study the classical limit. A different approach, based on a variational perspective, is discussed in \cite{Ai12, Ar96}. It is also worth mentioning that an alternative approach was developed in \cite{AmNi08}, based on an extension of Wigner (or semi-classical) measures to infinite-dimensional phase spaces. To date, this method has been applied mainly to many-body Hamiltonians in the mean-field regime, to quantum electrodynamics and to Yukawa and Polaron type models in different scaling (see e.g.~\cite{AmFa17,AmFaHi25,Fa25,CoFaOl23}). Its possible extension to self-interacting quantum field theory models lies beyond the scope of the present work and is left for future investigation.
	 
	 Using Hepp’s method and a systematic expansion of the quantum dynamics around the associated classical non-linear field equation, we construct an asymptotic expansion of arbitrary order for the time evolution of coherent states in the classical limit. The leading order is governed by the classical Hamiltonian flow, while subleading terms are described by a time-dependent quadratic (Bogoliubov) dynamics and higher-order Wick-ordered corrections. For each fixed time, we obtain norm-accurate expansions with explicit control of the remainder. These results refine and generalize previous work of the first and third authors \cite{AmZe14} as well as the work of K.~Hepp \cite{He74}, where only the leading-order term of the semi-classical expansion was established. 
	 
In particular, we clarify the study of the non-linear Klein-Gordon $\ppde$ field equation, by proving the existence and uniqueness of global mild solutions to \eqref{premiere_apparition_eq_classique_avec_partie_libre_Pphi2} for each initial data in the physical space $L^2(\R)$, see Corollary \ref{cor.existence_sol_mild_global_L2_Pphi2}. As a consequence, we show the propagation of coherent states along the classical trajectories departing from any initial data in $L^2(\R)$ and for arbitrary finite time below the Ehrenfest time, see Theorem \ref{thm.Pphi2_asymptotique_complete}. The above result requires an additional assumption on the spatial cutoff function $g$, see \eqref{premiere_apparition_spatially_cutoff_Pphi2_Hamiltonian}, which is taken furthermore to be in the Sobolev space $H^1(\R)$ instead of $L^2(\R)$. For further details, see Subsection \ref{subsect.The_classical_field_ equation_Pphi2} and Remark \ref{rem.existence_unicite_solution_H^s_Pphi2}. 

For the more general classical non-linear field equation \eqref{eq_classique_avec_partie_libre} with analytic non-linearity, formulated on an abstract Hilbert space $\hi$, we remark that the argument in \cite{AmZe14}, used to construct global mild solutions to \eqref{eq_classique_avec_partie_libre} on the $\hi$-space, is incorrect even though the conclusion itself may still hold. Nevertheless, we still have existence and uniqueness of local mild solutions to \eqref{eq_classique_avec_partie_libre} for any initial data in $\hi$, see Theorem \ref{thm.existence_solution_mild_eq_classique_avec_partie_libre}. In this setting, we accordingly prove the propagation of coherent states, with higher order approximation, along the classical trajectories during their lifespans and below the Ehrenfest time, see Theorem \ref{thm.general_asymptotique_complete}.

	\medskip\noindent
	\textbf{Overview of the paper.} In Section \ref{sect.prelim}, we briefly review the formalism of quantum field theory and Wick quantization, and we state our two main results: Theorem \ref{thm.general_asymptotique_complete} addresses a broad class of analytic interactions as described in \cite{AmZe14}, while Theorem \ref{thm.Pphi2_asymptotique_complete} provides more detailed and explicit results tailored specifically to the $\ppde$ model. Section \ref{sect.pphi2} begins with a short introduction to the $\ppde$ model and a global well-posedness result for its associated classical field equation. Next, we establish two Theorems \ref{thm.quadratic_dynamics_Ammari_Breteaux_chaospropag} and \ref{thm.Breteaux_conjug_Wick_propag_quad} on the dynamics of time-dependent quadratic Hamiltonians. The first result (Theorem \ref{thm.quadratic_dynamics_Ammari_Breteaux_chaospropag}) shows the existence and uniqueness of unitary propagators for time-dependent quadratic operators, it is a consequence of a general commutator method proved in \cite[Appendix C]{AmBr12}. The second Theorem \ref{thm.Breteaux_conjug_Wick_propag_quad} concerns the quantum quadratic evolution of Wick observable, which general form is due to S.~Breteaux \cite{Br12}. We complete the proof of Theorem \ref{thm.Pphi2_asymptotique_complete} in Subsection \ref{subsect.Proof_Thm_Pphi2}. Section \ref{sect.analytic_interactions} is devoted to the proof of Theorem \ref{thm.general_asymptotique_complete}. Since many arguments are similar to the case $\ppde$, we only highlight the innovative aspects. In Appendix \ref{app.weyl_op_nb_domain}, we establish number–type estimates for Weyl operators, which allows us to show that these operators preserve the domains of all powers of the number operator. Such information provides precise control over the approximation remainder. 
	

	\section{Background and main results}\label{sect.prelim}
	
	The symmetric Fock space is the natural Hilbert space where the Hamiltonians of free Bose fields are well described and expressed in a very explicit manner. Furthermore, we have at hand a wave representation of the Canonical Commutation Relations, see Theorem \ref{thm.rpz_ondulatoire_espace_Fock} below, which enables us to describe quantum field Hamiltonians either in particle representation or in wave representation. With the latter description, we will see that the interaction terms considered in this work take a particularly convenient form, which simplifies their study.
	
	\medskip
	
In order to make the paper as self-contained as possible, we introduce in the next subsection some basic notions of quantum field theory, following \cite{AmZe14}. 
	
\subsection{Formalism of Quantum Field Theory}\label{subsect.formalism_QFT}
	
	Let $\hi$ be a separable, complex Hilbert space which inner product $\braket{\cdot}{\cdot}$ is antilinear with respect to the left variable, with associated norm $\normhi{u} = \sqrt{\braket{u}{u}}$. We endow $\hi$ with a conjugation $\conjug : \hi \to \hi$, that is, a norm-preserving, antilinear involution on $\hi$. We define \begin{equation}\label{def_hilbert_invariant_conjugaison}
		\hi_0 = \{z \in \hi, \quad \conjug z = z\}\,,
	\end{equation}
	the real subspace of $\hi$.
	
	\bigskip
	
	The bosonic Fock space over $\hi$ is defined as the direct Hilbert sum 
	\begin{equation}\label{def_espace_fock_symetrique}
		\focks = \bigoplus_{n=0}^\infty \tensymn{\hi}{n}\,,
	\end{equation}
	where $\tensymn{\hi}{n}$ denotes the range of $\tensn{\hi}{n}$ by the orthogonal projection 
	\begin{equation}\label{def_op_symetrisation_rang_n}
		\symop[n] : u_1 \ot \cdots \ot u_n \longmapsto \frac{1}{n!} \sum_{\sigma \in \mathfrak{S}_n} u_{\sigma(1)} \ot \cdots \ot u_{\sigma(n)}\,,
	\end{equation}
	where $\mathfrak{S}_n$ is the symmetric group of degree 
$n$. The space $\tensymn{\hi}{n}$, seen as a subspace of $\focks$, is called the $n$-particle sector. For $n=0$, this gives the vacuum sector $\tensymn{\hi}{0} = \C\, \Omega$ where \begin{equation}\label{def_vacuum_state}
		\Omega = (1,0,0,\dots) \in \focks
	\end{equation}
	is called the vacuum vector. Thanks to these identifications, any $\psi = \left(\psi^{(n)}\right)_{n \in \N} \in \focks$ 	can be written 
	\begin{equation}\label{notation_serie_element_espace_fock}
		\psi = \bigoplus_{n = 0}^\infty \psi^{(n)}\quad \mathrm{with}\quad \big\Vert \psi\big\Vert_{\focks}^2=\sum_{n=0}^{+\infty}\Vert \psi^{(n)}\Vert_{\tensn{\hi}{n}}^2 < +\infty\,.
	\end{equation}
	
	\noindent The polarization formula 
	\begin{equation}\label{formule_polarisation_produit_tensoriel_symetrique}
		\symop[n](u_1 \ot \cdots \ot u_n) = \frac{1}{2^n\, n!} \sum_{\rho_1,\dots,\rho_n \in \{\pm 1\}} \rho_1 \cdots \rho_n \, \left[\tensn{\left(\sum_{j=1}^n \rho_j u_j\right)}{n}\right]
	\end{equation}
	yields, for all $n \in \N$, 
	\begin{equation}\label{espace_des_etats_totalement_symetriques_engendre_les_produits_tensoriels_identiques}
\bar{\Vect\{\tensn{w}{n}, \quad w \in \hi\}} = \tensymn{\hi}{n}\,.
	\end{equation}
	
	\medskip
The Hilbert space $\focks$ is endowed with the inner product 
	\[\braket{\varphi}{\psi}_{\focks} = \sum_{n \in \N}\, \braket{\varphi^{(n)}}{\psi^{(n)}}_{\tensymn{\hi}{n}}\,.
	\] 
	We denote by $\norm{\cdot}$ the associated norm.
	
	Finally, the finite particle space, defined as the algebraic direct sum 
	\begin{equation}\label{def_espace_fock_symetrique_algebrique}
		\focksalg = \bigoplus_{n \in \N}^{\mathrm{alg}} \tensymn{\hi}{n}\,,
	\end{equation}
	is dense in $\focks$.
	
	\medskip

	We now define the standard operators of quantum field theory. Let $\eps \in (0,1]$ be a semi-classical parameter. For all $u \in \hi$, the $\eps$-dependent creation and annihilation operators associated with $u$ are respectively defined on $\focksalg$ by the expressions 
	\begin{equation}\label{def_op_crea_tenseur_symetrique}
		\eval{\crea[\eps]{u}}_{\tensymn{\hi}{n}} = \sqrt{\eps(n+1)}\, \symop[n+1]\Big(\ket{u} \ot (\tensn{\1}{n})\Big)
	\end{equation}
	and 
	\begin{equation}\label{def_op_anni_tenseur_symetrique}
		\eval{\anni[\eps]{u}}_{\tensymn{\hi}{n}} = \sqrt{\eps n}\, \symop[n-1]\Big(\bra{u} \ot (\tensn{\1}{n-1}) \Big)\,,
	\end{equation}
	where $\bra{u} : v \in \hi \mapsto \braket{u}{v} \in \C$ and $\ket{u} = [\,\bra{u}\,]^{*} : \C \to \hi$. These two operators satisfy the canonical commutation relations (CCR): 
	\begin{equation}
		\comm{\anni[\eps]{u}}{\anni[\eps]{v}} =\comm{\crea[\eps]{u}}{\crea[\eps]{v}} = 0 \text{ and } \comm{\anni[\eps]{u}}{\crea[\eps]{v}} = \eps\, \braket{u}{v}\, \1, \, \forall\, u,v \in \hi\,.
	\end{equation}
	
	\medskip
	
	For all $u\in \hi$, the Segal field operators 
	\begin{equation}\label{def_op_champ}
		\fieldop[\eps]{u} = \frac{1}{\sqrt{2}}\, \Big(\crea[\eps]{u} + \anni[\eps]{u}\Big)
	\end{equation} are essentially self-adjoint on $\focksalg$, see for example \cite[Theorem X.41]{ReSi75}, which enables us to define (by Stone's Theorem) the Weyl operators
	\begin{equation}\label{def_op_Weyl}
		\weylop[\eps]{u} = \e^{\rmi \, \fieldop[\eps]{u}}\,.
	\end{equation}
	These operators satisfy, for all $u_1,u_2 \in \hi$, 
	\begin{equation}\label{relation_definissant_rpz_CCR}
		\weylop[\eps]{u_1} \weylop[\eps]{u_2} = \e^{\rmi\frac{\eps}{4} \symform{u_1}{u_2}}\, \weylop[\eps]{u_1+u_2}\,, 
	\end{equation}
	where $\symform{u_1}{u_2} = - 2\, \Im\braket{u_1}{u_2}$ is the canonical symplectic form.
	
	In this context, the set of coherent states are the total family of vectors in the Bosonic Fock space $\focks$ given by
	\begin{equation}\label{def_etats_coherents_QFT}
		\left\{\weylop[\eps]{-\rmi\frac{\sqrt{2}}{\eps}u}\Omega, \quad u \in \hi\right\},
	\end{equation}
	where $\Omega$ is the vacuum vector \eqref{def_vacuum_state}. Explicitly, we have 
	\begin{equation}
	 \weylop[\eps]{-\rmi\frac{\sqrt{2}}{\eps}u}\Omega=\e^{-\frac{1}{2\eps}\normhi{u}^2}\sum_{n=0}^{+\infty}\eps^{-\frac{n}{2}}\, \frac{\tensn{u}{n}}{\sqrt{n!}},\quad \forall\, u\in \hi\,.
	\end{equation}
	
	\medskip
	
	For any self-adjoint operator $A : D(A)\subset\hi \longrightarrow \hi$, the corresponding lifting operator is defined by 
	\[
	\eval{\liftop{A}}_{\tensymn{D(A)}{n,\rm alg}} = \tensn{A}{n}\,,
	\] 
	where $\tensymn{D(A)}{n,\rm alg}$ is the algebraic symmetric $n$-tensor product; and the second quantization of $A$ is the operator $\dG[\eps]{A}$ given by 
	\[
	\eval{\dG[\eps]{A}}_{\tensymn{D(A)}{n,\rm alg}} = \eps \sum_{j=1}^n\, (\tensn{\1}{j-1})\, \ot \underbrace{A}_{j^{\mathrm{th}} \mathrm{position}} \ot\, (\tensn{\1}{n-j})\,,
	\] 
which describes the Hamiltonian of free Bose fields in the particle representation. In particular, the $\eps$-dependent number operator is defined by $\nbop[\eps]=\dG[\eps]{\1}$. We also note that $\liftop{\conjug}$ defines a conjugation on the Bosonic Fock space $\focks$ whenever $\conjug$ is a conjugation on $\hi$.
	
\bigskip

We conclude this subsection with a crucial theorem, often referred to as the wave representation of Fock space, which arises from representation of Gaussian random processes indexed by real Hilbert spaces (see for instance \cite[Proposition 2.1.1]{GlJa85} and also \cite[Theorem I.1]{Si74}).

\begin{theorem}\label{thm.rpz_ondulatoire_espace_Fock}
There exists a probability space $(Q,\mathfrak{T},\mu)$ and a unitary map \[\mathcal{R}:\focks \to L^2(Q,\dd\mu)\]
fulfilling the following conditions:
		\begin{enumerate}
			\item[{\textbf{1.}}] $\mathcal{R} \Omega = 1$,
			\item[{\textbf{2.}}] $\mathcal{R} \mathfrak{M} \mathcal{R}^{*} = L^\infty(Q,\dd\mu)$, where $\mathfrak{M}$ denotes the $W^*$-algebra generated by the Weyl operators $\weylop[\eps]{u}$, $u \in \hi_0$.
		\end{enumerate}
		Furthermore, for all $V \in \focks$, we have \begin{equation}\label{Fock_conjugue_donne_L2_conjugue_via_rpz_ondulatoire}
			\mathcal{R}\liftop{\conjug}V = \bar{\mathcal{R}V}
		\end{equation}
		and, for all $u \in \hi_0$, $\eps > 0$, 
		\begin{equation}\label{formule_clef_pour_definition_Wick_non_poly}
			\mathcal{R}^{*} \Big(\mathcal{R}(\liftop{\sqrt{\eps}}V)\Big) \mathcal{R}(\weylop[\eps]{u} \Omega) = \weylop[\eps]{u} \liftop{\sqrt{\eps}} V\,.
		\end{equation}
	\end{theorem}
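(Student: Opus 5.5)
The plan is to build $(Q,\mathfrak{T},\mu)$ as the Gaussian measure space attached to the real Hilbert space $\hi_0$, and to define $\mathcal{R}$ through the Wiener--Itô--Segal isomorphism. Fix a real orthonormal basis $(e_k)_{k\in\N}$ of $\hi_0$. It is also an orthonormal basis of $\hi$, because $\hi = \hi_0 \oplus \rmi\,\hi_0$ with $\hi_0=\{\conjug z = z\}$.

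\textbf{Construction.} Take $Q=\R^{\N}$ with its product $\sigma$-algebra and $\mu=\bigotimes_k \gamma$, where $\gamma$ is a centered Gaussian on $\R$ whose variance is chosen to match the normalization of $\fieldop[\eps]{\cdot}$ at $\eps=1$, namely $\langle \Omega, \fieldop[1]{u}^2\Omega\rangle = \tfrac12\normhi{u}^2$. For $u\in\hi_0$, define the linear random variable $q(u)=\sum_k \braket{e_k}{u}\, q_k$, which converges in $L^2(\mu)$. Then $\{q(u)\}_{u\in\hi_0}$ is a Gaussian process with covariance $\tfrac12\braket{u}{v}$.

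Next define $\mathcal{R}$ on the total family of exponential vectors. For $u\in\hi_0$, set
\[
\mathcal{R}\big(\weylop[1]{u}\Omega\big) := \e^{\rmi q(u)}.
\]
Both sides have the same Gram matrix, since
\[
\langle \weylop[1]{u}\Omega,\weylop[1]{v}\Omega\rangle = \e^{-\frac14\normhi{v-u}^2} = \int_Q \e^{\rmi(q(v)-q(u))}\,\dd\mu,
\]
using \eqref{relation_definissant_rpz_CCR} together with $\symform{u}{v}=0$ on $\hi_0$. So $\mathcal{R}$ extends to an isometry. It is well defined on spans because equal Gram data force linear relations to be preserved.

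\textbf{Totality on both sides.} Surjectivity holds because trigonometric exponentials $\e^{\rmi q(u)}$ are total in $L^2(Q,\mu)$; they generate the $\sigma$-algebra, and one uses a Fourier / monotone class argument. For the Fock side, $\{\weylop[1]{u}\Omega: u\in\hi_0\}$ is total in $\focks$. To see this, differentiate in $u$ (analytic continuation in $t\mapsto \weylop[1]{tu}\Omega$) to get all $\fieldop[1]{u_1}\cdots\fieldop[1]{u_n}\Omega$, hence $\crea[1]{u}^n\Omega$ for $u\in\hi_0$. These span $\tensymn{\hi}{n}$ by the polarization identity \eqref{espace_des_etats_totalement_symetriques_engendre_les_produits_tensoriels_identiques} and complex linearity, since $\hi_0$ spans $\hi$ over $\C$. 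We get $\mathcal{R}\Omega=1$ by taking $u=0$.

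\textbf{The algebra property (item 2).} By construction $\mathcal{R}\,\weylop[1]{u}\,\mathcal{R}^*$ acts as multiplication by $\e^{\rmi q(u)}$. Indeed, on $\weylop[1]{v}\Omega$ the Weyl relation gives $\weylop[1]{u+v}\Omega$, which maps to $\e^{\rmi q(u)}\e^{\rmi q(v)}$. For general $\eps$ one has $\weylop[\eps]{u}=\weylop[1]{\sqrt\eps u}$ up to the unitary $\liftop{\sqrt\eps}$-scaling, or simply because $\fieldop[\eps]{u}=\sqrt\eps\fieldop[1]{u}=\fieldop[1]{\sqrt\eps u}$. So the algebra $\mathfrak{M}$ does not depend on $\eps$. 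The von Neumann algebra generated by $\{\e^{\rmi q(u)}\}$ is $L^\infty(Q,\mu)$, since it is a maximal abelian algebra containing the multiplications generating the $\sigma$-algebra.

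\textbf{Conjugation.} The map $\liftop{\conjug}$ fixes $\Omega$ and satisfies $\liftop{\conjug}\weylop[1]{u}\liftop{\conjug}=\weylop[1]{-u}$ for $u\in\hi_0$, because $\conjug$ is antilinear and $\rmi$ flips sign. Hence
\[
\mathcal{R}\liftop{\conjug}\weylop[1]{u}\Omega=\e^{-\rmi q(u)}=\bar{\e^{\rmi q(u)}},
\]
and \eqref{Fock_conjugue_donne_L2_conjugue_via_rpz_ondulatoire} follows by antilinear extension.

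\textbf{The product formula \eqref{formule_clef_pour_definition_Wick_non_poly}.} This is the step I expect to be the main obstacle, because the product of two $L^2$ functions need not lie in $L^2$. The plan is to first check it for $V=\weylop[1]{v}\Omega$ with $v\in\hi_0$.

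\begin{itemize}
\item In that case $\liftop{\sqrt\eps}V=\weylop[1]{\sqrt\eps v}\Omega$, using $\liftop{\lambda}\weylop[1]{v}\Omega$ on exponential vectors, which needs a rescaling check of the Gaussian factor. Here I would instead verify directly with $\eps$-normalized coherent states.
\item Then the left side is $\mathcal{R}^*\big(\e^{\rmi q(\sqrt\eps v)}\e^{\rmi q(\sqrt\eps u)}\big)$, and the right side is $\weylop[1]{\sqrt\eps u}\weylop[1]{\sqrt\eps v}\Omega$. These agree since $\symform{u}{v}=0$.
\end{itemize}

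For general $V$, multiplication by the bounded function $\mathcal{R}(\weylop[\eps]{u}\Omega)=\e^{\rmi q(\sqrt\eps u)}$ is continuous on $L^2$. So both sides are bounded linear in $V$, and density concludes. If the Gaussian normalization of $\liftop{\sqrt\eps}$ on exponential vectors introduces scalar factors, I would fix them by computing both sides against the total family and adjusting the variance of $\gamma$ consistently.
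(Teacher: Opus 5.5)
Your argument is correct. The paper gives no proof of this statement; it quotes it from Glimm--Jaffe and Simon. There it is usually obtained in one of two ways:
\begin{itemize}
\item abstractly: the Weyl operators with arguments in $\hi_0$ commute because $\symformsymb$ vanishes on $\hi_0$, $\Omega$ is cyclic for $\mathfrak{M}$, and an abelian von Neumann algebra with a cyclic vector is unitarily equivalent to $L^\infty(Q,\mu)$ acting on $L^2(Q,\mu)$, with the cyclic vector sent to $1$;
\item through the Wiener--It\^o--Segal chaos map, which sends the $n$-particle sector onto Wick--Hermite polynomials of a Gaussian field and needs density of polynomials in Gaussian $L^2$.
\end{itemize}
Your route matches the coherent vectors $\weylop[1]{u}\Omega$, $u\in\hi_0$, with the characters $\e^{\rmi q(u)}$ via equality of Gram matrices. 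It stays within bounded operators and builds the Gaussian structure in from the start; the paper needs that structure later for hypercontractivity, and the abstract route only recovers it afterwards. It also makes item 2, the conjugation rule and the product formula nearly tautological. What the chaos route gives and yours does not give directly is the explicit action of $\mathcal{R}$ on each $n$-particle sector. That is what identifies $\Wick[\eps]{F_V}$ with a multiplication operator for polynomial $F_V$, and $\liftop{c_0}$ with the Ornstein--Uhlenbeck semigroup.

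Three small points should be fixed.

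\textbf{The product formula needs no separate check and no tuning.} Since $\weylop[\eps]{u}=\weylop[1]{\sqrt{\eps}u}$, you have $\mathcal{R}(\weylop[\eps]{u}\Omega)=\e^{\rmi q(\sqrt{\eps}u)}$. With $W=\liftop{\sqrt{\eps}}V$, the left-hand side equals $\mathcal{R}^{*}\big(\e^{\rmi q(\sqrt{\eps}u)}\mathcal{R}W\big)=\weylop[1]{\sqrt{\eps}u}W$, by the intertwining relation you already proved for item 2. Your detour through $V=\weylop[1]{v}\Omega$ also works, but only because $\liftop{\sqrt{\eps}}\weylop[1]{v}\Omega=\e^{(\eps-1)\normhi{v}^2/4}\,\weylop[1]{\sqrt{\eps}v}\Omega$ puts the same scalar on both sides. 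Changing the variance of $\gamma$ to absorb it would be wrong: it would break the Gram identity that makes $\mathcal{R}$ an isometry. Also, ``bounded in $V$'' fails for $\eps>1$, where $\liftop{\sqrt{\eps}}$ is unbounded. The direct argument covers every $V\in D(\liftop{\sqrt{\eps}})$.

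\textbf{$\liftop{\sqrt{\eps}}$ is not unitary for $\eps\neq 1$.} Your other justification, $\fieldop[\eps]{u}=\fieldop[1]{\sqrt{\eps}u}$, is the right one.

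\textbf{The step $\mathcal{R}\mathfrak{M}\mathcal{R}^{*}=L^\infty$ needs a cleaner justification.} Your totality step shows that $1$ is cyclic for the von Neumann algebra $\mathcal{N}$ generated by the multiplications by $\e^{\rmi q(u)}$. An abelian von Neumann algebra with a cyclic vector is maximal abelian, so $L^\infty\subset\mathcal{N}'=\mathcal{N}\subset L^\infty$.
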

	
\noindent The wave representation of Fock space appears very useful to study the interaction terms considered in this work, as we will see that these terms correspond to unbounded multiplication operators by measurable functions in $L^2(Q,\dd\mu)$.
	
\subsection{Polynomial Wick quantization}\label{subsect.Wick_poly}
	
	In this subsection, we introduce the notion of polynomial symbol following \cite{AmNi08}.
	
	\medskip
	
	A map $b : \hi \to \C$ is said to be a monomial symbol of order $(p,q) \in \N^2$ if there exists a bounded operator $\tilde{b} \in \mathcal{L}(\tensymn{\hi}{p},\tensymn{\hi}{q})$ such that, for all $z \in \hi$, 
	\begin{equation}\label{identification_polynomes_Wick_operateurs_bornes}
		b(z) = \braket{\tensn{z}{q}}{\tilde b(\tensn{z}{p})}_{\tensymn{\hi}{q}}\,.
	\end{equation}
	In this case, we write $b \in \wickpol[p,q]$ and we call $m = p+q$ the total order of $b$. We define $\wickpol$ as the space of linear combinations of monomial symbols, called polynomial symbols. For all $m \in \N$, $\wickhom{m}$, resp. $\wickleq{m}$, denotes the space of polynomial symbols of total order $m$, resp. at most $m$.
	
	\medskip
	
	For all $b \in \wickpol[p,q]$, $z,h \in \hi$, we can write 
	\begin{equation}\label{dev_limite_derivees_Frechet_polynome_Wick}
		b(z+h) = \sum_{\substack{0 \leq j \leq q\\0 \leq k \leq p}}\, \frac{1}{j!}\, \frac{1}{k!}\, \braket{\tensn{h}{j}}{\gatgatbar{k}{j}{b}(z) (\tensn{h}{k})}\,,
	\end{equation}
	where 
	\begin{multline}\label{def_gatgatbar}
		\gatgatbar{k}{j}{b}(z) =\,\frac{p!}{(p-k)!}\, \frac{q!}{(q-j)!}\, \symop[j]\Big(\bra{\tensn{z}{q-j}} \ot (\tensn{\1}{j})\Big)\,\tilde{b}\, \symop[p]\Big(\ket{\tensn{z}{p-k}} \ot (\tensn{\1}{k})\Big)\\ \in \mathcal{L}(\tensymn{\hi}{k},\tensymn{\hi}{j})\,.
	\end{multline}
	This formula provides a notion of derivative for polynomial symbols. In particular, for all $k \in \N$, we have 
	\begin{equation}\label{def_gat}
		\gat{k}{b}(z) = \frac{p}{(p-k)!}\, \bra{\tensn{z}{q}}\, \tilde{b}\, \symop[p]\Big((\tensn{z}{p-k}) \ot (\tensn{\1}{k})\Big) \in \big(\tensymn{\hi}{k}\big)^*
	\end{equation}
	and 
	\begin{equation}\label{def_gatbar}
		\gatbar{k}{b}(z) = \frac{q!}{(q-k)!}\, \symop[k]\left(\bra{\tensn{z}{q-k}} \ot (\tensn{\1}{k})\right)\,\tilde{b}\,(\tensn{z}{p}) \in \tensymn{\hi}{k}\,. 
	\end{equation}
	
	\medskip
	
	The Wick quantization of a monomial symbol $b \in \wickpol[p,q]$ is defined on $\focksalg$ by the formulas \begin{equation}\label{def_quantification_Wick_poly_produit_tensoriel_symetrique}
		\eval{\Wick[\eps]{b}}_{\tensymn{\hi}{n}} = \1_{[p,+\infty)}(n)\, \frac{\sqrt{n!\,(n+q-p)!}}{(n-p)!}\,\eps^{\frac{p+q}{2}} \, \symop[n+q-p]\left(\tilde{b} \ot (\tensn{\1}{n-p})\right)\,,
	\end{equation}
	where $\symop[n+q-p]$ is the symmetrization operator given by \eqref{def_op_symetrisation_rang_n}. The expression $\Wick[\eps]{b}$ is extended by linearity to any polynomial symbol $b \in \wickpol$, and $b$ is called the Wick symbol of the operator $\Wick[\eps]{b}$. Let us remark the homogeneity property 
	\begin{equation}\label{prop.homogeneite_quantif_Wick}
	 \Wick[\eps]{b} = \eps^{\frac{j}{2}}\, \Wick[1]{b} \text{ for all } b \in \wickhom{j}\,.
	\end{equation}
	
	\medskip
	
	For all $\varphi, \psi \in \focksalg$ and for all polynomial symbol $b$, we have 
	\begin{equation}
	 \braket{\varphi}{\Wick[\eps]{b}\, \psi} = \braket{\Wick[\eps]{\bar{b}}\, \varphi}{\psi}\,,
	\end{equation}
	where $\bar{b}$ is the polynomial symbol defined by $\bar{b}(z) = \bar{b(z)}$ for all $z \in \C$. This identity shows that $\Wick[\eps]{\bar{b}}^{*}$ is an extension of $\Wick[\eps]{b}$: in particular, the operator $\Wick[\eps]{b}$ is closable. We keep the same notation for its closure.
	
	Identifying $b$ and $\tilde{b}$ through the identity \eqref{identification_polynomes_Wick_operateurs_bornes}, we give below the most important examples of Wick quantizations. \[\begin{array}{c|c}
		\tilde{b} & \Wick[\eps]{b}\\\hline
		\lambda \in \C & \lambda \1\\
		\bra{u} & \anni[\eps]{u}\\
		\ket{u} & \crea[\eps]{u}\\
		\sqrt{2} \, \Re\braket{u}{z} & \fieldop[\eps]{u}\\
		\1 \in \mathcal{L}(\hi) & \nbop[\eps]\\ 
		A \in \mathcal{L}(\hi) & \dG[\eps]{A}
	\end{array}\]
	
	\medskip
	
	We now state the main properties of Wick quantization, see \cite{AmNi08} for the proofs.
	
	\begin{prop}\label{prop.quantification_Wick_polynomiale}
		Let $b \in \wickpol[p,q]$.
		\begin{enumerate}
			\item (adjoint) On $\focksalg$, we have 
			\begin{equation}\label{adjoint_Wick_poly}
				\Wick[\eps]{b}^{*} = \Wick[\eps]{\bar{b}}\,.
			\end{equation}
			\item For any self-adjoint operator $A$ on $\hi$ and for all $t \in \R$, we have, on $\focksalg$, 
			\begin{equation}\label{relation_Wick_poly_second_quantifie}
				\e^{\rmi\frac{t}{\eps}\dG[\eps]{A}}\,\Wick[\eps]{b}\,\e^{-\rmi\frac{t}{\eps}\dG[\eps]{A}} = \Wick[\eps]{b(\e^{-\rmi t A}\,\bullet)}\,,
			\end{equation}
			where $b\left(\e^{-\rmi t A}\, \bullet\right) : z \in \C \mapsto b\left(\e^{-\rmi t A}\, z\right)$ is a polynomial symbol.
			As a consequence, for all $u \in \hi$, we have 
			\begin{equation}\label{relation_op_Weyl_second_quantifie}
				\e^{\rmi\frac{t}{\eps}\dG[\eps]{A}}\,\weylop[\eps]{u}\,\e^{-\rmi\frac{t}{\eps}\dG[\eps]{A}} = \weylop[\eps]{\e^{\rmi t A}\,u}\,.
			\end{equation}
			\item The space 
			\begin{equation}\label{def_bigcore_Wick_poly}
				\wickbigcore = \Vect\left\{\weylop[\eps]{u}\varphi, \quad u \in \hi,\, \varphi \in \focksalg\right\} 
			\end{equation}
			is a core for the operators $\Wick[\eps]{b}$, $b \in \wickpol$.
			\item (translation) On $\wickbigcore$, we have, for all $u \in \hi$, \begin{equation}\label{translation_Wick_poly}
				\weylop[\eps]{-\rmi\frac{\sqrt{2}}{\eps}u}^{*}\, \Wick[\eps]{b} \, \weylop[\eps]{-\rmi\frac{\sqrt{2}}{\eps}u} = \Wick[\eps]{b(\bullet + u)}\,,
			\end{equation}
			where $b\left(\bullet+u\right) : z \in \C \mapsto b(z+u)$ is a polynomial symbol.
			\item (composition) For all $b_1 \in \wickpol[p_1,q_1]$, $b_2 \in \wickpol[p_2,q_2]$, 
			\begin{equation}\label{formule_compo_Wick_poly}
				\Wick[\eps]{b_1} \Wick[\eps]{b_2} = \sum_{k = 0}^{\min(p_1,q_2)} \frac{\eps^k}{k!}\, \Wick[\eps]{\gatdotgatbar{k}{b_1}{b_2}}\,,
			\end{equation}
			where $\gatdotgatbar{k}{b_1}{b_2}(z) = \braket{\gat{k}{b_1}(z)}{\gatbar{k}{b_2}(z)}_{(\tensymn{\hi}{k})^*,\tensymn{\hi}{k}}$. We denote by $b_1 \compwick[\eps] b_2$ the Wick symbol of the operator $\Wick[\eps]{b_1}\Wick[\eps]{b_2}$.
		\end{enumerate}
	\end{prop}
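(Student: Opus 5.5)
The proposition has five items, and I would prove each one directly from the defining formula \eqref{def_quantification_Wick_poly_produit_tensoriel_symetrique}. Everything reduces to computations on the $n$-particle sectors, and the unbounded operators are handled through the core $\wickbigcore$.

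\emph{Adjoint.} Take $\varphi\in\tensymn{\hi}{n+q-p}$ and $\psi\in\tensymn{\hi}{n}$. Since $\symop[\cdot]$ is an orthogonal projection, $\braket{\varphi}{\Wick[\eps]{b}\psi}$ equals the same prefactor times $\braket{\varphi}{(\tilde b\ot \1^{\ot(n-p)})\psi}$. This in turn equals $\braket{(\tilde b^{*}\ot\1^{\ot(n-p)})\varphi}{\psi}$. The symbol of $\tilde b^*\in\mathcal L(\tensymn{\hi}{q},\tensymn{\hi}{p})$ is exactly $\bar b$. The prefactors match under the exchange $(p,q,n)\leftrightarrow(q,p,n+q-p)$, because $(n-p)!=((n+q-p)-q)!$ and the square roots are symmetric.

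\emph{Free evolution.} On $\tensymn{\hi}{n}$ the operator $\e^{-\rmi\frac t\eps\dG[\eps]{A}}$ acts as $(\e^{-\rmi tA})^{\ot n}$. Conjugating the formula therefore replaces $\tilde b$ by $(\e^{\rmi tA})^{\ot q}\tilde b(\e^{-\rmi tA})^{\ot p}$, and this is the kernel of $b(\e^{-\rmi tA}\bullet)$. For the Weyl identity, I would apply the case $b=\sqrt2\Re\braket{u}{z}$ to get $\e^{\rmi\frac t\eps\dG[\eps]{A}}\fieldop[\eps]{u}\e^{-\rmi\frac t\eps\dG[\eps]{A}}=\fieldop[\eps]{\e^{\rmi tA}u}$ on $\focksalg$. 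Both sides are essentially self-adjoint there (the left side is a unitary conjugate and $\focksalg$ is invariant), so their closures coincide. Taking exponentials gives \eqref{relation_op_Weyl_second_quantifie}.

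\emph{Composition.} First I would prove \eqref{formule_compo_Wick_poly} on $\focksalg$. The route is to test both sides against coherent vectors $E(z)=\weylop[\eps]{-\rmi\frac{\sqrt2}{\eps}z}\Omega$, whose span is dense. A direct computation gives $\anni[\eps]{v}E(z)=\braket vz E(z)$, and from this $\braket{E(w)}{\Wick[\eps]{b}E(z)}=\braket{E(w)}{E(z)}\,\braket{w^{\ot q}}{\tilde b z^{\ot p}}$. Equivalently, $\Wick[\eps]{b}$ is the normally ordered product $\int \creasymb(x)^{q}\,\tilde b\,\annisymb(x)^{p}$. Composing two such products means normal ordering $\annisymb^{p_1}\creasymb^{q_2}$ with the CCR. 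Each contraction of $k$ pairs yields $\eps^k$, and the combinatorial count $\binom{p_1}{k}\binom{q_2}{k}k!$ combines with the factorials in \eqref{def_gatgatbar} to give exactly $\frac{\eps^k}{k!}\gatdotgatbar{k}{b_1}{b_2}$. The same identity then extends to $\wickbigcore$ by the translation covariance below.

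\emph{Core and translation.} I would show that $\weylop[\eps]{u}$ maps $\focksalg$ into $\bigcap_k D(\nbop[\eps]^k)$ with bounds, using the exponential series for the field and $\|\fieldop[\eps]{u}\psi\|\lesssim\|(\nbop[\eps]+\eps)^{1/2}\psi\|$. Since $\Wick[\eps]{b}$ is $(\nbop[\eps]+1)^{(p+q)/2}$-bounded, $\wickbigcore\subset D(\Wick[\eps]{b})$. The core property follows because $\focksalg\subset\wickbigcore$ is already a core by the definition of the closure. For \eqref{translation_Wick_poly}, the CCR give $\weylop{\cdot}^*\anni[\eps]{v}\weylop{\cdot}=\anni[\eps]{v}+\braket vu$ and the analogous identity for $\creasymb$. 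Substituting into the normally ordered product and expanding binomially produces exactly the Taylor formula \eqref{dev_limite_derivees_Frechet_polynome_Wick}, so the result is $\Wick[\eps]{b(\bullet+u)}$.

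The main obstacle is the domain issue in the translation and core step. One needs the $\nbop$-estimates for $\weylop[\eps]{u}\varphi$ to justify the algebraic manipulations with unbounded operators on $\wickbigcore$; the combinatorics of the composition formula is routine once the normal-ordering picture is in place.
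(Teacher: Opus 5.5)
Your proposal is correct. There is no in-paper argument to compare it with: the paper does not prove this proposition, it only refers to \cite{AmNi08}. What you outline is the standard self-contained derivation:
\begin{itemize}
\item a sector-wise computation for the adjoint and for the free evolution;
\item essential self-adjointness on the invariant core $\focksalg$ for the Weyl identity;
\item number estimates together with the analyticity of finite-particle vectors for the core property;
\item the CCR for translation and composition, with the right combinatorial count $\binom{p_1}{k}\binom{q_2}{k}k!=\frac{1}{k!}\frac{p_1!}{(p_1-k)!}\frac{q_2!}{(q_2-k)!}$.
\end{itemize}

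The one step that is formal as written is the normally ordered expression $\int \mathrm{a}^*(x)^q\,\tilde b\,\mathrm{a}(x)^p$. It has no literal meaning for a merely bounded $\tilde b$ on an abstract $\hi$. You can make it precise in two steps.

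First take $\tilde b=\ket{\tensn{v}{q}}\bra{\tensn{w}{p}}$. For this kernel the definition gives exactly $\Wick[\eps]{b}=\crea[\eps]{v}^q\,\anni[\eps]{w}^p$. Then Wick's theorem for $\anni[\eps]{w}^{p_1}\crea[\eps]{v}^{q_2}$ and the shift $\anni[\eps]{v}\mapsto\anni[\eps]{v}+\braket{v}{u}$ under conjugation by $\weylop[\eps]{-\rmi\frac{\sqrt{2}}{\eps}u}$ are genuine operator identities on $\wickbigcore$. This space is invariant under Weyl, creation and annihilation operators.

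Then extend to general kernels. Use linearity and the polarization formula \eqref{formule_polarisation_produit_tensoriel_symetrique}, and approximate a general $\tilde b$ by a uniformly bounded, strongly convergent sequence of such finite-rank kernels. The number estimate \eqref{number_estimate_Wick_poly} lets you pass to the limit on both sides of each identity.

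If you keep the coherent-state test instead, note that $L^2$-density of the coherent vectors does not by itself identify two unbounded operators on $\focksalg$. You also need either relative $\nbop[1]$-boundedness, or the expansion of $\e^{\normhi{z}^2/2\eps}E(z)$ into the vectors $\tensn{z}{n}$ to read off the matrix elements between finite-particle vectors.
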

	
\smallskip\noindent	
We end this subsection by stating the well-known number estimates, see \cite[Proposition~3.13]{DeGe00}.
	
	\begin{prop}
		For all $b \in \wickpol[p,q]$ and $\alpha,\beta \in \R$ such that $\alpha + \beta \geq p + q$, we have 
		\begin{equation}\label{number_estimate_Wick_poly}
			\norm{(\nbop[1]+1)^{-\frac{\alpha}{2}}\, \Wick[\eps]{b}\, (\nbop[1]+1)^{-\frac{\beta}{2}}} \leq\, \eps^{\frac{p+q}{2}}\, C(p,q,\alpha,\beta)\, \norm{\tilde{b}},
		\end{equation}
		where $C(p,q,\alpha,\beta)$ is a positive, $\eps$-independent constant. 
	\end{prop}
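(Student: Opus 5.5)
We argue sector by sector, using that $\Wick[\eps]{b}$ maps the $n$-particle sector into the $(n+q-p)$-particle sector. Both $\nbop[1]$ and its powers act diagonally on these sectors: on $\tensymn{\hi}{n}$ the operator $(\nbop[1]+1)^{-\alpha/2}$ is multiplication by $(n+1)^{-\alpha/2}$. So for $\psi^{(n)}\in\tensymn{\hi}{n}$, $n\geq p$, we have
\[
(\nbop[1]+1)^{-\frac{\alpha}{2}}\Wick[\eps]{b}(\nbop[1]+1)^{-\frac{\beta}{2}}\psi^{(n)} = (n+q-p+1)^{-\frac{\alpha}{2}}(n+1)^{-\frac{\beta}{2}}\,\Wick[\eps]{b}\psi^{(n)} ,
\]
and the left-hand side vanishes for $n<p$.

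For the single-sector operator, we use \eqref{def_quantification_Wick_poly_produit_tensoriel_symetrique} together with $\norm{\symop[n+q-p]}\leq 1$ and $\norm{\tilde b\ot \tensn{\1}{n-p}}=\norm{\tilde b}$. This gives
\[
\norm{\Wick[\eps]{b}\psi^{(n)}} \leq \eps^{\frac{p+q}{2}}\,\frac{\sqrt{n!\,(n+q-p)!}}{(n-p)!}\,\norm{\tilde b}\,\norm{\psi^{(n)}} .
\]
We then bound the combinatorial factor. We have $n!/(n-p)!\leq (n+1)^p$ and $(n+q-p)!/(n-p)!\leq (n+q-p+1)^{q}$, so
\[
\frac{\sqrt{n!\,(n+q-p)!}}{(n-p)!}\leq (n+1)^{\frac p2}(n+q-p+1)^{\frac q2}.
\]
For $n\geq p$, the quantities $n+1$ and $n+q-p+1$ are comparable: $\frac{n+q-p+1}{n+1}$ lies in $\big[\frac{1}{p+1},\,q+1\big]$. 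Hence the sector bound is at most $C\,(n+1)^{\frac{p+q-\alpha-\beta}{2}}\leq C$, using $\alpha+\beta\geq p+q$. Here $C$ depends only on $p,q,\alpha,\beta$, where powers of the comparability ratio absorb possibly negative $\alpha$ or $\beta$.

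To pass to the full operator, note that distinct input sectors $n$ are mapped into distinct, mutually orthogonal output sectors $n+q-p$. The operator is therefore an orthogonal direct sum of the sector operators, and its norm is the supremum of the sector norms. This yields the estimate \eqref{number_estimate_Wick_poly} on $\focksalg$, and by density and closure it extends to the whole space.

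The only delicate step is the uniform comparability of $n+1$ and $n+q-p+1$, which is what allows $\alpha$ and $\beta$ to be distributed arbitrarily between the two sides. It is elementary once restricted to $n\geq p$.
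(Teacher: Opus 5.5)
Your proof is correct. The sector bound $\frac{\sqrt{n!\,(n+q-p)!}}{(n-p)!}\leq (n+1)^{p/2}(n+q-p+1)^{q/2}$, the uniform comparability of $n+1$ and $n+q-p+1$ for $n\geq p$, and the fact that distinct input sectors land in mutually orthogonal output sectors together give the estimate on $\focksalg$, and from there on the closure. The paper gives no proof of its own and simply cites \cite[Proposition~3.13]{DeGe00}; your argument is the standard direct sector-by-sector proof of that estimate.
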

		
	\subsection{Main results}\label{subsect.Main_results}
	
	We start by stating our main result concerning the propagation of coherent states for spatially cutoff $\ppde$ Hamiltonians. We recall some standard facts about the construction of
	such models in Subsection \ref{subsect.intro_Pphi2_model}. Let us consider, for the one-particle Hilbert space $\hi=L^2(\R,\dd k)$, the spatially cutoff $\ppde$ Hamiltonian 
		\begin{equation}
		\label{premiere_apparition_spatially_cutoff_Pphi2_Hamiltonian}
			H_\eps = \dG[\eps]{\omega} + \int_\R g(x) \wickdots{P\left(\sqrt{2}\, \fieldop[\eps]{\frac{e^{-ikx}}{\sqrt{\omega(k)}}}\right)} \dd x\,,
		\end{equation} 
		where 
		\begin{itemize}
		 \item $\R \ni k\longmapsto \omega(k) = \sqrt{m_0^2 + k^2}$ is the dispersion relation, with $m_0 > 0$,
		 \item $P = \displaystyle\sum_{j=0}^{2n} \beta_j\, X^j$ is a real polynomial with even degree and positive dominant coefficient $\beta_{2n} > 0$,
		 \item $g \in L^1(\R) \cap H^1(\R)$ is a non-negative, even cutoff function.
		\end{itemize}
	
Here, for all $u\in \hi$, $\fieldop[\eps]{u}$ is the Segal field operator given by \eqref{def_op_champ} and 
\begin{equation}\label{ordredeWick}
	 \wickdots{\fieldop[\eps]{v}^j} = 2^{-\frac{j}{2}}\, \sum_{\ell = 0}^j\, \binom{j}{\ell}\, \crea[\eps]{v}^\ell\, \anni[\eps]{v}^{j-\ell}
	\end{equation}
is the Wick ordering of $\fieldop[\eps]{u}^j$. For more details on Wick ordering, one can consult \cite[Section~5]{DeGe00}.

The self-adjointness of the Hamiltonian \eqref{premiere_apparition_spatially_cutoff_Pphi2_Hamiltonian} is considered standard, see \cite{GlJa72,Se70,HKSi72}.

\medskip

We can now state our main result on the complete asymptotic evolution of coherent states in the classical limit $\eps\to 0$ for the $\ppde$ model. We use the convention 
\begin{equation}\label{Fourier_non_unit}
 \hat{g}(k) = \int_{\R} g(x)\, \e^{-\rmi\, k\, x} \dd x
\end{equation}
for the Fourier transform, sometimes denoted by $\Four$ for the sake of readability.

\bigskip\bigskip

\begin{theorem}\label{thm.Pphi2_asymptotique_complete}
Let $\varphi_0 \in \hi= L^2(\R,\dd k)$ be fixed. There exists a sequence $\Big(b_k(t,\cdot)\Big)_{k \in \N}\in \wickpol^\N$ of time-dependent polynomial symbols such that for all $\psi \in \displaystyle\bigcap_{\ell \in \N} D\left(\nbop[1]^{\ell/2}\right)$, for all $N \in \N$ and all $T > 0,$ there exists a finite $\eps$-independent bound $C(T,\psi,N) > 0$ such that for all $t \in [-T,T]$, the inequality
\begin{multline}\label{asymptotique_complete_CSdyn_Pphi2}
			 \norm{\e^{-\rmi\frac{t}{\eps}H_\eps}\, \weylop[\eps]{-\rmi\frac{\sqrt{2}}{\eps}\varphi_0}\psi-\sum_{k = 0}^N\, \eps^{\frac{k}{2}} \, \e^{\rmi\frac{\delta_\circ(t)}{\eps}}\, \weylop[\eps]{-\rmi\frac{\sqrt{2}}{\eps}\varphi_t}U_2^\circ(t,0)\Wick[1]{b_k(t)}\psi}\\ 
			 \leq C(T,\psi,N) \cdot \eps^{\frac{N+1}{2}}
		\end{multline}
holds uniformly in $\eps \in (0,1]$. Here, $\varphi_t \in C^0\Big(\R,L^2(\R)\Big)$ denotes the unique global mild solution, constructed in Corollary \ref{cor.existence_sol_mild_global_L2_Pphi2}, of the classical Klein-Gordon field equation 
		\begin{equation}\label{premiere_apparition_eq_classique_avec_partie_libre_Pphi2}
			\left\{\begin{array}{l}
				\rmi \, \dot{\varphi_t} = \omega\, \varphi_t + \mathscr{P}(\varphi_t)\\
				\eval{\varphi_t}_{t = 0} = \varphi_0
			\end{array}\right.
		\end{equation}
	such that
		\begin{multline*}
		 \mathscr{P}(\varphi_t) = \sum_{j=1}^{2n}\,j\, \beta_j\, \int_{\R^{j-1}} \prod_{m=1}^{j-1} \left(\frac{\bar{\varphi_t(k_m)} + \varphi_t(-k_m)}{\sqrt{\omega(k_m)}}\right)\,\times\\
		 \hat{g}(k_1 + \cdots + k_{j-1} + \bullet) \, \omega^{-1/2}(\bullet)\, \dd k_1 \cdots \dd k_{j-1} \in L^2(\R,\dd k)\,. 
		\end{multline*}
	 The function $\delta_\circ$ is given by 
		\begin{multline}\label{expression_explicite_torsion_Pphi2}
			\delta_\circ(t) =- \beta_0 \norm{g}_1\, t
			 + \sum_{j=1}^{2n}\, \frac{j-2}{2}\, \beta_j\, \times\\ 
			 \int_0^t \int_{\R^j} \hat{g}(k_1 + \cdots + k_j)\, \prod_{\ell=1}^j \left(\frac{\bar{\varphi_s(k_\ell)} + \varphi_s(-k_\ell)}{\sqrt{\omega(k_\ell)}}\right)\, \dd k_1 \cdots \dd k_j\, \dd s, 
		\end{multline}
		and $U_2^\circ(t,s)$ are time-dependent Bogoliubov transforms characterized as the unique unitary propagator of a time-dependent quadratic Hamiltonian given in Corollary \ref{cor.dynamique_quad_modele_Pphi2}. The polynomial symbols $b_k(t,\cdot) \in \wickleq{3k}$ are uniquely determined by the recursive formulas \eqref{expression_explicite_polynomes_de_Wick_de_l'asymptotique_complete}.
	\end{theorem}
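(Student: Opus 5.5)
The plan follows Hepp's method in the form used by Robert. Write $W_t=\weylop[\eps]{-\rmi\frac{\sqrt{2}}{\eps}\varphi_t}$ and consider the fluctuation dynamics
\[
\mathcal{U}_\eps(t,s)=\e^{-\rmi\frac{\delta_\circ(t)-\delta_\circ(s)}{\eps}}\,W_t^{*}\,\e^{-\rmi\frac{t-s}{\eps}H_\eps}\,W_s .
\]
Formally, $\mathcal{U}_\eps$ is generated by $\frac{1}{\eps}\big(W_t^* H_\eps W_t - \rmi\eps\, W_t^*\partial_t W_t - \dot\delta_\circ(t)\big)$.

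First, use the translation property \eqref{translation_Wick_poly}, with $H_\eps$ written as a Wick quantization of $h(z)=\braket{z}{\omega z}+\int g\,P(\sqrt2\,\Re\braket{e^{-ikx}\omega^{-1/2}}{z}\sqrt2)\,\dd x$. This gives the Wick symbol $h(z+\varphi_t)$. Then Taylor-expand it in $z$ via \eqref{dev_limite_derivees_Frechet_polynome_Wick}:
\begin{itemize}
\item the constant term, combined with the time derivative of the Weyl phase, is cancelled by $\dot\delta_\circ$, which produces formula \eqref{expression_explicite_torsion_Pphi2};
\item the linear term vanishes precisely because $\varphi_t$ solves \eqref{premiere_apparition_eq_classique_avec_partie_libre_Pphi2};
\item the quadratic term gives the Bogoliubov generator $Q(t)$ of Corollary \ref{cor.dynamique_quad_modele_Pphi2}, whose propagator is $U_2^\circ$;
\item the terms of degree $j=3,\dots,2n$ give $\eps^{\frac{j-2}{2}}\Wick[1]{h_j(t)}$ by homogeneity \eqref{prop.homogeneite_quantif_Wick}.
\end{itemize}
Hence the generator is $\Wick[1]{h_2(t)}+\sum_{j\ge3}\eps^{\frac{j-2}{2}}\Wick[1]{h_j(t)}$, with $\eps$-independent symbols. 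Since $\varphi_t$ is only an $L^2$ mild solution, I would first carry this out for regular data, or in integrated Duhamel form, and pass to the limit using continuity of $\varphi_t$.

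Second, pass to the interaction picture $\mathcal{V}(t)=U_2^\circ(t,0)^{*}\,\mathcal{U}_\eps(t,0)$. Theorem \ref{thm.Breteaux_conjug_Wick_propag_quad} gives
\[
U_2^\circ(t,s)^{*}\,\Wick[1]{h_j(s)}\,U_2^\circ(t,s)=\Wick[1]{h_j(s)\circ \Phi(t,s)}
\]
for a classical symplectic flow $\Phi$. This turns the equation into
\[
\rmi\,\partial_t\mathcal{V}=\sum_j\eps^{\frac{j-2}{2}}\Wick[1]{\tilde h_j(t)}\,\mathcal{V}.
\]
Iterating Duhamel's formula formally, a Dyson expansion in powers of $\eps^{1/2}$, yields $\mathcal{V}(t)\psi\approx\sum_k\eps^{k/2}\Wick[1]{b_k(t)}\psi$. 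Here the $b_k$ solve
\[
b_k(t)=-\rmi\sum_{j\ge3}\int_0^t \tilde h_j(s)\compwick[1] b_{k-j+2}(s)\,\dd s,\qquad b_0=1.
\]
The composition formula \eqref{formule_compo_Wick_poly} makes each $b_k$ a polynomial symbol. Counting degrees (each factor $\eps^{1/2}$ carries at most three extra degrees) gives $b_k\in\wickleq{3k}$, which is the recursion \eqref{expression_explicite_polynomes_de_Wick_de_l'asymptotique_complete}. Undoing the conjugations, and noting that $W_t$ and phases are unitary, the left side of \eqref{asymptotique_complete_CSdyn_Pphi2} equals $\norm{\mathcal{V}(t)\psi-\sum_{k\le N}\eps^{k/2}\Wick[1]{b_k(t)}\psi}$.

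Third, bound the remainder. Truncating the Duhamel expansion at order $N$ leaves terms of the form
\[
\eps^{\frac{N+1}{2}}\int\mathcal{V}(t)\mathcal{V}(s)^{*}\,\Wick[1]{\tilde h_j(s)}\Wick[1]{b_{k}(s)}\psi\,\dd s
\]
(suitably arranged), where the exact propagator appears only through unitary factors. The number estimates \eqref{number_estimate_Wick_poly} bound these by $C\norm{(\nbop[1]+1)^{m/2}\psi}$, uniformly in $\eps$, as long as the symbols are bounded. I also need that $U_2^\circ$ preserves $D(\nbop[1]^{\ell/2})$ with bounds locally uniform in $t$, which I take from Theorem \ref{thm.quadratic_dynamics_Ammari_Breteaux_chaospropag}, and, when passing to the original picture, the Weyl-operator number estimates of Appendix \ref{app.weyl_op_nb_domain}.

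\textbf{Main obstacle.} The main obstacle is justifying these manipulations rigorously. $H_\eps$ is not a polynomial in bounded symbols: the kernel $\hat g(k_1+\cdots)\prod\omega^{-1/2}$ is not in $L^2$ in general, so the $\tilde h_j$ are not bounded operators between tensor spaces, and $\mathcal{U}_\eps$ need not preserve $\bigcap_\ell D(\nbop[1]^{\ell/2})$. To get around this, I would avoid needing domain invariance for the exact dynamics. Concretely, I would differentiate $\e^{\rmi t H_\eps/\eps}W_t\,U_2^\circ(t,0)\sum_k\eps^{k/2}\Wick[1]{b_k(t)}\psi$ in the weak sense, testing against a core of $H_\eps$ such as $\wickbigcore$ together with the cutoff-regularised interaction. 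Then every unbounded operator acts only on the explicit approximate state. What remains is to show that the ansatz lies in $D(H_\eps)$, and that the relevant symbols are bounded relative to $(\nbop[1]+1)^{m/2}$ using $g\in L^1\cap H^1$ and the one-dimensional integrability of $\omega^{-1}$ products (a Nelson-type estimate). This is where I expect the real work, including the $H^1$ assumption on $g$, to lie.
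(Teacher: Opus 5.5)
Your skeleton matches the paper's. Up to the phase, the vector you finally propose to differentiate is the paper's $\Theta_N(t)$. The choice of $\delta_\circ$, the commutation of $\Wick[1]{F_{V_\ell(t)}}$ through $U_2^\circ$ by Theorem \ref{thm.Breteaux_conjug_Wick_propag_quad}, the recursion for the $b_k$, and the remainder bound via \eqref{number_estimate_Wick_poly} and \eqref{estimee_norme_propag_quad_propagation_etats_coherents} are also the same. One minor slip: Breteaux's formula gives $\sum_k\frac{1}{2^k k!}\Lambda_t^k(h_j\circ\upphi_t)$, not $h_j\circ\upphi_t$. It still lies in $\wickleq{j}$, so your degree count survives. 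The gap is in the rigorous part: you misidentify the obstacle, and the fix you propose cannot work.

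The obstacle you name does not exist. By \cite[Lemma 6.1]{DeGe00}, the kernels $w_j=\hat g(k_1+\dots+k_j)\prod_m u(k_m)$ lie in $L^2(\R^j)$ already for $g\in L^1(\R)\cap L^2(\R)$, in one space dimension. Hence $b^\circ$ and every $F_{V_\ell(t)}$ are genuine polynomial symbols with Hilbert--Schmidt (bounded) $\tilde b$, and $t\mapsto\norm{V_\ell(t)}_2$ is continuous. This gives \eqref{baby_proposition_3.17_Ammari_Zerzeri}, hence $\mathcal{D}_{+,2n}\subset D(I_\infty)$. Your own step three already assumes these symbols are bounded, contradicting your stated obstacle. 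The $H^1$ hypothesis on $g$ is used only for global $L^2$ well-posedness of the classical equation, not in the quantum estimates.

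The real obstacle is that $\varphi_t$ is only in $L^2$. In general $\braket{\varphi_t}{\omega\varphi_t}=\infty$ and $\omega\varphi_t\notin L^2$, which has three consequences:
\begin{enumerate}
\item The constant and linear terms in your step one are separately meaningless, and $t\mapsto W_t$ is not differentiable.
\item The ansatz is in general not in $D(H_\eps)=D(\dG[\eps]{\omega})\cap D(I_\infty)$. Already $\weylop[\eps]{-\rmi\frac{\sqrt{2}}{\eps}\varphi}\Omega\in D(\dG[\eps]{\omega})$ forces $\varphi\in D(\omega)$. So ``show that the ansatz lies in $D(H_\eps)$'' cannot be done.
\item $\wickbigcore\not\subset D(H_\eps)$, so it cannot serve as a core for $H_\eps$.
\end{enumerate}

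The missing idea is the interaction representation. The paper writes
$\Theta_N(t)=B_t\,\e^{\rmi\frac{\delta_\circ(t)}{\eps}}\,\weylop[\eps]{-\rmi\frac{\sqrt{2}}{\eps}\tilde{\varphi}_t}\,\tilde{U}_2^\circ(t,0)\,\psi_N(t)$, with $B_t=\e^{\rmi\frac{t}{\eps}H_\eps}\e^{-\rmi\frac{t}{\eps}\dG[\eps]{\omega}}$ and $\tilde{\varphi}_t=\e^{\rmi t\omega}\varphi_t\in C^1(\R,L^2)$. Then everything is controlled:
\begin{enumerate}
\item $B_t$ is strongly differentiable on $\mathcal{D}_{+,2n}$, with derivative $\frac{\rmi}{\eps}B_t\Wick[\eps]{F_{\tilde{V}(t)}^{\conjug_t}}$, by the number bound above.
\item The Weyl factor is differentiable on $\mathcal{D}_{+,1}$.
\item $\mathcal{D}_{+,\infty}$ is preserved by Weyl operators (Appendix \ref{app.weyl_op_nb_domain}) and by $\tilde{U}_2^\circ$.
\item Only finite quantities such as $\Re\braket{\tilde{\varphi}_t}{\rmi\dot{\tilde{\varphi}}_t}=\Re\braket{\varphi_t}{\gatbar{}{F_V}(\varphi_t)}$ ever appear.
\end{enumerate}

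Your fallback of working with regular data and passing to the limit is not enough. It would require stability of $\delta_\circ$, $U_2^\circ$ and all the $b_k$ under $L^2$ perturbations of $\varphi_0$, with $\eps$-uniform constants. Even for regular data, putting the ansatz in $D(H_\eps)$ would need $U_2^\circ$ to preserve $D(\dG[\eps]{\omega})$. You supply none of this.
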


		\begin{remark}\label{rem.Pphi2_asymptotique_complete}
		Let us highlight a few points:
			\begin{enumerate}
				\item In the spatial variable, the classical field equation \eqref{premiere_apparition_eq_classique_avec_partie_libre_Pphi2} corresponds to the non-linear Klein-Gordon equation 
				\begin{equation}\label{NLKG_Pphi2}
				 \left(\Box + m_0^2\right) \zeta_t = \mathscr{Q}(\zeta_t)\,,
				\end{equation}
				where $\Box = \partial_t^2 - \partial_x^2$ is the d'Alembert operator and 
			\begin{multline}\label{non-linearity_NLKG_Pphi2}
				 \mathscr{Q}(\zeta_t)(x) = - \frac{1}{2\, \pi} \sum_{j=1}^{2n} j\, \beta_j\, 2^{j/2}\, \times\\ \int_{\R^j} \left(\prod_{m=1}^{j-1} \hat{\zeta_t}(-k_m)\right)\, \hat{g}(k_1 + \cdots + k_j)\, \e^{\rmi\, k_j\, x}\, \dd k_1 \dots \dd k_j\,.
				\end{multline}
				\item In the case of a sub-quadratic polynomial interaction, that is, $n \leq 1$ where $2n$ is the degree of the polynomial $P$ in \eqref{premiere_apparition_spatially_cutoff_Pphi2_Hamiltonian}, we have the exact formula 	\begin{equation}\label{formule_exacte_propag_quad_etats_coherents}
	 \e^{-\rmi\frac{t}{\eps}H_\eps}\, \weylop[\eps]{-\rmi\frac{\sqrt{2}}{\eps}\varphi_0}\psi = \e^{\rmi\frac{\delta_\circ(t)}{\eps}}\, \weylop[\eps]{-\rmi\frac{\sqrt{2}}{\eps}\varphi_t}U_2^\circ(t,0)\,\psi\,,
	 \end{equation}
	 see Remark \ref{rem.subquad_case_Pphi2}.
				\item For each $N \in \N$, \eqref{asymptotique_complete_CSdyn_Pphi2} even holds for all $\psi \in D\left(\nbop[\eps]^{L/2}\right)$ with ${L = 3N+2n+3}$.
				\item The approximation of the coherent state dynamics \eqref{asymptotique_complete_CSdyn_Pphi2} is relevant for time $T$ less than the Ehrenfest time which is of order $\displaystyle\frac{N}{\lambda} \log(\displaystyle\frac{1}{\eps})$ with $\lambda$ a given Lyapunov exponent coming from Grönwall inequality. 
				\item Under the weaker assumption $g \in L^1(\R) \cap L^2(\R)$, existence and uniqueness of local $L^2$-valued mild solutions to \eqref{premiere_apparition_eq_classique_avec_partie_libre_Pphi2} still hold. In this setting, a local version of the above theorem applies; a more general formulation is given below in Theorem 
				 \ref{thm.general_asymptotique_complete}. 
			\end{enumerate}
	\end{remark}
	
		The interaction term of the Hamiltonian \eqref{premiere_apparition_spatially_cutoff_Pphi2_Hamiltonian} takes the form $\Wick[\eps]{F_V}$ with \[F_V(z) = \sum_{j=0}^{2n}\, \braket{\frac{\tensn{(z+\conjug z)}{j}}{\sqrt{j!}}}{V^{(j)}}\,,\] where $\conjug z(k) = \bar{z(-k)}$ defines a conjugation on $L^2(\R,\dd k)$ and $V = \big(V^{(j)}\big)_{j \in \N} \in \focksalg$ given by \eqref{interaction_Pphi2_de_la_forme_Fv}, see below. In fact, thanks to the formula \eqref{formule_clef_pour_definition_Wick_non_poly}, the polynomial Wick quantization defined by \eqref{def_quantification_Wick_poly_produit_tensoriel_symetrique} can be extended to symbols of the form \begin{equation}\label{def_potentiel_interaction_Fv}
		F_V(z) = \sum_{j = 0}^\infty\, \braket{\frac{\tensn{(z+\conjug z)}{j}}{\sqrt{j!}}}{V^{(j)}}\,,
	\end{equation}
	where $V = \Big(V^{(j)}\Big)_{j \in \N} \in \focks$ satisfies $\liftop{\conjug}V= V$. Furthermore, using the notations of Theorem \ref{thm.rpz_ondulatoire_espace_Fock}, the operator $\Wick[\eps]{F_V}$ corresponds via $\mathcal{R}$ to the multiplication operator by $\mathcal{R}\left(\liftop{\sqrt{\eps}}V\right)$ on $L^2(Q,\dd \mu)$. This enables us to extend Theorem \ref{thm.Pphi2_asymptotique_complete}.
	
	\medskip
	Let $A : D(A) \to \hi$ be a self-adjoint operator such that 
	
	\begin{hyp}\label{hyp.A1}
	$A \conjug = \conjug A \text{ and } A \geq m_0\, \1_\hi \text{ for some } m_0 > 0$.
	\end{hyp}

	\medskip
	Let $V \in \focks$ such that $\liftop{\conjug}V = V$. Following the setting of \cite[Theorem 2.16]{HKSi72}, we assume that $V$ satisfies:
	\begin{hyp}\label{hyp.A2}
		$\mathcal{R}(\liftop{\sqrt{\eps}}V) \in \displaystyle\bigcup_{q > 2} L^q(Q,\dd\mu)$ 
 		 and $\e^{-t \mathcal{R}(\liftop{\sqrt{\eps}}V)} \in L^1(Q,\dd\mu) \text{ for all } t > 0 \text{ and } \eps\in (0,1]$.
	\end{hyp}
	Here, $\mathcal{R}$ is the transform given by Theorem \ref{thm.rpz_ondulatoire_espace_Fock}.
	
	\medskip
	Under the assumptions \ref{hyp.A1} and \ref{hyp.A2}, the Hamiltonian 
	\begin{equation}\label{def_hamiltonien_asymptotique_complete_etat_coherent_QFT}
		H_\eps = \dG[\eps]{A} + \Wick[\eps]{F_V}
	\end{equation}
	is essentially self-adjoint on $\focksalg$ with domain $D(\dG[\eps]{A}) \cap D(\Wick[\eps]{F_V})$, see Theorem \ref{thm.hamiltonien_CSdyn_essentiellement_autoadjoint}. We recall that $\focksalg$ is given by \eqref{def_espace_fock_symetrique_algebrique}.
	
	\medskip
	To state our second main result on the complete asymptotic description of the evolution of coherent states in the classical limit for models of self-interacting bosonic quantum field Hamiltonians (i.e., cases of analytic interactions), we need the following additional assumption:
	\begin{hyp}
\label{hyp.dom_superexpo}
			$\exists\, \alpha > 0, \quad \exists\, \lambda > 1, \quad V \in D\left(\e^{\alpha\, \liftop{\lambda}}\right).$
		\end{hyp}

	\begin{theorem}\label{thm.general_asymptotique_complete}
		Assume that conditions \ref{hyp.A1}, \ref{hyp.A2} and \ref{hyp.dom_superexpo} are fulfilled.
		Let $\varphi_0 \in \hi$ be fixed. There exists a sequence $(b_k(t,\cdot))_{k \in \N} \in \wickpol^\N$ and $T_0 = T(\varphi_0) \in (0,+\infty]$ such that 
		for~all ${\psi \in \focksalg}$, for all $N \in \N$ and all $T\in (0,T_0),$ there exists a finite $\eps$-independent bound $C(T,\psi,N) > 0$ such that for all $t \in [-T,T]$, the inequality
		\begin{multline}\label{asymptotique_complete}
			\norm{\e^{-\rmi\frac{t}{\eps}H_\eps} \weylop[\eps]{-\rmi\frac{\sqrt{2}}{\eps}\varphi_0}\psi-\sum_{k = 0}^N \eps^{\frac{k}{2}} \, \e^{\rmi\frac{\delta(t)}{\eps}}\, \weylop[\eps]{-\rmi\frac{\sqrt{2}}{\eps}\varphi_t}\,U_2(t,0)\,\Wick[1]{b_k(t)}\psi}\\ 
			\leq C(T,\psi,N) \cdot \eps^{\frac{N+1}{2}}
		\end{multline}
		holds uniformly in $\eps \in (0,1/3]$. Here, $\varphi_t$ is the unique maximal mild solution on $(-T_{\rm min},T_{\rm max})$ of the classical field equation 
		\begin{equation}\label{eq_classique_avec_partie_libre}
			\left\{\begin{array}{l}
				\rmi \, \dot{\varphi_t} = A\, \varphi_t + \gatbar{}{F_V}(\varphi_t)\\
				\eval{\varphi_t}_{t = 0} = \varphi_0\,,
			\end{array}\right.
		\end{equation}
		given by Theorem \ref{thm.existence_solution_mild_eq_classique_avec_partie_libre} and $T_0=\min(T_{\min},T_{\max})$ denotes its minimal existence time. The function $\delta$ satisfies 
		\begin{equation}\label{conditions_fonction_torsion_propagation_etats_coherents}
		\left\{\begin{array}{l}
			\dot{\delta}(t) = \Re \braket{\varphi_t}{\gatbar{}{F_V}(\varphi_t)} - F_V(\varphi_t)\\
			\delta(0) = 0\,,
		\end{array}\right.
	\end{equation}
		and $U_2(t,s)$ are time-dependent Bogoliubov transforms characterized as the unique unitary propagator generated by a time-dependent quadratic Hamiltonian given in Theorem \ref{thm.dynamique_quad_applique_à_notre_potentiel}. The polynomial symbols $b_k(t,\cdot) \in \wickleq{3k}$ are uniquely determined by the recursive formulas \eqref{expression_explicite_polynomes_de_Wick_de_l'asymptotique_complete}.
	\end{theorem}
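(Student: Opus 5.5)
The strategy is Hepp's method, carried to arbitrary order in the spirit of Robert's Dyson-type expansion. Fix $T<T_0$ and the mild solution $\varphi_t$ on $[-T,T]$ given by Theorem \ref{thm.existence_solution_mild_eq_classique_avec_partie_libre}. Set
\[
\Psi_\eps(t)=\e^{-\rmi\frac{\delta(t)}{\eps}}\,\weylop[\eps]{-\rmi\tfrac{\sqrt2}{\eps}\varphi_t}^{*}\,\e^{-\rmi\frac{t}{\eps}H_\eps}\,\weylop[\eps]{-\rmi\tfrac{\sqrt2}{\eps}\varphi_0}\psi .
\]
Using the translation formula \eqref{translation_Wick_poly}, extended to $F_V$ via \eqref{formule_clef_pour_definition_Wick_non_poly} and the wave representation, I would compute $\rmi\eps\partial_t\Psi_\eps=\mathcal H_\eps(t)\Psi_\eps$. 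Here $\mathcal H_\eps(t)$ is the Wick quantization of
\[
z\mapsto \langle z,Az\rangle+F_V(z+\varphi_t)-\text{(the time derivatives of the Weyl phase and of }\delta).
\]
The constant term cancels by \eqref{conditions_fonction_torsion_propagation_etats_coherents}, and the linear term cancels because $\varphi_t$ solves \eqref{eq_classique_avec_partie_libre}. What remains is
\[
\mathcal H_\eps(t)=\eps\,\Wick[1]{Q_t}+\eps\sum_{j\geq3}\eps^{\frac{j-2}{2}}\Wick[1]{P_{j,t}},
\]
where $Q_t$ is the quadratic symbol $\langle z,Az\rangle+\frac12 D^2F_V(\varphi_t)[z,z]$ and $P_{j,t}$ are the homogeneous Taylor pieces of $F_V(\cdot+\varphi_t)$ of degree $j$, rescaled by \eqref{prop.homogeneite_quantif_Wick}. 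These pieces are again of the form $F_{V_t}$: translating by $\varphi_t$ replaces $V$ by a Weyl-translated vector. Assumption \ref{hyp.dom_superexpo} guarantees that the resulting degree-$j$ coefficients are summable with factorially decaying norms, uniformly for $t\in[-T,T]$, and that the Taylor remainder after order $N+2$ remains a well-defined (closable) Wick operator.

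Next I divide out the quadratic part using the propagator $U_2(t,s)$ of $\Wick[1]{Q_t}$ from Theorem \ref{thm.dynamique_quad_applique_à_notre_potentiel}. Set $\Phi_\eps(t)=U_2(t,0)^*\Psi_\eps(t)$. Then
\[
\rmi\partial_t\Phi_\eps=\sum_{j\geq3}\eps^{\frac{j-2}{2}}\,U_2(t,0)^*\Wick[1]{P_{j,t}}U_2(t,0)\,\Phi_\eps .
\]
By the Breteaux-type conjugation result (the analogue of Theorem \ref{thm.Breteaux_conjug_Wick_propag_quad}), each conjugated operator equals $\Wick[1]{P_{j,t}\circ S(t,0)}$, where $S(t,0)$ is the classical linear symplectic flow. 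So all generators stay polynomial Wick operators.

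I then expand formally, $\Phi_\eps=\sum_k\eps^{k/2}\Wick[1]{b_k(t)}\psi$, match powers of $\eps^{1/2}$, and use the composition formula \eqref{formule_compo_Wick_poly}. This gives the recursion
\[
b_0=1,\qquad b_k(t)=-\rmi\sum_{j=3}^{k+2}\int_0^t \big(P_{j,s}\circ S(s,0)\big)\compwick[1] b_{k-j+2}(s)\,\dd s .
\]
Since $j-2$ powers of $\eps^{1/2}$ come with degree $j$, by induction $b_k\in\wickleq{3k}$. This defines \eqref{expression_explicite_polynomes_de_Wick_de_l'asymptotique_complete}.

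For the error, put $R_N(t)=\Phi_\eps(t)-\sum_{k\le N}\eps^{k/2}\Wick[1]{b_k(t)}\psi$. Duhamel's formula combined with unitarity gives
\[
\norm{R_N(t)}\le\int_0^{|t|}\norm{\text{residual}(s)}\,\dd s .
\]
The residual consists of terms $\eps^{(N+1)/2}\Wick[1]{\cdot}\psi$ with polynomial symbols of bounded degree, plus the Taylor remainder of $F_V$ applied to $\Wick[1]{b_k}\psi$.

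The first kind of term is bounded using the number estimates \eqref{number_estimate_Wick_poly}, together with the fact that $U_2$ and the Weyl operators preserve $D(\nbop[1]^{\ell/2})$ with bounds uniform in $\eps$ (Appendix \ref{app.weyl_op_nb_domain} and the chaos propagation in Theorem \ref{thm.quadratic_dynamics_Ammari_Breteaux_chaospropag}). Since $\psi\in\focksalg$, every $\Wick[1]{b_k}\psi$ lies in finitely many particle sectors.

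\textbf{Main obstacle.} The delicate step is controlling the non-polynomial Taylor remainder $\eps^{\frac{N+1}{2}}\Wick[1]{\mathcal R_{N+3,t}}$, which is an unbounded multiplication operator in $L^2(Q,\dd\mu)$, on vectors that are only finite-particle after applying $U_2(t,0)$. I would attack it with Hölder's inequality in the wave representation together with Nelson hypercontractivity: $\Gamma(\e^{-\tau})$ maps $L^2$ to $L^q$, and $\Gamma(\lambda)$ with $\lambda>1$ in \ref{hyp.dom_superexpo} absorbs the resulting growth $\sqrt{\eps}^{\,j}$ times $(q-1)^{j/2}$ uniformly in $\eps\le1/3$, which explains the restriction on $\eps$. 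Finally, after applying the unitary Weyl operator and the phase, one recovers exactly \eqref{asymptotique_complete}.
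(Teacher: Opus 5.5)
The overall structure of your proposal is the paper's: conjugate by the Weyl operator and the phase, cancel the constant and linear terms through \eqref{conditions_fonction_torsion_propagation_etats_coherents} and the classical equation, divide out $U_2$, generate the $b_k$ from Breteaux's conjugation and the composition formula, and close with Duhamel. The gap is at the step you flag yourself. You need, continuously in $s$,
\[
\norm{\Wick[\eps]{F_{R_{N+3-j}(s)}}\,U_2(s,0)\,\Wick[1]{b_j(s)}\psi}\leq C(s)\,\eps^{\frac{N+3-j}{2}}\,,
\]
but your H\"older--hypercontractivity plan only treats the potential side, not the state side.

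\emph{Why the state side is the problem.} The vector $\Phi_s=U_2(s,0)\Wick[1]{b_j(s)}\psi$ is \emph{not} finite-particle, because the Bogoliubov dynamics creates pairs. The multiplier $\mathcal{R}(\liftop{\sqrt{\eps}}R_{N+3-j}(s))$ lies in every $L^q(Q,\dd\mu)$ with $q<\infty$, but in general not in $L^\infty$. So H\"older forces $\mathcal{R}\Phi_s\in L^r$ for some $r>2$. Via hypercontractivity you would get this from $\Phi_s\in D(\liftop{\nu})$ with $\nu>1$, i.e.\ an exponential moment of $\nbop[1]$.

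Nothing you cite gives such a moment:
\begin{itemize}
\item Invariance of $D(\nbop[1]^{\ell/2})$ is not enough.
\item The available bounds \eqref{estimee_norme_propag_quad_propagation_etats_coherents} and \eqref{proposition_4.5_Ammari_Zerzeri_propagation_etats_coherents} on $\norm{(\nbop[1]+1)^{k/2}U_2\psi}$ grow like $\e^{ck\lambda^k}$ in $k$. They cannot be summed in $\norm{\liftop{\nu}\Phi}^2\leq\sum_k\frac{(2\log\nu)^k}{k!}\norm{(\nbop[1]+1)^{k/2}\Phi}^2$.
\item The ready-made estimate \eqref{hypercontractive_estimate_in_Fock_space} becomes vacuous, with $\norm{\liftop{\sqrt3}\Phi_s}=\infty$, once the pair-creation kernel of $U_2(s,0)$ has norm $\geq 1/3$. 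Nothing excludes this at times of order one.
\end{itemize}
Putting all the hypercontractive weight on $V$ therefore does not close the estimate. It also misreads the role of assumption \ref{hyp.dom_superexpo}: factors $(q-1)^{j/2}$ would be absorbed by any $V\in\bigcap_{\mu}D(\liftop{\mu})$.

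\emph{The same gap affects differentiation.} The missing control is also what places the ansatz in a domain where $B_t=\e^{\rmi\frac{t}{\eps}H_\eps}\e^{-\rmi\frac{t}{\eps}\dG[\eps]{A}}$ is strongly differentiable. Contrary to what you state, the Weyl operators $\weylop[\eps]{-\rmi\frac{\sqrt2}{\eps}\varphi_t}$ do not preserve $D(\nbop[1]^{\ell/2})$ uniformly in $\eps$; they are only needed qualitatively there. Relatedly, you cannot differentiate $\Psi_\eps$ along the exact solution, whose regularity is unknown. The Duhamel step has to be run, as in the paper, on $\Theta_N(t)$, where the exact propagator only acts on the explicit ansatz.

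\emph{How the paper closes it.} It uses a weaker, sub-exponential weight $\mathcal{\bf S}_{\alpha_0}=\sum_k\e^{-\alpha_0\lambda^k}(\nbop[1]+1)^k$.
\begin{itemize}
\item Estimate \eqref{proposition_4.5_Ammari_Zerzeri_propagation_etats_coherents} with $1<\lambda_0<\lambda$ is summable against these coefficients. This gives $U_2(t,0)\focksalg\subset D(\mathcal{\bf S}_{\alpha_0}^{1/2})$ (Proposition \ref{propagateur_quad_envoie_Fock_algebrique_dans_domaine_racine_serie_entiere}).
\item The $2^{k^2}$ growth from the Appendix is likewise absorbed, so Weyl operators map $D(\mathcal{\bf S}_{\alpha_1}^{1/2})$ into $D(\mathcal{\bf S}_{\alpha_0}^{1/2})$ (Proposition \ref{prop.op_Weyl_preserve_presque_domaine_racine_serie_entiere}).
\item Proposition \ref{prop.3.17_ammari_zerzeri} together with assumption \ref{hyp.dom_superexpo} gives \eqref{Wick_Fv_est_borne_sur_domaine_racine_serie_entiere}.
\end{itemize}
Apply this to $R_{N+3-j}(s)$, which lives in sectors $\geq N+3-j$ and satisfies \ref{hyp.dom_superexpo} with some $\gamma<\alpha$. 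Through \eqref{inegalite_cruciale_pour} it yields the factor $\eps^{\frac{N+3-j}{2}}$. It also provides the differentiability domain of Proposition \ref{prop.differentiabilite_premier_op_unitaire_dans_Theta(t)}. The double-exponential decay in \ref{hyp.dom_superexpo} is precisely the price of a weight weak enough to be compatible with $U_2$.

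Your H\"older route could only be rescued by a separate lemma: $U_2(t,0)\focksalg\subset D(\liftop{\nu})$ for some $\nu>1$, locally uniformly in $t$, with the H\"older exponents tuned to $\nu$. This might come from the squeezed-state structure of the Bogoliubov transform, whose pair kernel has norm $<1$.

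\emph{A smaller error in the recursion.} At $\eps=1$, Breteaux's formula keeps its contraction terms: $U_2^*\Wick[1]{b}U_2=\Wick[1]{\hat b}$ with $\hat b=\sum_k\frac{1}{2^kk!}\Lambda_t^k(b\circ\upphi_t)$, not $b\circ S(t,0)$. Your recursion must therefore use $\hat F_{V_\ell}$, as in \eqref{expression_explicite_polynomes_de_Wick_de_l'asymptotique_complete}.
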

	
	\begin{remark}\label{rem.classical_action}
	 Explicitly, the function $\delta$ is given by 
	 \begin{equation}\label{expression_explicite_torsion}
			\delta(t) = \int_0^t \sum_{j=0}^\infty\, \frac{j-2}{2}\, \braket{\frac{\tensn{(\varphi_s + \conjug \varphi_s)}{j}}{\sqrt{j!}}}{V^{(j)}} \dd s\,.
		\end{equation}
		In the finite-dimensional case $\hi = \C^d$, this quantity is closely linked to the classical action. Indeed, combining \eqref{conditions_fonction_torsion_propagation_etats_coherents} with \eqref{eq_classique_avec_partie_libre}, we have formally \[\dot{\delta}(t) = \frac{1}{2}\, \symform{\varphi_t}{\dot{\varphi}_t} - h(\varphi_t)\,,\] 
		where $\symform{u_1}{u_2} = -2\, \Im\braket{u_1}{u_2}$ is the canonical symplectic form appearing in \eqref{relation_definissant_rpz_CCR} and \[h(z) = \braket{z}{A\, z} + F_V(z)\] is the energy functional associated with the Hamiltonian \eqref{def_hamiltonien_asymptotique_complete_etat_coherent_QFT}. Let us identify $\C^d$ with $\R^{2d}$ via the coordinates 
		\begin{equation}\label{identification_Cd_R2d}
		 z = \frac{q + \rmi\, p}{\sqrt{2}} \in \C^d \leftrightarrow (q,p) \in \R^{2d}\,.
		\end{equation}
		Identifying $h(z)$ with the classical Hamiltonian $h(q,p)$ and writing $\varphi_t = \displaystyle\frac{q_t + \rmi\, p_t}{\sqrt{2}}$ the solution of the classical field equation \eqref{eq_classique_avec_partie_libre}, we obtain \[\dot{\delta}(t) = L(q_t,\dot{q}_t,p_t) - \dv{t} \left(\frac{1}{2}\, q_t \cdot p_t \right)\,,\] where $L(q_t,\dot{q}_t,p_t) = \dot{q}_t \cdot p_t - h(q_t,p_t)$ is the Lagrangian of the system. Finally, we have 
		\begin{equation}\label{relation_torsion_action_classique}
		 \delta(t) = S_t(q,p) - \frac{1}{2} (q_t \cdot p_t - q_0 \cdot p_0)\,,
		\end{equation} where $S_t(q,p) = \displaystyle\int_0^t L(q_s,\dot{q}_s,p_s)\, \dd s$ is the classical action of the system. The expression \eqref{relation_torsion_action_classique} can be found, for example, in \cite[Chapter 4, page 94]{CoRo21}.
	\end{remark}
	
	\begin{remark} Let us highlight some implications obtained from the assumption \ref{hyp.dom_superexpo}:
	
	\begin{itemize}
	 \item Explicitly, the assumption \ref{hyp.dom_superexpo} means that the term defining the interaction satisfies 
 \[\sum_{j=0}^\infty\, \e^{2\alpha \lambda^j} \norm{V^{(j)}}^2 <\infty\,.\] 
	 \item The assumption \ref{hyp.dom_superexpo} implies $\mathcal{R}(V)\in \underset{q\geq 2}{\bigcap}L^q(Q,\dd\mu)$. Therefore, by hypercontractive estimates we get $\mathcal{R}\big(\liftop{\sqrt{\eps}}V\big)\in \underset{p\geq 2}{ \bigcap}L^p(Q,\dd\mu)$, see Theorem \ref{thm.hypercontractive_estimates_Simon} below.

	 \item The assumption \ref{hyp.dom_superexpo} implies that $V \in D\left(\liftop{\sqrt{2}}\right)$. Hence, if we define 
		\begin{equation}
			V_\ell^{(j)}(z_0) = \symop[\ell]\big(\bra{\tensn{(z_0 + \conjug z_0)}{j-\ell}} \ot (\tensn{\1}{\ell})\big)V^{(j)} \in \tensymn{\hi}{\ell}
		\end{equation}
		for all $z_0 \in \hi$, the sequence 
		\begin{equation}\label{def_potentiels_dev_Taylor_propagation_QFT}
			\big(V_\ell(z_0)\big)_{\ell \in \N} = \left(\sum_{j=\ell}^\infty\, \sqrt{\frac{\ell!}{j!}}\,\binom{j}{\ell}\, V_\ell^{(j)}(z_0)\right)_{\ell \in \N}
		\end{equation} 
		defines an element of $\focks$. This enables us to write, for all $z \in \hi$, 
		\begin{equation}\label{developpement_Fv_autour_d'un_point}
			F_V(z+z_0) = \sum_{\ell=0}^\infty \braket{\frac{\tensn{(z+\conjug z)}{\ell}}{\sqrt{\ell!}}}{V_\ell(z_0)} = \sum_{\ell=0}^\infty F_{V_\ell(z_0)}(z)\,.
		\end{equation}
		Thus, we have the expansion 
		\begin{equation}\label{developpement_Fv_autour_d'un_point_ordre_2}
			F_V(z+z_0) = F_V(z_0) + 2 \, \Re \braket{z}{\gatbar{}{F_V}(z_0)} + F_{V_2(z_0)}(z) + F_{R_3(z_0)}(z)\,,
		\end{equation}
		where $R_3(z_0) = \displaystyle\bigoplus_{\ell=3}^\infty V_\ell(z_0)$ and 
		\begin{equation}\label{def_gatbar_Fv}
			\gatbar{}{F_V}(z_0) = \sum_{j=1}^\infty\, \frac{j}{\sqrt{j!}}\, \left(\bra{\tensn{(z_0 + \conjug z_0)}{j-1}} \ot \1\right)\, V^{(j)} \in \hi\,.
		\end{equation}
	 \end{itemize}
	\end{remark}
	
	\bigskip
\begin{example}
 A particular case of analytic interaction is a variant of the H\o egh-Krohn model with spatial and momentum cutoffs and analytic interaction, see \cite[Subsection~3.4, Paragraph~2]{AmZe14}. On the one-particle space $\hi = L^2(\R^d,\dd k)$, let us consider the dispersion relation 
	\begin{equation}\label{relation_dispersion_modèle_Pphi2}
		\omega(k) = \sqrt{m_0^2 + \abs{k}^2} \text{ with } m_0 > 0\,.
	\end{equation}
	Let $u(k) = \sqrt{2}\, \omega(k)^{-1/2}$, and, for $\kappa \geq 1$ fixed, 
	\begin{equation}
		u_\kappa(k) = u(k) \, \hat{\chi}\left(\frac{k}{\kappa}\right)\,,
	\end{equation} where $\chi \in C_c^\infty(\R^d)$, $0 \leq \chi \leq 1$ is radial, of total mass 1 and supported in the unit ball. Here, $\hat{\chi}(k) = \displaystyle\int_{\R^d} \chi(x)\, \e^{- \rmi\, k \cdot x} \dd x$ denotes the Fourier transform of $\chi$. Finally, let $V(\lambda) = \displaystyle\sum_{j \in \N} a_j\, \lambda^j$, $a_j \in \R$, be an analytic function. For $r \geq 1$ fixed, the spatially cutoff H\o egh-Krohn model associated to $V$ corresponds to the Hamiltonian 
	\begin{equation}\label{def_spatially_cutoff_HK_Hamiltonian}
	 H_\eps^{\mathrm{HK}} = \dG[\eps]{\omega} + \sum_{j=0}^\infty a_j \int_{\abs{x} \leq r} \wickdots{\left(\fieldop[\eps]{\tau_x\, u_\kappa}\right)^j} \dd x\,,
	\end{equation}
	where $\tau_x u(k) = \e^{-\rmi kx}\, u(k)$ is the translation of vector $x$ in the Fourier variable and $\wickdots{\fieldop[\eps]{v}^j}$ denotes the Wick ordering of $\fieldop[\eps]{v}^j$, see \eqref{ordredeWick}. For this model, the interaction term takes the form $\Wick[\eps]{F_V}$ where 
	\begin{equation}
	\left\{\begin{array}{l}
	 V^{(j)} = \sqrt{j!}\, 2^{-\frac{j}{2}}\, a_j\, f_j\\
	 f_j(k_1,\dots,k_j) = \displaystyle\prod_{m=1}^j u_\kappa(k_m)\, \Four\left(\1_{\abs{x} \leq r}\right)(k_1 + \cdots + k_j) \in L^2_{\mathrm{sym}}(\R^{dj}) \simeq \tensymn{L^2(\R^d)}{j}\,.
	\end{array}\right.
	\end{equation}
	Let us note that we have the estimate 
	\begin{align*}
	 \norm{V^{(j)}}^2 \leq r^{2d}\, V_d^2 \, j! \left(\frac{(2 \pi \kappa)^d \norm{\chi}_2^2}{m_0}\right)^j a_j^2\,,
	 \end{align*}
	 where $V_d$ denotes the volume of the euclidean unit ball in $\R^d$. Hence, if the coefficients $a_j$ satisfy 
	 \begin{equation}
	 \sum_{j=0}^\infty \e^{2 \alpha' \lambda^j} \, j! \, a_j^2 < +\infty \text{ for some $\alpha'>0$ and $\lambda > 1$\,,}
	 \end{equation}
	 the assumption \ref{hyp.dom_superexpo} holds with constants $0 < \alpha < \alpha'$ and $\lambda > 1$.
\end{example}

	\noindent
	\textbf{Outline of the proof of Theorems \ref{thm.Pphi2_asymptotique_complete} and \ref{thm.general_asymptotique_complete}.}
	First, recall that the interaction term of the $\ppde$ model takes the form Wick$(F_V)$, where $V$ is given by \eqref{interaction_Pphi2_de_la_forme_Fv}. In this form, the quantities $\mathscr{P}$, $\delta_\circ(\bullet)$ and $U_2^\circ(t,s)$ involved in the $\ppde$ case are equal to
$\gatbar{}{F_V}$, $\delta(\bullet)$ and $U_2(t,s)$, respectively.
		
Expanding $F_V$ around the classical orbit $\varphi_t$ satisfying the classical field equation \eqref{eq_classique_avec_partie_libre}, we obtain an expression of the form 
		\begin{equation}\label{asymp_Taylor_outline_proof}
			F_V(\bullet+\varphi_t) = F_V(\varphi_t) + 2\, \Re\braket{\varphi_t}{\gatbar{}{F_V}(\varphi_t)} + F_{V_2(t)} + F_{R_3(t)}\,,
		\end{equation}
		where $V_\ell(t) = V_\ell(\varphi_t)$ (see \eqref{def_potentiels_dev_Taylor_propagation_QFT}) and $R_3(t) = \displaystyle\bigoplus_{\ell = 3}^\infty V_\ell(t)$. 
		
		Let $\psi \in \focksalg$ and $N \in \N$ be fixed. For $\delta \in C^1(\R,\R)$ satisfying \eqref{conditions_fonction_torsion_propagation_etats_coherents} (see also Remark~\ref{rem.classical_action}) and for $\big(b_j(t)\big)_{j \in \N} \in \wickpol^\N$ to be chosen later, let us define 
		\begin{equation*}
		\psi_N(t) = \sum_{j = 0}^N \eps^{\frac{j}{2}}\, \Wick[1]{b_j(t)}\psi
		\end{equation*}
and
		\begin{equation}\label{premiere_apparition_Theta(t)}
			\Theta_N(t) = \e^{\rmi\frac{t}{\eps}H_\eps}\, \e^{\rmi\frac{\delta(t)}{\eps}} \,\weylop[\eps]{-\rmi\frac{\sqrt{2}}{\eps}\varphi_t}U_2(t,0) \, \psi_N(t)\,,
		\end{equation}
		such that the left-hand side of \eqref{asymptotique_complete} is equal to $\norm{\Theta_N(t) - \Theta_N(0)}$.
		The formal differentiation of \eqref{premiere_apparition_Theta(t)} yields 
		\begin{equation}\label{expression_propre_derivee_Theta(t)}
			\dot{\Theta}_N(t) = \frac{\rmi}{\eps}\, \e^{\rmi\frac{t}{\eps}H_\eps}\, \e^{\rmi\frac{\delta(t)}{\eps}}\, \weylop[\eps]{-\rmi\frac{\sqrt{2}}{\eps}\varphi_t} \cdot \left(\Wick[\eps]{F_{R_3(t)}} \, U_2(t,0)\, \psi_N(t) - \rmi \, \eps\, U_2(t,0) \dot{\psi}_N(t)\right).
		\end{equation}
		Writing $R_3(t) = \displaystyle\bigoplus_{\ell = 3}^{N+2-j} V_{\ell}(t) + R_{N+3-j}(t)$ for each $0 \leq j \leq N$, applying the homogeneity property \eqref{prop.homogeneite_quantif_Wick} and picking suitable Wick symbols $b_j(t)$, we obtain 
		\begin{equation*}
		\dot{\Theta}_N(t) = \frac{\rmi}{\eps}\, \e^{\rmi\frac{t}{\eps}H_\eps}\, \e^{\rmi\frac{\delta(t)}{\eps}} \, \weylop[\eps]{-\rmi\frac{\sqrt{2}}{\eps}\varphi_t} \cdot \left(\sum_{j = 0}^N \eps^{\frac{j}{2}} \, \Wick[\eps]{F_{R_{N+3-j}(t)}} \, U_2(t,0) \, \Wick[1]{b_j(t)} \psi\right)\,,
		\end{equation*}
		 hence 
		\begin{align*}
			\norm{\Theta_N(t) - \Theta_N(0)} \leq &\int_0^{\abs{t}} \norm{\dot{\Theta}_N(s)} \dd s\\
			\leq &\frac{1}{\eps} \int_0^{\abs{t}} \sum_{j = 0}^N\, \eps^{\frac{j}{2}}\, \norm{\Wick[\eps]{F_{R_{N+3-j}(s)}}\, U_2(s,0)\, \Wick[1]{b_j(s)} \psi} \dd s\,.
		\end{align*}
		
		Finally, \eqref{inegalite_cruciale_pour} provides a domination of the form 
		\begin{equation*}
		 \norm{\Wick[\eps]{F_{R_{N+3-j}(s)}}\, U_2(s,0)\, \Wick[1]{b_j(s)} \psi} \leq\, C'(s,\psi,N,j) \cdot \eps^{\frac{N+3-j}{2}}\,,
		\end{equation*}
		which shows that the left hand-side of \eqref{asymptotique_complete} is dominated by 
		$\displaystyle
		 C(t,\psi,N) \cdot \eps^{\frac{N+1}{2}}
		$
		with $C(t,\psi,N)$ uniform on compact sets in $t$.
		
		\medskip
		
		Shortly speaking, the main technical challenge lies in controlling domain issues arising from the unboundedness of the various bosonic field operators and in providing a rigorous justification of the scheme described above.
		Let us now emphasize the main difference between the two proofs: the proof of Theorem \ref{thm.Pphi2_asymptotique_complete} mainly relies on the number estimates \eqref{number_estimate_Wick_poly}, which are not valid for non-polynomial interactions. Therefore, in order to prove Theorem \ref{thm.general_asymptotique_complete}, we must generalize these estimates by replacing the powers of $(\nbop[1]+1)$ with an analytic function in the variable $(\nbop[1]+1)$ with appropriately chosen coefficients having super-exponential decay rate. 
	
	\bigskip
	
	\section{The spatially cutoff \texorpdfstring{$\ppde$}{P(φ)₂} model}\label{sect.pphi2}
	
	This section is devoted to the $\ppde$ model. We start by recalling standard facts about the construction of the spatially cutoff $\ppde$ Hamiltonians in Subsection \ref{subsect.intro_Pphi2_model}. The study of the classical non-linear Klein-Gordon field equation associated with this model is presented in Subsection \ref{subsect.The_classical_field_ equation_Pphi2}. The existence and uniqueness of the unitary propagator describing the quadratic dynamics associated with the $\ppde$ model are proven in Subsection \ref{subsect.The_quadratic_dynamic_Pphi2}. We conclude the section with the proof of Theorem \ref{thm.Pphi2_asymptotique_complete} in three steps, see Paragraphs \ref{par.Diff_Pphi2}, \ref{par.Choice_Pphi2} and \ref{par.End_Proof_Thm_Pphi2} in Subsection \ref{subsect.Proof_Thm_Pphi2}.

	\subsection{Introduction of the model}\label{subsect.intro_Pphi2_model}
	
	We follow the construction of the spatially cutoff $\ppde$ model presented in \cite[Section~6.1]{DeGe00}. Let us consider the one-particle Hilbert space $\hi = L^2(\R,\dd k)$ endowed with its usual norm $\norm{\cdot}_2$ and with the conjugation 
	\begin{equation}\label{conjug_sur_L2_Fourier_pour_Pphi2}
		\conjug z (k) = \bar{z(-k)}\,.
	\end{equation}
	
The dispersion relation is given by 
\begin{equation}\label{relation_dispersion_modele_Pphi2}
		\R\ni k\longmapsto \omega(k) = \sqrt{m_0^2 + k^2}\quad \text{ with } m_0 > 0\,,
	\end{equation}
	and then the kinetic energy is defined by $\dG[\eps]{\omega}$.
Let us set $u(k)= \sqrt{2} \, {\omega(k)^{-1/2}}$, and let $P$ be a real polynomial given by 
	\begin{equation}\label{def_polynome_modele_Pphi2}
		P = \displaystyle\sum_{j=0}^{2n}\, \beta_j\, X^j \text{ with } \beta_{2n} > 0\,,
	\end{equation} 
which is bounded from below. The spatially cutoff $\ppde$ model corresponds to the Hamiltonian 
	\begin{align}\label{def_spatially_cutoff_Pphi2_Hamiltonian}
		H_\eps = &\dG[\eps]{\omega} + \int_\R g(x) \wickdots{P\left(\fieldop[\eps]{\tau_x u}\right)} \dd x\\
		= &\dG[\eps]{\omega} + \sum_{j=0}^{2n} \beta_j \int_\R g(x) \wickdots{\fieldop[\eps]{\tau_x u}^j} \dd x\nonumber\,,
	\end{align} 
	where $\tau_x u(k) = \e^{-\rmi kx}\, u(k)$ is the translation of vector $x$ in the Fourier variable, $g \in L^1(\R) \cap L^2(\R)$ is a non-negative, even cutoff function, and $\wickdots{\fieldop[\eps]{v}^j}$ denotes the Wick ordering of $\fieldop[\eps]{v}^j$, which explicit expression is given by \eqref{ordredeWick}.
	
\smallskip	
Since $u$ does not belong to $L^2(\R,\dd k)$, the formal notation 
	\begin{equation}\label{def_spatially_cutoff_interaction_Pphi2}
		 I_\infty = \int_\R g(x) \wickdots{P\big(\fieldop[\eps]{\tau_x u}\big)} \dd x
	\end{equation} expresses the fact that the interaction term $I_\infty$ is defined as the limit of a family of interactions \[\text{\enquote{$I_\infty = \displaystyle\underset{\kappa \to \infty}{\lim} I_\kappa$}\,.}\] Precisely, let $\chi$ be a Schwartz, non-negative function of total mass 1 and let us define, for all $\kappa \geq 1$, 
	\[
	u_\kappa(k) = u(k)\, \hat{\chi}\left(\displaystyle\frac{k}{\kappa}\right)\,,
	\] 
	where $\hat{\chi}$ denotes the Fourier transform of $\chi$. The interactions 
	\[I_\kappa = \int_\R g(x) \wickdots{P\big(\fieldop[\eps]{\tau_x u_\kappa}\big)} \dd x = \sum_{j=0}^{2n} \sum_{\ell = 0}^j\, 2^{-\frac{j}{2}}\, \beta_j\, \binom{j}{\ell}\, \underbrace{\int_\R g(x)\, \crea[\eps]{\tau_x u_\kappa}^\ell\, \anni[\eps]{\tau_x u_\kappa}^{j-\ell}\, \dd x}_{\egaldef I_{j,\ell,\kappa}}\] are well-defined on $\focksalg$. Furthermore, for all $0 \leq j \leq 2n$ and $0 \leq \ell \leq j$, $I_{j,\ell,\kappa}$ is a Wick operator with symbol 
	\begin{equation}\label{def_bjlkappa_Pphi2}
		b_{j,\ell,\kappa}(z) = \int_\R g(x)\, \braket{z}{\tau_x u_\kappa}^\ell\, \braket{\tau_x u_\kappa}{z}^{j-\ell}\, \dd x\,.
	\end{equation}
	Let us note that we have 
	\begin{equation}
		b_{j,\ell,\kappa}(z) = \braket{(\tensn{z}{\ell}) \ot (\tensn{\conjug z}{j-\ell})}{w_{j,\kappa}}_{L^2(\R,\dd k)} \,,
	\end{equation}
	where $w_{0,\kappa} = w_0 \defegal \norm{g}_1$ and, for $j \geq 1$,
	\begin{align}
		w_{j,\kappa}(k_1,\cdots,k_j) = \hat{g}(k_1 + \cdots + k_j) \prod_{m=1}^j u_\kappa(k_m)
	\end{align}
	is a Schwartz kernel belonging to $L^2(\R^j)$ according to \cite[Lemma 6.1]{DeGe00}. The same lemma tells us that, when $\kappa \to +\infty$, $(w_{j,\kappa})_{\kappa \geq 1}$ converges in $L^2(\R^j)$ towards 
	\begin{equation}\label{def_noyau_Schwartz_Pphi2_Derezinski_Gerard}
	 w_j(k_1,\dots,k_j) = \hat{g}(k_1 + \cdots + k_j) \prod_{m=1}^j u(k_m) \in L^2(\R^j)\,.
	\end{equation}
	
	\noindent Thus, if we define 
	\begin{equation}\label{def_bjl_Pphi2}
	 b_{j,\ell}(z) = \braket{(\tensn{z}{\ell}) \ot (\tensn{\conjug z}{j-\ell})}{w_j}_{L^2(\R^j)} \in \wickpol[j-\ell,\ell]\,,
	\end{equation}
	the interaction term \eqref{def_spatially_cutoff_interaction_Pphi2} can be rigorously defined as 
	\begin{equation}\label{def_rigoureuse_spatially_cutoff_interaction_Pphi2}
		I_\infty = \Wick[\eps]{b^\circ}\,, \text{ where } b^\circ = \sum_{j=0}^{2n} \sum_{\ell = 0}^j\, 2^{-\frac{j}{2}}\, \beta_j \, \binom{j}{\ell} \, b_{j,\ell} \in \wickpol\,.
	\end{equation}

	\begin{remark}
		In the wave representation given by Theorem \ref{thm.rpz_ondulatoire_espace_Fock}, each operator $\mathcal{R}\,I_\kappa\,\mathcal{R}^{*}$ is an unbounded multiplication operator on $L^2(Q,\dd \mu)$ associated with a function $G_\kappa \in L^2(Q,\dd \mu)$. In fact, we have 
		\[
		G_\kappa \in \displaystyle\bigcap_{1 \leq p < + \infty} L^p(Q,\dd \mu)\] and there exists a function $G \in \displaystyle\bigcap_{1 \leq p < + \infty} L^p(Q,\dd \mu)$ such that \[\norm{G_\kappa - G}_{L^p(Q,\dd \mu)} \xrightarrow[\kappa \to \infty]{} 0 \text{ for all } 1 \leq p < + \infty,
		\] 
		see \cite[Lemma 3.16]{HKSi72}. The multiplication operator associated with $G$ is $\mathcal{R}\, I_\infty\, \mathcal{R}^{*}$.
	\end{remark}
	
	\begin{theorem}[{\cite[Theorem 3.15]{HKSi72}}]
		The Hamiltonian $H_\eps = \dG[\eps]{\omega} + I_\infty$ with $I_\infty$ given by \eqref{def_spatially_cutoff_interaction_Pphi2} is essentially self-adjoint on $\focksalg$ with domain $D(H_\eps) = D(\dG[\eps]{\omega}) \cap D(I_\infty)$.
	\end{theorem}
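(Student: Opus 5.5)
The plan is to prove the statement in the wave representation of Theorem~\ref{thm.rpz_ondulatoire_espace_Fock}, following the Segal/Glimm--Jaffe/Simon--H\o egh-Krohn strategy. There, $\mathcal{R}\, I_\infty\, \mathcal{R}^{*}$ is multiplication by the function $G \in \bigcap_{p<\infty} L^p(Q,\dd\mu)$ of the preceding remark. Since $\omega \geq m_0>0$, the free semigroup $\e^{-t\,\dG[\eps]{\omega}}$ is hypercontractive: it maps $L^2$ into $L^4$ for $t$ larger than some $t_0$ depending only on $m_0$ and $\eps$, and it contracts every $L^p$.

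The first step is semiboundedness. For the cutoff interactions $I_\kappa$, Wick ordering gives $G_\kappa \geq -C\,(\log\kappa)^{n}$, because the Wick subtraction constants grow only logarithmically in one space dimension. Combined with $\|G_\kappa - G\|_{L^p} \leq C_p\, \kappa^{-\delta}$, this yields $\e^{-tG} \in L^1(Q,\dd\mu)$ for all $t>0$. The usual Nelson argument then applies: the Feynman--Kac/Trotter formula, Hölder's inequality and hypercontractivity give
\[
\bigl\|\e^{-t(H_0+G)}\bigr\|_{L^2 \to L^2} \leq \bigl\|\e^{-T G}\bigr\|_{L^1}^{1/T}\,\ldots
\]
for a suitable $T$ related to $t_0$. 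Hence $H_\eps = \dG[\eps]{\omega} + I_\infty$, defined as a form sum, is bounded below.

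For essential self-adjointness on $D(\dG[\eps]{\omega}) \cap D(I_\infty)$, I will invoke the Segal--Simon--H\o egh-Krohn theorem on perturbations of a hypercontractive positivity-preserving semigroup by multiplication operators. If $G\in L^p$ for some $p>2$ and $\e^{-tG}\in L^1$ for all $t>0$, then $H_0+G$ is essentially self-adjoint on $D(H_0)\cap D(G)$. One proves this by approximating $G$ with the bounded-below truncations $G_M=\max(G,-M)$, obtaining uniform bounds on the resolvents, and checking via a Kato-type inequality that the range of $H_0+G+c$ restricted to $D(H_0)\cap L^\infty$ is dense.

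To pass to the core $\focksalg$, I show that $\focksalg$ is a core for $H_0$ and that $I_\infty$ is bounded on it relative to $(\nbop[1]+1)^n$ by the number estimates \eqref{number_estimate_Wick_poly}. Then I approximate elements of $D(H_0)\cap D(G)$ by vectors $\1_{\nbop\leq M}\,\e^{-\eta H_0}\psi$ and use $G \in L^p$ to control the interaction in graph norm.

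The main obstacle is the semiboundedness step. It requires the quantitative bound $\|G-G_\kappa\|_{L^p} \leq C (p-1)^{n}\kappa^{-\delta}$, obtained from hypercontractivity of each Wick chaos of degree at most $2n$ together with the $L^2$ kernel convergence $w_{j,\kappa}\to w_j$ of \cite[Lemma 6.1]{DeGe00}. This bound is then fed into a Chebyshev estimate: choosing $\kappa$ depending on the level $-\lambda$ gives $\mu(G<-\lambda)\leq \exp(-c\,\e^{c\lambda^{1/n}})$, which yields integrability of $\e^{-tG}$.
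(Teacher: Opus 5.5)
Your overall route (Nelson's argument giving $\e^{-tG}\in L^1(Q,\dd\mu)$, hypercontractivity of $\e^{-tH_0}$, and Segal's theorem for essential self-adjointness on $D(H_0)\cap D(G)$) is the machinery behind the paper's citation of \cite{HKSi72}. The paper itself gives no argument, and your outline is fine up to that point.

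The genuine gap is the last step, the passage to finite-particle vectors. First, $\focksalg\not\subset D(\dG[\eps]{\omega})$, so ``$\focksalg$ is a core for $H_0$'' has to become $\focksalg\cap D(\dG[\eps]{\omega})$. More seriously, for an arbitrary $\psi\in D(H_0)\cap D(G)$, nothing you have gives $G\e^{-\eta H_0}\psi\to G\psi$, or even a bound uniform in $\eta$:
\begin{itemize}
\item Hypercontractivity only gives $\e^{-\eta H_0}:L^2\to L^{q(\eta)}$ with $q(\eta)=1+\e^{2\eta\eps m_0}\downarrow 2$.
\item H\"older then yields $\|G\e^{-\eta H_0}\psi\|\leq\|G\|_{L^{r(\eta)}}\|\psi\|$ with $\frac{1}{r(\eta)}=\frac12-\frac{1}{q(\eta)}$. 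Hence $r(\eta)\to\infty$ and $\|G\|_{L^{r}}\to\|G\|_{L^\infty}=\infty$, so the bound is useless.
\item You cannot fall back on commuting $G$ past $\e^{-\eta H_0}$ either. The commutator $[\dG[\eps]{\omega},I_\infty]$ is not even defined on $\Omega$: for $n\geq1$, $I_\infty\Omega\notin D(\dG[\eps]{\omega})$, because its $2n$-particle component is proportional to $w_{2n}$ and $\int(\sum_i\omega(k_i))^2|w_{2n}|^2\,\dd k=\infty$.
\end{itemize}

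The repair is to smooth only vectors that carry $L^q$ control for some $q>2$. Your sketch of Segal's theorem already aims at such a core ($D(H_0)\cap L^\infty$). What the truncation argument with uniform resolvent bounds (as in H\o egh-Krohn--Simon) reliably produces is approximants $(H_0+G_M+c)^{-1}\phi\in D(H_0)\cap L^q$ with $\|(G-G_M)(H_0+G_M+c)^{-1}\phi\|\to0$, hence a core inside $D(H_0)\cap L^q$. For such $\psi$ the smoothing step goes through:
\begin{itemize}
\item $\e^{-\eta H_0}$ is a strongly continuous contraction on $L^q$, so $\|G(\e^{-\eta H_0}\psi-\psi)\|_{L^2}\leq\|G\|_{L^r}\|\e^{-\eta H_0}\psi-\psi\|_{L^q}\to0$ with $\frac1r=\frac12-\frac1q$.
\item Meanwhile $H_0\e^{-\eta H_0}\psi=\e^{-\eta H_0}H_0\psi\to H_0\psi$.
\item Your cutoff $\1_{\nbop[1]\leq M}$ then works. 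Indeed $\|(\e^{-\eta H_0}\psi)^{(m)}\|\leq\e^{-\eta\eps m_0 m}\|\psi^{(m)}\|$ puts $\e^{-\eta H_0}\psi$ in every $D(\nbop[1]^k)$, and $\|I_\infty(\nbop[1]+1)^{-n}\|<\infty$.
\end{itemize}

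Finally, suppose ``with domain $D(H_\eps)=D(\dG[\eps]{\omega})\cap D(I_\infty)$'' is read as the domain of the self-adjoint closure. Your outline does not reach that, since Segal's theorem gives only essential self-adjointness on the intersection. Closedness of $H_0+I_\infty$ there is the separate Glimm--Jaffe/Rosen result, which rests on second-order estimates of the type $\|\dG[\eps]{\omega}\psi\|+\|I_\infty\psi\|\leq C\|(H_\eps+b)\psi\|$.
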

	
	\subsection[The classical field equation]{The classical field equation for the $\ppde$ model}\label{subsect.The_classical_field_ equation_Pphi2}
	
	Formally, the Hamiltonian \eqref{def_spatially_cutoff_Pphi2_Hamiltonian} takes the form $H_\eps = \Wick[\eps]{h}$, where 
	\begin{equation}\label{def_energy_functional_Pphi2_model}
		h(z) = \braket{z}{\omega\, z} + \sum_{j=0}^{2n} \sum_{\ell = 0}^j\, 2^{-\frac{j}{2}}\, \beta_j\, \binom{j}{\ell}\, b_{j,\ell}(z) \egaldef \norm{\omega^{1/2}\, z}_2^2 + b^\circ(z)\,,
	\end{equation} which provides the classical field equation \begin{equation}\label{eq_classique_avec_partie_libre_Pphi2}
		\left\{\begin{array}{l}
			\rmi \, \dot{\varphi}_t = \omega\, \varphi_t + \gatbar{}{b^\circ}(\varphi_t)\\
			\eval{\varphi_t}_{t=0} = \varphi_0\,.
		\end{array}\right.
	\end{equation}
	
	A map $\varphi_t \in C^0\big(I,L^2(\R,\dd k)\big)$ is said to be a mild solution of \eqref{eq_classique_avec_partie_libre_Pphi2} on $I$ if it satisfies, for all $t \in I$, the integral equation 
	\begin{equation}\label{eq_classique_avec_partie_libre_formulation_integrale_Pphi2}
		\varphi_t = \e^{-\rmi t \omega} \, \varphi_0 - \rmi \, \int_0^t \e^{-\rmi(t-\tau)\omega}\, \gatbar{}{b^\circ}(\varphi_\tau)\, \dd \tau\,.
	\end{equation}
	
	For $s\geq 0$, let us define $\Hi^{s}_\omega = D\left(\omega^{s}\right)$ equipped with its graph norm 
	\[
	\norm{\cdot}_{\Hi^{s}_\omega}^2 = \norm{\omega^{s}\ \cdot}_2^2 + \norm{\cdot}_2^2\,.
	\]
	The aim of this subsection is to show the existence of a unique global mild solution for the Cauchy problem \eqref{eq_classique_avec_partie_libre_Pphi2} with any initial data $\varphi_0\in\hi = L^2(\R,\dd k)$. First, we show the existence of a unique maximal mild solution in $\Hi^{1/2}_\omega$ by passing to the interaction representation, see for instance \cite[Section X.12, page 283]{ReSi75}. This method consists in performing a time-dependent change of unknown, obtained by conjugating $\varphi_t$ with the propagator $\e^{\rmi t\omega}$, that removes the free part $\omega$ in \eqref{eq_classique_avec_partie_libre_Pphi2} so that only the transformed interaction drives the evolution, see \eqref{eq_classique_sans_partie_libre_Pphi2}.
	\begin{prop}\label{prop.estimee_champ_vecteurs_Pphi2}
		Let $g\in L^1(\R) \cap L^2(\R)$. Then, the vector field \[\begin{array}{ccccc}
			X & : & \R \times \Hi^{1/2}_\omega & \longrightarrow & \Hi^{1/2}_\omega \\
			& {} & (t,z) & \longmapsto & \e^{\rmi t \omega}\,\gatbar{}{b^\circ}\big(\e^{-\rmi t \omega}\, z)
		\end{array}\] is well-defined and continuous. Furthermore, for all $t \in \R$ and $z_1,z_2 \in \Hi^{1/2}_\omega$, we have 
		\begin{equation}\label{estimee_champ_vecteurs_Pphi2_espace_energie}
			\norm{X(t,z_1)-X(t,z_2)}_{\Hi^{1/2}_\omega} \leq C\Big(\max\big(\norm{z_1}_{\Hi^{1/2}_\omega},\norm{z_2}_{\Hi^{1/2}_\omega}\big)\Big) \norm{z_1-z_2}_{\Hi^{1/2}_\omega}\,,
		\end{equation}
		where $C$ is a polynomial function.
	\end{prop}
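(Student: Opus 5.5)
The plan is to compute $\gatbar{}{b^\circ}$ explicitly and to recognise it, in position space, as a finite sum of Fourier transforms of products $g\,\zeta^{j-1}$, where $\zeta$ is a real field built from $z$. The factor $u=\sqrt2\,\omega^{-1/2}$ then cancels exactly against the weight $\omega^{1/2}$ defining $\Hi^{1/2}_\omega$. After that, everything reduces to the Sobolev embedding $H^1(\R)\subset L^\infty(\R)$ and Hölder's inequality with $g\in L^2$.

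\emph{Step 1 (explicit formula).} Since $\braket{\cdot}{\cdot}$ is antilinear on the left, $\braket{(\tensn{z}{\ell})\ot(\tensn{\conjug z}{j-\ell})}{w_j}$ involves $\bar{z(k_m)}$ for $m\leq \ell$ and $z(-k_m)$ for $m>\ell$. Because $w_j$ is symmetric, summing over $\ell$ with the binomial weights in \eqref{def_rigoureuse_spatially_cutoff_interaction_Pphi2} gives
\[
b^\circ(z)=\sum_{j=0}^{2n}2^{-\frac j2}\beta_j\int_{\R^j}\prod_{m=1}^j\big(\bar{z(k_m)}+z(-k_m)\big)\,w_j(k_1,\dots,k_j)\,\dd k_1\cdots \dd k_j .
\]
Differentiating in $\bar z$ via \eqref{def_gatbar} collects $j$ identical contributions, again by symmetry. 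Up to the harmless reflection $k\mapsto -k$ coming from the $\conjug z$ slot, this gives
\[
\gatbar{}{b^\circ}(z)(k)=\sum_{j=1}^{2n} j\,2^{-\frac j2}\beta_j\,u(k)\int_{\R^{j-1}}\hat g(k_1+\cdots+k_{j-1}+k)\prod_{m=1}^{j-1}u(k_m)\big(\bar{z(k_m)}+z(-k_m)\big)\dd k .
\]
Set $f_z(k)=u(k)\big(\bar{z(k)}+z(-k)\big)$ and let $\zeta_z$ be its (suitably normalised) inverse Fourier transform. Then $\zeta_z$ is real-valued, and the convolution identity turns the inner integral into $c_j\,\Four\big(g\,\zeta_z^{\,j-1}\big)(k)$ up to sign/reflection. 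The key cancellation follows:
\[
\omega^{1/2}(k)\,\gatbar{}{b^\circ}(z)(k)=\sqrt2\sum_{j=1}^{2n} j\,2^{-\frac j2}\beta_j c_j\,\Four\big(g\,\zeta_z^{\,j-1}\big)(\pm k).
\]
For $j=1$ this is just $\sqrt2\,\beta_1\hat g/\sqrt2$, which lies in $L^2$ because $g\in L^2$. For $\norm{\gatbar{}{b^\circ}(z)}_2$ itself, $\omega^{-1/2}$ is bounded by $m_0^{-1/2}$, so the $\Hi^{1/2}_\omega$ norm is controlled by the same quantity.

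\emph{Step 2 (bounds).} If $z\in\Hi^{1/2}_\omega$, then $\omega f_z=\sqrt2\,\omega^{1/2}(\bar z+\conjug z\text{-type term})\in L^2$. Hence $\zeta_z\in H^1(\R)$ with $\norm{\zeta_z}_{H^1}\leq C\norm{z}_{\Hi^{1/2}_\omega}$, and $z\mapsto\zeta_z$ is real-linear. By Plancherel, $\norm{\Four(g\zeta^{j-1})}_2\lesssim\norm{g}_2\norm{\zeta}_\infty^{j-1}\lesssim \norm g_2\norm{z}^{j-1}_{\Hi^{1/2}_\omega}$. This shows $\gatbar{}{b^\circ}$ maps $\Hi^{1/2}_\omega$ into itself. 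For the Lipschitz bound, use the telescoping identity
\[
\zeta_1^{j-1}-\zeta_2^{j-1}=(\zeta_1-\zeta_2)\sum_{i=0}^{j-2}\zeta_1^i\zeta_2^{j-2-i}.
\]
This yields $\norm{g(\zeta_1^{j-1}-\zeta_2^{j-1})}_2\leq \norm g_2\,(j-1)M^{j-2}\norm{\zeta_1-\zeta_2}_\infty$ with $M=\max\norm{\zeta_i}_\infty$. Summing over $j$ gives \eqref{estimee_champ_vecteurs_Pphi2_espace_energie} for $\gatbar{}{b^\circ}$, with $C$ a polynomial of degree $2n-2$.

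\emph{Step 3 (conjugation by the free flow and continuity).} The operator $\e^{\rmi t\omega}$ is unitary on $L^2$ and commutes with $\omega^{1/2}$, so it is an isometry of $\Hi^{1/2}_\omega$. Therefore $X(t,\cdot)$ satisfies the same Lipschitz bound with the same polynomial $C$, uniformly in $t$. Joint continuity of $(t,z)\mapsto X(t,z)$ follows by writing
\[
X(t,z)-X(s,z')=\e^{\rmi t\omega}\big[\gatbar{}{b^\circ}(\e^{-\rmi t\omega}z)-\gatbar{}{b^\circ}(\e^{-\rmi s\omega}z')\big]+(\e^{\rmi t\omega}-\e^{\rmi s\omega})\gatbar{}{b^\circ}(\e^{-\rmi s\omega}z').
\]
The first bracket is controlled by the local Lipschitz bound together with strong continuity of $t\mapsto\e^{-\rmi t\omega}z$ in $\Hi^{1/2}_\omega$. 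The second term tends to zero by strong continuity of the group.

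The main obstacle is Step 1: carefully tracking conjugations, reflections $k\mapsto-k$ and Fourier normalisations so that the product structure $g\,\zeta^{j-1}$ with a single $H^1$ field $\zeta$ really emerges. Evenness of $g$ and $\omega$ should make the reflections harmless. Once the cancellation $\omega^{1/2}u=\sqrt2$ is in place, the estimates are routine.
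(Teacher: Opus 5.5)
Your proof is correct and follows the paper's route: the position-space formula $\gatbar{}{b_{j,\ell}}(z)=\ell\,u\,\Four\big(g\,f_z^{\ell-1}\bar{f_z}^{\,j-\ell}\big)$, the cancellation of $u$ against $\omega^{1/2}$, Plancherel with $g\in L^2(\R)$, and a sup-norm bound on the field. Your embedding $H^1(\R)\subset L^\infty(\R)$ plays the role of the paper's Cauchy--Schwarz bound $\norm{f_z}_\infty\leq\norm{z}_{\Hi^{1/2}_\omega}\norm{\omega^{-1/2}u}_2$, and your real field $\zeta_z$ (up to normalisation $2\,\Re f_z$) just packages the sum over $\ell$. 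The only structural difference is that you get the $L^2$ part of the $\Hi^{1/2}_\omega$-norm from $\omega\geq m_0$ rather than from the paper's separate tensor-power estimate \eqref{estimee_L2_champ_vecteurs_Pphi2_espace_energie}; this suffices here, although the paper needs that pure $L^2$ Lipschitz bound later, in Corollary~\ref{cor.existence_sol_mild_global_L2_Pphi2}.
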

	
	The following statement is a consequence of the Cauchy-Lipschitz theorem.
	\begin{cor}\label{cor.existence_unicite_solution_H^s_Pphi2}
Let $g\in L^1(\R) \cap L^2(\R)$. Then, for any initial data $\varphi_0 \in \Hi^{1/2}_\omega$, the Cauchy problem \begin{equation}\label{eq_classique_sans_partie_libre_Pphi2}
			\left\{\begin{array}{l}
				\rmi \, \dot{\tilde\varphi}_t = \e^{\rmi t \omega}\,\gatbar{}{b^\circ}\big(\e^{-\rmi t \omega}\,\tilde\varphi_t\big)\\	
				\eval{\tilde\varphi_t}_{t=0} = \varphi_0
			\end{array}\right.
		\end{equation} admits a unique maximal solution $\tilde{\varphi}_t \in C^1\left(I,\Hi^{1/2}_\omega\right)$, where $I$ is an open interval containing the origin. Moreover, the map 
		\[\varphi_t = \e^{-\rmi t \omega}\, \tilde{\varphi}_t \in C^0\left(I,\Hi^{1/2}_\omega\right)\]
		is the unique maximal, $\Hi^{1/2}_\omega$-valued mild solution of \eqref{eq_classique_avec_partie_libre_Pphi2}, satisfying \eqref{eq_classique_avec_partie_libre_formulation_integrale_Pphi2}.
	\end{cor}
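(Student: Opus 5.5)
The plan is to apply the Cauchy–Lipschitz (Picard) theorem for locally Lipschitz, time-dependent vector fields on the Banach space $\Hi^{1/2}_\omega$. Then we transfer the result to the mild formulation through the unitary group $\e^{-\rmi t\omega}$.

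\textbf{Step 1: existence and uniqueness for the interaction-picture problem.} By Proposition \ref{prop.estimee_champ_vecteurs_Pphi2}, the map $X:\R\times\Hi^{1/2}_\omega\to\Hi^{1/2}_\omega$ is continuous. By \eqref{estimee_champ_vecteurs_Pphi2_espace_energie}, it is Lipschitz in $z$ on every ball of $\Hi^{1/2}_\omega$, uniformly in $t\in\R$, with constant $C(R)$ on the ball of radius $R$. Since $C$ is a polynomial, it is bounded on bounded sets. We also note that $\norm{X(t,0)}_{\Hi^{1/2}_\omega}$ is bounded uniformly in $t$. This holds because $\e^{\rmi t\omega}$ is unitary on $\Hi^{1/2}_\omega$ and $X(t,0)=\e^{\rmi t\omega}\gatbar{}{b^\circ}(0)$. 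Therefore the standard Picard fixed-point argument on $C^0([-\tau,\tau],\bar B(\varphi_0,1))$, applied to the integral equation $\tilde\varphi_t=\varphi_0-\rmi\int_0^tX(s,\tilde\varphi_s)\,\dd s$ with $\tau$ small depending on $\norm{\varphi_0}_{\Hi^{1/2}_\omega}$, gives a unique local solution. Continuity of $X$ upgrades this solution to $C^1$. Uniqueness on overlapping intervals then follows by a Grönwall argument from the local Lipschitz bound. Gluing all local solutions yields the unique maximal solution $\tilde\varphi\in C^1(I,\Hi^{1/2}_\omega)$ on an open interval $I\ni0$.

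\textbf{Step 2: passage to mild solutions.} Set $\varphi_t=\e^{-\rmi t\omega}\tilde\varphi_t$. Strong continuity of the unitary group on $\Hi^{1/2}_\omega$, together with continuity of $\tilde\varphi$, gives $\varphi\in C^0(I,\Hi^{1/2}_\omega)$. We integrate \eqref{eq_classique_sans_partie_libre_Pphi2} and apply $\e^{-\rmi t\omega}$, using that the operators commute with the Bochner integral. With $X(\tau,\tilde\varphi_\tau)=\e^{\rmi\tau\omega}\gatbar{}{b^\circ}(\varphi_\tau)$, this yields exactly \eqref{eq_classique_avec_partie_libre_formulation_integrale_Pphi2}.

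Conversely, let $\phi$ be any $\Hi^{1/2}_\omega$-valued mild solution on an interval $J\ni 0$. Then $\e^{\rmi t\omega}\phi_t$ satisfies the integral form of \eqref{eq_classique_sans_partie_libre_Pphi2}. Its integrand is continuous, so it is a $C^1$ solution, and by Step 1 it coincides with $\tilde\varphi$ on $J\cap I$ and $J\subset I$. This gives uniqueness and maximality of $\varphi$ among mild solutions.

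\textbf{Main obstacle.} The only delicate point is making sure the map $\tau\mapsto\e^{-\rmi(t-\tau)\omega}\gatbar{}{b^\circ}(\varphi_\tau)$ is continuous in $\Hi^{1/2}_\omega$, so that the Bochner integrals make sense and the correspondence between the two formulations is legitimate. This follows from the continuity of $X$ in Proposition \ref{prop.estimee_champ_vecteurs_Pphi2} and from the strong continuity of the group on $\Hi^{1/2}_\omega$. All remaining steps are routine.
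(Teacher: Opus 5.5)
Your proposal is correct and follows the paper's route. The paper simply invokes the Cauchy--Lipschitz theorem for the interaction-picture equation, using the continuity and local Lipschitz bound of Proposition \ref{prop.estimee_champ_vecteurs_Pphi2}, and then conjugates back by $\e^{-\rmi t\omega}$. You spell out details it leaves implicit: the uniform-in-$t$ bounds via unitarity of $\e^{\rmi t\omega}$ on $\Hi^{1/2}_\omega$, and the converse showing that every $\Hi^{1/2}_\omega$-valued mild solution yields a $C^1$ solution of the interaction-picture problem.
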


	\begin{proof}[Proof of the estimate \eqref{estimee_champ_vecteurs_Pphi2_espace_energie}]
		It suffices to estimate, for all $j \in \N$ and $1 \leq \ell \leq j$, the quantities \[
		\norm{\gatbar{}{b_{j,\ell}}(z_1) - \gatbar{}{b_{j,\ell}}(z_2)}_2 \text{ for } z_1,z_2 \in L^2(\R,\dd k)\] and \[\norm{\omega^{1/2}\Big(\gatbar{}{b_{j,\ell}}(z_1) - \gatbar{}{b_{j,\ell}}(z_2)\Big)}_2 \text{ for } z_1,z_2 \in \Hi^{1/2}_\omega\,,
		\]
		where $b_{j,\ell} \in \wickpol[j-\ell,\ell]$ is given by \eqref{def_bjl_Pphi2}.
		\smallskip
		\begin{itemize}
			\item The expression \eqref{def_gatbar} provides \[\gatbar{}{b_{j,\ell}}(z) = \ell\, \left(\bra{\tensn{z}{\ell-1}} \ot \1\right)\, \tilde{b}_{j,\ell}\,(\tensn{z}{j-\ell})\,,\] whence 
 \begin{multline*}
 \norm{\gatbar{}{b_{j,\ell}}(z_1)-\gatbar{}{b_{j,\ell}}(z_2)}_2 \leq \,\ell \, \norm{\left(\bra{\tensn{z_1}{\ell-1}-\tensn{z_2}{\ell-1}} \ot \1\right)\, \tilde{b}_{j,\ell}\,(\tensn{z_1}{j-\ell})}_2 \\+ \ell\, \norm{\left(\bra{\tensn{z_2}{\ell-1}} \ot \1\right)\, \tilde{b}_{j,\ell}\,(\tensn{z_1}{j-\ell}-\tensn{z_2}{j-\ell})}_2\,.
 \end{multline*}
			Furthermore, the identity 
			\begin{equation}\label{identite_remarquable_a^n_moins_b^n_version_produit_tensoriel}
				\tensn{z_1}{k} - \tensn{z_2}{k} = \sum_{\ell=1}^{k} \symop[k]\left((z_1 - z_2) \ot (\tensn{z_1}{k-\ell}) \ot (\tensn{z_2}{\ell-1})\right) \text{ for all } k \in \N^*
			\end{equation}
			leads to 
			\begin{equation}\label{majoration_difference_puissances_produit_tensoriel}
			 \norm{\tensn{z_1}{k} - \tensn{z_2}{k}}_2 \leq k\, \max\left(\norm{z_1}_2,\norm{z_2}_2\right)^{k-1}\, \norm{z_1-z_2}_2\,.
			\end{equation}
			Finally, we obtain 
			\begin{align*}
				\norm{\gatbar{}{b_{j,\ell}}(z_1)-\gatbar{}{b_{j,\ell}}(z_2)}_2 \leq\, &\ell \norm{\tilde{b}_{j,\ell}} \times\\
				&\Big((\ell - 1) \max\left(\norm{z_1}_2,\norm{z_2}_2\right)^{\ell-2}\norm{z_1}_2^{j-\ell}\\
				& \phantom{\Big(}+ (j-\ell)\max\left(\norm{z_1}_2,\norm{z_2}_2\right)^{j-\ell-1} \norm{z_2}_2^{\ell-1}\Big) \cdot \norm{z_1-z_2}_2\,,
			\end{align*}
			which, summing over $j$ and $\ell$, yields an estimate of the form 
			\begin{equation}\label{estimee_L2_champ_vecteurs_Pphi2_espace_energie}
			\norm{X(t,z_1)-X(t,z_2)}_2 \leq C\Big(\max\big(\norm{z_1}_2,\norm{z_2}_2\big)\Big) \norm{z_1-z_2}_2\,.
		\end{equation}

			\item For all $z \in \Hi^{1/2}_\omega$, let $f_z(x) = \braket{\omega^{1/2}\,z}{\omega^{-1/2}\,\tau_x u}$. The symbol $b_{j,\ell}$ defined by \eqref{def_bjl_Pphi2} satisfies 
		\begin{align*}
			 \gatbar{}{b_{j,\ell}}(z) = &\ell \int_\R g(x)\, f_z(x)^{\ell-1}\, \bar{f_z(x)}^{\,j-\ell}\, \tau_x u_\kappa \, \dd x \in L^2(\R, \dd k)\\
			 = &\ell \, u \, \Four\left(g \, f_z^{\ell-1} \, \bar{f_z}^{j-\ell}\right)\,,
			\end{align*}
			where we recall that $\Four$ denotes the Fourier transform defined by \eqref{Fourier_non_unit}. Thus, the estimate 
		\begin{equation}\label{majoration_norme_infty_pour_champ_vecteurs_Pphi2}
				\norm{f_z}_\infty \leq \norm{z}_{\Hi^{1/2}_\omega} \norm{\omega^{-1/2}\, u}_2
			\end{equation} 
			yields 
			\begin{align*}
				\norm{\omega^{1/2}\, \gatbar{}{b_{j,\ell}}(z)}_2 \leq &\, \ell\, \norm{\omega^{1/2}\, u}_\infty \, \norm{\Four\left(g \, f_z^{\ell-1} \, \bar{f_z}^{j-\ell}\right)}_2\\
				\leq &2\,\sqrt{\pi} \, \ell \, \norm{g \, f_z^{\ell-1} \, \bar{f_z}^{j-\ell}}_2\\
				\leq &2\,\sqrt{\pi} \, \ell \, \norm{g}_2 \norm{f_z}_\infty^{j-1}\\
				\leq &2\,\sqrt{\pi} \, \ell \, \norm{g}_2\, \norm{z}_{\Hi^{1/2}_\omega}^{j-1} \norm{\omega^{-1/2}\, u}_2^{j-1} < +\infty\,.
			\end{align*}
			Now, for all $z_1, z_2 \in \Hi^{1/2}_\omega$, $\Big(\gatbar{}{b_{j,\ell}}(z_1) - \gatbar{}{b_{j,\ell}}(z_2)\Big)$ is a linear combination of terms of the form 
			\[
			\ell \int_\R g(x)\, f_{z_1-z_2}(x)^\natural\, \left(f_{z_1}(x)^\natural\right)^m \, \left(f_{z_2}(x)^\natural\right)^n \tau_x\, u\, \dd x\,,
			\] 
			where $m,n \in \N$ and $f_z(x)^\natural \in \Big\{f_z(x), \bar{f_z(x)}\Big\}$, which gives an estimate of the form 
			\[
			\norm{\omega^{1/2}\, \gatbar{}{b_{j,\ell}}(z_1) - \omega^{1/2}\, \gatbar{}{b_{j,\ell}}(z_2)}_2 \leq C_{j,\ell}\Big(\norm{z_1}_{\Hi^{1/2}_\omega}, \norm{z_2}_{\Hi^{1/2}_\omega}\Big) \norm{g}_2 \norm{z_1 - z_2}_{\Hi^{1/2}_\omega}\,.
			\]
		\end{itemize}
	\end{proof}
	
		\begin{remark}\label{rem.blowup_criterion_Pphi2}
		 According to \eqref{estimee_champ_vecteurs_Pphi2_espace_energie}, the vector field $X$ is continuous and locally bounded on $\Hi^{1/2}_\omega$. Hence, we have at hand the following blowup criterion, see \cite[Theorem 2, page 9]{Re76}: if $T \defegal \sup I$ is finite, then the map $t \longmapsto \norm{\varphi_t}_{\Hi^{1/2}_\omega}$ is not bounded in the neighborhood of $T$.
	\end{remark}
	
	\begin{lemma}
		Let	$g\in L^1(\R) \cap L^2(\R)$. Then, there exists $c \geq 0$ such that, for all $z \in \Hi^{1/2}_\omega$, 
		\begin{equation}\label{energie_Pphi2_controle_norme_espace_energie}
			h(z) \geq \norm{\omega^{1/2}\,z}_2^2 - c\,.
		\end{equation}
	\end{lemma}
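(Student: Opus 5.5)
The plan is to write the interaction part of $h$ in position space and then use the lower boundedness of $P$ pointwise in $x$. Set $f_z(x) = \braket{\omega^{1/2} z}{\omega^{-1/2}\tau_x u}$, as in the proof of \eqref{estimee_champ_vecteurs_Pphi2_espace_energie}; then $f_z(x) = \braket{z}{\tau_x u}$.

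\textbf{Step 1 (the symbol as a real field).} From \eqref{def_bjl_Pphi2} and the $\kappa\to\infty$ limit of \eqref{def_bjlkappa_Pphi2}, we expect
\[
b_{j,\ell}(z) = \int_\R g(x)\, \bar{f_z(x)}^{\,\ell}\, f_z(x)^{\,j-\ell}\,\dd x ,
\]
with the conjugation placed according to the convention of \eqref{def_bjlkappa_Pphi2}. Summing over $\ell$ with the binomial weights gives
\[
\sum_{\ell=0}^j \binom{j}{\ell}\, 2^{-j/2}\, b_{j,\ell}(z) = \int_\R g(x)\Big(\tfrac{f_z(x)+\bar{f_z(x)}}{\sqrt 2}\Big)^{j}\dd x = \int_\R g(x)\,\big(\sqrt2\,\Re f_z(x)\big)^j\,\dd x .
\]
Hence
\[
b^\circ(z) = \int_\R g(x)\, P\big(\sqrt2\,\Re f_z(x)\big)\,\dd x ,
\]
where the $j=0$ term is $\beta_0\norm{g}_1$. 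To justify this identity for $z\in\Hi^{1/2}_\omega$, I would use the bound \eqref{majoration_norme_infty_pour_champ_vecteurs_Pphi2}, which gives $f_z\in L^\infty$. Since $g\in L^1$, every integral is absolutely convergent. I would establish the identity first for the cutoff symbols $b_{j,\ell,\kappa}$, where it is immediate from \eqref{def_bjlkappa_Pphi2}, and then pass to $\kappa\to\infty$. For this limit I need $f_{z,\kappa}\to f_z$ uniformly in $x$, which follows from $\norm{\omega^{-1/2}(u_\kappa-u)}_2\to0$ by dominated convergence. The kernels converge in $L^2$ by \cite[Lemma 6.1]{DeGe00}, so the symbols converge as well.

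\textbf{Step 2 (semi-boundedness of $P$).} Since $P$ is a real polynomial of even degree with $\beta_{2n}>0$, it is bounded below on $\R$. Let $-c_P = \inf_\R P$, with $c_P \geq 0$. Because $g\geq0$,
\[
b^\circ(z) \geq -c_P\int_\R g = -c_P\norm{g}_1 .
\]
Therefore $h(z) = \norm{\omega^{1/2}z}_2^2 + b^\circ(z) \geq \norm{\omega^{1/2}z}_2^2 - c$ with $c = c_P\norm{g}_1$.

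\textbf{Main obstacle.} The only delicate point is Step 1. One must check the binomial and conjugation bookkeeping, namely that the Wick-ordered combination of $b_{j,\ell}$ reproduces exactly $(\sqrt2\,\Re f_z)^j$ and not some cross-combination. One must also make sure that the symbol defined through the $L^2$ kernel $w_j$ agrees with the pointwise integral formula for $z\in\Hi^{1/2}_\omega$. If the direct kernel computation is awkward, I would instead use $\hat g$ and Fubini: write $\braket{(z^{\ot\ell})\ot(\conjug z^{\ot j-\ell})}{w_j}$, and recover $\int g\prod(\cdots)\,\dd x$ from the definition of $\hat g$ together with the conjugation $\conjug z(k)=\bar{z(-k)}$.
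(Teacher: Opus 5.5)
Your proposal is correct and follows essentially the paper's argument: rewrite $b^\circ(z)$ as the limit of the cutoff expressions $\int_\R g(x)\,P\big(\sqrt{2}\,\Re\braket{z}{\tau_x u_\kappa}\big)\,\dd x$, then use $g\geq0$ together with the lower bound of $P$. The paper shortens your Step~1 by applying the bound $\geq -\norm{g}_1\abs{\inf_\R P}$ at each fixed $\kappa$ and only then letting $\kappa\to\infty$; this needs only the $L^2$ convergence of the kernels and works for every $z\in L^2$, so your uniform-convergence argument for $f_{z,\kappa}$ and the identification of the limit integral are not needed. Also take $c=\norm{g}_1\max(0,-\inf_\R P)$, since $\inf_\R P$ may be positive.
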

	
	\begin{proof}
		Let us recall the the polynomial $P$ defined by \eqref{def_polynome_modele_Pphi2} has even degree $2n$ and positive dominant coefficient. Hence, $P$ is bounded from below. Using the expressions \eqref{def_bjlkappa_Pphi2} and \eqref{def_rigoureuse_spatially_cutoff_interaction_Pphi2} of $b_{j,\ell,\kappa}(z)$ and $b^\circ(z)$ respectively, we obtain, for all $z \in L^2(\R,\dd k)$, 
		\begin{align*}
			b^\circ(z) = &\underset{\kappa \to \infty}{\lim}\ \sum_{j=0}^{2n}\, 2^{-\frac{j}{2}}\, \beta_j\, \sum_{\ell=0}^j\, \binom{j}{\ell}\, b_{j,\ell,\kappa}(z)\\
			= &\underset{\kappa \to \infty}{\lim}\ \int_\R g(x)\, P\Big(\sqrt{2}\,\Re\braket{z}{\tau_x u_\kappa}\Big)\, \dd x\\
			\geq &- \norm{g}_1\, \abs{\underset{\R}{\inf}\ P} > -\infty\,.
		\end{align*}
	\end{proof}
	
		\bigskip
		
	 Under a stronger assumption on the spatial cutoff $g\in L^1(\R) \cap H^1(\R)$, we now show that the solution $\varphi_t$ given by Corollary \ref{cor.existence_unicite_solution_H^s_Pphi2} is globally defined for any initial data $\varphi_0 \in \Hi^{3/2}_\omega$. The proof relies on the combination of global well-posedness on $\Hi^{1}_\omega$, Lemma \ref{lemma.glob_sol_mild_Pphi2_in_H^3/2}, conservation of the energy functional, Lemma \ref{lemma.sol_mild_Pphi2_conserve_energie}, and the blowup criterion stated in Remark \ref{rem.blowup_criterion_Pphi2}.
	
	\begin{lemma}\label{lemma.glob_sol_mild_Pphi2_in_H^3/2}
 Let	$g\in L^1(\R) \cap H^1(\R)$. Then for any initial data $\varphi_0 \in \Hi^{3/2}_\omega$, the Cauchy problem \eqref{eq_classique_avec_partie_libre_Pphi2} admits a unique global mild solution $\varphi_t \in C^0\big(\R,\Hi^{3/2}_\omega\big)\cap 
C^1\big(\R,\Hi^{1/2}_\omega\big)$. Furthermore, the energy functional $h$ defined by \eqref{def_energy_functional_Pphi2_model} is constant along the curve $\varphi_t$.
	\end{lemma}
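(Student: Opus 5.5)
The plan is to work first in the interaction representation on $\Hi^{3/2}_\omega$, exactly as in Corollary \ref{cor.existence_unicite_solution_H^s_Pphi2}. This gives a maximal local solution. Energy conservation will then yield an a priori $\Hi^{1/2}_\omega$ bound. Finally, a \emph{tame} estimate on the vector field, linear in the $\Hi^{3/2}_\omega$-norm with a coefficient depending only on the $\Hi^{1/2}_\omega$-norm, combined with Grönwall and the blowup criterion of Remark \ref{rem.blowup_criterion_Pphi2} (now applied in $\Hi^{3/2}_\omega$), will rule out finite-time blowup.

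\emph{Step 1 (local theory in $\Hi^{3/2}_\omega$).} Since $\e^{\rmi t\omega}$ is unitary on every $\Hi^s_\omega$, it suffices to show that $z\mapsto\gatbar{}{b^\circ}(z)$ maps $\Hi^{3/2}_\omega$ into itself and is locally Lipschitz there. Following the second bullet of the proof of \eqref{estimee_champ_vecteurs_Pphi2_espace_energie}, we have $\gatbar{}{b_{j,\ell}}(z)=\ell\,u\,\Four(g f_z^{\ell-1}\bar f_z^{\,j-\ell})$. Since $\omega^{3/2}u=\sqrt2\,\omega$, Plancherel gives
\[
\norm{\omega^{3/2}\gatbar{}{b_{j,\ell}}(z)}_2\lesssim \ell\,\big(\norm{G}_2+\norm{G'}_2\big),\qquad G:=g f_z^{\ell-1}\bar f_z^{\,j-\ell}.
\]
When the derivative falls on $g$, we get $\norm{g'}_2\norm{f_z}_\infty^{j-1}$; this is where $g\in H^1(\R)$ is used. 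When it falls on some $f_z$, we get $\norm{g}_2\norm{f_z}_\infty^{j-2}\norm{f_z'}_\infty$. Here $f_z'(x)=\langle z,\,\rmi k\,\tau_x u\rangle$, so that
\[
\norm{f_z'}_\infty\le \norm{\omega^{3/2}z}_2\,\norm{k\,\omega^{-3/2}u}_2 ,
\]
and $k\,\omega^{-2}\in L^2(\R)$. Together with \eqref{majoration_norme_infty_pour_champ_vecteurs_Pphi2}, this yields the tame bound
\[
\norm{\gatbar{}{b^\circ}(z)}_{\Hi^{3/2}_\omega}\le C\big(\norm{z}_{\Hi^{1/2}_\omega}\big)\big(1+\norm{z}_{\Hi^{3/2}_\omega}\big).
\]
The same expansion applied to differences, via \eqref{identite_remarquable_a^n_moins_b^n_version_produit_tensoriel}, gives local Lipschitz continuity in $\Hi^{3/2}_\omega$. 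Cauchy--Lipschitz then produces a maximal solution $\tilde\varphi\in C^1(I,\Hi^{3/2}_\omega)$. Hence $\varphi_t\in C^0(I,\Hi^{3/2}_\omega)$, and since $\omega\varphi_t\in\Hi^{1/2}_\omega$, the mild solution is in fact $C^1(I,\Hi^{1/2}_\omega)$ and solves \eqref{eq_classique_avec_partie_libre_Pphi2} strongly in $\Hi^{1/2}_\omega$. By uniqueness in $\Hi^{1/2}_\omega$, it coincides with the solution of Corollary \ref{cor.existence_unicite_solution_H^s_Pphi2} on $I$.

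\emph{Step 2 (energy conservation).} For such strong solutions, $t\mapsto h(\varphi_t)$ is $C^1$. Indeed $h$ is $C^1$ on $\Hi^{1/2}_\omega$, with differential $2\Re\langle\omega z+\gatbar{}{b^\circ}(z),\cdot\rangle$ understood in the $\Hi^{1/2}_\omega$ duality. Therefore
\[
\dv{t}h(\varphi_t)=2\Re\big\langle \omega\varphi_t+\gatbar{}{b^\circ}(\varphi_t),\,-\rmi(\omega\varphi_t+\gatbar{}{b^\circ}(\varphi_t))\big\rangle=0 .
\]
By \eqref{energie_Pphi2_controle_norme_espace_energie} we obtain $\norm{\omega^{1/2}\varphi_t}_2^2\le h(\varphi_0)+c$, and since $\omega\ge m_0$, the quantity $\norm{\varphi_t}_{\Hi^{1/2}_\omega}$ is bounded uniformly on $I$.

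\emph{Step 3 (globality).} Insert the uniform $\Hi^{1/2}_\omega$ bound into the tame estimate of Step 1 and use the integral equation \eqref{eq_classique_avec_partie_libre_formulation_integrale_Pphi2} in $\Hi^{3/2}_\omega$. This gives
\[
\norm{\varphi_t}_{\Hi^{3/2}_\omega}\le \norm{\varphi_0}_{\Hi^{3/2}_\omega}+K\int_0^{|t|}\big(1+\norm{\varphi_s}_{\Hi^{3/2}_\omega}\big)\,\dd s ,
\]
so Grönwall yields at most exponential growth. The blowup criterion of \cite{Re76}, in the form of Remark \ref{rem.blowup_criterion_Pphi2} applied in $\Hi^{3/2}_\omega$, then forces $I=\R$.

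The main obstacle is Step 1, namely obtaining an $\Hi^{3/2}_\omega$ estimate that is linear in the top norm. The essential points are that the single derivative produced by $\omega^{3/2}u\sim\omega$ falls either on $g$ (hence the assumption $H^1$ rather than $L^2$) or on exactly one factor $f_z$, and that $\norm{f_z'}_\infty$ is controlled by the $\Hi^{3/2}_\omega$-norm thanks to the square integrability of $k\,\omega^{-2}$ in one dimension. Rigorously justifying the chain rule for $h$ in Step 2 is routine, since $\varphi\in C^1(I,\Hi^{1/2}_\omega)$.
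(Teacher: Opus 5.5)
Your proof is correct and follows the paper's scheme: local well-posedness in $\Hi^{3/2}_\omega$, conservation of $h$ for these strong solutions (hence a uniform $\Hi^{1/2}_\omega$ bound), and an estimate of $\norm{\gatbar{}{b^\circ}(z)}_{\Hi^{3/2}_\omega}$ whose coefficient depends only on $\norm{z}_{\Hi^{1/2}_\omega}$, which rules out blowup. The paper quotes Reed's abstract Theorems 1 and 2 rather than running Cauchy--Lipschitz and Gr\"onwall by hand, and it proves the key estimate slightly differently. It uses that $H^1(\R)$ is an algebra together with $\norm{f_z}_{H^1}\lesssim\norm{z}_{\Hi^{1/2}_\omega}$, so in effect $g$ is placed in $L^\infty$ and $f_z'$ in $L^2$, and $\norm{\gatbar{}{b_{j,\ell}}(z)}_{\Hi^{3/2}_\omega}$ is bounded by $\norm{z}_{\Hi^{1/2}_\omega}^{j-1}$ alone. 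Your split ($g$ in $L^2$, $f_z'$ in $L^\infty$) costs one factor of $\norm{z}_{\Hi^{3/2}_\omega}$, and your Gr\"onwall step absorbs it.
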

		\begin{proof}
	This is a consequence of \cite[Theorem~1, page~5 and Theorem~2, page~9]{Re76}. We apply these results in the following setting: $\Hi^{1/2}_\omega$ is the underlying Hilbert space, $\omega:\Hi^{3/2}_\omega \to \Hi^{1/2}_\omega$ is the self-adjoint operator on $\Hi^{1/2}_\omega$ and the 
		$\gatbar{}b^\circ(z)$ is the non-linearity. The two main required estimates are: 
			\begin{equation}\label{premiere_estimee_champ_vecteurs_Pphi2_H^3/2}
	 \norm{\gatbar{}{b^\circ}(z)}_{\Hi^{3/2}_\omega} \leq C\left(\norm{z}_{\Hi^{1/2}_\omega}\right) \norm{z}_{\Hi^{3/2}_\omega}
	\end{equation}
	and
			\begin{equation}\label{estimee_champ_vecteurs_Pphi2_H^3/2}
	 \norm{\gatbar{}{b^\circ}(z_1)-\gatbar{}{b^\circ}(z_2)}_{\Hi^{3/2}_\omega} \leq C\left(\norm{z_1}_{\Hi^{3/2}_\omega},\norm{z_2}_{\Hi^{3/2}_\omega}\right) \norm{z_1-z_2}_{\Hi^{3/2}_\omega}\,.
	\end{equation}
	
	We proceed in two steps: 
	\begin{enumerate}
	 \item 
Recall that $u = \sqrt{2} \, {\omega^{-1/2}}$ and for $0 \leq j \leq 2n$, $0 \leq \ell \leq j$, 
\begin{equation}\label{dzbar_bjl_fourier_Pphi2}
\gatbar{}{b_{j,\ell}(z)=\ell \, u \, \Four\bigg(g \,\Four(\bar z u)^{\ell-1} \,
\overline{\Four(\bar z u)}^{j-\ell}\bigg)\in L^2(\R,\dd k).
\end{equation}
Let us note that the function $f_z = \Four(\bar{z}\, u)$ belongs to $H^{1}(\R)$ whenever $z \in \Hi^{1/2}_\omega$, with 
	\begin{align*}
	 \norm{f_z}_{H^{1}(\R)}
	 &\lesssim_{m_0} \norm{z}_{\Hi^{1/2}_\omega}\,.
	\end{align*}
The symbol $\lesssim_{m_0}$ indicates that the constant involved in the inequality depends upon the mass of the field $m_0$.

	 Using the expression \eqref{dzbar_bjl_fourier_Pphi2}, we have 
	\begin{align*}
	 \norm{\omega^{3/2}\,\gatbar{}{b_{j,\ell}}(z)}_2 &= \sqrt{2}\, \ell\, \norm{\omega\,\Four\left(g\, f_z^{\ell-1}\, \bar{f_z}^{j-\ell}\right)}_2\\
	 &\lesssim_{m_0} \norm{g\, f_z^{\ell-1}\, \bar{f_z}^{j-\ell}}_{H^{1}(\R)}\,.
	\end{align*}
	Since $H^{1}(\R)$ is an algebra, this yields 
	\begin{align*}
	 \norm{g\, f_z^{\ell-1}\, \bar{f_z}^{j-\ell}}_{H^{1}(\R)}
	 &\leq \norm{g}_{H^{1}(\R)}\, \norm{f_z}_{H^{1}(\R)}^{j-1}\\
	 &\lesssim_{m_0} \norm{g}_{H^{1}(\R)}\, \norm{z}_{\Hi^{1/2}_\omega}^{j-1}\,,
	\end{align*}
	whence finally 
	\[
	\norm{\gatbar{}{b_{j,\ell}}(z)}_{\Hi^{3/2}_\omega} \lesssim_{m_0} \norm{g}_{H^{1}(\R)}\, \norm{z}_{\Hi^{1/2}_\omega}^{j-1}\,.
	\]
	This concludes the proof of \eqref{premiere_estimee_champ_vecteurs_Pphi2_H^3/2}. To prove \eqref{estimee_champ_vecteurs_Pphi2_H^3/2}, we proceed like above to dominate \[\norm{\gatbar{}{b_{j,\ell}}(z_1) - \omega\,\gatbar{}{b_{j,\ell}}(z_2)}_{\Hi^{3/2}_\omega}\,.\] The later estimate allows to apply \cite[Theorem~1, page~5]{Re76} and hence obtain existence and uniqueness of a local mild solution $\varphi_t \in C^0(I,\Hi^{3/2}_\omega) \cap C^1(I,\Hi^{1/2}_\omega)$ to \eqref{eq_classique_avec_partie_libre_Pphi2} with initial data $\varphi_0 \in \Hi^{3/2}_\omega$.
	
	\item Since the symbol $b^\circ$ appearing in the energy functional $h$ (see \eqref{def_energy_functional_Pphi2_model}) is real-valued, we have 
	\begin{equation}\label{conserv_energy_strong_sol_Pphi2}
	 \dv{t} \, h(\varphi_t) = 2 \, \Re \braket{\dot{\varphi}_t}{\omega \varphi_t + \gatbar{}{b^\circ}(\varphi_t)} = 2 \, \Re \braket{\dot{\varphi}_t}{\rmi \, \dot{\varphi}_t} = 0\,,
	\end{equation}
	which proves that $h$ is constant along $\varphi_t$. In particular, it follows from \eqref{energie_Pphi2_controle_norme_espace_energie} that $\varphi_t$ is bounded in $\Hi^{1/2}_\omega$: this fact and the estimate \eqref{premiere_estimee_champ_vecteurs_Pphi2_H^3/2} enable us to apply \cite[Theorem~2, page~9]{Re76} and conclude that $\varphi_t$ is globally defined.
	\end{enumerate}
	}
	
		\end{proof}
	
\begin{lemma}\label{lemma.sol_mild_Pphi2_conserve_energie}
 Let $g\in L^1(\R) \cap H^1(\R)$. Then, the energy functional $h$ given by \eqref{def_energy_functional_Pphi2_model} is constant along any mild solution $\varphi_t$ of the Cauchy problem \eqref{eq_classique_avec_partie_libre_Pphi2} with initial data $\varphi_0 \in \Hi^{1/2}_\omega$.
	\end{lemma}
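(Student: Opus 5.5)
We prove Lemma \ref{lemma.sol_mild_Pphi2_conserve_energie} by density: approximate the $\Hi^{1/2}_\omega$ initial datum by smoother data, for which conservation is already known from Lemma \ref{lemma.glob_sol_mild_Pphi2_in_H^3/2}, and pass to the limit using continuous dependence in the energy space. Let $\varphi_t$ be the maximal mild solution on $I$ given by Corollary \ref{cor.existence_unicite_solution_H^s_Pphi2}, and fix a compact interval $J=[-T,T]\subset I$. Since $\Hi^{3/2}_\omega$ is dense in $\Hi^{1/2}_\omega$, choose $\varphi_0^{(m)} \in \Hi^{3/2}_\omega$ with $\varphi_0^{(m)} \to \varphi_0$ in $\Hi^{1/2}_\omega$. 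By Lemma \ref{lemma.glob_sol_mild_Pphi2_in_H^3/2}, the corresponding solutions $\varphi^{(m)}_t$ are global and satisfy $h(\varphi^{(m)}_t)=h(\varphi^{(m)}_0)$ for all $t$.

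Next we show that $\varphi^{(m)}_t \to \varphi_t$ in $\Hi^{1/2}_\omega$, uniformly for $t\in J$. We work in the interaction representation, where $\tilde\varphi_t=\e^{\rmi t\omega}\varphi_t$ solves the integral equation associated with $X$. The same holds for $\tilde\varphi^{(m)}_t$: it is an $\Hi^{1/2}_\omega$-valued mild solution, so by the uniqueness part of Corollary \ref{cor.existence_unicite_solution_H^s_Pphi2} it coincides with the energy-space solution. Set $M=1+\sup_{t\in J}\norm{\varphi_t}_{\Hi^{1/2}_\omega}$, and note that $\e^{\rmi t\omega}$ is unitary on $\Hi^{1/2}_\omega$. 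As long as $\norm{\varphi^{(m)}_s}_{\Hi^{1/2}_\omega}\le M$, the Lipschitz estimate \eqref{estimee_champ_vecteurs_Pphi2_espace_energie} and Grönwall's lemma give
\[
\norm{\varphi^{(m)}_t-\varphi_t}_{\Hi^{1/2}_\omega}\le \e^{C(M)\abs{t}}\norm{\varphi^{(m)}_0-\varphi_0}_{\Hi^{1/2}_\omega}.
\]
A standard continuity (bootstrap) argument then shows that for $m$ large the bound $\norm{\varphi^{(m)}_s}_{\Hi^{1/2}_\omega}\le M$ holds on all of $J$, so the convergence is uniform on $J$.

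It remains to check that $h$ is continuous on $\Hi^{1/2}_\omega$. The quadratic part $\norm{\omega^{1/2}z}_2^2$ is clearly continuous there. The interaction part $b^\circ$ is a finite sum of bounded polynomial symbols $b_{j,\ell}$ with Hilbert–Schmidt kernels $w_j\in L^2(\R^j)$, so it is locally Lipschitz already on $L^2$; this follows from \eqref{majoration_difference_puissances_produit_tensoriel}. Hence $h(\varphi_t)=\lim_m h(\varphi^{(m)}_t)=\lim_m h(\varphi^{(m)}_0)=h(\varphi_0)$ for every $t\in J$, and since $J\subset I$ was an arbitrary compact interval, $h$ is constant on $I$.

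The main obstacle is the uniform convergence step, in particular keeping the approximating solutions on $J$ inside a fixed ball where the Lipschitz constant $C(M)$ is controlled. The bootstrap argument handles this.

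As a consequence, together with \eqref{energie_Pphi2_controle_norme_espace_energie} and Remark \ref{rem.blowup_criterion_Pphi2}, the $\Hi^{1/2}_\omega$-solution is global, which is presumably how the lemma will be used.
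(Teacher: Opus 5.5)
Your proposal is correct and follows the paper's own argument: approximate $\varphi_0$ in $\Hi^{1/2}_\omega$ by $\Hi^{3/2}_\omega$ data, whose global solutions conserve $h$; use the Lipschitz bound \eqref{estimee_champ_vecteurs_Pphi2_espace_energie} with Grönwall to get convergence in $\Hi^{1/2}_\omega$; then pass to the limit by continuity of $h$. Your bootstrap on a compact interval $J\subset I$ is a useful addition, since it makes explicit the uniform bound on the approximating solutions that the paper's Grönwall constant $C\big(t,\norm{\varphi_0}_{\Hi^{1/2}_\omega}\big)$ tacitly assumes.
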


		\begin{proof}
		For $\varphi_0 \in \Hi^{1/2}_\omega$ fixed, let $\Big(\varphi_0^{(n)}\Big)_{n \in \N} \in \left(\Hi^{3/2}_\omega\right)^\N$ such that $\norm{\varphi_0^{(n)} - \varphi_0}_{\Hi^{1/2}_\omega} \xrightarrow[n \to \infty]{} 0$ and let ${\varphi_t^{(n)} \in C^0\left(\R,\Hi^{3/2}_\omega\right)}$ be the global solution of \eqref{eq_classique_avec_partie_libre_Pphi2} with initial data $\varphi_0^{(n)}$. 
		For all $n \in \N$, the identity \eqref{conserv_energy_strong_sol_Pphi2} yields 
		\begin{equation}\label{conserv_energy_init_data_H^3/2}
			h\left(\varphi_t^{(n)}\right) = h\left(\varphi_0^{(n)}\right)\,,\quad \forall\, t \in \R\,.
		\end{equation}
		Furthermore, following \cite[Theorem 14, page 52]{Re76} and using \eqref{estimee_champ_vecteurs_Pphi2_espace_energie}, we get 
		\begin{align*}
			\norm{\varphi_t - \varphi_t^{(n)}}_{\Hi^{1/2}_\omega} \leq &\norm{\varphi_0 - {\varphi_0^{(n)}}}_{\Hi^{1/2}_\omega} + \int_0^{\abs{t}} \norm{X\left(\tau,\tilde{\varphi}_\tau\right)-X\left(\tau,\tilde\varphi^{(n)}_\tau\right)}_{\Hi^{1/2}_\omega} \dd \tau\\
			\leq &\norm{\varphi_0 - {\varphi_0^{(n)}}}_{\Hi^{1/2}_\omega} + C\left(t,\norm{\varphi_0}_{\Hi^{1/2}_\omega}\right) \int_0^{\abs{t}} \norm{\varphi_\tau-\varphi^{(n)}_\tau}_{\Hi^{1/2}_\omega} \dd \tau\,,
		\end{align*} whence by Grönwall lemma \[\norm{\varphi_t - \varphi_t^{(n)}}_{\Hi^{1/2}_\omega} \leq \norm{\varphi_0 - \varphi_0^{(n)}}_{\Hi^{1/2}_\omega} \cdot M\left(t,\norm{\varphi_0}_{\Hi^{1/2}_\omega}\right) \xrightarrow[n \longrightarrow \infty]{} 0\,.
		\]
		Since the energy functional $h : \Hi^{1/2}_\omega \to \R$ is continuous with respect to the norm $\norm{\cdot}_{\Hi^{1/2}_\omega}$, letting $n \longrightarrow + \infty$ in \eqref{conserv_energy_init_data_H^3/2}, we finally obtain 
		\[
		h(\varphi_t) = h(\varphi_0)\,.
		\]
	\end{proof}

 We can finally state the main result of this subsection.

	\begin{theorem}\label{thm.existence_solution_mild_globale_eq_classique_avec_partie_libre_Pphi2}
	Let $g\in L^1(\R) \cap H^1(\R)$. For all initial data $\varphi_0 \in \Hi^{1/2}_\omega$, the Cauchy problem \eqref{eq_classique_avec_partie_libre_Pphi2} admits a unique global mild solution $\varphi_t \in C^0\big(\R,\Hi^{1/2}_\omega\big)$ satisfying \eqref{eq_classique_avec_partie_libre_formulation_integrale_Pphi2}.
	\end{theorem}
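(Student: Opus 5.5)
The plan is to combine the local well-posedness in $\Hi^{1/2}_\omega$ from Corollary \ref{cor.existence_unicite_solution_H^s_Pphi2} with energy conservation (Lemma \ref{lemma.sol_mild_Pphi2_conserve_energie}), the coercivity bound \eqref{energie_Pphi2_controle_norme_espace_energie}, and the blowup criterion of Remark \ref{rem.blowup_criterion_Pphi2}.

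First, fix $\varphi_0 \in \Hi^{1/2}_\omega$. Corollary \ref{cor.existence_unicite_solution_H^s_Pphi2} provides a unique maximal $\Hi^{1/2}_\omega$-valued mild solution $\varphi_t \in C^0(I,\Hi^{1/2}_\omega)$ on an open interval $I=(-T_{\min},T_{\max})\ni 0$, satisfying \eqref{eq_classique_avec_partie_libre_formulation_integrale_Pphi2}. Uniqueness within the class of $\Hi^{1/2}_\omega$-valued mild solutions is part of that corollary: two mild solutions give two solutions $\tilde\varphi_t=\e^{\rmi t\omega}\varphi_t$ of the integral form of \eqref{eq_classique_sans_partie_libre_Pphi2}, and the local Lipschitz estimate \eqref{estimee_champ_vecteurs_Pphi2_espace_energie} combined with Grönwall forces them to coincide. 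So it only remains to show $I=\R$.

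Second, Lemma \ref{lemma.sol_mild_Pphi2_conserve_energie} applies, because $g\in L^1\cap H^1$ and $\varphi_0\in\Hi^{1/2}_\omega$; it gives $h(\varphi_t)=h(\varphi_0)$ for all $t\in I$. One caveat: that lemma is proved via approximation by the global $\Hi^{3/2}_\omega$ solutions of Lemma \ref{lemma.glob_sol_mild_Pphi2_in_H^3/2}, and the Grönwall constant there is controlled on compact subintervals of $I$. This suffices, since conservation is only needed for each $t\in I$. The bound \eqref{energie_Pphi2_controle_norme_espace_energie} then yields
\[
\norm{\omega^{1/2}\varphi_t}_2^2 \leq h(\varphi_0)+c \quad \text{for all } t\in I,
\]
and since $\omega\geq m_0>0$ we get $\norm{\varphi_t}_2\leq m_0^{-1/2}\norm{\omega^{1/2}\varphi_t}_2$. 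Hence $\sup_{t\in I}\norm{\varphi_t}_{\Hi^{1/2}_\omega}<\infty$. Note that $\norm{\tilde\varphi_t}_{\Hi^{1/2}_\omega}=\norm{\varphi_t}_{\Hi^{1/2}_\omega}$, because $\e^{\rmi t\omega}$ is unitary on $\Hi^{1/2}_\omega$.

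Third, if $T_{\max}<\infty$, the blowup criterion in Remark \ref{rem.blowup_criterion_Pphi2} would force $\norm{\varphi_t}_{\Hi^{1/2}_\omega}$ to be unbounded near $T_{\max}$, which contradicts the uniform bound above. The same argument applied backward in time gives $T_{\min}=\infty$. Therefore $\varphi_t\in C^0(\R,\Hi^{1/2}_\omega)$ is global.

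The main obstacle is the rigor of the conservation step on the maximal interval, namely ensuring that the approximation argument of Lemma \ref{lemma.sol_mild_Pphi2_conserve_energie} holds up to every $t\in I$. I would handle this by working on $[-T,T]\subset I$, where the approximating solutions $\varphi^{(n)}_t$ stay bounded in $\Hi^{1/2}_\omega$ by continuity of the flow. The local Lipschitz constant in \eqref{estimee_champ_vecteurs_Pphi2_espace_energie} is then uniform on that compact interval, so Grönwall and continuity of $h$ on $\Hi^{1/2}_\omega$ close the argument.
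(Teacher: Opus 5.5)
Your proposal is correct and follows essentially the same route as the paper. You take the maximal $\Hi^{1/2}_\omega$ solution from Corollary~\ref{cor.existence_unicite_solution_H^s_Pphi2} and combine energy conservation (Lemma~\ref{lemma.sol_mild_Pphi2_conserve_energie}) with the coercivity bound \eqref{energie_Pphi2_controle_norme_espace_energie} to get a uniform $\Hi^{1/2}_\omega$ bound; this contradicts the blowup criterion of Remark~\ref{rem.blowup_criterion_Pphi2}, and time reversal handles negative times. Your care in letting the Grönwall constant of the conservation lemma depend on bounds over a compact subinterval of $I$ legitimately tightens a step the paper states loosely. For that step you do not even need continuity of the flow: the approximating $\Hi^{3/2}_\omega$ solutions are uniformly bounded in $\Hi^{1/2}_\omega$ by their own energy conservation together with coercivity.
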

	
	\begin{proof}
		Let $\varphi_t \in C^0\big(I,\Hi^{1/2}_\omega\big)$ be the unique maximal mild solution of the equation \eqref{eq_classique_avec_partie_libre_Pphi2}. The curve 
		${\tilde{\varphi}_t = \e^{\rmi t \omega}\varphi_t\, \in C^1\left(I,\Hi^{1/2}_\omega\right)}$ is the unique maximal solution of 
			 \eqref{eq_classique_sans_partie_libre_Pphi2} according to Corollary \ref{cor.existence_unicite_solution_H^s_Pphi2}. Suppose by contradiction that $T \defegal \sup I$ is finite. In view of the blowup criterion given in Remark \ref{rem.blowup_criterion_Pphi2}, the map $t \longmapsto \norm{\omega^{1/2}\,\tilde\varphi_t}_2$ is not bounded in the neighborhood of $T$. Up to extraction, we thus have $\norm{\omega^{1/2}\, \tilde\varphi_t}_2 \xrightarrow[t \to T]{} +\infty$. Combining this result with \eqref{energie_Pphi2_controle_norme_espace_energie} and Lemma \ref{lemma.sol_mild_Pphi2_conserve_energie}, we obtain \[
		h(\varphi_0) + c = h(\varphi_t) + c \geq \norm{\omega^{1/2}\, \varphi_t}_2^2=\norm{\omega^{1/2}\, \tilde\varphi_t}_2^2 \xrightarrow[t \to T]{} +\infty\,,
		\] 
		which is a contradiction. Then $\sup I = + \infty$. A time reversal argument shows that $\inf I = - \infty$. 	
	\end{proof}

	\begin{cor}\label{cor.existence_sol_mild_global_L2_Pphi2}
	 Let $g\in L^1(\R) \cap H^1(\R)$. For all initial data $\varphi_0 \in \hi = L^2(\R,\dd k)$, the Cauchy problem \eqref{eq_classique_avec_partie_libre_Pphi2} admits a unique global mild solution $\varphi_t \in C^0\big(\R,\hi\big)$ satisfying \eqref{eq_classique_avec_partie_libre_formulation_integrale_Pphi2}.
	\end{cor}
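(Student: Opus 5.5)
The plan is to extend the global $\Hi^{1/2}_\omega$ theory of Theorem \ref{thm.existence_solution_mild_globale_eq_classique_avec_partie_libre_Pphi2} to $L^2$ data by density, using a Lipschitz estimate in the $L^2$ norm together with an a priori $L^2$ bound that holds uniformly on bounded time intervals.

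\textbf{Local theory.} By \eqref{estimee_L2_champ_vecteurs_Pphi2_espace_energie}, the vector field $X(t,\cdot)$ is locally Lipschitz on $L^2(\R,\dd k)$, uniformly in $t$. Indeed, $e^{\pm \rmi t\omega}$ is unitary on $L^2$, and the estimate on $\gatbar{}{b_{j,\ell}}$ only used $\norm{\tilde b_{j,\ell}}<\infty$, which holds because $w_j\in L^2(\R^j)$. The Cauchy--Lipschitz theorem applied to \eqref{eq_classique_sans_partie_libre_Pphi2} in $L^2$ therefore gives a unique maximal solution $\tilde\varphi_t\in C^1(I,L^2)$. Then $\varphi_t=e^{-\rmi t\omega}\tilde\varphi_t$ is the unique maximal $L^2$-valued mild solution satisfying \eqref{eq_classique_avec_partie_libre_formulation_integrale_Pphi2}. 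Uniqueness among mild solutions follows from Grönwall applied to the integral equation. The standard blow-up alternative also holds: if $\sup I<\infty$, then $\norm{\varphi_t}_2$ is unbounded near $\sup I$.

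\textbf{A priori $L^2$ bound.} This is the main obstacle, because neither the energy $h$ nor the $L^2$ norm controls the $L^2$ norm in a useful way. I would first try a direct computation on smooth solutions. Take $\varphi_0\in\Hi^{3/2}_\omega$ and the corresponding global solution from Lemma \ref{lemma.glob_sol_mild_Pphi2_in_H^3/2}, which is $C^1$ in $\Hi^{1/2}_\omega$. Then
\[
\frac{\dd}{\dd t}\norm{\varphi_t}_2^2=2\,\Im\braket{\varphi_t}{\gatbar{}{b^\circ}(\varphi_t)}.
\]
By the gauge non-invariance this is not zero, but \eqref{estimee_L2_champ_vecteurs_Pphi2_espace_energie} gives
\[
|\braket{\varphi_t}{\gatbar{}{b^\circ}(\varphi_t)}|\le C(\norm{\varphi_t}_2)\norm{\varphi_t}_2,
\]
where $C$ is polynomial. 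This only yields a superlinear differential inequality, which is not enough by itself. So I would instead use the structure of $\gatbar{}{b_{j,\ell}}(z)=\ell\,u\,\Four(g f_z^{\ell-1}\bar f_z^{j-\ell})$. The quantity $\braket{z}{\gatbar{}{b_{j,\ell}}(z)}$ equals $\ell\int g\,f_z^{\ell}\bar f_z^{j-\ell}\,\dd x$ (up to conjugation), with $f_z=\Four(\bar z u)$. Here $\norm{f_z}_\infty\lesssim\norm{z}_{\Hi^{1/2}_\omega}$ is bounded along the flow by energy conservation \eqref{energie_Pphi2_controle_norme_espace_energie}: indeed $\norm{\omega^{1/2}\varphi_t}_2^2\le h(\varphi_0)+c$. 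The difficulty is that this bound depends on $\norm{\varphi_0}_{\Hi^{1/2}_\omega}$, which is not controlled by $\norm{\varphi_0}_2$.

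The fallback I would pursue is the smoothing effect of the nonlinearity. The estimate \eqref{premiere_estimee_champ_vecteurs_Pphi2_H^3/2}, and more precisely its $L^2\to\Hi^{1/2}_\omega$ analogue, shows that the Duhamel term $\int_0^t e^{-\rmi(t-\tau)\omega}\gatbar{}{b^\circ}(\varphi_\tau)\dd\tau$ lies in $\Hi^{1/2}_\omega$ (indeed in $\Hi^{3/2}_\omega$). This holds as soon as $\norm{g f_z^{j-1}}_{H^1}$ is finite, where $f_z\in H^{1/2}$-type regularity comes from $z\in L^2$, since $u\sim\omega^{-1/2}$.

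\textbf{Globalization.} Write $\varphi_t=e^{-\rmi t\omega}\varphi_0+\eta_t$. On a short local interval, $\eta_t\in\Hi^{1/2}_\omega$. Then $\eta_t$ solves a perturbed equation whose nonlinearity $\gatbar{}{b^\circ}(e^{-\rmi t\omega}\varphi_0+\eta)$ is driven by the known free part. For this perturbed equation, I would redo the energy argument of Theorem \ref{thm.existence_solution_mild_globale_eq_classique_avec_partie_libre_Pphi2} with a modified energy $h(e^{-\rmi t\omega}\varphi_0+\eta_t)$ restricted to $\eta$, plus Grönwall. The lower bound \eqref{energie_Pphi2_controle_norme_espace_energie} still controls $\norm{\omega^{1/2}\eta_t}_2^2$, up to terms that are polynomial in $\norm{\eta_t}_{\Hi^{1/2}_\omega}$ but of lower degree, and time derivatives generated by the free part that are estimated via $\norm{f_{e^{-\rmi t\omega}\varphi_0}}$ with $g\in H^1$.

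This yields a bound on $\norm{\eta_t}_{\Hi^{1/2}_\omega}$, hence on $\norm{\varphi_t}_2$, on every bounded interval. The blow-up alternative then gives $I=\R$, and time reversal handles negative times.
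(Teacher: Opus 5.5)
\textbf{Gap: the a priori bound in the globalization step.} Your local $L^2$ theory and your blow-up alternative are fine. You also isolate the real difficulty correctly: a bound on $\norm{\varphi_t}_2$ over bounded time intervals. The step meant to supply it, the modified-energy/Grönwall argument for $\eta_t=\varphi_t-\Phi_t$ with $\Phi_t=\e^{-\rmi t\omega}\varphi_0$, is asserted rather than proved, and it does not close as sketched.

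For $E(t)=\norm{\omega^{1/2}\eta_t}_2^2+b^\circ(\Phi_t+\eta_t)$ one finds, formally,
\[
\dv{t}E(t)=2\,\Re\braket{-\rmi\,\omega\,\Phi_t}{\gatbar{}{b^\circ}(\varphi_t)}\,.
\]
For $\varphi_0\in L^2$ only, $\omega\Phi_t$ has only $\Hi^{-1}$-type regularity. Meanwhile $\gatbar{}{b^\circ}(\varphi_t)$ is only known to lie in $\Hi^{1/2}_\omega$. Upgrading it to $\Hi^{1}_\omega$ would require $g\,f_{\varphi_t}^{\ell-1}\bar{f}_{\varphi_t}^{\,j-\ell}\in H^{1/2}(\R)$, which is not available because $f_{\varphi_t}\in H^{1/2}(\R)\not\subset L^\infty(\R)$. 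For the same reason, your remark ``indeed in $\Hi^{3/2}_\omega$'' is unjustified for $L^2$ data.

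Suppose instead you use the energy of $\eta$ alone, $\norm{\omega^{1/2}\eta_t}_2^2+b^\circ(\eta_t)$. Its derivative contains $\gatbar{}{b^\circ}(\Phi_t+\eta_t)-\gatbar{}{b^\circ}(\eta_t)$ paired with $\omega\eta_t$. The controls actually at hand are $\norm{f_\eta}_\infty\lesssim\norm{\eta}_{\Hi^{1/2}_\omega}$ and $\norm{f_{\Phi_t}}_2\lesssim\norm{\varphi_0}_2$. With these, the term with one factor $f_{\Phi_t}$ and $2n-2$ factors $f_{\eta_t}$ is bounded only by $\norm{\varphi_0}_2\norm{\eta_t}_{\Hi^{1/2}_\omega}^{2n-1}\lesssim (E+c)^{n-1/2}$. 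So the free-part contributions are not bounded by $C(1+E)$, which is what Grönwall needs. The resulting inequality is superlinear for $n\geq2$: exactly the obstruction you met for $\dv{t}\norm{\varphi_t}_2^2$, moved up one regularity level. Absorbing these terms into the coercive top-degree part of $b^\circ(\eta_t)$ would require $L^\infty$ control of $f_{\Phi_t}$, which $L^2$ data do not give. Hence the bound on $\norm{\eta_t}_{\Hi^{1/2}_\omega}$ over bounded intervals, and with it globalization, is not established.

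\textbf{Comparison with the paper's route.} The paper argues by density instead. It approximates $\varphi_0$ in $L^2$ by $\varphi_0^{(n)}\in\Hi^{1/2}_\omega$ and uses the global solutions of Theorem \ref{thm.existence_solution_mild_globale_eq_classique_avec_partie_libre_Pphi2}. It then asserts, via \eqref{estimee_L2_champ_vecteurs_Pphi2_espace_energie} and Grönwall as in Lemma \ref{lemma.sol_mild_Pphi2_conserve_energie}, that $(\varphi^{(n)})$ is Cauchy in $C^0([0,T_0+\eta],\hi)$.

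That Grönwall step needs $\sup_n\sup_{0\leq t\leq T_0+\eta}\norm{\varphi^{(n)}_t}_2<\infty$. Energy conservation does not give this: it only bounds $\norm{\omega^{1/2}\varphi^{(n)}_t}_2^2$ by $h(\varphi^{(n)}_0)+c$, which diverges when $\varphi_0\notin\Hi^{1/2}_\omega$. Moreover, the uniform bound is equivalent to what is being proved. If $T_0<\infty$, then $\varphi^{(n)}\to\varphi$ in $C^0([0,T],\hi)$ for every $T<T_0$, while $\norm{\varphi_t}_2$ is unbounded near $T_0$, so no such uniform bound can hold.

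So the paper's short argument tacitly assumes the same uniform $L^2$ control that you rightly flagged as the crux, and switching to its route would not close your gap. Some genuinely below-energy globalization mechanism would be needed, for instance a Bourgain-type high--low splitting with quantitative energy-increment bounds, if it can be made to work here.
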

	
	\begin{proof}
	Let $\varphi_0 \in \hi$ be fixed and let $\varphi_t \in C^0(I_0,\hi)$ be a local mild solution of \eqref{eq_classique_avec_partie_libre_Pphi2} with initial data $\varphi_0$, which is well-defined thanks to the estimate \eqref{estimee_L2_champ_vecteurs_Pphi2_espace_energie} and the Cauchy-Lipschitz theorem applied to the equation \eqref{eq_classique_sans_partie_libre_Pphi2}. We assume that $T_0 = \sup I_0$ is finite. Our aim is to extend the solution $\varphi_t$ to any interval $[0,T_0+\eta]$ with $\eta > 0$, which will yield globalness by a time reversal argument. 
	
	Let ${\left(\varphi_0^{(n)}\right)_{n \in \N} \in \left(\Hi^{1/2}_\omega\right)^{\N}}$ such that $\varphi_0^{(n)} \xrightarrow[n \to \infty]{} \varphi_0$ in $\hi$, and for each $n \in \N$, let $\varphi_t^{(n)}$ be the global mild solution of \eqref{eq_classique_avec_partie_libre_Pphi2} with initial data $\varphi_0^{(n)}$, given by Theorem \ref{thm.existence_solution_mild_globale_eq_classique_avec_partie_libre_Pphi2}. Let $\eta > 0$. Using \eqref{estimee_L2_champ_vecteurs_Pphi2_espace_energie} and following the proof of Lemma \ref{lemma.sol_mild_Pphi2_conserve_energie} with $\norm{\cdot}_{\Hi^{1/2}_\omega}$ replaced by $\norm{\cdot}_2$, we obtain that the sequence $\left(\eval{\varphi_t^{(n)}}_{[0,T_0+\eta]}\right)_{n \in \N}$ is Cauchy in the Banach space $C^0([0,T_0+\eta],\hi)$. Its limit satisfies the equation \eqref{eq_classique_avec_partie_libre_formulation_integrale_Pphi2} on $[0,T_0+\eta]$ and coincides with $\varphi_t$ on $[0,T_0)$, which extends $\varphi_t$ to $[0,T_0+\eta]$ and concludes the proof.
	\end{proof}

	\begin{remark}\label{rem.existence_unicite_solution_H^s_Pphi2}
	Let us make a few comments about the results of this subsection, which can be extended providing the same regularity on the cutoff function $g$:

	\begin{itemize}	 \item Let $0 < s \leq 1/2$. Replacing the estimate \eqref{majoration_norme_infty_pour_champ_vecteurs_Pphi2} by \[\norm{f_z}_\infty \leq \norm{z}_{\Hi^s_\omega} \, \norm{\omega^{-s}\, u}_2\] for $z \in \Hi^s_\omega$, we can adapt the proof of \eqref{estimee_champ_vecteurs_Pphi2_espace_energie} to the case $\Hi^s_\omega$ whenever ${g \in L^1(\R) \cap L^2(\R)}$. As a consequence, for any initial data $\varphi_0 \in \Hi^s_\omega$, $0 < s \leq 1/2$, the equation \eqref{eq_classique_avec_partie_libre_Pphi2} admits a unique maximal mild solution $\varphi_t \in C^0(I_s, \Hi^s_\omega)$. Furthermore, assuming $g\in L^1(\R) \cap H^1(\R)$ and adapting the proof of Corollary \ref{cor.existence_sol_mild_global_L2_Pphi2}, we see that $I_s = \R$.
\item For all $1/2 < s \leq 3/2$, a similar adaptation of the proof of Lemma \ref{lemma.glob_sol_mild_Pphi2_in_H^3/2} yields global well-posedness of \eqref{eq_classique_avec_partie_libre_Pphi2} for any initial data $\varphi_0 \in \Hi^{s}_\omega$, provided that $g \in L^1(\R) \cap H^1(\R)$.
\end{itemize}

	\end{remark}

	\subsection[The quadratic dynamics]{The quadratic dynamics for the $\ppde$ model}\label{subsect.The_quadratic_dynamic_Pphi2}
	
	In order to study the propagation of coherent states in the classical limit, we need some information about the dynamics of time-dependent quadratic Hamiltonians. Although the results of this subsection concern the $\ppde$ model in the Hilbert space $\hi = L^2(\R,\dd k)$, we will see later that they can easily be extended to a more general case: for this reason, we prefer to state them in a rather abstract way, following \cite[Section 4]{AmZe14}.
	
	\medskip
	For all $k \in \N$, we denote by $\mathcal{D}_{+,k}$ the space $D\left(\nbop[1]^{k/2}\right)$ equipped with the inner product \[\braket{\varphi}{\psi}_{\mathcal{D}_{+,k}} = \sum_{n = 0}^\infty (n+1)^k \braket{\varphi^{(n)}}{\psi^{(n)}}\,,\] and by $\mathcal{D}_{-,k}$ the completion of $\focks$ with respect to the inner product 
	\[
	\braket{\varphi}{\psi}_{\mathcal{D}_{-,k}} = \sum_{n = 0}^\infty (n+1)^{-k} \braket{\varphi^{(n)}}{\psi^{(n)}}\,.
	\] 
	This provides the Hilbert rigging 
	\[
	\mathcal{D}_{+,k} \subset \focks \subset \mathcal{D}_{-,k}\,.
	\] 
	We also define 
	\[
	\mathcal{D}_{+,\infty} = \displaystyle\bigcap_{k \in \N} \mathcal{D}_{+,k} \,.
	\]
	
	Let $\Big(\mathfrak{Q}_t\Big)_{t \in I}$ be a family of self-adjoint operators on $\focks$. A family of unitary operators $\left(U(t,s)\right)_{t,s \, \in \, I}$ is said to be a unitary propagator of the problem 
	\begin{equation}\label{abstract_Cauchy_problem_quadratic_dynamics}
		\left\{\begin{array}{l}
			\rmi\, \partial_t u_t = \mathfrak{Q}_t\, u_t\\
			\eval{u_t}_{t=0} = u_0 \in \mathcal{D}_{+,k}
		\end{array}\right.
	\end{equation}
	if it satisfies:
	\begin{enumerate}
		\item $U(t,t) = \1$ for all $t \in I$,
		\item $U(t,r) \circ U(r,s) = U(t,s)$ for all $r,s,t \in I$,
		\item For each $s \in I$, we have $U(\bullet,s) \in C^0\big(I,\mathcal{L}(\mathcal{D}_{+,k})\big) \cap C^1\big(I,\mathcal{L}(\mathcal{D}_{+,k},\mathcal{D}_{-,k})\big)$ and, for all $u_0 \in \mathcal{D}_{+,k}$, 
		\[
		\rmi\, \partial_t U(t,s)\, u_0 = \mathfrak{Q}_t\, U(t,s)\, u_0 \,.
		\]
	\end{enumerate}
	Here, $C^k(I,\mathfrak{B})$ denotes the space of $k$-continuously differentiable $\mathfrak{B}$-valued functions where $\mathfrak{B}$ is endowed with the strong operator topology.
	
	\subsubsection{Existence of the unitary propagator}\label{par.exist_propag_quadra_Pphi2}
	
	We now return to the $\ppde$ model introduced in Subsection \ref{subsect.intro_Pphi2_model}. From now on, we assume that the spatial cutoff $g$ belongs to $L^1(\R) \cap H^1(\R)$. Let us note that the Wick symbol of the interaction term \eqref{def_rigoureuse_spatially_cutoff_interaction_Pphi2} takes the form 
	\begin{multline}\label{interaction_Pphi2_de_la_forme_Fv}
		b^\circ(z) = F_V(z) = \sum_{j=0}^{2n} \braket{\frac{\tensn{(z + \conjug z)}{j}}{\sqrt{j!}}}{V^{(j)}}\\ \text{ with } V^{(j)} = \left\{\begin{array}{l}
			\beta_0\, \norm{g}_1 \text{ if } j = 0\\
			\sqrt{j!}\,2^{-j/2}\, \beta_j\, w_j \in L^2_{\mathrm{\mathrm{sym}}}(\R^j) \simeq \tensymn{L^2(\R)}{j} \text{ if } 1 \leq j \leq 2n\\
			0 \text{ if } j > 2n\,,
		\end{array}\right.
	\end{multline} 
	where $n$ and $(\beta_j)_{0 \leq j \leq 2n}$ appear in the polynomial \eqref{def_polynome_modele_Pphi2}, $\conjug z (k) = \bar{z(-k)}$ is a conjugation on $L^2(\R,\dd k)$, and 
	\[
	w_j(k_1,\dots,k_j) = 2^{j/2}\, \hat{g}(k_1 + \cdots + k_j) \prod_{m=1}^j \omega(k_m)^{-1/2} \in L^2_{\mathrm{\mathrm{sym}}}(\R^j) \simeq \tensymn{L^2(\R)}{j}
	\] is the Schwartz kernel given by \eqref{def_noyau_Schwartz_Pphi2_Derezinski_Gerard} (see also \cite[Lemma 6.1]{DeGe00}). 
	
	For all $t \in \R$, let us define 
	\begin{equation}\label{def_potentiel_2_Fv}
		V_2(t) = \frac{1}{\sqrt{2}} \sum_{j=2}^{2n}\, j\,(j-1) \,2^{-\frac{j}{2}}\, \beta_j \, \symop[2]\Big(\bra{\tensn{(\varphi_t + \conjug \varphi_t)}{j-2}} \ot (\tensn{\1}{2})\Big)\,w_j \in L^2_{\mathrm{\mathrm{sym}}}(\R^2)\,,
	\end{equation}
	which corresponds to the quadratic polynomial 
	\begin{equation}
	 F_{V_2(t)}(z) = \frac{1}{\sqrt{2}} \braket{\tensn{(z + \conjug z)}{2}}{V_2(t)}
	\end{equation}
	appearing in the Taylor expansion \eqref{asymp_Taylor_outline_proof}. We recall that $\varphi_t \in C^0(\R,L^2(\R,\dd k))$ is the mild solution of \eqref{eq_classique_avec_partie_libre_Pphi2} given by Corollary \ref{cor.existence_sol_mild_global_L2_Pphi2}. The aim of this paragraph is to prove the existence of a unique unitary propagator for the operators $\dG[\eps]{\omega} + \Wick[\eps]{F_{V_2(t)}}$. Again, by passing to the interaction representation, we rather consider the Cauchy problem \eqref{abstract_Cauchy_problem_quadratic_dynamics} associated with the operators
	\begin{equation}\label{def_WickQ_t_Pphi2}
		\mathfrak{Q}_t = \eps^{-1}\,\Wick[\eps]{F_{\tilde{V}_2(t)}^{\conjug_t}}\,,
	\end{equation} 
	where
	\begin{equation}\label{def_Fv_quad_tordu_Pphi2}
		F_{\tilde{V}_2(t)}^{\conjug_t}(z) = \frac{1}{\sqrt{2}} \braket{\tensn{(\e^{-\rmi t \omega}\,z + \e^{\rmi t \omega} \, \conjug z)}{2}}{V_2(t)}\,.
	\end{equation}
	It is worth noticing that the operators $\mathfrak{Q}_t $, given in \eqref{def_WickQ_t_Pphi2}, are $\eps$-independent since they are quadratic in terms of creation-annihilation operators.
	
	\medskip
	
	We will use the result \cite[Corollary C.4]{AmBr12}, stated below in a weaker version that fits our needs. 
	
	\begin{theorem}\label{thm.quadratic_dynamics_Ammari_Breteaux_chaospropag}
		Let $I \subset \R$ be a closed interval and $k \in \N^*$ be fixed. Let $(\mathfrak{Q}_t)_{t \in I}$ be a family of self-adjoint operators on $\focks$ such that $\mathfrak{Q}_t \in \mathcal{L}\big(\mathcal{D}_{+,k},\mathcal{D}_{-,k}\big)$. We assume that 
		\begin{enumerate}
			\item the map $t \longmapsto \norm{\mathfrak{Q}_t}_{\mathcal{L}(\mathcal{D}_{+,k},\mathcal{D}_{-,k})}$ is continuous, 
			\item there exists a continuous function $f_k : I \longrightarrow \R_+$ such that, for all $\varphi,\psi \in \focksalg$ and $t \in I$, 
			\begin{equation}\label{estimee_commut_corollary_C4_Ammari_Breteaux_chaospropag}
				\abs{\braket{(\nbop[1]+1)^k \varphi}{\mathfrak{Q}_t \psi} - \braket{\mathfrak{Q}_t \varphi}{(\nbop[1]+1)^k \psi}} \leq f_k(t)\, \norm{\varphi}_{\mathcal{D}_{+,k}}\, \norm{\psi}_{\mathcal{D}_{+,k}}\,.
			\end{equation}
			Then, the Cauchy problem \eqref{abstract_Cauchy_problem_quadratic_dynamics} admits a unique unitary propagator $(U(t,s))_{t,s \, \in \, I}$. 
		\end{enumerate}
	\end{theorem}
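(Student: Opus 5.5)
The plan is a regularization–compactness argument in the spirit of Kato's theory of time-dependent generators. The commutator estimate \eqref{estimee_commut_corollary_C4_Ammari_Breteaux_chaospropag} will serve as an a priori bound in $\mathcal{D}_{+,k}$, and self-adjointness of $\mathfrak{Q}_t$ will give conservation of the $\focks$-norm and uniqueness.

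\textbf{Step 1 (regularization).} For $m\in\N$, let $P_m=\1_{[0,m]}(\nbop[1])$ and set $\mathfrak{Q}^{(m)}_t=P_m\mathfrak{Q}_tP_m$. Since $\mathfrak{Q}_t\in\mathcal{L}(\mathcal{D}_{+,k},\mathcal{D}_{-,k})$ and all these norms are equivalent on $\mathrm{Ran}\,P_m$, each $\mathfrak{Q}^{(m)}_t$ is a bounded self-adjoint operator on $\focks$. By assumption 1 its norm is continuous in $t$, and I expect strong continuity of $t\mapsto\mathfrak{Q}_t$ to follow from the same setting. The Dyson series then yields a unitary propagator $U_m(t,s)$ in the bounded-operator sense. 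Because $P_m$ commutes with $(\nbop[1]+1)^k$, the commutator estimate \eqref{estimee_commut_corollary_C4_Ammari_Breteaux_chaospropag} passes to $\mathfrak{Q}^{(m)}_t$ with the same $f_k$.

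\textbf{Step 2 (uniform a priori bound).} Put $u_m(t)=U_m(t,s)u_0$. Differentiating $\braket{u_m(t)}{(\nbop[1]+1)^ku_m(t)}$ gives $\rmi$ times the commutator form in \eqref{estimee_commut_corollary_C4_Ammari_Breteaux_chaospropag} evaluated at $u_m(t)$. This yields $\frac{\dd}{\dd t}\norm{u_m}_{\mathcal{D}_{+,k}}^2\le f_k(t)\norm{u_m}_{\mathcal{D}_{+,k}}^2$, and Grönwall gives
\[
\norm{U_m(t,s)}_{\mathcal{L}(\mathcal{D}_{+,k})}\le \exp\Big(\tfrac12\Big|\int_s^t f_k\Big|\Big)
\]
uniformly in $m$. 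Combined with assumption 1, this also bounds $\partial_tu_m$ uniformly in $\mathcal{D}_{-,k}$ on compact subintervals of $I$.

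\textbf{Step 3 (passage to the limit).} I would show that $U_m(t,s)u_0$ is Cauchy in $\focks$, uniformly for $t,s$ in compact sets. The first attempt is the Duhamel identity
\[
U_n(t,s)-U_m(t,s)=-\rmi\int_s^tU_n(t,r)\big(\mathfrak{Q}^{(n)}_r-\mathfrak{Q}^{(m)}_r\big)U_m(r,s)\,\dd r ,
\]
tested against vectors of $\mathcal{D}_{+,k}$ and combined with the uniform bound from Step 2 applied to $U_n(t,r)^*=U_n(r,t)$. The factor $\mathfrak{Q}^{(n)}-\mathfrak{Q}^{(m)}$ involves $(1-P_m)$, which is small from $\mathcal{D}_{+,k'}$ to $\mathcal{D}_{-,k}$ when $k'>k$. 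If only level $k$ is available, I would instead use equicontinuity in $\mathcal{D}_{-,k}$ and boundedness in $\mathcal{D}_{+,k}$: a diagonal weak-$*$ extraction produces a limit $U(t,s)u_0\in\mathcal{D}_{+,k}$. Uniqueness (Step 4) then identifies the limit and upgrades to convergence of the whole sequence.

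The limit inherits the propagator relations $U(t,t)=\1$ and $U(t,r)U(r,s)=U(t,s)$, and is isometric on $\focks$, since $\norm{U_m(t,s)u_0}=\norm{u_0}$. Surjectivity, hence unitarity, follows from $U(s,t)U(t,s)=\1$. The regularity $U(\bullet,s)\in C^0(I,\mathcal{L}(\mathcal{D}_{+,k}))\cap C^1(I,\mathcal{L}(\mathcal{D}_{+,k},\mathcal{D}_{-,k}))$ comes from passing to the limit in the integral equation.

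\textbf{Step 4 (uniqueness).} Let $u_t\in C^0(I,\mathcal{D}_{+,k})\cap C^1(I,\mathcal{D}_{-,k})$ solve the equation with $u_0=0$. The pairing gives $\frac{\dd}{\dd t}\norm{u_t}^2=2\,\Im\braket{u_t}{\mathfrak{Q}_tu_t}_{\mathcal{D}_{+,k},\mathcal{D}_{-,k}}=0$, because $\mathfrak{Q}_t$ is symmetric on $\mathcal{D}_{+,k}$ (by density of $\focksalg$ in $\mathcal{D}_{+,k}$ and continuity of the form). Hence $u_t\equiv0$.

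\textbf{Main obstacle.} I expect the hardest step to be the strong convergence in Step 3, since the hypothesis gives control only at the single level $k$. The compactness route together with uniqueness is my fallback there. The justification of differentiating $\norm{u}_{\mathcal{D}_{+,k}}^2$ is harmless after regularization.
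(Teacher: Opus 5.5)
Your Steps 1, 2 and 4 are sound: the cutoff $P_m\mathfrak{Q}_tP_m$, the uniform bound $\norm{U_m(t,s)}_{\mathcal{L}(\mathcal{D}_{+,k})}\le \e^{\frac12\abs{\int_s^t f_k}}$, and uniqueness among solutions in $C^0(I,\mathcal{D}_{+,k})\cap C^1(I,\mathcal{D}_{-,k})$. The paper gives no proof of this statement; it imports it from \cite[Corollary C.4]{AmBr12}, so your outline has to close on its own.

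The gap is in Step 3, where you expected it. Your fallback produces only a weak limit $u(t)$ of a subsequence of $u_m(t)=U_m(t,s)u_0$. This limit is bounded in $\mathcal{D}_{+,k}$ and weakly continuous there, and three of your later claims do not follow from that.
First, a weak limit only satisfies $\norm{u(t)}\le\norm{u_0}$. So ``isometric since $\norm{U_m(t,s)u_0}=\norm{u_0}$'' is invalid, and so is passing $U_m(t,r)U_m(r,s)=U_m(t,s)$ to the limit.
Second, Step 4 is proved for $C^0(I,\mathcal{D}_{+,k})\cap C^1(I,\mathcal{D}_{-,k})$ solutions, and the weak limit is not known to be one. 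For uniqueness to ``identify the limit'' you would need the energy identity $\partial_t\norm{u}^2=2\Re\braket{u}{\partial_t u}$ in the weak class (Lions--Magenes for the triple $\mathcal{D}_{+,k}\subset\focks\subset\mathcal{D}_{-,k}$). Even then you only get weak convergence of the whole sequence, unless you also use the resulting conservation of $\norm{u(t)}$.
Third, passing to the limit in the integral equation gives weak, not norm, continuity in $\mathcal{D}_{+,k}$.

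Separately, hypothesis 1 as printed gives no continuity of $t\mapsto\mathfrak{Q}_t$ at all. Take $\mathfrak{Q}_t=\1$ for $t<0$ and $\mathfrak{Q}_t=-\1$ for $t\ge0$, with $0$ interior to $I$. It has constant norm and satisfies the commutator bound with $f_k=0$, yet admits no propagator that is $C^1$ at $t=0$. You must read the hypothesis as norm continuity of $t\mapsto\mathfrak{Q}_t\in\mathcal{L}(\mathcal{D}_{+,k},\mathcal{D}_{-,k})$, which is what the paper actually checks for its $\mathfrak{Q}_t$; it does not ``follow''.

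The missing idea is cheap with your cutoffs. Since $\mathfrak{Q}^{(m)}_t$ commutes with $P_m$ and vanishes on $\mathrm{Ran}(\1-P_m)$, the tails are frozen: $(\1-P_m)u_m(t)=(\1-P_m)u_0$. For $n>m$, differentiating $\braket{u_n}{u_m}$ and using symmetry gives
\[
\braket{u_n(t)}{u_m(t)}=\norm{u_0}^2+\rmi\int_s^t\braket{u_n(r)}{\big(P_n\mathfrak{Q}_r(P_n-P_m)+(P_n-P_m)\mathfrak{Q}_rP_m\big)u_m(r)}\,\dd r .
\]
In the first term $(P_n-P_m)u_m(r)=(P_n-P_m)u_0$, so it is $O\big(\norm{(\1-P_m)u_0}_{\mathcal{D}_{+,k}}\big)$ uniformly in $n$. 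In the second term, let $n\to\infty$ along your weakly convergent subsequence; what remains is $\braket{(\1-P_m)u(r)}{\mathfrak{Q}_rP_mu_m(r)}=O\big(\norm{(\1-P_m)u(r)}_{\mathcal{D}_{+,k}}\big)$. Hence $\braket{u(t)}{u_m(t)}\to\norm{u_0}^2$. Because $\norm{u(t)}\le\norm{u_0}$, this gives $\norm{u(t)-u_m(t)}^2\le2\norm{u_0}^2-2\Re\braket{u(t)}{u_m(t)}\to0$. So the whole sequence converges strongly in $\focks$, without any uniqueness argument.

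From there the rest closes as follows. The propagator law and unitarity pass to the limit. The bound $\norm{U(t,t_0)}_{\mathcal{L}(\mathcal{D}_{+,k})}\le\e^{\frac12\abs{\int_{t_0}^tf_k}}$ survives by weak lower semicontinuity. Applying it at every base point gives $\limsup_{t\to t_0}\norm{u(t)}_{\mathcal{D}_{+,k}}\le\norm{u(t_0)}_{\mathcal{D}_{+,k}}$, which with weak continuity yields norm continuity in $\mathcal{D}_{+,k}$. The $C^1(I,\mathcal{D}_{-,k})$ property then follows from the integral equation and the norm continuity of $\mathfrak{Q}_t$, and your Step 4 gives uniqueness of the propagator.
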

	
	\noindent
	Let us remark that we have 
	\[
	F_{\tilde{V}_2(t)}^{\conjug_t}(z) = \frac{1}{\sqrt{2}} \braket{\tensn{(z+\conjug_t z)}{2}}{\tilde{V}_2(t)}
	\] 
	where $\conjug_t z = \e^{2\rmi t \omega}\conjug z$ and $\tilde{V}_2(t) = \e^{\rmi\frac{t}{\eps} \dG[\eps]{\omega}}\, V_2(t)$. The assumption $\liftop{\conjug}V_2(t) = V_2(t)$ combined with the identity $\conjug\, \e^{\rmi t \omega} = \e^{-\rmi t \omega}\, \conjug$ implies 
	\begin{equation}
	 \liftop{\conjug_t}\tilde{V}_2(t) = \tilde{V}_2(t)\,.
	\end{equation}
	Thus, applying Theorem \ref{thm.rpz_ondulatoire_espace_Fock} with $\conjug$ replaced by $\conjug_t$ and following Proposition \ref{prop.Wick_non_poly_est_essentiellement_autoadjoint}, we obtain that the operators $\mathfrak{Q}_t$ given in \eqref{def_WickQ_t_Pphi2} are essentially self-adjoint. We keep the same notation for their closure. Let us now verify that these operators fulfill the assumptions of Theorem \ref{thm.quadratic_dynamics_Ammari_Breteaux_chaospropag}.
	
	\begin{enumerate}
		\item For all $T > 0$ and $s,t \in [-T,T]$, the estimate 
		\begin{equation}
			\norm{\tensn{(\varphi_t + \conjug \varphi_t)}{j-2} - \tensn{(\varphi_s + \conjug \varphi_s)}{j-2}}_2 \leq 2\, (j-2)\, \norm{\varphi_t - \varphi_s}_2 \left(\underset{\abs{\tau} \leq T}{\sup} \left(2 \norm{\varphi_\tau}_2\right) \right)^{j-3}\,,
		\end{equation}
		 holding for all $j \geq 3$, yields (see \eqref{majoration_difference_puissances_produit_tensoriel})
		\begin{equation*}
			\norm{V_2(t)-V_2(s)}_2 \leq \sqrt{2}\, \norm{\varphi_t-\varphi_s}_2\, \sum_{j=3}^{2n}\, j\,(j-1)\,(j-2)\, 2^{-\frac{j}{2}} \, \abs{\beta_j} \left(\underset{\abs{\tau} \leq T}{\sup} (2 \norm{\varphi_\tau}_2) \right)^{j-3} \norm{w_j}_2\,,
		\end{equation*}
		which shows that the map $t \longmapsto \norm{V_2(t)}_2$ is continuous. Let us recall that $w_j$ denotes the Schwartz kernel given by \eqref{def_noyau_Schwartz_Pphi2_Derezinski_Gerard}. Now, let us write 
		\begin{equation}\label{decomp_symbole_quad_tordu}
		 	F_{\tilde{V}_2(t)}^{\conjug_t} = b_{2,0}(t) + b_{0,2}(t) + b_{1,1}(t)\,,
		\end{equation}
		where the polynomial symbols $b_{p,q}(t) \in \wickpol[p,q] $, $p+q=2$ are associated with the operators 
		\[
		\tilde{b}_{0,2}(t) = \frac{1}{\sqrt{2}}\, \ket{\tilde{V}_2(t)}\,, \quad \tilde{b}_{1,1}(t) = \sqrt{2}\, \Big(\1 \ot \bra{\conjug_t \, \bullet}\Big)\tilde{V}_2(t)\quad \mathrm{ and }\quad \tilde{b}_{2,0}(t) = \frac{1}{\sqrt{2}}\, \bra{\tilde{V}_2(t)}\,.
		\] 
		For all $\psi \in \focksalg$, we have by explicit computation 
		\begin{align}\label{explicit_computation_continuity_Qt_quadratic_dynamics}
			\norm{\Wick[\eps]{F_{\tilde{V}_2(t)}^{\conjug_t}} \psi}_{\mathcal{D}_{-,k}}^2 \leq &\,3\, \Big( \norm{\Wick[\eps]{b_{2,0}(t)} \psi}_{\mathcal{D}_{-,k}}^2\\
			&\phantom{3\,\Big(}+ \norm{\Wick[\eps]{b_{0,2}(t)} \psi}_{\mathcal{D}_{-,k}}^2 \nonumber\\
			&\phantom{3\,\Big(}+ \norm{\Wick[\eps]{b_{1,1}(t)} \psi}_{\mathcal{D}_{-,k}}^2\Big)\nonumber\\
			\leq &\,9\, \eps^2\, \norm{V_2(t)}_2^2\, \norm{\psi}_{\mathcal{D}_{+,k}}^2\,.\nonumber
		\end{align}
		
		Similarly, for all $t,s \in \R$, we have 
		\begin{equation*}
			\norm{\Wick[\eps]{F_{\tilde{V}_2(t)}^{\conjug_t}}\psi - \Wick[\eps]{F_{\tilde{V}_2(s)}^{\conjug_s}} \psi}_{\mathcal{D}_{-,k}}^2 \leq\, 9 \, \eps^2\, \norm{V_2(t)-V_2(s)}_2^2\, \norm{\psi}_{\mathcal{D}_{+,k}}^2.
		\end{equation*}
		Thus, the map $t \longmapsto  \norm{\mathfrak{Q}_t}_{\mathcal{L}(\mathcal{D}_{+,k},\mathcal{D}_{-,k})}$ is continuous.
		\item We now establish an estimate of the form \eqref{estimee_commut_corollary_C4_Ammari_Breteaux_chaospropag}. Using the formula 
		\[
		\left[(\nbop[1]+1)^k, \Wick[\eps]{b}\right] \psi^{(n)} = \left((n+q-p+1)^k-(n+1)^k\right)\Wick[\eps]{b}\psi^{(n)}
		\] for $b \in \wickpol[p,q]$ and $\psi^{(n)} \in \tensymn{\hi}{n}$, and using the decomposition \eqref{decomp_symbole_quad_tordu}, an explicit computation shows that
		\begin{equation*}
		 \braket{(\nbop[1]+1)^k \varphi}{\Wick[\eps]{F_{\tilde{V}_2(t)}^{\conjug_t}}\psi} - \braket{\Wick[\eps]{F_{\tilde{V}_2(t)}^{\conjug_t}} \varphi}{\Wick[\eps]{(\nbop[1]+1)^k\psi}}
		\end{equation*}
		is equal to 
		\begin{equation*}
		 \sum_{n=0}^\infty \big((n+3)^k-(n+1)^k\big) \Big(\braket{\varphi^{(n+2)}}{\Wick[\eps]{b_{0,2}(t)}\psi^{(n)}} - \braket{\varphi^{(n)}}{\Wick[\eps]{b_{2,0}(t)}\psi^{(n+2)}}\Big)
		\end{equation*}
		for all $\varphi,\psi \in \focksalg$. Since $(n+3)^k - (n+1)^k \leq 2\, k\, (n+3)^{k-1}$, we obtain 
		\begin{multline*}
			\abs{\braket{(\nbop[1]+1)^k \varphi}{\Wick[\eps]{F_{\tilde{V}_2(t)}^{\conjug_t}}\psi} - \braket{\Wick[\eps]{F_{\tilde{V}_2(t)}^{\conjug_t}} \varphi}{\Wick[\eps]{(\nbop[1]+1)^k\psi}}}\\ \leq \sqrt{2}\, \eps\, k\, \norm{V_2(t)} \sum_{n=0}^\infty (n+3)^{k-1} \sqrt{(n+1)(n+2)} \left(\norm{\varphi^{(n+2)}}\norm{\psi^{(n)}} + \norm{\varphi^{(n)}}\norm{\psi^{(n+2)}}\right)\\
			\leq \sqrt{2}\, \eps\, k\, \norm{V_2(t)} \sum_{n=0}^\infty (n+3)^k\, \left(\norm{\varphi^{(n+2)}}\norm{\psi^{(n)}} + \norm{\varphi^{(n)}}\norm{\psi^{(n+2)}}\right)\,.
		\end{multline*}
		Dominating $(n+3)^k$ by $3^{k/2}\, (n+1)^{k/2}\, (n+3)^{k/2}$ and applying Cauchy-Schwarz inequality, we obtain  
		\begin{multline}\label{estimee_forme_quad_propag_quad_Pphi2}
			\abs{\braket{(\nbop[1]+1)^k \varphi}{\mathfrak{Q}_t\,\psi} -\braket{\mathfrak{Q}_t\,\varphi}{\Wick[\eps]{(\nbop[1]+1)^k}\psi}}\\ 
			\leq 2\, \sqrt{2}\, k\, 3^{k/2} \norm{V_2(t)} \norm{\varphi}_{\mathcal{D}_{+,k}}\norm{\psi}_{\mathcal{D}_{+,k}}.
		\end{multline}
	\end{enumerate}
	
	We can finally state our main result concerning the dynamics of a time-dependent quadratic Hamiltonian for the $\ppde$ model.
	
	\begin{cor}\label{cor.dynamique_quad_modele_Pphi2}
		For all initial data $u_0 \in \mathcal{D}_{+,k}$ with $k \in \N^*$, the non-autonomous Cauchy problem 
		\begin{equation}\label{Cauchy_problem_sans_partie_libre_quadratic_dynamics_Pphi2}
			\left\{\begin{array}{l}
				\rmi \,\eps\, \partial_t \tilde{u}_t = \Wick[\eps]{F_{\tilde{V}_2(t)}^{\conjug_t}}\, \tilde{u}_t\\
				\eval{\tilde{u}_t}_{t=0} = u_0 \in \mathcal{D}_{+,k}
			\end{array}\right.
		\end{equation}
		admits a unique unitary propagator $\big(\tilde{U}_2^\circ(t,s)\big)_{t,s \, \in \, \R}$. Furthermore, we have the estimate 
		\begin{equation}\label{estimee_norme_propag_quad_propagation_etats_coherents}
			\norm{\tilde{U}_2^\circ(t,s)}_{\mathcal{L}(\mathcal{D}_{+,k})} \leq \e^{2\, \sqrt{2}\, k\, 3^{k/2}\, \abs{\int_s^t \norm{V_2(\tau)} \dd \tau}} \,.
		\end{equation}
		Finally, if we set $U_2^\circ(t,s) = \e^{-\rmi\frac{t}{\eps}\dG[\eps]{\omega}}\, \tilde{U}_2^\circ(t,s)\, \e^{\rmi\frac{s}{\eps}\dG[\eps]{\omega}}$, the map $t \longmapsto U_2^\circ(t,s)\, u_s$ is a mild solution of the Cauchy problem 
		\begin{equation}\label{Cauchy_problem_avec_partie_libre_quadratic_dynamics_Pphi2}
			\left\{\begin{array}{l}
				\rmi \, \eps\, \partial_t u_t = \left(\dG[\eps]{\omega} + \Wick[\eps]{F_{V_2(t)}}\right) u_t\\
				\eval{u_t}_{t=0} = u_0 \in \mathcal{D}_{+,k} \,.
			\end{array}\right.
		\end{equation}
	\end{cor}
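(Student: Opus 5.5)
Taking $\mathfrak{Q}_t = \eps^{-1}\Wick[\eps]{F_{\tilde{V}_2(t)}^{\conjug_t}}$ as in \eqref{def_WickQ_t_Pphi2}, the plan is to apply Theorem \ref{thm.quadratic_dynamics_Ammari_Breteaux_chaospropag} on every compact interval $I=[-T,T]$ and then glue the propagators together. The operators $\mathfrak{Q}_t$ are essentially self-adjoint by the wave-representation argument with the conjugation $\conjug_t$. Both hypotheses of Theorem \ref{thm.quadratic_dynamics_Ammari_Breteaux_chaospropag} were checked just before the statement. First, \eqref{explicit_computation_continuity_Qt_quadratic_dynamics} together with the Lipschitz bound on $V_2(t)-V_2(s)$ gives continuity of $t\mapsto\norm{\mathfrak{Q}_t}_{\mathcal{L}(\mathcal{D}_{+,k},\mathcal{D}_{-,k})}$. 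Second, \eqref{estimee_forme_quad_propag_quad_Pphi2} is the commutator bound with $f_k(t)=2\sqrt2\,k\,3^{k/2}\norm{V_2(t)}$, which is continuous since $\varphi_t\in C^0(\R,L^2)$ by Corollary \ref{cor.existence_sol_mild_global_L2_Pphi2}. The theorem therefore yields a unique unitary propagator $\tilde U_2^\circ$ on each $[-T,T]$. Uniqueness makes these propagators compatible as $T$ grows, so they define $\tilde U_2^\circ(t,s)$ for all $t,s\in\R$. Multiplying the equation $\rmi\partial_t u=\mathfrak{Q}_t u$ by $\eps$ gives exactly \eqref{Cauchy_problem_sans_partie_libre_quadratic_dynamics_Pphi2}.

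For the estimate \eqref{estimee_norme_propag_quad_propagation_etats_coherents}, take $u_0\in\mathcal{D}_{+,k}$ (first $u_0\in\focksalg$ if needed, then extend by density) and set $u_t=\tilde U_2^\circ(t,s)u_0$. Differentiate $\norm{u_t}_{\mathcal{D}_{+,k}}^2=\braket{u_t}{(\nbop[1]+1)^k u_t}$. The regularity $U(\bullet,s)\in C^0(\mathcal{D}_{+,k})\cap C^1(\mathcal{D}_{+,k},\mathcal{D}_{-,k})$ makes the pairing differentiable, and
\[
\dv{t}\norm{u_t}^2_{\mathcal{D}_{+,k}} = \rmi\Big(\braket{\mathfrak{Q}_t u_t}{(\nbop[1]+1)^k u_t}-\braket{(\nbop[1]+1)^k u_t}{\mathfrak{Q}_t u_t}\Big).
\]
The right-hand side is bounded by $f_k(t)\norm{u_t}^2_{\mathcal{D}_{+,k}}$ by \eqref{estimee_forme_quad_propag_quad_Pphi2}, once that inequality is extended from $\focksalg$ to $\mathcal{D}_{+,k}$ by density and continuity of $\mathfrak{Q}_t\in\mathcal{L}(\mathcal{D}_{+,k},\mathcal{D}_{-,k})$. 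Grönwall then gives $\norm{u_t}_{\mathcal{D}_{+,k}}\le \e^{\frac12|\int_s^t f_k|}\norm{u_0}_{\mathcal{D}_{+,k}}$, which is even better than the stated exponent. \textbf{The main obstacle is justifying this differentiation rigorously.} Here $u_t$ lies only in $\mathcal{D}_{+,k}$ and $(\nbop[1]+1)^k u_t$ is only in $\mathcal{D}_{-,k}$. I would first try to regularize with $(\nbop[1]+1)^k(1+\delta\nbop[1])^{-k}$, or else invoke directly the bound contained in \cite[Corollary C.4]{AmBr12}, whose proof produces exactly this exponential estimate.

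Finally, set $U_2^\circ(t,s)=\e^{-\rmi\frac t\eps\dG[\eps]{\omega}}\tilde U_2^\circ(t,s)\e^{\rmi\frac s\eps\dG[\eps]{\omega}}$. By \eqref{relation_Wick_poly_second_quantifie}, conjugating $\Wick[\eps]{F_{\tilde V_2(t)}^{\conjug_t}}$ by $\e^{-\rmi\frac t\eps\dG[\eps]{\omega}}$ gives back $\Wick[\eps]{F_{V_2(t)}}$, because $F^{\conjug_t}_{\tilde V_2(t)}(z)=F_{V_2(t)}(\e^{-\rmi t\omega}z)$. Duhamel's formula for the free group $\e^{-\rmi\frac t\eps\dG[\eps]{\omega}}$ then gives
\[
U_2^\circ(t,s)u_s=\e^{-\rmi\frac{t-s}{\eps}\dG[\eps]{\omega}}u_s-\frac{\rmi}{\eps}\int_s^t \e^{-\rmi\frac{t-\tau}{\eps}\dG[\eps]{\omega}}\Wick[\eps]{F_{V_2(\tau)}}U_2^\circ(\tau,s)u_s\,\dd\tau,
\]
and this holds in $\mathcal{D}_{-,k}$, since $\e^{\rmi\frac t\eps\dG[\eps]{\omega}}$ commutes with $\nbop[1]$ and so preserves each $\mathcal{D}_{\pm,k}$. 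This is exactly the mild-solution statement.
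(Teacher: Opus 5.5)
Your proposal is correct and follows the paper's route. Existence and uniqueness come from Theorem \ref{thm.quadratic_dynamics_Ammari_Breteaux_chaospropag}, using the two hypotheses verified just before the statement. The bound comes from differentiating $\norm{(\nbop[1]+1)^{k/2}\tilde{U}_2^\circ(t,s)u_0}^2$, applying the commutator estimate \eqref{estimee_forme_quad_propag_quad_Pphi2}, and closing with Grönwall.

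The paper's proof is terser than yours: it differentiates only ``in the sense of quadratic forms'' and writes out neither the gluing over $[-T,T]$ nor the Duhamel formula. Your regularization by $(\nbop[1]+1)^k(1+\delta\nbop[1])^{-k}$ does resolve the $\mathcal{D}_{-,k}$-pairing issue you flagged, because its commutator with $\mathfrak{Q}_t$ obeys a bound of the same type, uniformly in $\delta\in(0,1]$. Your sharper exponent $\sqrt{2}\,k\,3^{k/2}$ is in fact the one the paper itself uses later in \eqref{cst_auxiliaire_asymptotique_complete}.
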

	
	\begin{proof}
	 The existence and uniqueness of $\tilde{U}_2^\circ(t,s)$ follow from Theorem \ref{thm.quadratic_dynamics_Ammari_Breteaux_chaospropag}. The proof of the operator norm estimate \eqref{estimee_norme_propag_quad_propagation_etats_coherents} is analogous to the one of \eqref{estimee_forme_quad_propag_quad_Pphi2}, since the differentiation of the quantity 
	 \[
	 u(t) = \norm{(\nbop[1]+1)^{k/2}\, \tilde{U}_2^\circ(t,0)\, \psi}^2
	 \]
	 in the sense of quadratic form yields 
	 \[
	 \rmi \, \eps\, \dot{u}(t) = \braket{\tilde{U}_2^\circ(t,0)\,\psi}{\comm{(\nbop[1]+1)^k}{\Wick[\eps]{F_{\tilde{V}_2(t)}^{\conjug_t}}}\, \tilde{U}_2^\circ(t,0)\,\psi}\,.
	 \]
	 The Grönwall lemma ends the proof.
	\end{proof}
	
	\subsubsection{Quadratic propagation of a polynomial observable}\label{par.quad_propag_Wick_observable}
	
	We now state a result due to S. Breteaux \cite{Br12} concerning the propagation of a polynomial observable under a time-dependent quadratic Hamiltonian. 
	
	Let us write \[F_{\tilde{V}_2(t)}^{\conjug_t}(z) = \braket{z}{\alpha_t \, z} + \Im\braket{\beta_t}{\tensn{z}{2}}\] where $\beta_t = -\rmi \, \sqrt{2}\, \tilde{V}_2(t) \in C^0(\R,\tensymn{\hi}{2})$, and $\alpha_t = \sqrt{2}\,
 \Big(\1 \ot \bra{\conjug_t \, \bullet}\Big)\, \tilde{V}_2(t)$ defines a family of bounded, strongly continuous self-adjoint operators. Applying \cite[Theorem~X.69]{ReSi75}, we obtain that the assumptions \textbf{H1} and \textbf{H2} of \cite[Section~3, page~82]{Br12} are satisfied by $F_{\tilde{V}_2(t)}^{\conjug_t}$.
	
	The classical evolution associated with this operator is given by 
	\begin{equation}\label{eq_classique_quad_sans_partie_libre}
		\left\{\begin{array}{l}
			\rmi \, \dot{\tilde{z}}_t = \gatbar{}{F_{\tilde{V}_2(t)}^{\conjug_t}}(\tilde{z}_t)\\
			\eval{\tilde{z}_t}_{t = 0} = z_0 \in \hi\,,
		\end{array}\right.
	\end{equation}
	where 
	\[
	\gatbar{}{F_{\tilde{V}_2(t)}^{\conjug_t}}(u) = \sqrt{2}\, \big(\bra{(u+\conjug_t u)} \ot \1\big)\tilde{V}_2(t)
	\] 
	is $\R$-linear with respect to $u$. The inequality 
	\[
	\normhi{\gatbar{}{F_{\tilde{V}_2(t)}^{\conjug_t}}(u)-\gatbar{}{F_{\tilde{V}_2(t)}^{\conjug_t}}(v)} \leq 2\, \sqrt{2}\, \norm{\tilde{V}_2(t)} \normhi{u-v}
	\] shows that the classical flow $\tilde{\upphi}_t$ of the Cauchy problem \eqref{eq_classique_quad_sans_partie_libre} is globally defined (recall that $\normhi{\cdot}$ denotes the norm on $\hi$). Each map $\tilde{\upphi}_t : \hi \to \hi$ is a symplectomorphism, that is, a $\R$-linear, continuous automorphism which preserves the symplectic form $\symform{z_1}{z_2} = -2\, \Im\braket{z_1}{z_2}$.
	In fact, we have a unique decomposition \[\tilde{\upphi}_t = \tilde{L}_t + \tilde{A}_t\] where $\tilde{L}_t, \tilde{A}_t \in \mathcal{L}(\hi)$ are respectively $\C$-linear and $\C$-antilinear and $[\tilde{A}_t]^{*} \tilde{A}_t$ is a trace class operator, see \cite[Appendix A]{Br12}. Let $\Lambda_t$ be the time-dependent operator on $\wickpol$ defined by 
	\begin{equation}
		\Lambda_t\, b(z) = -2\, \Tr{\left([\tilde{A}_t]^{*} \tilde{A}_t \gatgatbar{}{}{b}(z)\right)} + \braket{v_t}{\gatbar{2}{b}(z)}_{\tensn{\hi}{2}} + \braket{\gat{2}{b}(z)}{v_t}_{(\tensn{\hi}{2})^*,\tensn{\hi}{2}}\,,
	\end{equation}
	where $v_t \in \tensn{\hi}{2}$ is the unique vector such that $\braket{z_1 \ot z_2}{v_t}_{\tensn{\hi}{2}} = \braket{z_1}{\tilde{L}_t \, [\tilde{A}_t]^{*}\, z_2}_{\hi}$ for all $z_1,z_2 \in \hi$. Applying \cite[Theorem 3.2]{Br12}, we obtain the following result.
	
	\begin{theorem}\label{thm.Breteaux_conjug_Wick_propag_quad}
		Let $\upphi_t = \e^{-\rmi t \omega}\,\tilde{\upphi}_t$ be the classical flow of 
		\begin{equation}\label{eq_classique_quad_avec_partie_libre}
		 \left\{\begin{array}{l}
			\rmi \, \dot{z}_t = \omega\, z_t + \gatbar{}{F_{V_2(t)}}(z_t)\\
			\eval{z_t}_{t = 0} = z_0 \in \hi\,,
		\end{array}\right.
		\end{equation} and let $U_2^\circ(t,s)$ be the quantum propagator of \eqref{Cauchy_problem_avec_partie_libre_quadratic_dynamics_Pphi2}. For all $t \in \R$ and $b_t \in \wickleq{m}$, $m \geq 1$, we have 
		\begin{equation}
			U_2^\circ(t,0)^{*}\ \Wick[1]{b_t}\, U_2^\circ(t,0) = \Wick[1]{\hat{b}_t}
		\end{equation}
		with 
		\begin{equation}\label{formule_symbole_distordu_propagation_quad}
			\hat{b}_t = \sum_{k=0}^{\ent{\frac{m}{2}}}\, \frac{1}{2^k k!}\, \Lambda_t^k (b_t \circ \upphi_t) \in \wickleq{m}\,.
		\end{equation}
	\end{theorem}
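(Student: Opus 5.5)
The plan is to reduce the statement to \cite[Theorem 3.2]{Br12}, applied to the interaction-representation propagator $\tilde{U}_2^\circ(t,0)$, and then put back the free dynamics with \eqref{relation_Wick_poly_second_quantifie}.

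\emph{Step 1 (removing the free part).} By Corollary \ref{cor.dynamique_quad_modele_Pphi2},
\[
U_2^\circ(t,0)=\e^{-\rmi\frac{t}{\eps}\dG[\eps]{\omega}}\,\tilde{U}_2^\circ(t,0).
\]
Hence
\[
U_2^\circ(t,0)^{*}\,\Wick[1]{b_t}\,U_2^\circ(t,0)=\tilde{U}_2^\circ(t,0)^{*}\,\e^{\rmi\frac{t}{\eps}\dG[\eps]{\omega}}\,\Wick[1]{b_t}\,\e^{-\rmi\frac{t}{\eps}\dG[\eps]{\omega}}\,\tilde{U}_2^\circ(t,0).
\]
The quantization is at $\eps=1$, and $\dG[\eps]{\omega}/\eps=\dG[1]{\omega}$, so \eqref{relation_Wick_poly_second_quantifie} gives
\[
\e^{\rmi t\dG[1]{\omega}}\,\Wick[1]{b_t}\,\e^{-\rmi t\dG[1]{\omega}}=\Wick[1]{b_t(\e^{-\rmi t\omega}\,\bullet)}.
\]
We have $b_t\circ\e^{-\rmi t\omega}\in\wickleq{m}$. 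This identity holds on $\focksalg$. It extends to $\mathcal{D}_{+,\infty}$ by the number estimates \eqref{number_estimate_Wick_poly} together with the fact that $\e^{\rmi t\dG[1]{\omega}}$ commutes with $\nbop[1]$.

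\emph{Step 2 (Breteaux's theorem for the reduced Hamiltonian).} We have already checked that $F_{\tilde V_2(t)}^{\conjug_t}$ satisfies \textbf{H1}--\textbf{H2} of \cite{Br12}. Its quantum propagator is $\tilde U_2^\circ$, which is unique by Theorem \ref{thm.quadratic_dynamics_Ammari_Breteaux_chaospropag}; this uniqueness is what identifies Breteaux's propagator with ours. Its classical flow is $\tilde\upphi_t=\tilde L_t+\tilde A_t$. Theorem 3.2 of \cite{Br12} then gives, for every $c\in\wickleq{m}$,
\[
\tilde U_2^\circ(t,0)^{*}\,\Wick[1]{c}\,\tilde U_2^\circ(t,0)=\Wick[1]{\textstyle\sum_{k}\frac{1}{2^kk!}\Lambda_t^k(c\circ\tilde\upphi_t)}.
\]
We apply this with $c=b_t\circ\e^{-\rmi t\omega}$. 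Since $c\circ\tilde\upphi_t=b_t\circ\e^{-\rmi t\omega}\circ\tilde\upphi_t=b_t\circ\upphi_t$, we obtain exactly \eqref{formule_symbole_distordu_propagation_quad}.

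Two points need checking. First, $\upphi_t=\e^{-\rmi t\omega}\tilde\upphi_t$ must be the flow of \eqref{eq_classique_quad_avec_partie_libre}. This is a direct differentiation using $\e^{-\rmi t\omega}\,\gatbar{}{F^{\conjug_t}_{\tilde V_2(t)}}(\e^{\rmi t\omega}z)=\gatbar{}{F_{V_2(t)}}(z)$, which follows from the definition \eqref{def_Fv_quad_tordu_Pphi2}. Second, the sum in \eqref{formule_symbole_distordu_propagation_quad} is finite and stays in $\wickleq{m}$. This holds because each application of $\Lambda_t$ takes two derivatives, lowering the degree by two; in particular $\Tr([\tilde A_t]^*\tilde A_t\,\gatgatbar{}{}{b})$ is finite since $[\tilde A_t]^*\tilde A_t$ is trace class.

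\emph{Main obstacle.} The delicate point is whether the hypotheses and the notion of propagator used in \cite{Br12} match ours. Breteaux works with strongly continuous $\alpha_t$ and $\beta_t\in C^0(\R,\tensymn{\hi}{2})$, and both properties follow from the continuity of $t\mapsto V_2(t)$ proved above, since $\tilde V_2(t)$ depends continuously on $t$. We must also make sure the identity is understood on a common domain, and I would take $\mathcal{D}_{+,\infty}$. This domain is preserved by $\tilde U_2^\circ$ thanks to \eqref{estimee_norme_propag_quad_propagation_etats_coherents}, and on it both sides are defined by \eqref{number_estimate_Wick_poly}. Finally, I would verify that the symplectic-flow conventions (the sign of $\symform{\cdot}{\cdot}$ and the $\frac12$ in $\Lambda_t$) agree with those of \cite[Appendix A]{Br12}.
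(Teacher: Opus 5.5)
Your proposal is correct and follows the paper's own route. The paper likewise verifies \textbf{H1}--\textbf{H2} for the interaction-picture symbol $F_{\tilde V_2(t)}^{\conjug_t}$, builds $\Lambda_t$ from the decomposition $\tilde\upphi_t=\tilde L_t+\tilde A_t$, and invokes \cite[Theorem 3.2]{Br12}; your Step 1, conjugation by $\e^{\rmi t\dG[1]{\omega}}$ via \eqref{relation_Wick_poly_second_quantifie} so that $b_t\circ\e^{-\rmi t\omega}\circ\tilde\upphi_t=b_t\circ\upphi_t$, makes explicit the passage from $\tilde U_2^\circ,\tilde\upphi_t$ to $U_2^\circ,\upphi_t$ that the paper leaves implicit.
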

	
	\subsection[Complete asymptotic expansion]{Complete asymptotic expansion for the $\ppde$ model}\label{subsect.Proof_Thm_Pphi2}

	In this subsection, we prove our first main result Theorem \ref{thm.Pphi2_asymptotique_complete} in three steps. First, we differentiate the expression \eqref{premiere_apparition_Theta(t)} in a rigorous way. Then, we choose the parameters $\delta_\circ(t)$ and $b_j(t)$ appearing in \eqref{premiere_apparition_Theta(t)} to fit our needs. Eventually, we conclude the proof of Theorem \ref{thm.Pphi2_asymptotique_complete}.
	
	\medskip
	Let $\psi \in \mathcal{D}_{+,\infty} = \displaystyle\bigcap_{k \in \N} D\left(\nbop[1]^{k/2}\right)$ and $N \in \N$ be fixed. 
	
	\subsubsection{Differentiation of \eqref{premiere_apparition_Theta(t)}}\label{par.Diff_Pphi2}

	For $\delta_\circ \in C^1(\R,\R)$ and $\big(b_j(t)\big)_{j \in \N} \in \wickpol^\N$ to be chosen later, let us define, for all $t \in \R$, the quantity 
	
	\begin{equation}
		\Theta_N(t) = \e^{\rmi\frac{t}{\eps}H_\eps}\, \e^{\rmi\frac{\delta_\circ(t)}{\eps}}\, \weylop[\eps]{-\rmi \, \frac{\sqrt{2}}{\eps}\varphi_t}U_2^\circ(t,0) \, \psi_N(t)\,,
	\end{equation}
	
	where 
	\begin{itemize}
		\item $H_\eps$ is the spatially cutoff $\ppde$ Hamiltonian \eqref{def_spatially_cutoff_Pphi2_Hamiltonian},
		\item $\varphi_t$ is the mild solution of the classical field equation \eqref{eq_classique_avec_partie_libre_Pphi2}, 
		\item $U_2^\circ(t,0)$ is the quadratic propagator given by Corollary \ref{cor.dynamique_quad_modele_Pphi2}
	\end{itemize}
	and 
	\begin{equation}\label{ansatz_psiN_derivation_Theta(t)}
		\psi_N(t) = \sum_{j = 0}^N \eps^{\frac{j}{2}} \, \Wick[1]{b_j(t)}\psi\,.
	\end{equation}
	
	In order to differentiate $\Theta_N(t)$, using \eqref{relation_op_Weyl_second_quantifie}, we write 
	\begin{equation}\label{reecriture_Theta(t)}
		\Theta_N(t) = \e^{\rmi\frac{t}{\eps}H_\eps}\, \e^{-\rmi\frac{t}{\eps} \dG[\eps]{\omega}}\, \e^{\rmi\frac{\delta_\circ(t)}{\eps}}\, \weylop[\eps]{-\rmi\frac{\sqrt{2}}{\eps}\tilde{\varphi}_t} \tilde{U}_2^\circ(t,0) \, \psi_N(t)\,,
	\end{equation}
	where $\tilde{\varphi}_t = \e^{\rmi t \omega} \, \varphi_t$ and $\tilde{U}_2^\circ(t,0) = \e^{\rmi\frac{t}{\eps} \dG[\eps]{\omega}}\, U_2^\circ(t,0)$. Applying Corollary \ref{cor.dynamique_quad_modele_Pphi2} and the number estimates \eqref{number_estimate_Wick_poly}, we obtain that the derivative of $\tilde{U}_2^\circ(t,0) \, \psi_N(t)$, given by 
	\[
	\partial_t\left(\tilde{U}_2^\circ(t,0) \, \psi_N(t)\right) = -\frac{\rmi}{\eps}\, \Wick[\eps]{F_{\tilde{V}_2(t)}^{\conjug_t}} \tilde{U}_2^\circ(t,0) \, \psi_N(t) + \tilde{U}_2^\circ(t,0) \, \dot{\psi}_N(t)
	\] 
	with $F_{\tilde{V}_2(t)}^{\conjug_t} \in \wickpol$ defined by \eqref{def_Fv_quad_tordu_Pphi2}, belongs to $\mathcal{D}_{+,\infty}$. Moreover, since $\tilde{\varphi}_t \in C^1(\R,\hi)$, the quantity $\weylop[\eps]{-\rmi\displaystyle\frac{\sqrt{2}}{\eps}\tilde{\varphi}_t}$ is differentiable on $\mathcal{D}_{+,1}$ by \cite[Lemma 3.1]{GiVe79} (see also \cite[Lemma 6.1, point (iii)]{AmBr12}), with the formula 
	\begin{equation}\label{formule_derivation_op_Weyl}
		\rmi \, \eps\, \dv{t} \, \weylop[\eps]{-\rmi\frac{\sqrt{2}}{\eps} \tilde{\varphi}_t} = \weylop[\eps]{-\rmi\frac{\sqrt{2}}{\eps} \tilde{\varphi}_t} \Wick[\eps]{\Re\braket{\tilde{\varphi}_t}{\rmi \dot{\tilde\varphi}_t} + 2 \, \Re\braket{z}{\rmi \, \dot{\tilde\varphi}_t}}.
	\end{equation}
	
	It remains to differentiate the operators $\e^{\rmi\frac{t}{\eps}H_\eps}\, \e^{-\rmi\frac{t}{\eps} \dG[\eps]{\omega}}$.
	
	\begin{lemma}
		The operator $B_t \defegal \e^{\rmi\frac{t}{\eps}H_\eps}\, \e^{-\rmi\frac{t}{\eps} \dG[\eps]{\omega}}$ is strongly differentiable on $\mathcal{D}_{+,2n}$ with the formula 
		\begin{equation}\label{derivee_forte_propag_perturbation}
			\dot{B}_t = \frac{\rmi}{\eps}\, B_t\, \Wick[\eps]{F_{\tilde{V}(t)}^{\conjug_t}},
		\end{equation}
		where 
		\[
		F_{\tilde{V}(t)}^{\conjug_t}(z) = \sum_{j=0}^{2n} \braket{\frac{\tensn{(\e^{-\rmi t \omega}\,z + \e^{\rmi t \omega} \, \conjug z)}{j}}{\sqrt{j!}}}{V^{(j)}}.
		\] 
	\end{lemma}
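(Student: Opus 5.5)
We plan to prove strong differentiability of $B_t$ on $\mathcal{D}_{+,2n}$ through the standard Duhamel/difference-quotient argument for the interaction picture. The key structural facts are the following. First, $\mathcal{D}_{+,2n}\subset D(\dG[\eps]{\omega})$ fails in general, so we cannot differentiate the two factors separately. Instead we write $\e^{-\rmi\frac{t}{\eps}\dG[\eps]{\omega}} = \e^{-\rmi\frac{t}{\eps}\dG[\eps]{\omega}}$ and use \eqref{relation_Wick_poly_second_quantifie}, which gives the identity
\[
\e^{\rmi\frac{t}{\eps}\dG[\eps]{\omega}}\,\Wick[\eps]{F_V}\,\e^{-\rmi\frac{t}{\eps}\dG[\eps]{\omega}} = \Wick[\eps]{F_{\tilde V(t)}^{\conjug_t}}
\]
on $\focksalg$. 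By the number estimates \eqref{number_estimate_Wick_poly} (each $b_{j,\ell}$ has total order $j\le 2n$), this identity extends to $\mathcal{D}_{+,2n}$, since $\e^{\pm\rmi\frac{t}{\eps}\dG[\eps]{\omega}}$ commutes with $\nbop[1]$ and so preserves $\mathcal{D}_{+,2n}$. In particular $\Wick[\eps]{F_{\tilde V(t)}^{\conjug_t}}\in\mathcal{L}(\mathcal{D}_{+,2n},\focks)$, with norm bounded uniformly in $t$. It is also strongly continuous in $t$ on $\mathcal{D}_{+,2n}$: the kernels $\e^{\rmi t\omega}$-twisted $V^{(j)}$ depend continuously on $t$ in $L^2$, so the number estimates give norm continuity in $\mathcal{L}(\mathcal{D}_{+,2n},\focks)$.

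The main step is a weak derivative computation. Take $\varphi\in D(H_\eps)$ and $\psi\in D(\dG[\eps]{\omega})\cap\mathcal{D}_{+,2n}$, a dense set containing, for instance, finite-particle vectors with components in $D(\omega)$. For these,
\[
\tfrac{\dd}{\dd t}\braket{\varphi}{B_t\psi} = \tfrac{\rmi}{\eps}\Big(\braket{H_\eps \e^{-\rmi\frac t\eps H_\eps}\varphi}{\e^{-\rmi\frac t\eps\dG[\eps]{\omega}}\psi} - \braket{\e^{-\rmi\frac t\eps H_\eps}\varphi}{\dG[\eps]{\omega}\e^{-\rmi\frac t\eps\dG[\eps]{\omega}}\psi}\Big).
\]
We need $\e^{-\rmi\frac{t}{\eps}\dG[\eps]{\omega}}\psi\in D(\dG[\eps]{\omega})\cap D(I_\infty)=D(H_\eps)$, where $D(I_\infty)\supset\mathcal{D}_{+,2n}$ by the number estimates. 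Given this, $H_\eps$ can be moved to the right, and the difference becomes $\tfrac{\rmi}{\eps}\braket{\varphi}{B_t\Wick[\eps]{F_{\tilde V(t)}^{\conjug_t}}\psi}$. Integrating in $t$ yields the Duhamel formula
\[
B_t\psi - B_s\psi = \tfrac{\rmi}{\eps}\int_s^t B_\tau \Wick[\eps]{F_{\tilde V(\tau)}^{\conjug_\tau}}\psi\,\dd\tau.
\]

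We then extend this formula to all $\psi\in\mathcal{D}_{+,2n}$ by density in the $\mathcal{D}_{+,2n}$-norm. The dense set is dense in that norm, for example via spectral cutoffs of $\dG[\eps]{\omega}$, which commute with $\nbop[1]$. Both sides are continuous in $\psi$ for this norm, using the uniform bound above and the unitarity of $B_\tau$. The integrand $\tau\mapsto B_\tau\Wick[\eps]{F_{\tilde V(\tau)}^{\conjug_\tau}}\psi$ is continuous, because $B_\tau$ is strongly continuous and uniformly bounded and the Wick factor is norm continuous in $\mathcal{L}(\mathcal{D}_{+,2n},\focks)$. The fundamental theorem of calculus then gives \eqref{derivee_forte_propag_perturbation}.

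The main obstacle is the domain bookkeeping: verifying $D(H_\eps)=D(\dG[\eps]{\omega})\cap D(I_\infty)$, which is the cited theorem of \cite{HKSi72}, and the density argument. We also have to confirm that the number estimate applies to the $\kappa\to\infty$ limiting symbol $b^\circ$, which holds since $w_j\in L^2$.
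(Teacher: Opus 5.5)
Your proposal is correct and follows essentially the same route as the paper. The paper derives $\norm{\Wick[\eps]{F_V}(\nbop[1]+1)^{-n}} \leq C(n)\norm{\liftop{\sqrt{\eps}}V}$ from the number estimates and then defers to the standard Ginibre--Velo type argument, which is the weak-derivative, Duhamel and density scheme you write out (the sentence ``we write $\e^{-\rmi\frac{t}{\eps}\dG[\eps]{\omega}} = \e^{-\rmi\frac{t}{\eps}\dG[\eps]{\omega}}$'' is a vacuous typo and can be deleted).
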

	
	\begin{proof}
		The number estimates \eqref{number_estimate_Wick_poly} yield
		\[
		\norm{\Wick[\eps]{b}(\nbop[1]+1)^{-\frac{j}{2}}} \leq C(j)\, \eps^{\frac{j}{2}} \sum_{p+q = j} \norm{\tilde{b}_{p,q}}
		\] 
		for all $b = \displaystyle\sum_{p+q=j} b_{p,q} \in \wickhom{j}$. Applying this result to the symbols $F_{V^{(j)}}$ with $V^{(j)}$ given by \eqref{interaction_Pphi2_de_la_forme_Fv} and summing over $0 \leq j \leq 2n$, we obtain 
		\begin{equation}\label{baby_proposition_3.17_Ammari_Zerzeri}
			\norm{\Wick[\eps]{F_V} (\nbop[1]+1)^{-n}} \leq C(n) \norm{\liftop{\sqrt{\eps}}V}.
		\end{equation}
		
We conclude the proof using similar arguments as in \cite[Lemma 3.1]{GiVe79}.
	\end{proof}
	
	Now, using \eqref{estimee_norme_propag_quad_propagation_etats_coherents} and the estimate \eqref{controle_norme_methode_commut}, we obtain that the space $\mathcal{D}_{+,\infty}$ is preserved by the quadratic propagator $\tilde{U}_2^\circ(t,0)$ and by the Weyl operators. In particular, the vectors 
	\begin{equation}
		\chi_N(t) = \weylop[\eps]{-\rmi\frac{\sqrt{2}}{\eps}\tilde{\varphi}_t} \tilde{U}_2^\circ(t,0) \, \psi_N(t)
	\end{equation} 
	belong to $\mathcal{D}_{+,\infty}$, as well as their time derivatives. Therefore, we can differentiate the expression \eqref{reecriture_Theta(t)}: using Proposition \ref{prop.quantification_Wick_polynomiale} and the identities 
	\begin{equation}\label{auxiliary_identities_to_differentiate_Theta(t)}
		\braket{\tilde{\varphi}_t}{\rmi \dot{\tilde\varphi}_t} = \braket{\varphi_t}{\gatbar{}{F_V(\varphi_t)}} \text{ and } \braket{z}{\rmi \, \dot{\tilde\varphi}_t} = \braket{\e^{-\rmi \, \omega t}z}{\gatbar{}{F_V}(\varphi_t)}\,,
	\end{equation}
	we obtain 
	\begin{equation}\label{premiere_expression_derivee_Theta(t)}
		\dot{\Theta}_N(t) = \frac{\rmi}{\eps}\, \e^{\rmi\frac{t}{\eps} H} \, \e^{\rmi\frac{\delta_\circ(t)}{\eps}}\, \weylop[\eps]{-\rmi\frac{\sqrt{2}}{\eps}\varphi_t} \left(\Wick[\eps]{f_t} \, U_2^\circ(t,0) \, \psi_N(t) - \rmi \, \eps \, U_2^\circ(t,0) \dot{\psi}_N(t)\right),
	\end{equation}
	for all $\eps \in (0,1]$, where 
	\begin{equation}\label{expression_auxiliaire_dans_Wick_pour_derivation_Theta(t)}
		f_t(z) = F_V(z + \varphi_t) + \dot{\delta}_\circ(t) - \Re\braket{\varphi_t}{\gatbar{}{F_V}(\varphi_t)} - 2 \, \Re\braket{z}{\gatbar{}{F_V}(\varphi_t)} - F_{V_2(t)}(z)\,.
	\end{equation}	
	
	\subsubsection{Choice of the parameters}\label{par.Choice_Pphi2}
	
	We now turn to the choice of the parameters $\delta_\circ(t)$ and $b_j(t)$ appearing in \eqref{premiere_apparition_Theta(t)}. For all ${0 \leq j \leq 2n}$, $0 \leq \ell \leq j$ and $t \in \R$, let us define 
	\begin{equation}
		V_\ell^{(j)}(t) = \symop[\ell]\big(\bra{\tensn{(\varphi_t + \conjug \varphi_t)}{j-\ell}} \ot (\tensn{\1}{\ell})\big)V^{(j)} \in \tensymn{\hi}{\ell}
	\end{equation}
	and 
	\begin{equation}\label{def_potentiels_dev_Taylor_propagation_QFT_Pphi2}
		V_\ell(t) = \sum_{j=\ell}^{2n}\, \sqrt{\frac{\ell!}{j!}}\,\binom{j}{\ell}\, V_\ell^{(j)}(t)\,.
	\end{equation}
	Then, we have the expansion 
	\begin{equation}\label{developpement_Fv_autour_d'un_point_ordre_2_Pphi2}
		F_V(z+\varphi_t) = F_V(\varphi_t) + 2 \, \Re \braket{z}{\gatbar{}{F_V}(\varphi_t)} + F_{V_2(t)}(z) + \sum_{\ell = 3}^{2n} F_{V_\ell(t)}(z)\,.
	\end{equation}
	
	If we choose the function $\delta_\circ$ satisfying \eqref{conditions_fonction_torsion_propagation_etats_coherents}, the function $f_t$ defined by \eqref{expression_auxiliaire_dans_Wick_pour_derivation_Theta(t)} coincides with 
	\[
	F_{R_3(t)} \defegal \displaystyle\sum_{\ell = 3}^{2n} F_{V_\ell(t)}\,,
	\] 
	which leads to 
	\begin{multline}\label{expression_propre_derivee_Theta(t)_Pphi2}
	    \dot{\Theta}_N(t) = \frac{\rmi}{\eps}\, \e^{\rmi\frac{t}{\eps}H_\eps}\, \e^{\rmi\frac{\delta_\circ(t)}{\eps}}\, \weylop[\eps]{-\rmi\frac{\sqrt{2}}{\eps}\varphi_t}\times\\\left(\Wick[\eps]{F_{R_3(t)}} \, U_2^\circ(t,0) \, \psi_N(t) - \rmi \, \eps \, U_2^\circ(t,0) \dot{\psi}_N(t)\right)\,.
	\end{multline}
	
\medskip\noindent
Let us remark that homogeneity property \eqref{prop.homogeneite_quantif_Wick} yields \[\Wick[\eps]{F_{R_3(t)}} = \displaystyle\sum_{\ell = 3}^{2n} \eps^{\frac{\ell}{2}}\, \Wick[1]{F_{V_\ell(t)}}.\] Using the explicit expression \eqref{ansatz_psiN_derivation_Theta(t)} of $\psi_N(t)$ and Theorem \ref{thm.Breteaux_conjug_Wick_propag_quad} (see also \cite[Theorem 3.2]{Br12}), we thus obtain 
	\begin{align*}
		\Wick[\eps]{F_{R_3(t)}} \, U_2^\circ(t,0) \, \psi_N(t) = &\sum_{j = 0}^N \sum_{\ell = 3}^{2n} \eps^{\frac{j+\ell}{2}} \Wick[1]{F_{V_\ell(t)}} \, U_2^\circ(t,0) \, \Wick[1]{b_j(t)} \psi\\
		= &\sum_{k = 1}^{N+2n-2} \sum_{\substack{3 \leq \ell \leq 2n\\0 \leq j \leq N\\j+\ell = k+2}} \eps^{\frac{k+2}{2}} \, U_2^\circ(t,0) \, \Wick[1]{\hat{F}_{V_\ell(t)}} \Wick[1]{b_j(t)} \psi \\
		= &U_2^\circ(t,0) \sum_{k = 1}^{N+2n-2} \eps^{\frac{k+2}{2}} \sum_{j = 0}^{\min(N,k-1)} \Wick[1]{\hat{F}_{V_{k+2-j}(t)} \compwick[1] b_j(t)} \psi\,,
	\end{align*}
	where $a \compwick[1] b$ denotes the symbol of the operator $\Wick[1]{a}\Wick[1]{b}$ (see \eqref{formule_compo_Wick_poly}).  Hence, the last factor in \eqref{expression_propre_derivee_Theta(t)_Pphi2}, namely: $\Wick[\eps]{F_{R_3(t)}} \, U_2^\circ(t,0) \, \psi_N(t) - \rmi \, \eps \, U_2^\circ(t,0) \dot{\psi}_N(t)$, is equal to 
	\begin{align*}
	    &U_2^\circ(t,0) \sum_{k = 1}^{N+2n-2}\hskip-1pt \eps^{\frac{k+2}{2}}\hskip-4pt \sum_{j = 0}^{\min(N,k-1)} \Wick[1]{\hat{F}_{V_{k+2-j}(t)} \compwick[1] b_j(t)} \psi - \rmi\, \eps\, U_2^\circ(t,0)\, \sum_{k=0}^N \eps^{\frac{k}{2}} \Wick[1]{\partial_t b_k(t)}\, \psi\\
	    = &U_2^\circ(t,0) \left(\sum_{k = 1}^{N+2n-2} \eps^{\frac{k+2}{2}} \sum_{j = 0}^{\min(N,k-1)} \Wick[1]{\hat{F}_{V_{k+2-j}(t)} \compwick[1] b_j(t)} \psi - \rmi\, \sum_{k=0}^N \eps^{\frac{k+2}{2}} \Wick[1]{\partial_t b_k(t)}\, \psi\right)\,.
	\end{align*}
	Ordering the sum in terms of powers of $\eps$ yields 
	\begin{align*}
	    U_2^\circ(t,0) \Big(&- \rmi\, \eps\, \Wick[1]{\partial_t b_0(t)} \psi\\
	    &+ \sum_{k=1}^N \eps^{\frac{k+2}{2}} \bigg(- \rmi\, \Wick[1]{\partial_t b_k(t)} + \sum_{j = 0}^{k-1} \Wick[1]{\hat{F}_{V_{k+2-j}(t)} \compwick[1] b_j(t)} \bigg) \psi\\
	    &+ \sum_{k = N+1}^{N+2n-2} \eps^{\frac{k+2}{2}} \sum_{j = 0}^N \Wick[1]{\hat{F}_{V_{k+2-j}(t)} \compwick[1] b_j(t)} \psi\Big)\,.
	\end{align*}
	
	\smallskip\noindent 
	We now suppose $n \geq 2$ (see Remark \ref{rem.subquad_case_Pphi2} below for the sub-quadratic case $n \leq 1$). Our aim is to choose the polynomial symbols $b_k(t)$ such that the first term and the first sum vanish. Picking $\big(b_k(t)\big)_{k\in \N}$ defined by 
	\begin{equation}\label{relation_de_recurrence_verifiee_les_polynomes_de_Wick_de_l'asymptotique_complete}
		\left\{\begin{array}{l}
			b_0 \equiv 1\\
			- \rmi\, \partial_t b_k(t) + \displaystyle\sum_{j=0}^{k-1} \hat{F}_{V_{k+2-j}(t)} \compwick[1] b_j(t) = 0 \text{ with } b_k(0) = 0\,,\quad \forall\, k\in \N^*\,,
		\end{array}\right.
	\end{equation}
	namely, 
	\begin{equation}\label{expression_explicite_polynomes_de_Wick_de_l'asymptotique_complete}
	\left\{
	\begin{array}{l}
	     b_0 \equiv 1\\
	     b_k(t) = -\rmi\, \displaystyle\sum_{j=0}^{k-1} \int_0^t \hat{F}_{V_{k+2-j}(s)} \compwick[1] b_j(s)\, \dd s\,,\quad \forall\,  k\in \N^*\,,
	\end{array}
	\right.
	\end{equation}
		we finally obtain 
	\begin{multline}\label{expression_finale_derivee_Theta(t)_Pphi2}
	 \dot{\Theta}_N(t) = \frac{\rmi}{\eps} \, \e^{\rmi\frac{t}{\eps}H_\eps}\, \e^{\rmi\frac{\delta_\circ(t)}{\eps}}\,\weylop[\eps]{-\rmi\frac{\sqrt{2}}{\eps}\varphi_t}\times\\
	 \sum_{k=N+1}^{N+2n-2} \eps^{\frac{k+2}{2}} \sum_{j = 0}^N \Wick[1]{F_{V_{k+2-j}(t)}} \, U_2^\circ(t,0) \, \Wick[1]{b_j(t)} \psi\,.
	\end{multline}
	
	\subsubsection{End of the proof}\label{par.End_Proof_Thm_Pphi2}
	
	We can eventually prove the estimate \eqref{asymptotique_complete_CSdyn_Pphi2}. Let us note that the left-hand side of the estimate is equal to $\norm{\Theta_N(t) - \Theta_N(0)}$. Hence, this quantity is dominated by $\displaystyle\int_0^{\abs{t}} \norm{\dot{\Theta}_N(s)} \dd s$, which is controlled by 
	\begin{equation}
		\eps^{\frac{N+1}{2}} \sum_{k=N+1}^{N+2n-2} \sum_{j = 0}^N \int_0^{\abs{t}} \norm{\Wick[1]{F_{V_{k+2-j}(s)}} \, U_2^\circ(s,0) \Wick[1]{b_j(s)} \psi} \dd s
	\end{equation}
	in virtue of \eqref{expression_finale_derivee_Theta(t)_Pphi2}. Applying the estimates \eqref{number_estimate_Wick_poly} and \eqref{estimee_norme_propag_quad_propagation_etats_coherents} with $\eps = 1$, we obtain 
	\begin{equation}\label{baby_lemma_4.9_Ammari_Zerzeri}
		\norm{\Wick[1]{F_{V_{k+2-j}(s)}} \, U_2^\circ(s,0) \Wick[1]{b_j(s)} \psi} \leq C'(s,k,j,\psi)
	\end{equation}
	for every $k,j$, where 
	\begin{multline}\label{cst_auxiliaire_asymptotique_complete}
	    C'(s,k,j,\psi) = C(k,j) \norm{V_{k+2-j}(s)} \times\\\norm{\Wick[1]{b_j(s)} \psi}_{+,(k+2-j)} \, \e^{\sqrt{2} \, (k+2-j)\, 3^{\frac{k+2-j}{2}} \int_0^{\abs{s}} \norm{V_2(\tau)} \dd \tau}
	\end{multline}
	is continuous with respect to $s$. Let us note that the finiteness of $C'(s,k,j,\psi)$ merely requires $L^2$-regularity of the mild solution $\varphi_s$ of \eqref{eq_classique_avec_partie_libre_Pphi2}.
	
	Finally, the left-hand side of \eqref{asymptotique_complete_CSdyn_Pphi2} is dominated by \[\eps^{\frac{N+1}{2}} \sum_{k=N+1}^{N+2n-2} \sum_{j = 0}^N \int_0^{\abs{t}} C'(s,k,j,\psi)\, \dd s = C(t,\psi,N)\, \eps^{\frac{N+1}{2}}\,,\]
	which concludes the proof of Theorem \ref{thm.Pphi2_asymptotique_complete}.
	\hfill$\square$
	
	\begin{remark}\label{rem.subquad_case_Pphi2}
	 Let us note that $V_{k+2-j}(t) = 0$ for $j \leq k - (2n-1)$, where $2n$ is the degree of the polynomial \eqref{def_polynome_modele_Pphi2}. Hence, in the sub-quadratic case $n \leq 1$, we have $b_k(t) \equiv 0$ for all $k \in \N^*$, $\psi_N(t) \equiv \psi$ and $\dot{\Theta}_N(t) \equiv 0$ for all $N \in \N$, which leads to the exact formula \eqref{formule_exacte_propag_quad_etats_coherents}.
	\end{remark}
	
	\section{Analytic interactions}\label{sect.analytic_interactions}
	
	In this section, we focus on self-interacting bosonic quantum field Hamiltonians with analytic interactions. In Subsection \ref{subsect.Non_poly}, we present roughly how to extend the Wick quantization to non-polynomial symbols along the same lines as the work of the first and third authors, see \cite{AmZe14}. As explained in Subsection \ref{subsect.Main_results}, the outline of the proof is exactly the same as that presented in the previous section for $\ppde$ Hamiltonians. However, some tools need to be adapted to the analytic interaction framework. We provide a lower bound of the lifetime of a maximal solution of the classical field equation associated with the corresponding Hamiltonian, see \eqref{borne_inf_temps_d'existence_solution_mild_equation_eikonale} in Subsection \ref{subsect.The_classical_field_ equation}. We point out that the argument given in the proof of \cite[Theorem~4.1]{AmZe14} for global well-posedness is incorrect. 
	The local existence and uniqueness of the propagator describing the associated quadratic dynamics are proven in Subsection \ref{subsect.The_quadratic_dynamic}. 
	As explained at the end of Subsection \ref{subsect.Main_results}, we need to generalize the number estimates \eqref{number_estimate_Wick_poly}. This is done in Subsection \ref{subsect.some_auxiliary} by introducing an analytic function in the variable $(\nbop[1]+1)$ with suitably chosen, exponentially decaying coefficients. After all these preparations, the proof of Theorem \ref{thm.general_asymptotique_complete} proceeds in the same way as for the $\ppde$ model, see Subsection \ref{subsect.End_Proof_Thm}.
	
	\subsection{Non-polynomial Wick operators}\label{subsect.Non_poly}
	
	We start by giving a rigorous meaning to the expression $\Wick[\eps]{F_V}$ appearing in the Hamiltonian \eqref{def_hamiltonien_asymptotique_complete_etat_coherent_QFT},
	since the function $F_V$ given by \eqref{def_potentiel_interaction_Fv} is possibly non-polynomial. 
	Let us define
	\begin{equation}\label{def_core_Wick_poly}
		\wickcore = \Vect\{\weylop[\eps]{u}\Omega,\quad u \in \hi_0\}\,.
	\end{equation}
	This space is dense in $\focks$, see \cite[Lemma 3.4]{AmZe14}.
	
\medskip\noindent
We now define $\FVclass$ as the space of maps $F : \hi \to \C$ of the form 
\[
F(z) = \sum_{j = 0}^\infty \braket{\frac{\tensn{(z+\conjug z)}{j}}{\sqrt{j!}}}{V^{(j)}} = F_V(z) \text{ where } V = \Big(V^{(j)}\Big)_{j \in \N} \in \focks\,.
\] 
Endowed with the inner product 
\[
\braket{F_{V_1}}{F_{V_2}}_{\FVclass} = \braket{V_1}{V_2}_{\focks}\,,
\] 
$\FVclass$ is a Hilbert space which is isometrically isomorphic to $\focks$. Note that the Wick symbol of the $\ppde$ interaction term given by \eqref{def_spatially_cutoff_interaction_Pphi2} belongs to $\FVclass$ in virtue of \eqref{interaction_Pphi2_de_la_forme_Fv}.
	
\medskip
	For each $V \in \focks$, the operator 
	\begin{equation}\label{def_quantification_Wick_non_polynomiale_sur_petit_coeur}
		\begin{array}{ccccc}
			\Wick[\eps]{F_V} & : & \wickcore & \longrightarrow & \focks \\
			& & \weylop[\eps]{u}\Omega & \longmapsto & \weylop[\eps]{u}\liftop{\sqrt{\eps}}V
		\end{array}
	\end{equation}
	is densely defined, and it is closable since $\Wick[\eps]{F_V} \subset \Wick[\eps]{(\bar{F_V})^{*}}$. The closure of this operator, still denoted by $\Wick[\eps]{F_V}$, is called the Wick quantization of the map $F_V \in \FVclass$, and coincides with the definition \eqref{def_quantification_Wick_poly_produit_tensoriel_symetrique} if $F_V \in \FVclass \cap \wickpol$.
	
	\begin{prop}[{\cite[Proposition 3.11]{AmZe14}}]\label{prop.Wick_non_poly_est_essentiellement_autoadjoint}
		Assume that $V \in \focks$ satisfies $\liftop{\conjug}V = V$. Then, the operator $\Wick[\eps]{F_V}$ is essentially self-adjoint. 
	\end{prop}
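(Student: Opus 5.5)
Via the wave representation of Theorem \ref{thm.rpz_ondulatoire_espace_Fock}, I will realize $\Wick[\eps]{F_V}$ as the multiplication operator by the real function $G_\eps := \mathcal{R}(\liftop{\sqrt{\eps}}V)$ on $L^2(Q,\dd\mu)$, and conclude because a multiplication operator by a real measurable function is self-adjoint on its maximal domain.

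First, $G_\eps$ is real-valued. The assumption $\liftop{\conjug}V=V$, the fact that $\liftop{\conjug}$ commutes with $\liftop{\sqrt{\eps}}$, and \eqref{Fock_conjugue_donne_L2_conjugue_via_rpz_ondulatoire} give $\bar{G_\eps} = \mathcal{R}\liftop{\conjug}\liftop{\sqrt{\eps}}V = G_\eps$. Let $M$ denote multiplication by $G_\eps$ on $D(M)=\{f\in L^2 : G_\eps f\in L^2\}$. Then $M$ is self-adjoint, so $\mathcal{R}^*M\mathcal{R}$ is self-adjoint on $\focks$.

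Next, $\mathcal{R}^*M\mathcal{R}$ agrees with $\Wick[\eps]{F_V}$ on $\wickcore$. For $u\in\hi_0$, $\weylop[\eps]{u}$ belongs to $\mathfrak{M}$, so $\mathcal{R}\weylop[\eps]{u}\mathcal{R}^*$ is multiplication by some $w_u\in L^\infty$ with $|w_u|=1$ (it is unitary). Hence $\mathcal{R}(\weylop[\eps]{u}\Omega)=w_u$, using $\mathcal{R}\Omega=1$. Formula \eqref{formule_clef_pour_definition_Wick_non_poly} then reads $G_\eps w_u = \mathcal{R}(\weylop[\eps]{u}\liftop{\sqrt{\eps}}V)\in L^2$. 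So $w_u\in D(M)$ and $M w_u = \mathcal{R}\,\Wick[\eps]{F_V}\,\weylop[\eps]{u}\Omega$, by definition \eqref{def_quantification_Wick_non_polynomiale_sur_petit_coeur}. In particular the definition on $\wickcore$ is consistent, and $\Wick[\eps]{F_V}\subset \mathcal{R}^*M\mathcal{R}$.

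It remains to show that $\mathcal{R}\wickcore=\Vect\{w_u : u\in\hi_0\}$ is a core for $M$. This is the main obstacle. My plan is to show that this span is dense in $L^2(Q,(1+G_\eps^2)\dd\mu)$, which is exactly density in the graph norm of $M$.

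Take $h\in L^2((1+G_\eps^2)\dd\mu)$ orthogonal to every $w_u$ in that weighted space. Then $F:=(1+G_\eps^2)h$ satisfies $\int \bar{w_u}\,F\,\dd\mu=0$ for all $u\in\hi_0$. Here $F\in L^1(\dd\mu)$: indeed $F=(1+G_\eps^2)^{1/2}\cdot(1+G_\eps^2)^{1/2}h$ is a product of two $L^2$ functions, since $(1+G_\eps^2)^{1/2}\le 1+|G_\eps|\in L^2$. The functions $w_u$ are the exponentials $\e^{\rmi\phi(u)}$ of the Gaussian process generating the $\sigma$-algebra, and they span a $*$-subalgebra by the Weyl relations \eqref{relation_definissant_rpz_CCR}. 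So $F\,\dd\mu$ is a complex measure whose Fourier transform vanishes on all finite-dimensional marginals. By uniqueness of Fourier transforms of finite measures, $\int F\,\Phi\,\dd\mu=0$ for every bounded cylinder function $\Phi$. A monotone class argument extends this to all bounded measurable $\Phi$, so $F=0$ and hence $h=0$.

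Hence $\wickcore$ is a core for the self-adjoint operator $\mathcal{R}^*M\mathcal{R}$. The closure $\Wick[\eps]{F_V}$ therefore equals $\mathcal{R}^*M\mathcal{R}$, and $\Wick[\eps]{F_V}$ is essentially self-adjoint on $\wickcore$. The step that needs the most care is this $L^1$ measure-uniqueness argument, particularly checking that the $w_u$ (which carry no scalar prefactor in this normalization) generate the full $\sigma$-algebra $\mathfrak{T}$; this follows from item 2 of Theorem \ref{thm.rpz_ondulatoire_espace_Fock}.
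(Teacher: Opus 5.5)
Your proof is correct and follows the same route as the paper: you transport $\Wick[\eps]{F_V}$ by $\mathcal{R}$ to multiplication by $\mathcal{R}(\liftop{\sqrt{\eps}}V)$, which is real-valued by \eqref{Fock_conjugue_donne_L2_conjugue_via_rpz_ondulatoire} and the assumption $\liftop{\conjug}V=V$. The paper stops there, whereas you also prove that $\mathcal{R}\wickcore$ is a core for the maximal multiplication operator (via the $L^1$-annihilator argument and item~2 of Theorem~\ref{thm.rpz_ondulatoire_espace_Fock}), which is exactly the step the paper's short proof leaves implicit.
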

	
	\begin{proof}
		In the wave representation, the operator $\Wick[\eps]{F_V}$ corresponds to the multiplication by $\mathcal{R}(\liftop{\sqrt{\eps}}V)$ in view of \eqref{formule_clef_pour_definition_Wick_non_poly}. Under the assumption $\liftop{\conjug}V = V$, the function $\mathcal{R}(\liftop{\sqrt{\eps}}V)$ is real-valued by \eqref{Fock_conjugue_donne_L2_conjugue_via_rpz_ondulatoire}, so the associated multiplication operator is essentially self-adjoint. 
	\end{proof}
	
	We now state a fundamental inequality which follows from the well-known hypercontractive estimates, see for example \cite[Theorem I.17]{Si74}. Recall that the Fock space $\focks$ is unitarily equivalent to some space $L^2(Q,\dd\mu)$ where $\mu$ is a probability measure by Theorem \ref{thm.rpz_ondulatoire_espace_Fock}.
	
	\begin{theorem}\label{thm.hypercontractive_estimates_Simon}
	 Let $1 < p \leq q < +\infty$ and $0 < c_0 \leq \sqrt{\dfrac{p-1}{q-1}}$. For all $f \in L^2(Q,\dd\mu)$, we have 
	 \begin{equation}\label{hypercontractive_estimates_Simon}
	 \norm{\liftop{c_0}f}_{L^q(Q,\dd\mu)} \leq \norm{f}_{L^p(Q,\dd\mu)}\,.
	 \end{equation}
	\end{theorem}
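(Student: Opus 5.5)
The plan is to identify $\liftop{c_0}$ with the Ornstein--Uhlenbeck semigroup in the wave representation of Theorem \ref{thm.rpz_ondulatoire_espace_Fock}, and then to prove Nelson's hypercontractivity through Gross' logarithmic Sobolev inequality. Write $c_0=\e^{-t}$ with $t\geq 0$, so that $\liftop{c_0}=\e^{-t\nbop[1]}$. Under $\mathcal{R}$, the operator $\nbop[1]$ becomes the Dirichlet-form (Ornstein--Uhlenbeck) generator $L$ associated with the Gaussian measure $\mu$. Concretely, the Wick monomials $\mathcal{R}\,\Wick[1]{\cdot}\,\Omega$ of degree $n$ are exactly the $n$-th Wiener chaos, and this gives
\[
\braket{f}{\nbop[1] f}=\int_Q |\nabla f|^2\,\dd\mu
\]
on cylinder functions.

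\textbf{Step 1: reduction to finite dimensions.} Choose an orthonormal basis $(e_k)$ of $\hi_0$. Then $Q$ carries i.i.d.\ standard Gaussians $\xi_k=\mathcal{R}(\sqrt2\,\fieldop[1]{e_k}\Omega)$. Cylinder functions $f(\xi_1,\dots,\xi_d)$ with $f$ polynomial (or $f\in C_b^\infty$) are dense in every $L^p(Q,\dd\mu)$, $1\leq p<\infty$. On such functions, $\liftop{c_0}$ acts as the Mehler kernel on $\R^d$ with the standard Gaussian $\gamma_d$:
\[
\liftop{c_0} f(x)=\int f\big(c_0x+\sqrt{1-c_0^2}\,y\big)\,\dd\gamma_d(y).
\]
The Mehler formula shows that $\liftop{c_0}$ is positivity preserving and an $L^p$ contraction. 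It therefore suffices to prove \eqref{hypercontractive_estimates_Simon} on $(\R^d,\gamma_d)$ with constants independent of $d$, and then extend by density and continuity. For the extension to general $f\in L^2$ in the case $p<2$: approximate $f$ in $L^p$ by cylinder functions, obtain a Cauchy sequence in $L^q$, and identify the limit with $\liftop{c_0}f$ through the $L^2$-limit when $f\in L^2$.

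\textbf{Step 2: the Gaussian log-Sobolev inequality.} We prove
\[
\int g^2\log g^2\,\dd\gamma_d-\|g\|_2^2\log\|g\|_2^2\leq 2\int|\nabla g|^2\,\dd\gamma_d .
\]
The first route to try is Gross' original one: prove the two-point inequality on $\{\pm1\}$ by an elementary calculus argument, tensorize it (entropy is subadditive for product measures), and pass to the Gaussian by the central limit theorem applied to $(\sigma_1+\dots+\sigma_n)/\sqrt n$ with smooth $g$. As a backup, one can use the semigroup (Bakry--\'Emery) argument: differentiate the entropy along $\e^{-sL}$ and use the commutation $\nabla \e^{-sL}=\e^{-s}\e^{-sL}\nabla$.

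\textbf{Step 3: Gross' differentiation argument.} Take $f\geq0$ smooth and bounded below; the positivity of Mehler's kernel reduces the general case to this one, since $|\liftop{c_0}f|\leq\liftop{c_0}|f|$. Set $q(s)=1+(p-1)\e^{2s}$ and $F(s)=\|\e^{-sL}f\|_{q(s)}$. Differentiating $\log F(s)$ produces an entropy term with coefficient $q'/q^2$ and a Dirichlet-form term $-\int u^{q-1}Lu$. Apply the log-Sobolev inequality to $g=u^{q/2}$, together with the identity
\[
\int u^{q-1}Lu\,\dd\gamma_d=\frac{4(q-1)}{q^2}\int|\nabla u^{q/2}|^2\,\dd\gamma_d .
\]
With the choice $q'(s)=2(q(s)-1)$ these combine to give $F'(s)\leq 0$, hence $\|\e^{-sL}f\|_{q(s)}\leq\|f\|_p$.

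\textbf{Step 4: conclusion.} Take $s$ with $q(s)=q$, i.e.\ $\e^{-s}=\sqrt{(p-1)/(q-1)}$. For smaller $c_0=\e^{-t}$, $t\geq s$, write $\e^{-tL}=\e^{-(t-s)L}\e^{-sL}$ and use that $\e^{-(t-s)L}$ is an $L^q$ contraction.

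The main obstacle is Step 2, the log-Sobolev inequality with the dimension-free constant. A secondary technical point is justifying the differentiation in Step 3; for this we work with smooth, strictly positive cylinder functions, for which $\e^{-sL}f$ is smooth with controlled growth, so that differentiation under the integral is legitimate.
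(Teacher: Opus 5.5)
The paper gives no proof of this statement. It quotes Nelson's hypercontractivity from \cite[Theorem I.17]{Si74} and then uses it only with $p=2$, $q=4$, $c_0=1/\sqrt3$, together with H\"older's inequality, to derive \eqref{hypercontractive_estimate_in_Fock_space}. Your proposal replaces that citation with Gross' classical argument, and the outline is correct.

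The normalisations are consistent. Under the vacuum, $\fieldop[1]{e_k}$ has variance $\tfrac12$, so your $\xi_k$ are standard Gaussians. Then $\liftop{\e^{-t}}=\e^{-t\nbop[1]}$ becomes the Mehler semigroup of $L=-\Delta+x\cdot\nabla$. Its Dirichlet form $\int|\nabla f|^2\,\dd\gamma_d$ is exactly the one in the log-Sobolev inequality with constant $2$. With $q'=2(q-1)$ the entropy and Dirichlet terms cancel exactly, and $q(s)=q$ gives precisely the threshold $\e^{-s}=\sqrt{(p-1)/(q-1)}$.

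Two points should be spelled out when you write it up.

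\emph{Density for $p>2$.} Density of cylinder functions in $L^p(Q,\dd\mu)$ for $p>2$ requires that $\mathfrak T$ be generated, up to null sets, by the field variables. This follows from the unitarity of $\mathcal R$. Polynomials in the $\xi_k$ are images of a dense set of finite-particle vectors, so every $L^2$ function, in particular every indicator, has a version measurable with respect to $\sigma(\xi_k,\ k\in\N)$. You can then use $E[f\mid\xi_1,\dots,\xi_d]=\mathcal R\,\liftop{P_d}\,\mathcal R^{*}f$, with $P_d$ the orthogonal projection onto $\Vect\{e_1,\dots,e_d\}$. This operator commutes with $\liftop{c_0}$, contracts every $L^p$ norm, and converges to $f$ in $L^p$ by martingale convergence.

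\emph{The case $p\le 2$.} It is cleaner to approximate $f$ by cylinder functions in $L^2$, hence also in $L^p$, and conclude with Fatou's lemma along an a.e.\ convergent subsequence. Approximating in $L^p$ and identifying limits afterwards is more roundabout.

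Compared with the citation, your route is self-contained. It shows the sharp constant $\sqrt{(p-1)/(q-1)}$ arising as the dimension-free Gaussian log-Sobolev constant, and it covers the whole range $1<p\le q<\infty$. The paper itself needs only the $L^2\to L^4$ case.
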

	
	We deduce from this estimate that, for all $V \in \focks$ and $\psi \in D\left(\liftop{\sqrt{3}}\right)$, we have 
	\begin{equation}\label{hypercontractive_estimate_in_Fock_space}
	 \norm{\Wick[\eps]{F_V}\psi} \leq \norm{V} \norm{\liftop{\sqrt{3}}\psi}
	 \end{equation}
	for all $\eps \in (0,1/3]$, see \cite[Lemma 3.14]{AmZe14}. In fact, $\focksalg$ is a core for $\Wick[\eps]{F_V}$ by \cite[Proposition 3.15]{AmZe14} and, if we define $V_\kappa = (V^{(0)}, \dots, V^{(\kappa)},0,\dots)$ for all $\kappa \in \N$, the estimate \eqref{hypercontractive_estimate_in_Fock_space} yields 
	\begin{equation}
	 \norm{\Big(\Wick[\eps]{F_{V_\kappa}}-\Wick[\eps]{F_V}\Big)\psi} \leq \norm{V_\kappa-V}\norm{\liftop{\sqrt{3}}\psi}
	\end{equation}
	for all $\psi \in D\left(\liftop{\sqrt{3}}\right)$. This enables us to extend two properties of the polynomial symbols to interactions in $\FVclass$. First, we can generalize the translation property \eqref{translation_Wick_poly}.
	
	\begin{lemma}[{\cite[Lemma 3.18]{AmZe14}}]
	 Let $V \in D\left(\liftop{\sqrt{2}}\right)$. For all $u \in \hi$, we have $F_V(\bullet + u) \in \FVclass$ and 
	 \begin{equation}\label{translation_Wick_non_poly}
	 \weylop[\eps]{-\rmi\frac{\sqrt{2}}{\eps}u}^{*} \Wick[\eps]{F_V} \weylop[\eps]{-\rmi\frac{\sqrt{2}}{\eps}u} = \Wick[\eps]{F_V\big(\cdot+u\big)}
	 \end{equation}
	 for all $\eps \in (0,1/3]$.
	 
	\end{lemma}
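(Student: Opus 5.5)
The plan is to first prove the statement for polynomial truncations and then pass to the limit using the hypercontractive bound \eqref{hypercontractive_estimate_in_Fock_space}.

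\textbf{Step 1: $F_V(\bullet+u)\in\FVclass$.} Replacing $z_0$ by $u$ in \eqref{def_potentiels_dev_Taylor_propagation_QFT}, we get the formal expansion $F_V(z+u)=\sum_\ell F_{V_\ell(u)}(z)$ from \eqref{developpement_Fv_autour_d'un_point}. So it suffices to show that $W(u):=(V_\ell(u))_{\ell}$ belongs to $\focks$, with
\[
\norm{W(u)}\leq C(u)\,\norm{\liftop{\sqrt2}V}.
\]
To bound it, write $a=|u+\conjug u|\leq 2|u|$. Then
\[
\norm{V_\ell(u)}\leq\sum_{j\geq\ell}\sqrt{\tfrac{\ell!}{j!}}\binom{j}{\ell}a^{j-\ell}\norm{V^{(j)}} .
\]
Apply Cauchy--Schwarz with weights $2^{j}$. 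Since $\binom{j}{\ell}\frac{\ell!}{j!}=\frac{1}{(j-\ell)!}$, the sum over $\ell$ of the squared coefficients is controlled by
\[
\sum_{\ell\leq j}\binom{j}{\ell}\frac{a^{2(j-\ell)}}{(j-\ell)!}\,2^{-j}\lesssim e^{a^2},
\]
uniformly in $j$, using $\binom{j}{\ell}\leq 2^j$. This yields
\[
\sum_\ell\norm{V_\ell(u)}^2\lesssim e^{ca^2}\sum_j 2^j\norm{V^{(j)}}^2=e^{ca^2}\norm{\liftop{\sqrt2}V}^2 .
\]
The bound is linear in $V$, and $W(u)$ inherits $\liftop{\conjug}$-invariance whenever $V$ has it. Pointwise absolute convergence of the double series then justifies the rearrangement giving $F_V(\bullet+u)=F_{W(u)}$.

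\textbf{Step 2: polynomial case.} Let $V_\kappa=(V^{(0)},\dots,V^{(\kappa)},0,\dots)$. Then $F_{V_\kappa}$ is a polynomial symbol, since each $\braket{(z+\conjug z)^{\ot j}}{V^{(j)}}$ expands into monomials with bounded kernels. Hence \eqref{translation_Wick_poly} gives
\[
\weylop[\eps]{-\rmi\tfrac{\sqrt2}{\eps}u}^*\,\Wick[\eps]{F_{V_\kappa}}\,\weylop[\eps]{-\rmi\tfrac{\sqrt2}{\eps}u}=\Wick[\eps]{F_{W_\kappa(u)}}
\]
on $\wickbigcore$, where $W_\kappa(u)=W(u)$ computed with $V_\kappa$ in place of $V$.

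\textbf{Step 3: passage to the limit.} This is the main obstacle, because of domain issues. Choose test vectors $\psi\in\focksalg$; these lie in $D(\liftop{\sqrt3})$. By Step 1 and linearity, $\norm{W_\kappa(u)-W(u)}\leq C(u)\norm{\liftop{\sqrt2}(V_\kappa-V)}\to0$. Hence \eqref{hypercontractive_estimate_in_Fock_space} gives $\Wick[\eps]{F_{W_\kappa(u)}}\psi\to\Wick[\eps]{F_{W(u)}}\psi$.

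On the left, the vector $\weylop[\eps]{-\rmi\tfrac{\sqrt2}{\eps}u}\psi$ must lie in $D(\liftop{\sqrt3})$. I would check this via the explicit Fock expansion of Weyl operators acting on finite-particle vectors: coherent-state components decay like $\eps^{-n/2}|u|^n/\sqrt{n!}$, so every $\liftop{c}$ is applicable, in the spirit of the appendix estimates on Weyl operators. Given that, $\Wick[\eps]{F_{V_\kappa}}$ converges to $\Wick[\eps]{F_V}$ on this vector.

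Thus the identity holds on $\focksalg$. Since $\focksalg$ is a core for $\Wick[\eps]{F_{W(u)}}$ by \cite[Proposition 3.15]{AmZe14}, and both sides are closed operators, the two sides agree. Equality of the full operators, not merely inclusion, follows from the symmetric argument with $-u$, since Weyl operators are unitary.
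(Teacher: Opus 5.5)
\textbf{Overall.} Your route is the one the paper sets up just before quoting \cite[Lemma 3.18]{AmZe14}: truncate to $V_\kappa$, apply \eqref{translation_Wick_poly} on $\wickbigcore$, pass to the limit with \eqref{hypercontractive_estimate_in_Fock_space}, and close with the core $\focksalg$. It correctly yields one inclusion,
\[
\Wick[\eps]{F_{W(u)}}\subset \weylop[\eps]{-\rmi\tfrac{\sqrt2}{\eps}u}^{*}\,\Wick[\eps]{F_V}\,\weylop[\eps]{-\rmi\tfrac{\sqrt2}{\eps}u},
\]
where $W(u)$ is your translated potential.

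\textbf{The gap: the reverse inclusion.} ``The symmetric argument with $-u$'' means running your proof for the potential $W(u)$ with shift $-u$. Its Step~3 needs the back-translated truncations of $W(u)$ to converge to $V$ in $\focks$. Your Step~1 estimate, applied to $W(u)$, gives this only if $W(u)\in D(\liftop{\sqrt2})$, but Step~1 only shows $W(u)\in\focks$.

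This really fails under the hypothesis. Take $\hi=\C$, $\conjug z=\bar z$ and $V^{(j)}=2^{-j/2}/(j+1)$, so that $V\in D(\liftop{\sqrt2})$. For $y=2\Re u>0$, using $(\ell+m)!/\ell!\geq\ell^m$, one gets $2^{\ell}\norm{V_\ell(u)}^2\gtrsim \ell^{-2}\e^{y\sqrt{2\ell}}$, hence $W(u)\notin D(\liftop{\sqrt2})$.

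The fix is to keep $V$ and rerun Step~3 with test vectors $\weylop[\eps]{-\rmi\frac{\sqrt2}{\eps}u}^{*}\chi$, $\chi\in\focksalg$. These lie in $\wickbigcore\cap D(\liftop{\sqrt3})$, so the same two limits give
\[
\Wick[\eps]{F_V}\chi=\weylop[\eps]{-\rmi\tfrac{\sqrt2}{\eps}u}\,\Wick[\eps]{F_{W(u)}}\,\weylop[\eps]{-\rmi\tfrac{\sqrt2}{\eps}u}^{*}\chi \quad\text{for all } \chi\in\focksalg.
\]
Since $\focksalg$ is also a core for $\Wick[\eps]{F_V}$, the reverse inclusion follows, using only $V\in D(\liftop{\sqrt2})$. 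When $\liftop{\conjug}V=V$ there is a shortcut: both sides are self-adjoint, so the inclusion you proved is already an equality.

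\textbf{A smaller point in Step~1.} The Cauchy--Schwarz step as you describe it does not give the bound. Write $c_{\ell j}^2=\binom{j}{\ell}a^{2(j-\ell)}/(j-\ell)!$. A bound on $\sum_\ell c_{\ell j}^2 2^{-j}$ that is uniform in $j$ does not control an $\ell^2$ operator norm; a matrix whose only nonzero row is constant satisfies it and is unbounded. Split the weights instead as
\[
\norm{V_\ell(u)}^2\leq\Big(\sum_{m\geq0}\frac{a^{2m}}{m!}\Big)\sum_{j\geq\ell}\binom{j}{\ell}\norm{V^{(j)}}^2 ,
\]
then sum over $\ell$ using $\sum_{\ell\leq j}\binom{j}{\ell}=2^j$. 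This gives $\norm{W(u)}^2\leq \e^{a^2}\norm{\liftop{\sqrt2}V}^2$, which is the estimate you need.
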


	\bigskip
	Furthermore, we can extend the estimate \eqref{baby_proposition_3.17_Ammari_Zerzeri} to non-polynomial interactions.
	
	\begin{prop}[{\cite[Proposition 3.17]{AmZe14}}]\label{prop.3.17_ammari_zerzeri}
		Let $S(x) = \displaystyle\sum_{k=0}^\infty a_k x^k$ be an entire function on $\C$ such that $a_k > 0$ for all $k \in \N$, and denote by 
		\begin{equation}\label{fct_analytique_op_nombre}
		    \mathcal{\bf S} = \sum_{k=0}^\infty a_k (\nbop[1]+1)^k\,.
		\end{equation}
		Let $V = \Big(V^{(j)}\Big)_{j \in \N} \in \focks$ such that $\liftop{\conjug}V = V$. For all $\lambda_1 > 8e$, there exists $C > 0$ such that the inequality 
		\begin{equation}
			\norm{\Wick[\eps]{F_V} \psi} \leq 2 \norm{\liftop{\sqrt{\eps}}V} \norm{\psi} + C\left(\sum_{k=0}^{+\infty}\, \frac{(\lambda_1 \eps)^k}{a_{k+2}} \norm{V^{(k)}}^2\right)^{1/2} \norm{\mathcal{\bf S}^{1/2}\, \psi}
		\end{equation}
		holds whenever the right-hand side is finite.
	\end{prop}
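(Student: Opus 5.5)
We prove Proposition \ref{prop.3.17_ammari_zerzeri} by splitting $V$ into its chaos components and estimating each Wick monomial sector by sector in the particle number. Since $F_V=\sum_j F_{V^{(j)}}$, where $F_{V^{(j)}}$ is a polynomial symbol of total order $j$, we have $\Wick[\eps]{F_{V^{(j)}}}=\eps^{j/2}\Wick[1]{F_{V^{(j)}}}$. Expanding $(z+\conjug z)^{\ot j}$ with the binomial formula, $F_{V^{(j)}}=\sum_{p+q=j}\binom{j}{p}(j!)^{-1/2}b_{p,q}$, where $\tilde b_{p,q}$ is obtained by contracting $V^{(j)}$ (after applying $\conjug$ on $p$ slots) and has norm at most $\norm{V^{(j)}}$. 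We first work on $\psi\in\focksalg$ and conclude by closedness of $\Wick[\eps]{F_V}$ together with the core property of $\focksalg$ \cite[Proposition 3.15]{AmZe14}.

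The low chaoses $j=0,1$ are handled separately, which is where the first term comes from. The $j=0$ term is the multiplication by $V^{(0)}$. The $j=1$ term is the field $\fieldop[\eps]{\cdot}$, controlled by $\sqrt{\eps}\norm{V^{(1)}}(\nbop[1]+1)^{1/2}\le \sqrt{\eps}\norm{V^{(1)}}(\norm{\psi}+\dots)$. It is simplest to absorb $(\nbop[1]+1)^{1/2}\le \mathcal{\bf S}^{1/2}/\sqrt{a_2}$-type bounds, or to keep a $\norm{\psi}$ contribution bounded by $2\norm{\liftop{\sqrt\eps}V}\norm{\psi}$. I would arrange the constants so that the $\norm{\psi}$ term carries at most $2\norm{\liftop{\sqrt\eps}V}$.

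For $j\ge 2$, act on $\psi=\oplus\psi^{(n)}$. By \eqref{def_quantification_Wick_poly_produit_tensoriel_symetrique}, $b_{p,q}$ maps sector $n$ to sector $n+q-p$ with norm at most $\sqrt{n!(n+q-p)!}/(n-p)!\,\norm{V^{(j)}}\le (n+j)^{j/2}\norm{V^{(j)}}$. Summing over $p$ gives $\sum_p\binom jp\le 2^j$. Hence the output component in sector $m$ is bounded by
\[
\sum_{j\ge2}\ \sum_{n:\,|n-m|\le j} \eps^{j/2}\,\frac{2^j}{\sqrt{j!}}\,(n+j+1)^{j/2}\,\norm{V^{(j)}}\,\norm{\psi^{(n)}}.
\]
Apply Cauchy--Schwarz in $j$, writing each summand as $\bigl[(\lambda_1\eps)^{j/2}a_{j+2}^{-1/2}\norm{V^{(j)}}\bigr]\times\bigl[a_{j+2}^{1/2}\lambda_1^{-j/2}2^j(j!)^{-1/2}(n+j+1)^{j/2}\norm{\psi^{(n)}}\bigr]$. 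The first factor squared and summed produces the weighted sum $\sum_k (\lambda_1\eps)^k a_{k+2}^{-1}\norm{V^{(k)}}^2$ appearing in the statement.

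The second factor must then be compared with $\langle\psi,\mathcal{\bf S}\psi\rangle=\sum_n\sum_k a_k(n+1)^k\norm{\psi^{(n)}}^2$. This comparison is the main obstacle. We need
\[
\frac{4^j(n+j+1)^j}{\lambda_1^j\, j!}\lesssim \sum_k a_k (n+1)^k,
\]
uniformly, after also summing over the $2j+1$ shifts $m\mapsto n$ and over $m$. The key steps are the bound $(n+j+1)^j\le 2^j\bigl((n+1)^j+j^j\bigr)$ and Stirling's formula $j^j/j!\le e^j$. The term $(n+1)^j$ pairs against $a_{j+2}(n+1)^{j+2}$, or against $a_j(n+1)^j$ via the $a_{j+2}$ index shift and $n+1\ge1$. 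The term $j^j/j!$ is where $\lambda_1>8e$ is needed: it leaves a geometric factor $(8e/\lambda_1)^j$ times $j$-dependent polynomial counts, and these are summable. I expect the bookkeeping of these factors of $2$, namely the binomial $2^j$, the shift $2^j$, and the sector overlap, to be exactly what fixes the threshold $8e$. Summing the geometric series yields the constant $C$, and completing the sum over $m$ gives $\norm{\mathcal{\bf S}^{1/2}\psi}$.
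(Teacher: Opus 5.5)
The paper gives no proof of this proposition. It is quoted from \cite[Proposition 3.17]{AmZe14} and used only, with $a_k=\e^{-\alpha_0\lambda^k}$, to derive \eqref{Wick_Fv_est_borne_sur_domaine_racine_serie_entiere}. So there is no in-paper argument to compare with. Judged on its own, your chaos-by-chaos, sector-by-sector estimate is correct and self-contained: it uses only the explicit action \eqref{def_quantification_Wick_poly_produit_tensoriel_symetrique} of Wick monomials.

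Organized a bit more carefully, it actually gives more than stated. Let $r$ be the number of creators in a monomial of order $j$; its coefficient is $\binom jr/\sqrt{j!}$ and its kernel has norm $\le\norm{V^{(j)}}$. For a fixed output sector $m$ and fixed $j$, the input sector is $n=m+j-2r$. Apply Cauchy--Schwarz first in $r$ with weights $\binom jr$, then in $j$ with your splitting. This way no multiplicity lands on the $V$-factor. Since $m\mapsto n$ is injective for fixed $(j,r)$, summing over $m$ gives
\[
\norm{\Wick[\eps]{F_V}\psi}^2\le\Big(\sum_{j}\frac{(\lambda_1\eps)^j}{a_{j+2}}\norm{V^{(j)}}^2\Big)\sum_{n}\norm{\psi^{(n)}}^2\sum_{j}a_{j+2}\,\frac{4^j\,(n+j)^j}{\lambda_1^j\, j!}\,.
\]
Use $n+j\le j(n+1)$ for $j\ge1$ and $j^j\le \e^j j!$. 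The inner sum is then at most $\sum_j a_{j+2}(4\e/\lambda_1)^j(n+1)^j\le\sum_k a_k(n+1)^k$ as soon as $\lambda_1\ge 4\e$. So you get the bound with $C=1$ and without the term $2\norm{\liftop{\sqrt{\eps}}V}\norm{\psi}$. Your extra split $(n+j+1)^j\le 2^j((n+1)^j+j^j)$ is what produces the threshold $8\e$, which is also fine. In particular, $j=0,1$ need no separate treatment. Your remark that the $j=1$ term could be put into the $\norm{\psi}$ contribution is wrong as stated, since the field operator is unbounded, but it is irrelevant.

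The step you should actually supply is functional-analytic. $\Wick[\eps]{F_V}$ is defined as the closure of $\weylop[\eps]{u}\Omega\mapsto\weylop[\eps]{u}\liftop{\sqrt{\eps}}V$ on $\wickcore$, not by the chaos series. You therefore have to show that on $\focksalg$ it acts as $\sum_j\Wick[\eps]{F_{V^{(j)}}}$. For $\eps\le1/3$ this follows from the coincidence with polynomial Wick quantization for finite chaos expansions, together with the approximation $V_\kappa\to V$ given by \eqref{hypercontractive_estimate_in_Fock_space}. For larger $\eps$, use the wave representation and Theorem \ref{thm.hypercontractive_estimates_Simon}. Note that finiteness of the right-hand side forces $\sum_k c^k\norm{V^{(k)}}^2<\infty$ for every $c>0$, because $S$ is entire.

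After that, the core property of $\focksalg$ is not the tool you need. To reach every $\psi\in D(\mathcal{\bf S}^{1/2})$, apply the estimate to the truncations $\1_{[0,M]}(\nbop[1])\psi$, which converge in the $\mathcal{\bf S}^{1/2}$-graph norm, and then use closedness.
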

	
	Eventually, we give a crucial theorem of essential self-adjointness. The general version of this result, stated by Segal in \cite{Se70}, is the outcome of remarkable works from Nelson, Glimm, Jaffe, Simon and H\o egh-Krohn, among many others.
	
	\begin{theorem}\label{thm.hamiltonien_CSdyn_essentiellement_autoadjoint}
		Under the assumptions \ref{hyp.A1} and \ref{hyp.A2}, the Hamiltonian \eqref{def_hamiltonien_asymptotique_complete_etat_coherent_QFT} is essentially self-adjoint on $\focksalg$ with domain $D(\dG[\eps]{A}) \cap D(\Wick[\eps]{F_V})$.
	\end{theorem}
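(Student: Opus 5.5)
The plan is to pass to the wave representation of Theorem~\ref{thm.rpz_ondulatoire_espace_Fock}, where the statement becomes the classical Segal/Simon--H\o egh-Krohn theorem on perturbations of a positivity-preserving semigroup by a multiplication operator. Under $\mathcal{R}$, $\Wick[\eps]{F_V}$ is multiplication by the real function $G_\eps \defegal \mathcal{R}(\liftop{\sqrt{\eps}}V)$ (Proposition~\ref{prop.Wick_non_poly_est_essentiellement_autoadjoint}). The operator $\dG[\eps]{A}$ becomes a nonnegative self-adjoint operator $H_0$. By \ref{hyp.A1}, $A$ commutes with $\conjug$, so $\e^{-t\, \dG[\eps]{A}}$ is a second quantization $\liftop{\e^{-\eps t A}}$ of a real contraction. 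Since $A\geq m_0$, we have $\norm{\e^{-\eps t A}}\leq \e^{-\eps m_0 t}<1$ for $t>0$.

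The first step is to establish the three standard properties of the free semigroup $\e^{-tH_0}$ on $L^2(Q,\dd\mu)$.
\begin{enumerate}
\item It is positivity preserving and $L^1$-contractive (Markovian), because it is the second quantization of a real contraction; this is \cite[Theorem I.12 and I.16]{Si74}.
\item It is hypercontractive: by Theorem~\ref{thm.hypercontractive_estimates_Simon} with $c_0=\e^{-\eps m_0 t}$, it maps $L^2$ into $L^4$ boundedly for $t\geq T_\eps$, where $\e^{-2\eps m_0 T_\eps}\leq 1/3$.
\end{enumerate}
Together with \ref{hyp.A2}, namely $G_\eps\in L^q$ for some $q>2$ and $\e^{-tG_\eps}\in L^1$ for all $t>0$, these are exactly the hypotheses of the Segal theorem \cite{Se70}, in the form of \cite[Theorem 2.16]{HKSi72} or \cite[Theorem V.19]{Si74}.

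The second step carries out the proof of that theorem. Truncate the potential to $G_{\eps,\kappa}=\max(G_\eps,-\kappa)$. The operator $H_\kappa=H_0+G_{\eps,\kappa}$ is self-adjoint on $D(H_0)\cap D(G_{\eps,\kappa})$ and bounded below. By Trotter's formula, positivity of $\e^{-tH_0}$ and hypercontractivity with Hölder's inequality, its lower bound is uniform in $\kappa$: this is the Nelson/Glimm argument, which controls $\norm{\e^{-tH_\kappa}}\leq \norm{\e^{-t q' G_\eps}}_{1}^{1/q'}$-type quantities using \ref{hyp.A2}. The semigroups then converge monotonically, so $H_0\dotplus G_\eps$ is defined as a form sum. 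One then shows that $H_0+G_\eps$ is essentially self-adjoint on $D(H_0)\cap D(G_\eps)$. Using $G_\eps\in L^{q}$ with $q>2$ and the hypercontractive smoothing, one proves $\norm{G_\eps\, \psi}\leq a\norm{(H_0+G_\eps)\psi}+b\norm{\psi}$ on a core. This is Segal's argument via the semigroup $L^p$-estimates \cite[Theorem 2.16]{HKSi72}. The closure of the sum then coincides with the form sum, and its domain is $D(H_0)\cap D(G_\eps)$.

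The third step passes from the core $D(H_0)\cap D(G_\eps)$ to $\focksalg$. First restrict $A$ to a core. Then note that $\focksalg\cap D(\dG[\eps]{A})$, with finite-particle vectors built on $D(A)$, is a core for $\dG[\eps]{A}$. By \cite[Proposition 3.15]{AmZe14} and the hypercontractive bound \eqref{hypercontractive_estimate_in_Fock_space}, such vectors lie in $D(\Wick[\eps]{F_V})$. Finally, approximate any $\psi\in D(H_0)\cap D(G_\eps)$ in the graph norm of $H_\eps$ by finite-particle vectors. To do this, apply the cutoffs $\1_{[0,M]}(\nbop[1])$ (which commute with $H_0$) and then a further regularization $\e^{-\eta H_0}$, and use the bound \eqref{hypercontractive_estimate_in_Fock_space} to control the $G_\eps$ part.

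The main obstacle is the operator-bound step, the $\kappa$-uniform control giving domain equality $D(H_0)\cap D(G_\eps)$ rather than mere form-sum self-adjointness. My plan there is to cite \cite[Theorem 2.16]{HKSi72} directly, since its hypotheses are exactly \ref{hyp.A1}--\ref{hyp.A2}, rather than reprove it.
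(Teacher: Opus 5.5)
Your reduction is the paper's own route. The paper proves nothing here beyond invoking Segal's theorem in the form of \cite[Theorem 2.16]{HKSi72}. Your first step (positivity preservation, $L^1$-contractivity and hypercontractivity of $\e^{-tH_0}=\liftop{\e^{-\eps tA}}$ from \ref{hyp.A1}, together with \ref{hyp.A2}) is exactly the check of that theorem's hypotheses.

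The genuine gap is your third step, the passage from $D(H_0)\cap D(G_\eps)$ to finite-particle vectors. It is the one part you do not delegate to the citation, and it does not work as sketched. Set $\psi_{M,\eta}=\e^{-\eta H_0}\1_{[0,M]}(\nbop[1])\psi$.
\begin{itemize}
\item Your only control of the potential is \eqref{hypercontractive_estimate_in_Fock_space}, $\norm{\Wick[\eps]{F_V}\phi}\leq\norm{V}\,\norm{\liftop{\sqrt3}\phi}$. It needs $\phi\in D(\liftop{\sqrt3})$ and $\eps\leq 1/3$.
\item A general $\psi\in D(H_0)\cap D(G_\eps)$ has no such regularity. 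So this bound cannot estimate $\norm{G_\eps(\psi-\psi_{M,\eta})}$, which is $\norm{G_\eps\1_{(M,\infty)}(\nbop[1])\psi}$ when $\eta=0$.
\item Neither $\1_{[0,M]}(\nbop[1])$ nor $\e^{-\eta H_0}$ commutes with multiplication by $G_\eps$. Hence $G_\eps\psi_{M,\eta}\to G_\eps\psi$ does not follow from $\psi\in D(G_\eps)$ either.
\item The smoothing does not repair this. The range of $\e^{-\eta H_0}=\liftop{\e^{-\eps\eta A}}$ lies in $D(\liftop{\sqrt3})$ only if $\norm{\sqrt3\,\e^{-\eps\eta A}}\leq 1$. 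That forces $\eta$ to stay bounded away from $0$, whereas you need $\eta\to0$ to recover $H_0\psi$.
\end{itemize}
So essential self-adjointness on $\focksalg$ (more precisely on $\focksalg\cap D(\dG[\eps]{A})$) is not established. You need one of two things. Either cite a version of the theorem that gives essential self-adjointness directly on a core where number truncations converge in the graph norm of $G_\eps$; the paper simply takes the whole statement, including this part, from the literature. Or supply an extra estimate controlling $G_\eps$ along such truncations, which nothing in your sketch provides.

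Two claims in your second step should be removed rather than attributed to Segal's argument, since you cite that step anyway.
\begin{itemize}
\item Your assertion that $H_0+\max(G_\eps,-\kappa)$ is self-adjoint on $D(H_0)\cap D(\max(G_\eps,-\kappa))$ is unjustified. That truncation is still unbounded above. Truncating to bounded potentials, where Kato--Rellich applies, avoids the issue.
\item The relative bound $\norm{G_\eps\psi}\leq a\norm{(H_0+G_\eps)\psi}+b\norm{\psi}$ is not what Segal's theorem provides, and no step of your sketch produces it. The theorem gives semiboundedness and essential self-adjointness of $H_0+G_\eps$ on $D(H_0)\cap D(G_\eps)$, which is the natural reading of ``with domain'' in the statement. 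Equality of the closure's domain with $D(H_0)\cap D(G_\eps)$ is a stronger statement that needs model-specific estimates.
\end{itemize}
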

	
	\subsection{The classical field equation}\label{subsect.The_classical_field_ equation}
	
	In the case of the $\ppde$ model, we showed the existence of a unique global mild solution of the classical field equation \eqref{eq_classique_avec_partie_libre_Pphi2} with initial data $\varphi_0 \in L^2(\R,\dd k)$, see Corollary \ref{cor.existence_sol_mild_global_L2_Pphi2}. In the same way, we can establish the existence of a unique maximal mild solution for \eqref{eq_classique_avec_partie_libre}, that is, a function $\varphi_t \in C^0(I,\hi)$ satisfying 
	\begin{equation}\label{eq_classique_avec_partie_libre_formulation_integrale}
		\varphi_t = \e^{-\rmi t A}\, \varphi_0 - \rmi \, \int_0^t \e^{-\rmi(t-s)A} \, \gatbar{}{F_V}(\varphi_s) \, \dd s\,,
	\end{equation} 
	where $\gatbar{}{F_V}(\varphi_t)$ is defined by \eqref{def_gatbar_Fv}. This result has been stated in \cite[Theorem 4.1]{AmZe14} with $I = \R$, however, it seems to us that the provided argument based on Bihari's lemma \cite{Bi56} is not sufficient to prove that the solutions are globally defined. Therefore, we revisit the proof and restrict ourselves to maximal solutions.
	
	\medskip
	Let $\varphi_0 \in \hi$ and $V \in D\left(\liftop{\sqrt{2}}\right)$, $V \neq 0$ be fixed. For all $u,v \in \hi$, we have 
	\begin{equation}\label{inegalite_contraction_dzbarFv}
		\normhi{\gatbar{}{F_V}(u)-\gatbar{}{F_V}(v)} \leq 2\, \norm{V}\, \mathrm{g}\big(\max(\normhi{u},\normhi{v})\big) \normhi{u-v}\,,
	\end{equation} 
	where $\mathrm{g}(x)= \sqrt{\displaystyle\sum_{k=0}^\infty\, \displaystyle\frac{4^k(k+1)(k+2)}{k!}\, x^{2k}}$ defines a positive, non-decreasing function on $\R_+$. Recall that $\normhi{\cdot}$ denotes the norm on $\hi$. Hence, the vector field 
	\[X : (t,u) \in \R \times \hi \longmapsto \e^{\rmi t A} \, \gatbar{}{F_V}\big(\e^{-\rmi t A}u\big)
	\]
	is continuous and locally Lipschitz in the second variable, which provides the existence of a unique maximal solution $\tilde{\varphi}_t \in C^1(I,\hi)$ of the non-autonomous Cauchy problem:
	\begin{equation}\label{eq_classique_sans_partie_libre}
	 \left\{\begin{array}{l}
			\rmi \, \dot{\tilde\varphi}_t = \e^{\rmi t A} \, \gatbar{}{F_V}\big(\e^{-\rmi t A}\tilde{\varphi}_t\big)\\
			\eval{\tilde{\varphi}_t}_{t = 0} = \varphi_0 \in \hi\,.
		\end{array}\right. 
	\end{equation}
	Furthermore, the integral formulation 
	\begin{equation}\label{eq_classique_sans_partie_libre_formulation_integrale}
	 \tilde{\varphi}_t = \varphi_0 - \rmi \, \int_0^t \e^{\rmi sA} \, \gatbar{}{F_V}\big(\e^{-\rmi sA} \tilde{\varphi}_s\big) \, \dd s
	\end{equation}
	gives, for all $t \in I \cap \R_+$, 
	\[
	\normhi{\tilde{\varphi}_t} \leq \normhi{\varphi_0} + \norm{V} \int_0^t \sqrt{f\Big(4\,\normhi{\tilde\varphi_s}^2\Big)}\, \dd s\,,
	\]
	where $f(x) = (1+x)\,\e^x$. Thus, considering the function $G$ defined by 
	\[
	G(u) = \frac{1}{\norm{V}} \int_0^u \frac{1}{\sqrt{f(4x^2)}} \, \dd x = \frac{1}{\norm{V}} \int_0^u \frac{\e^{-2 x^2}}{\sqrt{1+4x^2}} \, \dd x\,,
	\] 
	Bihari's lemma \cite{Bi56} yields \[\normhi{\varphi_t} \leq G^{-1}\big(G(\normhi{\varphi_0}) + t\big)\] for all $t \in I \cap \R_+$ such that $G(\normhi{\varphi_0}) + t \in \mathrm{Dom}(G^{-1})$, that is, for all $t \in I \cap [0,T_0)$ with 
	\begin{equation}\label{borne_inf_temps_d'existence_solution_mild_equation_eikonale}
		T_0 = T(\normhi{\varphi_0}) = \frac{1}{\norm{V}} \int_{\normhi{\varphi_0}}^{+\infty}
 \displaystyle\frac{\e^{-2x^2}}{\sqrt{1+4x^2}} \, \dd x \in (0,+\infty)\;.
	\end{equation}
	Eventually, we get $[0,T_0) \subset I$, and $(-T_0,0] \subset I$ by a time reversal argument. Now, setting ${\varphi_t = \e^{-\rmi t A}\, \tilde{\varphi}_t \in C^0(I,\hi)}$, we are able to restate \cite[Theorem 4.1]{AmZe14}.
	
	\begin{theorem}\label{thm.existence_solution_mild_eq_classique_avec_partie_libre}
		For any initial data $\varphi_0 \in \hi$, the Cauchy problem \eqref{eq_classique_avec_partie_libre} admits a unique maximal mild solution $\varphi_t \in C^0(I,\hi)$. Moreover, we have $(-T_0,T_0) \subset I$ where $T_0$ is given by \eqref{borne_inf_temps_d'existence_solution_mild_equation_eikonale}.
	\end{theorem}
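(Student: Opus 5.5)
The argument passes to the interaction representation and then transfers the conclusions back to \eqref{eq_classique_avec_partie_libre}. There are three steps: local existence and uniqueness from Cauchy--Lipschitz, a lower bound on the lifetime from Bihari's lemma, and a transfer of existence and uniqueness between \eqref{eq_classique_sans_partie_libre} and the mild formulation \eqref{eq_classique_avec_partie_libre_formulation_integrale}.

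\emph{Local existence and uniqueness for \eqref{eq_classique_sans_partie_libre}.} First, \eqref{inegalite_contraction_dzbarFv} must be established. We expand $\gatbar{}{F_V}$ using \eqref{def_gatbar_Fv} and apply the telescoping identity \eqref{identite_remarquable_a^n_moins_b^n_version_produit_tensoriel} to $\tensn{(u+\conjug u)}{j-1}-\tensn{(v+\conjug v)}{j-1}$. This gives a bound by
\[
\sum_{j\ge1}\frac{j(j-1)}{\sqrt{j!}}\,\bigl(2\max(\normhi{u},\normhi{v})\bigr)^{j-2}\, 2\normhi{u-v}\,\norm{V^{(j)}}.
\]
Cauchy--Schwarz in $j$ then produces $\norm{V}$ times a series that is comparable to $\mathrm{g}$. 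The assumption $V\in D(\liftop{\sqrt2})$ is not needed for this convergence, since the factorial dominates; it is needed to make sense of $F_V$ and of its expansions. The unitarity of $\e^{\pm\rmi tA}$ makes $X(t,u)$ locally Lipschitz in $u$ uniformly in $t$, and strong continuity of $\e^{\rmi tA}$ gives joint continuity. The Cauchy--Lipschitz theorem in the Banach space $\hi$ then yields a unique maximal $C^1$ solution $\tilde\varphi_t$ on an open interval $I\ni0$.

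\emph{Lower bound on the lifetime.} Taking norms in \eqref{eq_classique_sans_partie_libre_formulation_integrale} and using the analogous growth bound $\normhi{\gatbar{}{F_V}(u)}\le \norm{V}\sqrt{f(4\normhi{u}^2)}$ gives the integral inequality stated before the theorem. Bihari's lemma then shows that $\normhi{\tilde\varphi_t}$ stays bounded by $G^{-1}(G(\normhi{\varphi_0})+t)$ on $I\cap[0,T_0)$. If $\sup I<T_0$, the solution would stay bounded near $\sup I$. This contradicts the standard blow-up alternative for locally Lipschitz fields bounded on bounded sets, by the same argument as Remark~\ref{rem.blowup_criterion_Pphi2}. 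Hence $[0,T_0)\subset I$, and time reversal gives $(-T_0,0]\subset I$.

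\emph{Transfer to mild solutions.} For existence, multiplying \eqref{eq_classique_sans_partie_libre_formulation_integrale} by $\e^{-\rmi tA}$ shows that $\varphi_t=\e^{-\rmi tA}\tilde\varphi_t$ satisfies \eqref{eq_classique_avec_partie_libre_formulation_integrale}. For uniqueness, suppose $\psi_t$ is a mild solution on some interval $J$. Then $\e^{\rmi tA}\psi_t$ satisfies \eqref{eq_classique_sans_partie_libre_formulation_integrale}, so it is a $C^1$ solution of \eqref{eq_classique_sans_partie_libre}. By maximality and uniqueness of $\tilde\varphi$ it coincides with $\tilde\varphi$ on $J\subset I$. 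This gives uniqueness and maximality of $\varphi$.

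\emph{Main obstacle.} The delicate step is the Lipschitz/growth estimate with the explicit function $\mathrm{g}$, more precisely the combinatorial bookkeeping of the factors $j/\sqrt{j!}$ and powers of $2$. Everything else is standard ODE theory.
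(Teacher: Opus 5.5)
Your proposal is correct and follows the paper's route: the Lipschitz bound \eqref{inegalite_contraction_dzbarFv} (your Cauchy--Schwarz computation reproduces exactly the function $\mathrm{g}$), Cauchy--Lipschitz for the interaction-picture problem \eqref{eq_classique_sans_partie_libre}, the growth bound with $f(x)=(1+x)\e^x$ fed into Bihari's lemma, time reversal, and conjugation by $\e^{-\rmi tA}$. You also spell out two steps the paper leaves implicit: the blow-up alternative, which turns the Bihari bound into $[0,T_0)\subset I$, and the fact that any mild solution conjugated by $\e^{\rmi tA}$ is a $C^1$ solution of \eqref{eq_classique_sans_partie_libre}, which gives uniqueness and maximality among mild solutions.
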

	
	\subsection{The quadratic dynamics}\label{subsect.The_quadratic_dynamic}
	
	First, following Corollary \ref{cor.dynamique_quad_modele_Pphi2}, we obtain the result stated below. Let us recall that $I$ is the lifespan of the mild solutions in Theorem \ref{thm.existence_solution_mild_eq_classique_avec_partie_libre}.
	
	\begin{theorem}[{\cite[Corollary 4.6]{AmZe14}}]\label{thm.dynamique_quad_applique_à_notre_potentiel}
		Let $t \in I \longmapsto \varphi_t$ be the mild solution of the classical field equation \eqref{eq_classique_avec_partie_libre} given by Theorem \ref{thm.existence_solution_mild_eq_classique_avec_partie_libre}. For all $t \in I$, let us define 
		\begin{equation}\label{def_time_dependent_quadratic_potential}
			V_2(t) = \frac{1}{\sqrt{2}} \sum_{j=2}^\infty \frac{j\,(j-1)}{\sqrt{j!}} \symop[2]\big(\bra{\tensn{(\varphi_t + \conjug \varphi_t)}{j-2}} \ot (\tensn{\1}{2})\big)V^{(j)} \in \tensymn{\hi}{2}\,,
		\end{equation}
		and 
		\begin{equation}
			F_{\tilde{V}_2(t)}^{\conjug_t}(z) = \braket{\frac{\tensn{(\e^{-\rmi t A}z+\e^{\rmi t A}\,\conjug z)}{2}}{\sqrt{2}}}{V_2(t)}\,.
		\end{equation}
		Then, for all initial data $u_0 \in \mathcal{D}_{+,k}$, $k \in \N^*$, the Cauchy problem 
		\begin{equation}
			\left\{\begin{array}{l}
				\rmi\, \eps\, \partial_t u_t = \Wick[\eps]{F_{\tilde{V}_2(t)}^{\conjug_t}} u_t\\
				\eval{u_t}_{t=0} = u_0 
			\end{array}\right.
		\end{equation}
		admits a unique unitary propagator $(\tilde{U}_2(t,s))_{t,s\, \in\, I}$. Furthermore, if we define 
		\begin{equation}
			U_2(t,s) = \e^{-\rmi\frac{t}{\eps} \dG[\eps]{A}}\, \tilde{U}_2(t,s)\, \e^{\rmi\frac{s}{\eps} \dG[\eps]{A}}\,, 
		\end{equation}
		the map $t \longmapsto U_2(t,s)\, u_s$ is a mild solution of 
		\begin{equation}
			\left\{\begin{array}{l}
				\rmi\, \eps\, \partial_t u_t =\Big(\dG[\eps]{A} + \Wick[\eps]{F_{V_2(t)}}\Big) u_t\\
				\eval{u_t}_{t=s} = u_s\,.
			\end{array}\right.
		\end{equation}
	\end{theorem}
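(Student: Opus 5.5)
The plan is to transpose the argument of Corollary \ref{cor.dynamique_quad_modele_Pphi2} to the abstract setting, replacing $\omega$ by $A$ and the finite sum over $2\leq j\leq 2n$ by the series in \eqref{def_time_dependent_quadratic_potential}. The existence and uniqueness of $(\tilde U_2(t,s))$ will come from Theorem \ref{thm.quadratic_dynamics_Ammari_Breteaux_chaospropag}, applied on every closed interval $J\subset I$ to $\mathfrak{Q}_t=\eps^{-1}\Wick[\eps]{F_{\tilde V_2(t)}^{\conjug_t}}$. Uniqueness on each $J$ then glues these propagators into a single propagator on $I$.

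The first step is to check that $V_2(t)$ is a well-defined element of $\tensymn{\hi}{2}$ and that $t\mapsto V_2(t)$ is continuous on $I$. For $\norm{V_2(t)}$ I bound $\norm{\symop[2](\bra{(\varphi_t+\conjug\varphi_t)^{\ot(j-2)}}\ot\1^{\ot 2})V^{(j)}}\leq (2\normhi{\varphi_t})^{j-2}\norm{V^{(j)}}$. Cauchy--Schwarz then gives
\[
\norm{V_2(t)}\leq \tfrac{1}{\sqrt 2}\Big(\sum_{j}\tfrac{j^2(j-1)^2}{j!}(2\normhi{\varphi_t})^{2(j-2)}\Big)^{1/2}\norm{V},
\]
which is finite because of the factorial in the denominator. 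This is exactly the kind of bound behind the function $\mathrm{g}$ in \eqref{inegalite_contraction_dzbarFv}, and it only needs $V\in\focks$; the standing hypothesis $V\in D(\liftop{\sqrt 2})$ leaves extra room. For continuity I use \eqref{majoration_difference_puissances_produit_tensoriel}:
\[
\norm{V_2(t)-V_2(s)}\leq C\big(\sup_{\tau\in J}\normhi{\varphi_\tau}\big)\,\normhi{\varphi_t-\varphi_s}\,\norm{V}.
\]
Here the constant is again a convergent series with one more factor $j$. It is finite because $\varphi\in C^0(I,\hi)$ is bounded on the compact set $J$. I expect this uniform-in-$j$ summation to be the only real obstacle, and the factorial decay handles it.

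Next I note that $\tilde V_2(t)=\e^{\rmi\frac t\eps\dG[\eps]{A}}V_2(t)$ has the same norm as $V_2(t)$ and is $\liftop{\conjug_t}$-invariant, where $\conjug_t=\e^{2\rmi tA}\conjug$. This uses \ref{hyp.A1}, namely $A\conjug=\conjug A$, together with $\liftop{\conjug}V=V$, which gives $\liftop{\conjug}V_2(t)=V_2(t)$. By Theorem \ref{thm.rpz_ondulatoire_espace_Fock} with conjugation $\conjug_t$ and Proposition \ref{prop.Wick_non_poly_est_essentiellement_autoadjoint}, each $\mathfrak{Q}_t$ is essentially self-adjoint, and it is $\eps$-independent by homogeneity. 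I then split $F_{\tilde V_2(t)}^{\conjug_t}=b_{2,0}+b_{0,2}+b_{1,1}$ as in \eqref{decomp_symbole_quad_tordu}. Repeating the computation \eqref{explicit_computation_continuity_Qt_quadratic_dynamics} gives
\[
\norm{\mathfrak{Q}_t-\mathfrak{Q}_s}_{\mathcal L(\mathcal D_{+,k},\mathcal D_{-,k})}\leq 3\norm{V_2(t)-V_2(s)},
\]
which supplies the first hypothesis of Theorem \ref{thm.quadratic_dynamics_Ammari_Breteaux_chaospropag}. The commutator computation leading to \eqref{estimee_forme_quad_propag_quad_Pphi2} is purely algebraic in the pair creation and annihilation parts. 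It therefore supplies the second hypothesis with $f_k(t)=2\sqrt2\,k\,3^{k/2}\norm{V_2(t)}$, which is continuous on $J$.

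Finally, for the mild-solution statement I set $U_2(t,s)=\e^{-\rmi\frac t\eps\dG[\eps]{A}}\tilde U_2(t,s)\e^{\rmi\frac s\eps\dG[\eps]{A}}$. Proposition \ref{prop.quantification_Wick_polynomiale}, equation \eqref{relation_Wick_poly_second_quantifie}, gives
\[
\e^{-\rmi\frac t\eps\dG[\eps]{A}}\Wick[\eps]{F_{\tilde V_2(t)}^{\conjug_t}}\e^{\rmi\frac t\eps\dG[\eps]{A}}=\Wick[\eps]{F_{V_2(t)}}.
\]
Writing $U_2(t,s)u_s$ in Duhamel form, that is, integrating $\partial_\tau\big(\e^{\rmi\frac\tau\eps\dG[\eps]{A}}U_2(\tau,s)u_s\big)=\e^{\rmi\frac\tau\eps\dG[\eps]{A}}\,\partial_\tau\tilde U_2(\tau,s)\,\e^{\rmi\frac s\eps\dG[\eps]{A}}u_s$ in $\mathcal D_{-,k}$, then shows that it is a mild solution of the stated Cauchy problem.
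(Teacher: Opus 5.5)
Your proposal is correct and follows the paper's route. The paper obtains this statement only by invoking Corollary \ref{cor.dynamique_quad_modele_Pphi2}, that is, by checking the continuity and commutator hypotheses of Theorem \ref{thm.quadratic_dynamics_Ammari_Breteaux_chaospropag} for $\mathfrak{Q}_t$. Your checks of convergence and continuity of the series defining $V_2(t)$, and your gluing over closed subintervals of the open interval $I$, supply details the paper leaves implicit.

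There are two small slips, the first shared with the paper's $\ppde$ computation:
\begin{itemize}
\item The symbols of $\mathfrak{Q}_t$ are built from $\tilde V_2(t)=\e^{\rmi\frac{t}{\eps}\dG[\eps]{A}}V_2(t)$ and $\conjug_t$. So $\mathfrak{Q}_t-\mathfrak{Q}_s$ is controlled by $\norm{\tilde V_2(t)-\tilde V_2(s)}$ and the analogous quantity for the $b_{1,1}$ kernel, not by $\norm{V_2(t)-V_2(s)}$ alone. These quantities still tend to $0$ as $s\to t$, by strong continuity of $\e^{\rmi tA}$, so the hypothesis holds.
\item Since $\e^{\rmi\frac{\tau}{\eps}\dG[\eps]{A}}U_2(\tau,s)=\tilde U_2(\tau,s)\,\e^{\rmi\frac{s}{\eps}\dG[\eps]{A}}$, the factor $\e^{\rmi\frac{\tau}{\eps}\dG[\eps]{A}}$ in front of $\partial_\tau\tilde U_2(\tau,s)$ in your Duhamel identity should be dropped.
\end{itemize}
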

	
	\smallskip\noindent
	In the case of the $\ppde$ model, we obtained a bound on $\norm{U_2(t,s)}_{\mathcal{L}(\mathcal{D}_{+,k})}$ for all $k \in \N^*$, see \eqref{estimee_norme_propag_quad_propagation_etats_coherents}. We give below a refinement of this bound that will be useful to prove Theorem \ref{thm.general_asymptotique_complete}.
	
	\begin{prop}[{\cite[Proposition 4.5]{AmZe14}}]
		With the notations of Theorem \ref{thm.dynamique_quad_applique_à_notre_potentiel}, for all $\lambda > 1$, there exists $c > 0$ such that, for all $k \in \N^*$, the inequality
		\begin{multline}\label{proposition_4.5_Ammari_Zerzeri_propagation_etats_coherents}
		 \norm{(\nbop[1]+1)^{k/2} \, U_2(t,s)\, \psi} \leq \e^{\sqrt{2} \, k\, \lambda^k \, \abs{\int_s^t \norm{V_2(\tau)} \dd \tau}} \times\\ 
		 \left(k\,c^k\, \abs{\int_s^t \norm{V_2(\tau)} \dd \tau} \norm{\psi}^2 + \norm{(\nbop[1]+1)^{k/2}\, \psi}^2\right)^{1/2}
		\end{multline}
		holds for any $\psi \in \mathcal{D}_{+,k}$.
	\end{prop}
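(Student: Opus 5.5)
The estimate will come from a Grönwall argument on the quadratic form $u_k(t)=\norm{(\nbop[1]+1)^{k/2}\,\tilde U_2(t,s)\psi}^2$. Since $U_2(t,s)=\e^{-\rmi\frac{t}{\eps}\dG[\eps]{A}}\tilde U_2(t,s)\e^{\rmi\frac{s}{\eps}\dG[\eps]{A}}$ and $\dG[\eps]{A}$ commutes with $\nbop[1]$, it suffices to bound $\tilde U_2$. Take $\psi\in\mathcal{D}_{+,k}$, first in a dense set such as $\focksalg$, and extend by density at the end. Theorem \ref{thm.quadratic_dynamics_Ammari_Breteaux_chaospropag} gives $\tilde U_2(\cdot,s)\psi\in C^0(I,\mathcal{D}_{+,k})\cap C^1(I,\mathcal{D}_{-,k})$, so we can differentiate $u_k$ as in the proof of Corollary \ref{cor.dynamique_quad_modele_Pphi2}:
\[
\dot u_k(t)=\frac{1}{\rmi\,\eps}\braket{\tilde U_2(t,s)\psi}{\comm{(\nbop[1]+1)^k}{\Wick[\eps]{F_{\tilde V_2(t)}^{\conjug_t}}}\tilde U_2(t,s)\psi}.
\]

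Next we evaluate the commutator with the decomposition \eqref{decomp_symbole_quad_tordu}. The term $b_{1,1}$ commutes with $\nbop[1]$, so only the $b_{2,0}$ and $b_{0,2}$ pieces contribute. Each carries the factor $\big((n+3)^k-(n+1)^k\big)\sqrt{(n+1)(n+2)}$ and has $\norm{\tilde b}\le\norm{V_2(t)}/\sqrt2$, since $\tilde V_2$ is a unitary rotation of $V_2$. This gives
\[
\abs{\dot u_k}\le \sqrt2\,\norm{V_2(t)}\sum_n \big((n+3)^k-(n+1)^k\big)\sqrt{(n+1)(n+2)}\,\norm{\varphi^{(n)}}\norm{\varphi^{(n+2)}},\qquad \varphi=\tilde U_2(t,s)\psi.
\]

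The refinement over \eqref{estimee_norme_propag_quad_propagation_etats_coherents} is in how this sum is split. Instead of bounding crudely by $k3^{k/2}(n+1)^k$, fix $\lambda>1$ and pick $n_0=n_0(\lambda)$ with $\big(\frac{n+3}{n+1}\big)^k\le\lambda^{k}$ for $n\ge n_0$. Because $n_0$ must work for every $k$, take $n_0$ with $\frac{n_0+3}{n_0+1}\le\lambda$, which is independent of $k$.
\begin{itemize}
\item For $n\ge n_0$: use $(n+3)^k-(n+1)^k\le k(n+3)^{k-1}\cdot 2$ and $\sqrt{(n+1)(n+2)}\le (n+3)$. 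Split $(n+3)^k$ as $\lambda^{k}(n+1)^{k/2}(n+3)^{k/2}$ up to the ratio bound, then apply Cauchy--Schwarz. The result is at most $\sqrt2\,k\lambda^k\norm{V_2}\,u_k$, after absorbing constants into the choice of $\lambda$; concretely, replace $\lambda$ by $\lambda^{1/2}$ at the start.
\item For $n<n_0$: the weights are bounded by $k\,c^k$ with $c$ depending only on $n_0$, hence only on $\lambda$. Since $\norm{\varphi}=\norm{\psi}$ by unitarity, this part contributes at most $k\,c^k\norm{V_2(t)}\norm{\psi}^2$.
\end{itemize}

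Combining the two ranges,
\[
\abs{\dot u_k(t)}\le \sqrt2\,k\lambda^k\norm{V_2(t)}\,u_k(t)+k\,c^k\norm{V_2(t)}\norm{\psi}^2 .
\]
The integral form of Grönwall then gives
\[
u_k(t)\le \e^{2\sqrt2 k\lambda^k\abs{\int_s^t\norm{V_2}}}\Big(u_k(s)+k c^k\abs{\textstyle\int_s^t\norm{V_2}}\norm{\psi}^2\Big),
\]
using $\e^{x}\ge1$ to pull the exponential out of the inhomogeneous term. Taking square roots yields \eqref{proposition_4.5_Ammari_Zerzeri_propagation_etats_coherents}. Continuity of $\tau\mapsto\norm{V_2(\tau)}$ on $I$ follows as in the $\ppde$ case: by \eqref{majoration_difference_puissances_produit_tensoriel} and the assumption $V\in D(\liftop{\sqrt2})$, the series \eqref{def_time_dependent_quadratic_potential} converges uniformly on compact subsets of $I$.

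The main obstacle is getting the exponent with $\lambda^k$ for an arbitrary $\lambda>1$, uniformly in $k$, rather than $3^{k/2}$. This is exactly the low/high particle-number splitting above, and the bookkeeping must keep $n_0$ independent of $k$ so that the low-$n$ constant has the form $k c^k$. A secondary technical point is justifying the differentiation of $u_k$ for $\psi\in\mathcal{D}_{+,k}$. This is handled in the sense of quadratic forms using the regularity classes from Theorem \ref{thm.quadratic_dynamics_Ammari_Breteaux_chaospropag}, or by approximating $(\nbop[1]+1)^k$ with bounded cutoffs $(\nbop[1]+1)^k(1+\eta\nbop[1])^{-k}$ and letting $\eta\to0$.
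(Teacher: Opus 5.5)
Your proof is correct and takes essentially the same route as the cited result. It is the Grönwall/commutator computation from the proof of Corollary \ref{cor.dynamique_quad_modele_Pphi2} and \eqref{estimee_forme_quad_propag_quad_Pphi2}, with the crude ratio bound $(n+3)/(n+1)\le 3$ replaced by a split at a $k$-independent threshold $n_0(\lambda)$, where the low modes produce the $k\,c^k\norm{\psi}^2$ term. There is one bookkeeping slip: the high-$n$ part is really bounded by $2\sqrt{2}\,k\,\lambda^{k/2}\norm{V_2(t)}\,u_k$, and the factor $2$ coming from $(n+3)^k-(n+1)^k\sim 2k\,n^{k-1}$ cannot be absorbed into $\lambda^k$ (take $k=1$, $\lambda<2$), so your displayed differential inequality should carry $2\sqrt{2}\,k\lambda^k$ --- which is exactly the rate your Grönwall step already uses, so the conclusion stands.
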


 Eventually, Theorem \ref{thm.Breteaux_conjug_Wick_propag_quad} still holds with $\omega$ replaced by the self-adjoint operator fulfilling~\ref{hyp.A1}.
	
	\subsection{Some auxiliary results}\label{subsect.some_auxiliary}
	
	In order to prove Theorem \ref{thm.general_asymptotique_complete}, we need to strongly differentiate the expression \eqref{premiere_apparition_Theta(t)} on a dense subspace. In the non-polynomial case, the main difficulty is to identify a convenient domain for the strong differentiation of the operators 
	\[
	B_t \defegal \e^{\rmi\frac{t}{\eps}H_\eps}\, \e^{-\rmi\frac{t}{\eps} \dG[\eps]{A}}
	\]
	appearing in \eqref{reecriture_Theta(t)}, while preserving the invariance of such domains by the Weyl operators and the unitary propagator in Corollary \ref{cor.dynamique_quad_modele_Pphi2}. In this case, we observe that the number estimates \eqref{number_estimate_Wick_poly} do not apply anymore. We start by stating some auxiliary results.
	
	\begin{prop}[\cite{AmZe14}]\label{prop.differentiabilite_premier_op_unitaire_dans_Theta(t)}
		Let $V = \Big(V^{(j)}\Big)_{j \in \N} \in \focks$ such that $\liftop{\conjug}V = V$ fulfilling \ref{hyp.dom_superexpo} with constants $\alpha > 0$, $\lambda > 1$. Let 
		\begin{equation}\label{serie_entiere_coeff_exponentiels_cas_analytique}
			S_{\alpha_0}(x) = \sum_{k = 0}^\infty \e^{-\alpha_0 \, \lambda^k} x^k 
		\end{equation} with $0 < \alpha_0 < \displaystyle\frac{2\alpha}{\lambda^2}$. Then, there exists $C' > 0$ such that, for all $\eps \in (0,1]$, 
		\begin{equation}\label{Wick_Fv_est_borne_sur_domaine_racine_serie_entiere}
			\norm{\Wick[\eps]{F_V} \mathcal{\bf S}_{\alpha_0}^{-1/2}} \leq C' \norm{\e^{\alpha \lambda^{\nbop[1]}}\liftop{\sqrt{\eps}}V}\,,
		\end{equation}
		where $\mathcal{\bf S}_{\alpha_0}$ is defined by \eqref{fct_analytique_op_nombre}.
		In particular, $B_t \defegal\e^{\frac{\rmi t}{\eps}H_\eps}\e^{-\rmi\frac{t}{\eps}\dG[\eps]{A}}$ in differentiable on ${\mathcal{D}_{\alpha_0} = D\left(\mathcal{\bf S}_{\alpha_0}^{1/2}\right)}$ with the formula 
		\begin{equation}\label{derivee_forte_propag_perturbation_bis}
			\dot{B}_t = \frac{\rmi}{\eps}\, B_t \, \Wick[\eps]{F_{V(t)}^{\conjug_t}}\,.
		\end{equation}
	\end{prop}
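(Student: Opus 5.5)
The estimate \eqref{Wick_Fv_est_borne_sur_domaine_racine_serie_entiere} will come from Proposition~\ref{prop.3.17_ammari_zerzeri} applied with the entire function $S=S_{\alpha_0}$, whose coefficients $a_k=\e^{-\alpha_0\lambda^k}$ are positive. With $\psi=\mathcal{\bf S}_{\alpha_0}^{-1/2}\phi$ we have $\norm{\mathcal{\bf S}_{\alpha_0}^{1/2}\psi}=\norm{\phi}$ and $\norm{\psi}\leq \e^{\alpha_0\lambda^0/2}\norm{\phi}$, since $\mathcal{\bf S}_{\alpha_0}\geq a_0$. The first term $2\norm{\liftop{\sqrt\eps}V}\norm{\psi}$ is then trivially bounded by a constant times $\norm{\e^{\alpha\lambda^{\nbop[1]}}\liftop{\sqrt\eps}V}$. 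The second term requires controlling
\[
\sum_{k\geq0}\frac{(\lambda_1\eps)^k}{a_{k+2}}\norm{V^{(k)}}^2=\sum_{k\geq 0}\lambda_1^k\,\e^{\alpha_0\lambda^{k+2}}\,\eps^k\norm{V^{(k)}}^2 .
\]
We fix some $\lambda_1>8e$ and compare this with $\norm{\e^{\alpha\lambda^{\nbop[1]}}\liftop{\sqrt\eps}V}^2=\sum_k \e^{2\alpha\lambda^k}\eps^k\norm{V^{(k)}}^2$. Since $\alpha_0\lambda^2<2\alpha$, the ratio $\lambda_1^k\e^{(\alpha_0\lambda^2-2\alpha)\lambda^k}$ tends to $0$ super-exponentially. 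It is therefore bounded by a constant $C_0$ depending only on $\alpha,\alpha_0,\lambda,\lambda_1$, and crucially not on $\eps$, because the same power $\eps^k$ appears on both sides. This yields \eqref{Wick_Fv_est_borne_sur_domaine_racine_serie_entiere} on $\focksalg$, hence on $\mathcal{D}_{\alpha_0}$ by closedness of $\Wick[\eps]{F_V}$. Finiteness of the right-hand side follows from \ref{hyp.dom_superexpo}, since $\eps\leq1$.

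For the differentiability of $B_t$ I would mimic \cite[Lemma 3.1]{GiVe79}, as in the $\ppde$ case. The subspace $\mathcal{D}_{\alpha_0}$ is invariant under $\e^{-\rmi\frac{t}{\eps}\dG[\eps]{A}}$, because this operator commutes with $\nbop[1]$. Moreover $\e^{\rmi\frac{t}{\eps}\dG[\eps]{A}}\Wick[\eps]{F_V}\e^{-\rmi\frac{t}{\eps}\dG[\eps]{A}}=\Wick[\eps]{F_{V(t)}^{\conjug_t}}$, which I would obtain from \eqref{relation_Wick_poly_second_quantifie} applied to truncations $V_\kappa$ and a limit argument using the bound just proved. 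Since $\mathcal{\bf S}_{\alpha_0}$ is a function of $\nbop[1]$, the bound is insensitive to this conjugation.

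For $\phi\in\mathcal{D}_{\alpha_0}\cap D(\dG[\eps]{A})$, the vector $\e^{-\rmi\frac{t}{\eps}\dG[\eps]{A}}\phi$ lies in $D(\dG[\eps]{A})\cap D(\Wick[\eps]{F_V})\subset D(H_\eps)$ by Theorem~\ref{thm.hamiltonien_CSdyn_essentiellement_autoadjoint}. Differentiating the product then gives $\frac{\rmi}{\eps}\e^{\rmi\frac t\eps H_\eps}(H_\eps-\dG[\eps]{A})\e^{-\rmi\frac t\eps\dG[\eps]{A}}\phi$, which is exactly \eqref{derivee_forte_propag_perturbation_bis}.

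To extend the formula to all of $\mathcal{D}_{\alpha_0}$, I would integrate to obtain
\[
B_t\phi-\phi=\frac{\rmi}{\eps}\int_0^t B_s\Wick[\eps]{F^{\conjug_s}_{V(s)}}\phi\,\dd s .
\]
Both sides are continuous in the $\mathcal{D}_{\alpha_0}$ graph norm by \eqref{Wick_Fv_est_borne_sur_domaine_racine_serie_entiere}, so the identity passes to the limit by density. Strong continuity of the integrand in $s$ then gives strong differentiability.

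The main obstacle is this density step. I need $\mathcal{D}_{\alpha_0}\cap D(\dG[\eps]{A})$, for instance $\focksalg\cap D(\dG[\eps]{A})$ finite-particle vectors, to be dense in $\mathcal{D}_{\alpha_0}$ for the norm $\norm{\mathcal{\bf S}_{\alpha_0}^{1/2}\cdot}$. I would handle this by spectral cutoffs in $\nbop[1]$ combined with $(1+\delta\dG[\eps]{A})^{-1}$ regularization, both of which commute with $\nbop[1]$.
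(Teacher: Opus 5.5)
Your proof is correct and follows the paper's route. The bound comes from Proposition~\ref{prop.3.17_ammari_zerzeri} with $a_k=\e^{-\alpha_0\lambda^k}$ and the coefficientwise comparison $\lambda_1^k\e^{(\alpha_0\lambda^2-2\alpha)\lambda^k}\leq C_0$. The differentiability of $B_t$ is the Ginibre--Velo argument: the paper only cites \cite[Lemma 3.1]{GiVe79}, and you spell it out (differentiate on $\mathcal{D}_{\alpha_0}\cap D(\dG[\eps]{A})$, integrate, extend by density in the $\mathcal{\bf S}_{\alpha_0}^{1/2}$-graph norm). In the density step the spectral cutoffs in $\nbop[1]$ are not needed: $(1+\delta\,\dG[\eps]{A})^{-1}$ alone suffices, since it is a contraction into $D(\dG[\eps]{A})$, commutes with $\mathcal{\bf S}_{\alpha_0}^{1/2}$, and converges strongly to the identity.
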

	
	\begin{proof}
		Let $0 < \alpha_0 < \displaystyle\frac{2 \alpha}{\lambda^2}$. Applying Proposition \ref{prop.3.17_ammari_zerzeri} with $a_k = \e^{-\alpha_0 \, \lambda^k}$ and $\lambda_1 > 8\e$ fixed, we obtain, for all $\psi \in \mathcal{D}_{\alpha_0}$ and $\eps \in (0,1/3]$, 
		\begin{align*}
			\norm{\Wick[\eps]{F_V}\psi} \leq &\,2 \norm{\liftop{\sqrt{\eps}}V} \norm{\psi} \\
			\phantom{\leq \, 2\,}&+ C \left(\sum_{j = 0}^\infty \e^{(\alpha_0 \, \lambda^2) \lambda^j} (\lambda_1 \eps)^j \norm{V^{(j)}}^2\right)^{1/2} \norm{\mathcal{\bf S}_{\alpha_0}^{1/2} \, \psi}\\
			\leq &\,2 \norm{\liftop{\sqrt{\eps}}V} \norm{\psi} + C_1 \norm{\e^{\alpha \lambda^{\nbop[1]}} \liftop{\sqrt{\eps}} V} \norm{\mathcal{\bf S}_{\alpha_0}^{1/2} \, \psi}\,.
		\end{align*}
		Since the operator $\mathcal{\bf S}_{\alpha_0}^{-1/2} : \focks \longrightarrow \mathcal{D}_{\alpha_0}$ is bounded, we obtain \eqref{Wick_Fv_est_borne_sur_domaine_racine_serie_entiere} by setting ${\psi' = \mathcal{\bf S}_{\alpha_0}^{1/2} \, \psi}$. The proof of the differentiability of $B_t$ is very similar to the one of \cite[Lemma 3.1]{GiVe79}, using \eqref{Wick_Fv_est_borne_sur_domaine_racine_serie_entiere}. 
	\end{proof}

	\begin{prop}\label{prop.op_Weyl_preserve_presque_domaine_racine_serie_entiere}
		For all $u \in \hi$ and $0 < \alpha_1 < \alpha_0$, the operator 
		\[
		\mathcal{\bf S}_{\alpha_0}^{1/2} \, \weylop[\eps]{u} \mathcal{\bf S}_{\alpha_1}^{-1/2}
		\] is bounded.
	\end{prop}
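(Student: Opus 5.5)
Since $\mathcal{\bf S}_{\alpha}$ is a function of $\nbop[1]$, both $\mathcal{\bf S}_{\alpha_0}^{1/2}$ and $\mathcal{\bf S}_{\alpha_1}^{-1/2}$ act diagonally on the $n$-particle sectors, with eigenvalues $s_{\alpha_0}(n)^{1/2}$ and $s_{\alpha_1}(n)^{-1/2}$, where $s_\alpha(n)=\sum_k \e^{-\alpha\lambda^k}(n+1)^k$. The plan is to control the matrix elements of $\weylop[\eps]{u}$ between particle sectors, $\norm{\mathbf{1}_{\{m\}}(\nbop[1])\weylop[\eps]{u}\mathbf{1}_{\{n\}}(\nbop[1])}$, with Gaussian-type decay in $|m-n|$. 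Then we show that the ratio $s_{\alpha_0}(m)^{1/2}/s_{\alpha_1}(n)^{1/2}$ grows at most like $\e^{C|m-n|\log(m+n+2)}$, which the Weyl matrix elements beat. A Schur test on the resulting infinite matrix then gives boundedness.

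\textbf{Step 1 (asymptotics of $s_\alpha$).} Optimizing $k\log(n+1)-\alpha\lambda^k$ over $k$ gives $\log s_\alpha(n)=\Phi_\alpha(\log(n+1))+O(\log\log(n+2))$. Here $\Phi_\alpha(y)=\sup_k(ky-\alpha\lambda^k)$, which behaves like $\frac{y}{\log\lambda}\log\frac{y}{\alpha\log\lambda}-\frac{y}{\log\lambda}$. So $\log s_\alpha(n)\sim\frac{\log(n+1)\log\log(n+1)}{\log\lambda}$, and lowering $\alpha$ increases $s_\alpha$ by a factor of order $(n+1)^{c(\alpha_0,\alpha_1)}$ with $c>0$: indeed $s_{\alpha_1}(n)\geq s_{\alpha_0}(n)$, and for $n$ large the ratio is at least polynomial. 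More importantly, the map $y\mapsto\Phi_\alpha(y)$ has derivative equal to the maximizing index $k^*(y)\approx\log_\lambda\frac{y}{\alpha\log\lambda}$. This gives $\log s_{\alpha_0}(m)-\log s_{\alpha_1}(n)\leq k^*\,\log\frac{m+1}{n+1}-c'\log(n+1)+O(\log\log)$. In words, the loss in moving from sector $n$ to sector $m$ is at most $(\frac{m+1}{n+1})^{C\log\log(m+n+2)}$, while the gap $\alpha_0>\alpha_1$ yields a gain of at least a power of $(n+1)$.

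\textbf{Step 2 (Weyl matrix elements).} Write $\weylop[\eps]{u}=\e^{-\eps|u|^2/4}\e^{\frac{\rmi}{\sqrt2}\crea[\eps]{u}}\e^{\frac{\rmi}{\sqrt2}\anni[\eps]{u}}$ on $\focksalg$. Using $\norm{\anni[\eps]{u}^p\mathbf{1}_{\{n\}}}\leq(\eps|u|^2)^{p/2}\sqrt{n!/(n-p)!}$ and the analogous bound for creators, the block from sector $n$ to sector $m$ is bounded by
\[
\sum_{p-q=n-m}\frac{(\eps|u|^2/2)^{(p+q)/2}}{p!\,q!}\sqrt{\frac{n!}{(n-p)!}}\sqrt{\frac{m!}{(m-q)!}}\,.
\]
This is at most $\sum \frac{(c\sqrt{n+1})^{p+q}}{p!q!}$. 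Hence $|m-n|$ is effectively $O(\sqrt{n})$, with Gaussian tails $\e^{-c|m-n|^2/(n+1)}$ beyond that scale; obtaining these tails requires splitting the sum over $p$. For the Schur test, I will use the weighted sum $\sum_m\norm{\text{block}_{mn}}\,(\frac{m+1}{n+1})^{C\log\log}$. In the region $|m-n|\lesssim\sqrt{n}\log n$ the weight is $\leq \e^{C\log n\cdot\log\log n/\sqrt n}\approx 1$, and outside it the Gaussian decay dominates.

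\textbf{Main obstacle.} The crude bound above does not by itself give summable blocks uniformly in $n$. The natural fix is to bypass sector-by-sector estimates and instead prove boundedness of $(\nbop[1]+1)^{k/2}\weylop[\eps]{u}(\nbop[1]+1+c_u)^{-k/2}$ with an explicit constant, e.g. $\leq(1+C|u|)^k$ up to factors $k^{k/2}$, uniformly in $\eps\leq1$ (presumably via the number estimates of Appendix~\ref{app.weyl_op_nb_domain}). Summing this with the weights $\e^{-\alpha_0\lambda^k}$ yields $\norm{\mathcal{\bf S}_{\alpha_0}^{1/2}\weylop[\eps]{u}\psi}^2\leq\sum_k\e^{-\alpha_0\lambda^k}C^kk^k\norm{(\nbop[1]+1+c_u)^{k/2}\psi}^2$. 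Since $\e^{-\alpha_0\lambda^k}C^kk^k\leq C''\e^{-\alpha_1\lambda^k}$ (the super-exponential gap $\e^{-(\alpha_0-\alpha_1)\lambda^k}$ absorbs $C^kk^k$), and the shift $c_u$ is handled by $(n+1+c_u)^k\leq(1+c_u)^k(n+1)^k$, this is bounded by $C''\norm{\mathcal{\bf S}_{\alpha_1}^{1/2}\psi}^2$. This is the route I would actually pursue. The hard step is establishing the explicit $k$-dependence $C^kk^{k}$ (or anything sub-super-exponential) of the $\mathcal{D}_{+,k}$ norm of $\weylop[\eps]{u}$, which I would derive by commuting $\nbop[1]$ through $\weylop[\eps]{u}$: $\weylop[\eps]{u}^*\nbop[1]\weylop[\eps]{u}=\nbop[1]+\sqrt{\eps}\,\phi_1$-type term $+\,O(\eps|u|^2)$, followed by a binomial expansion of the $k$-th power.
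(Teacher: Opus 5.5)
Your proposal is correct, and the route you say you would actually pursue (the paragraph ``Main obstacle'') is essentially the paper's proof. The paper applies the Number--Weyl bound of Theorem~\ref{thm.op_de_Weyl_preserve_domaines_puissances_nombre} with $z=k/2$, namely $\norm{(\nbop[1]+1)^{k/2}\weylop[\eps]{u}(\nbop[1]+1)^{-k/2}}\leq M(\normhi{u})\normhi{u}^{k/2}2^{k^2/4}$, sums termwise against $\e^{-\alpha_0\lambda^k}$, and absorbs the $2^{O(k^2)}$ growth into the gap $\e^{-(\alpha_0-\alpha_1)\lambda^k}$, with no shift $c_u$ and no sharper $C^kk^{k}$ constant needed. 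Your Schur-test Steps 1--2 can be dropped; if you re-derive the $k$-dependence yourself by commutator expansion, you must still handle odd $k$ (half-integer powers), which the paper does by Hadamard three-lines interpolation.
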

	
	\begin{proof}
		Let $u \in \hi$ be fixed. Let $n \in \N$ and $\psi^{(n)} \in \tensymn{\hi}{n}$. For all $k, \ell \in \N$, we have 
		\[
		\frac{1}{k!} \norm{(\nbop[1]+1)^\ell \fieldop[\eps]{u}^k \psi^{(n)}} \leq \frac{\norm{\psi^{(n)}}}{n!}\frac{\sqrt{(n+k)!}}{k!} \Big(\sqrt{2 \eps}\, \normhi{u}\Big)^k (n+k+1)^{\frac{\ell}{2}}\,,
		\] 
		whence 
		\[
		\sum_{k=0}^\infty \frac{1}{k!} \norm{\mathcal{\bf S}_{\alpha_0}\, \fieldop[\eps]{u}^k\, \psi^{(n)}} < +\infty\,.
		\] 
		We deduce that $\wickbigcore \subset D\left(\mathcal{\bf S}_{\alpha_0}\right)$. This enables us to write, for all $\psi \in \wickbigcore$, 
		\begin{align*}
			\norm{\mathcal{\bf S}_{\alpha_0}^{1/2} \, \weylop[\eps]{u} \psi}^2 = &\braket{\weylop[\eps]{u} \psi}{\mathcal{\bf S}_{\alpha_0} \weylop[\eps]{u} \psi}\\
			= &\sum_{k = 0}^\infty \e^{-\alpha_0 \, \lambda^k} \norm{
			(\nbop[1]+1)^{k/2} \, \weylop[\eps]{u} \psi}^2\\
			\leq &\sum_{k = 0}^\infty \e^{-\alpha_0 \, \lambda^k}\, M(\normhi{u})\, \normhi{u}^{\frac{k}{2}} 2^{\frac{k^2}{4}}\, \norm{(\nbop[1]+1)^{k/2} \, \psi}^2\\
			\leq &C\Big(\alpha_0,\alpha_1,\lambda,\normhi{u}\Big) \sum_{k = 0}^\infty \e^{-\alpha_1 \lambda^k} \norm{(\nbop[1]+1)^{k/2} \, \psi}^2\\
			= &C\Big(\alpha_0,\alpha_1,\lambda,\normhi{u}\Big) \norm{\mathcal{\bf S}_{\alpha_1}^{1/2}\, \psi}^2\,,
		\end{align*}
		where we have used \eqref{controle_norme_methode_commut} for the first inequality. This inequality extends to $\psi \in \mathcal{D}_{\alpha_1}$, which concludes the proof.
	\end{proof}
	
	\begin{prop}[{\cite[proof of Lemma~4.9]{AmZe14}}]\label{propagateur_quad_envoie_Fock_algebrique_dans_domaine_racine_serie_entiere}
		For all $\psi \in \focksalg$ and $t \in I$, we have ${U_2(t,0) \, \psi \in \mathcal{D}_{\alpha_0}}$ for all $\alpha_0 > 0$. Furthermore, we have 
		\begin{equation}\label{image_fock_algebrique_propag_quad_inclus_dans_domaine_puissance_serie_entiere}
			\norm{\mathcal{\bf S}_{\alpha_0}^{1/2} \, U_2(t,0) \, \psi} \leq \left(c \, \mathrm{g}_t'(c)\, \int_0^{\abs{t}} \norm{V_2(\tau)} \dd \tau \cdot \norm{\psi}^2 + \norm{\mathrm{g}_t(\nbop[1]+1)^{1/2}\, \psi}^2\right)^{1/2},
		\end{equation}
		with $c > 0$,
		\begin{equation}\label{def_serie_entiere_dependant_du_temps_pour_majoration_cruciale_asymptotique_complete_QFT}
			\mathrm{g}_t(r) = \sum_{k = 0}^\infty \e^{-\alpha_0 \, \lambda^k} \e^{2 \, \sqrt{2} \, k\, \lambda_0^k \int_0^{\abs{t}} \norm{V_2(\tau)} \dd \tau}\, r^k\,,\quad 1 < \lambda_0 < \lambda\,,
		\end{equation} and 
		\[
		\mathrm{g}_t'(r) = \dv{r}\, \mathrm{g}_t(r)\,.
		\]
	\end{prop}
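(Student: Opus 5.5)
The natural route is to sum the weighted $k$-th moment bounds \eqref{proposition_4.5_Ammari_Zerzeri_propagation_etats_coherents} against the super-exponentially decaying weights $\e^{-\alpha_0\lambda^k}$ that define $\mathcal{\bf S}_{\alpha_0}$. Fix $\psi \in \focksalg$ (say $\psi$ supported in particle sectors $\leq M$), $t \in I$, and set $\Lambda_t = \int_0^{\abs{t}}\norm{V_2(\tau)}\dd\tau$, which is finite because $\tau \mapsto \norm{V_2(\tau)}$ is continuous on compacts of $I$ (the same argument as in the $\ppde$ case, now using $V \in D(\liftop{\sqrt2})$ or \ref{hyp.dom_superexpo} to sum over $j$). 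Since $\psi \in \mathcal{D}_{+,\infty}$ and $U_2(t,0)$ preserves each $\mathcal{D}_{+,k}$, we have, by monotone convergence in the spectral representation of $\nbop[1]$,
\[
\norm{\mathcal{\bf S}_{\alpha_0}^{1/2}U_2(t,0)\psi}^2=\sum_{k=0}^\infty \e^{-\alpha_0\lambda^k}\norm{(\nbop[1]+1)^{k/2}U_2(t,0)\psi}^2 ,
\]
so it suffices to bound this series. Membership in $\mathcal{D}_{\alpha_0}=D(\mathcal{\bf S}_{\alpha_0}^{1/2})$ then follows from its finiteness.

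For each $k\geq 1$, apply \eqref{proposition_4.5_Ammari_Zerzeri_propagation_etats_coherents} with $s=0$ and with an intermediate $\lambda_0\in(1,\lambda)$ in place of $\lambda$; this choice matches the statement. Squaring gives
\[
\norm{(\nbop[1]+1)^{k/2}U_2(t,0)\psi}^2\leq \e^{2\sqrt2\,k\lambda_0^k\Lambda_t}\Big(k\,c^k\Lambda_t\norm{\psi}^2+\norm{(\nbop[1]+1)^{k/2}\psi}^2\Big).
\]
Multiply by $\e^{-\alpha_0\lambda^k}$ and sum over $k$. The second contribution is exactly $\sum_k \e^{-\alpha_0\lambda^k}\e^{2\sqrt2 k\lambda_0^k\Lambda_t}\norm{(\nbop[1]+1)^{k/2}\psi}^2=\norm{\mathrm{g}_t(\nbop[1]+1)^{1/2}\psi}^2$ by the spectral theorem. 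The first contribution is $\Lambda_t\norm{\psi}^2\sum_k k\,\e^{-\alpha_0\lambda^k}\e^{2\sqrt2k\lambda_0^k\Lambda_t}c^k$. Writing $k c^k = c\cdot k c^{k-1}$ identifies this sum with $c\,\mathrm{g}_t'(c)$, which gives \eqref{image_fock_algebrique_propag_quad_inclus_dans_domaine_puissance_serie_entiere}. The term $k=0$ is trivial since $U_2$ is unitary.

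The main obstacle is showing that $\mathrm{g}_t$ is entire, so that both $\mathrm{g}_t'(c)$ and $\norm{\mathrm{g}_t(\nbop[1]+1)^{1/2}\psi}$ are finite. The coefficients are $\exp\big(-\alpha_0\lambda^k+2\sqrt2\Lambda_t k\lambda_0^k\big)$. Because $\lambda_0<\lambda$, we have $k\lambda_0^k=o(\lambda^k)$, so for $k$ large the exponent is at most $-\tfrac{\alpha_0}{2}\lambda^k$. Since $\lambda^k$ grows faster than any multiple of $k$, $\e^{-\frac{\alpha_0}{2}\lambda^k}r^k$ is summable for every $r$; thus $\mathrm{g}_t$ is entire, as is its derivative. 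For $\psi\in\focksalg$ supported in sectors $\leq M$, we get $\norm{\mathrm{g}_t(\nbop[1]+1)^{1/2}\psi}^2\leq \mathrm{g}_t(M+1)\norm{\psi}^2<\infty$.

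The only remaining subtlety is the uniformity of the constant $c$ in \eqref{proposition_4.5_Ammari_Zerzeri_propagation_etats_coherents} with respect to $k$. That proposition provides it for each fixed $\lambda_0>1$, which is why $\lambda_0$ is introduced strictly between $1$ and $\lambda$. Since $\alpha_0>0$ was arbitrary, the conclusion holds for all $\alpha_0>0$.
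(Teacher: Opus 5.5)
Your proposal is correct and follows the paper's own argument. You square the moment bound \eqref{proposition_4.5_Ammari_Zerzeri_propagation_etats_coherents} with an intermediate $\lambda_0\in(1,\lambda)$, weight by $\e^{-\alpha_0\lambda^k}$, sum over $k$ to obtain $c\,\mathrm{g}_t'(c)$ and $\norm{\mathrm{g}_t(\nbop[1]+1)^{1/2}\psi}^2$, and use $\psi\in\focksalg$ for finiteness; your explicit check that $\mathrm{g}_t$ is entire (via $k\lambda_0^k=o(\lambda^k)$) and your bound $\mathrm{g}_t(M+1)\norm{\psi}^2$ spell out details the paper leaves implicit.
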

	
	\begin{proof}
		This result follows from the estimate \eqref{proposition_4.5_Ammari_Zerzeri_propagation_etats_coherents} applied with $1 < \lambda_0 < \lambda$.
		
\medskip\noindent
		The right-hand side of \eqref{propagateur_quad_envoie_Fock_algebrique_dans_domaine_racine_serie_entiere} is finite: indeed, if $\psi \in \displaystyle\bigoplus_{n=0}^m \tensymn{\hi}{n}$, writing ${\mathrm{g}_t(r) = \displaystyle\sum_{k=0}^\infty a_k(t)\, r^k}$, we have 
		\[
		\sum_{k=0}^\infty a_k(t) \norm{(\nbop[1]+1)^{k/2} \, \psi}^2 \leq \sum_{k=0}^\infty a_k(t)\, m^k\, \norm{\psi}^2 < + \infty\,.
		\] 
	\end{proof}
	
	\begin{cor}[{\cite[Lemma 4.9]{AmZe14}}]
	 For all $\psi \in \focksalg$ and $t \in I$, the vector $U_2(t,0) \, \psi$ belongs to $D\left(\Wick[\eps]{F_V}\right)$ with an estimate of the form 	
	 \begin{multline}\label{inegalite_cruciale_pour}
			\norm{\Wick[\eps]{F_V} \, U_2(t,0) \, \psi} \leq C \norm{\e^{\alpha \lambda^{\nbop[1]}} \liftop{\sqrt{\eps}} V}\times\\
			\left(\norm{\Big(\mathrm{g}_t(\nbop[1]+1)\Big)^{1/2}\, \psi}^2 + c \, \mathrm{g}_t'(c) \int_0^{\abs{t}} \norm{V_2(\tau)} \dd \tau \norm{\psi}^2\right)^{1/2}
		\end{multline}
		for all $\eps \in (0,1/3]$ and $\psi \in \focksalg$, where $\mathrm{g}_t$ is given by \eqref{def_serie_entiere_dependant_du_temps_pour_majoration_cruciale_asymptotique_complete_QFT}.
	\end{cor}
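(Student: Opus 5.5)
The corollary follows by chaining Proposition \ref{prop.differentiabilite_premier_op_unitaire_dans_Theta(t)} and Proposition \ref{propagateur_quad_envoie_Fock_algebrique_dans_domaine_racine_serie_entiere}. Fix $\psi\in\focksalg$, $t\in I$ and $\eps\in(0,1/3]$, and choose $\alpha_0$ with $0<\alpha_0<2\alpha/\lambda^2$ as required in Proposition \ref{prop.differentiabilite_premier_op_unitaire_dans_Theta(t)}. Since Proposition \ref{propagateur_quad_envoie_Fock_algebrique_dans_domaine_racine_serie_entiere} holds for every $\alpha_0>0$, the same $\alpha_0$ can be used in both results, and we first obtain $\phi\defegal U_2(t,0)\psi\in\mathcal{D}_{\alpha_0}=D\big(\mathcal{\bf S}_{\alpha_0}^{1/2}\big)$ together with the bound \eqref{image_fock_algebrique_propag_quad_inclus_dans_domaine_puissance_serie_entiere}.

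Next we show $\phi\in D(\Wick[\eps]{F_V})$. Estimate \eqref{Wick_Fv_est_borne_sur_domaine_racine_serie_entiere} states that $\Wick[\eps]{F_V}\mathcal{\bf S}_{\alpha_0}^{-1/2}$ extends to a bounded operator with norm at most $C'\norm{\e^{\alpha\lambda^{\nbop[1]}}\liftop{\sqrt{\eps}}V}$. Since $\Wick[\eps]{F_V}$ is closed, we write $\phi=\mathcal{\bf S}_{\alpha_0}^{-1/2}\chi$ with $\chi=\mathcal{\bf S}_{\alpha_0}^{1/2}\phi\in\focks$. We then approximate $\chi$ by $\chi_m\in\focksalg$. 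The vectors $\mathcal{\bf S}_{\alpha_0}^{-1/2}\chi_m$ stay in $\focksalg$ because $\mathcal{\bf S}_{\alpha_0}$ acts diagonally on particle sectors, and $\focksalg$ is a core for $\Wick[\eps]{F_V}$ by \cite[Proposition 3.15]{AmZe14}. We also have $\mathcal{\bf S}_{\alpha_0}^{-1/2}\chi_m\to\phi$, and the images $\Wick[\eps]{F_V}\mathcal{\bf S}_{\alpha_0}^{-1/2}\chi_m$ form a Cauchy sequence by the operator bound. Closedness then gives $\phi\in D(\Wick[\eps]{F_V})$ and
\[
\norm{\Wick[\eps]{F_V}\phi}\leq C'\norm{\e^{\alpha\lambda^{\nbop[1]}}\liftop{\sqrt{\eps}}V}\,\norm{\mathcal{\bf S}_{\alpha_0}^{1/2}\phi}.
\]
This closure step is the only delicate point. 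Essential self-adjointness and the core property are already stated, so it amounts to checking that the operator bound in Proposition \ref{prop.differentiabilite_premier_op_unitaire_dans_Theta(t)} was really proved on a core, namely on $\mathcal{D}_{\alpha_0}$ through the inequality in its proof.

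Finally, we insert \eqref{image_fock_algebrique_propag_quad_inclus_dans_domaine_puissance_serie_entiere} into the right-hand side of the displayed bound and set $C=C'$. This gives \eqref{inegalite_cruciale_pour}. The right-hand side is finite for $\psi\in\focksalg$ by the argument at the end of the proof of Proposition \ref{propagateur_quad_envoie_Fock_algebrique_dans_domaine_racine_serie_entiere}: the series $\mathrm{g}_t(m+1)$ converges because the coefficients $\e^{-\alpha_0\lambda^k}$ decay super-exponentially, which dominates the factor $\e^{2\sqrt2 k\lambda_0^k\int\norm{V_2}}$ since $\lambda_0<\lambda$.
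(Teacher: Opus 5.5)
Your proof is correct and follows the paper's argument: the paper also bounds $\norm{\Wick[\eps]{F_V}\,U_2(t,0)\,\psi}$ by $\norm{\Wick[\eps]{F_V}\,\mathcal{\bf S}_{\alpha_0}^{-1/2}}\cdot\norm{\mathcal{\bf S}_{\alpha_0}^{1/2}\,U_2(t,0)\,\psi}$ and then inserts \eqref{Wick_Fv_est_borne_sur_domaine_racine_serie_entiere} and \eqref{image_fock_algebrique_propag_quad_inclus_dans_domaine_puissance_serie_entiere}. Your closedness argument is a valid way to justify the domain membership that the paper leaves implicit. It only needs the bound on $\focksalg$, on which $\mathcal{\bf S}_{\alpha_0}^{-1/2}$ acts bijectively, so the appeal to the core property is not actually needed.
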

	
	\begin{proof}
	 This is a consequence of the inequality 
	 \[
	 \norm{\Wick[\eps]{F_V} \, U_2(t,0) \, \psi} \leq \norm{\Wick[\eps]{F_V} \mathcal{\bf S}_{\alpha_0}^{-1/2}}\cdot \norm{\mathcal{\bf S}_{\alpha_0}^{1/2} \, U_2(t,0) \, \psi}
	 \] 
	 combined with the estimates \eqref{Wick_Fv_est_borne_sur_domaine_racine_serie_entiere} and \eqref{image_fock_algebrique_propag_quad_inclus_dans_domaine_puissance_serie_entiere}.
	\end{proof}
	
	\subsection{Complete asymptotic expansion}\label{subsect.End_Proof_Thm}
	
	In this subsection, we prove Theorem \ref{thm.general_asymptotique_complete}. Since the proof is very similar to the one of Theorem \ref{thm.Pphi2_asymptotique_complete}, we merely insist upon the main differences.
	
	\bigskip\noindent
	Let $V = \Big(V^{(j)}\Big)_{j \in \N} \in \focks$ such that $\liftop{\conjug}V = V$ and \ref{hyp.dom_superexpo} with constants $\alpha > 0$, $\lambda > 1$.
	Let $\psi \in \focksalg$ and $N \in \N$ be fixed. For $\delta(t)$ and $\big(b_j(t)\big)_{j \in \N}$ respectively given by \eqref{conditions_fonction_torsion_propagation_etats_coherents} and \eqref{relation_de_recurrence_verifiee_les_polynomes_de_Wick_de_l'asymptotique_complete}, we need to differentiate the quantity 
	\begin{equation*}
		\Theta_N(t) = \e^{\rmi\frac{t}{\eps}H_\eps}\, \e^{\rmi\frac{\delta(t)}{\eps}}\, \weylop[\eps]{-\rmi\frac{\sqrt{2}}{\eps}\varphi_t}U_2(t,0) \, \psi_N(t),
	\end{equation*}
	where $H_\eps = \dG[\eps]{A} + \Wick[\eps]{F_V}$, $\varphi_t$ is the mild solution of \eqref{eq_classique_avec_partie_libre}, $U_2(t,0)$ is the quadratic propagator given by Theorem \ref{thm.dynamique_quad_applique_à_notre_potentiel} and 
	\begin{equation*}
		\psi_N(t) = \sum_{j = 0}^N \eps^{\frac{j}{2}}\, \Wick[1]{b_j(t)}\psi\,.
	\end{equation*}
	
	In view of Propositions \ref{prop.op_Weyl_preserve_presque_domaine_racine_serie_entiere} and \ref{propagateur_quad_envoie_Fock_algebrique_dans_domaine_racine_serie_entiere}, the vectors 
	\begin{equation}
		\chi_N(t) = \weylop[\eps]{-\rmi\frac{\sqrt{2}}{\eps}\tilde{\varphi}_t} \tilde{U}_2(t,0) \, \psi_N(t)
	\end{equation} 
	all belong to the space 
	\[
	\displaystyle\bigcap_{0 < \lambda^2 \alpha_0 < 2 \alpha} \mathcal{D}_{\alpha_0}\,,
	\] 
	on which the operator $B_t = \e^{\rmi\frac{t}{\eps}H_\eps}\, \e^{-\rmi\frac{t}{\eps} \dG[\eps]{A}}$ is differentiable by Proposition \ref{prop.differentiabilite_premier_op_unitaire_dans_Theta(t)}. Eventually, we can differentiate the modified expression \eqref{reecriture_Theta(t)} of $\Theta_N(t)$. Using Proposition \ref{prop.quantification_Wick_polynomiale} as well as the identities \eqref{auxiliary_identities_to_differentiate_Theta(t)} and the translation property \eqref{translation_Wick_non_poly}, we obtain 
	\begin{equation}
		\dot{\Theta}_N(t) = \frac{\rmi}{\eps}\, \e^{\rmi\frac{t}{\eps} H} \, \e^{\rmi\frac{\delta(t)}{\eps}}\, \weylop[\eps]{-\rmi\frac{\sqrt{2}}{\eps}\varphi_t} \cdot \left(\Wick[\eps]{f_t} \, U_2(t,0) \, \psi_N(t) - \rmi \, \eps \, U_2(t,0) \dot{\psi}_N(t)\right)
	\end{equation}
	for all $\eps \in (0,1/3]$, where $f_t(z)$ is defined by \eqref{expression_auxiliaire_dans_Wick_pour_derivation_Theta(t)}. Now, using the expansion \eqref{developpement_Fv_autour_d'un_point_ordre_2} of $F_V$ around $z_0 = \varphi_t$ and the definition \eqref{conditions_fonction_torsion_propagation_etats_coherents} of $\delta(t)$, we get 
	\begin{multline}\label{expression_propre_derivee_Theta(t)_general}
	 \dot{\Theta}_N(t) = \frac{\rmi}{\eps}\, \e^{\rmi\frac{t}{\eps}H_\eps}\, \e^{\rmi\frac{\delta(t)}{\eps}}\, \weylop[\eps]{-\rmi\frac{\sqrt{2}}{\eps}\varphi_t} \times\\ \left(\Wick[\eps]{F_{R_3(t)}} \, U_2(t,0) \, \psi_N(t) - \rmi \, \eps \, U_2(t,0) \dot{\psi}_N(t)\right)\,,
	\end{multline}
	where 
	\begin{equation}
		V_\ell^{(j)}(t) = \symop[\ell]\big(\bra{\tensn{(\varphi_t + \conjug \varphi_t)}{j-\ell}} \ot (\tensn{\1}{\ell})\big)V^{(j)} \in \tensymn{\hi}{\ell}\,,
	\end{equation}
	\begin{equation}\label{def_potentiels_dev_Taylor_propagation_QFT_general}
		V_\ell(t) = \sum_{j=\ell}^\infty\, \sqrt{\frac{\ell!}{j!}}\,\binom{j}{\ell}\, V_\ell^{(j)}(t)
	\end{equation} 
	and $R_3(t) = \displaystyle\bigoplus_{\ell=3}^\infty V_\ell(t)$. Furthermore, for all $N \in \N$ and $0 \leq j \leq N$, we have the formula 
	\[\Wick[\eps]{F_{R_3(t)}} = \sum_{\ell = 3}^{N+2-j} \eps^{\frac{\ell}{2}}\, \Wick[1]{F_{V_\ell(t)}} + \Wick[\eps]{F_{R_{N+3-j}(t)}}
	\] 
	where $R_{N+3-j}(t) = \displaystyle\sum_{\ell = N+3-j}^\infty V_\ell(t) \in \displaystyle\bigoplus_{\ell = N+3-j}^\infty \tensymn{\hi}{\ell}$. 
	Using the explicit expression \eqref{ansatz_psiN_derivation_Theta(t)} of $\psi_N(t)$ and Theorem \ref{thm.Breteaux_conjug_Wick_propag_quad}, as well as the definition \eqref{relation_de_recurrence_verifiee_les_polynomes_de_Wick_de_l'asymptotique_complete} of the polynomial symbols $b_k(t)$, we thus obtain 
	\begin{multline}\label{expression_finale_derivee_Theta(t)_general}
	 \dot{\Theta}_N(t) = \frac{\rmi}{\eps}\, \e^{\rmi\frac{t}{\eps}H_\eps}\, \e^{\rmi\frac{\delta(t)}{\eps}}\, \weylop[\eps]{-\rmi\frac{\sqrt{2}}{\eps}\varphi_t}\times\\
	 \left(\sum_{j = 0}^N \eps^{\frac{j}{2}}\,\Wick[\eps]{F_{R_{N+3-j}(t)}}\, U_2(t,0)\, \Wick[1]{b_j(t)} \psi\right)\,.
	\end{multline}
	
	 Let us now conclude the proof of Theorem \ref{thm.general_asymptotique_complete}. The left-hand side of \eqref{asymptotique_complete} is dominated by 
	\begin{equation*}
		\int_0^{\abs{t}} \norm{\dot{\Theta}_N(s)}\, \dd s \leq \frac{1}{\eps} \int_0^{\abs{t}} \sum_{j = 0}^N \eps^{\frac{j}{2}} \, \norm{\Wick[\eps]{F_{R_{N+3-j}(s)}}\, U_2(s,0)\ \Wick[1]{b_j(s)} \psi}\, \dd s\,.
	\end{equation*}
	Since $V$ satisfies \ref{hyp.dom_superexpo} with constants $\alpha > 0$, $\lambda > 1$, the sequence $\Big(V_\ell(t)\Big)_{\ell \in \N}$ satisfies \ref{hyp.dom_superexpo} with constants $0 < \gamma < \alpha$ and $\lambda > 1$. 

\medskip\noindent
	Thus, we can apply \eqref{inegalite_cruciale_pour} to the potentials $R_{N+3-j}(s) \in \displaystyle\bigoplus_{n = N+3-j}^{+\infty} \tensymn{\hi}{n}$ and to the functions $\Wick[1]{b_j(s)}\psi \in \focksalg$, which yields 
	\begin{align*}
		\norm{\Wick[\eps]{F_{R_{N+3-j}(s)}} \, U_2(s,0) \Wick[1]{b_j(s)} \psi}^2 \leq & C\, \eps^{N+3-j} \norm{\e^{\gamma \lambda^{\nbop[1]}} R_{N+3-j}(s)}^2\times\\ &\bigg(\norm{\mathrm{\bf g}_s^{1/2}\, \Wick[1]{b_j(s)} \psi}^2 \\&+ c \,\mathrm{g}_s'(c) \int_0^{\abs{s}} \norm{V_2(\tau)} \dd \tau \norm{\Wick[1]{b_j(s)} \psi}^2\bigg)\\
		= & C'(s,\psi,N,j) \, \eps^{N+3-j} 
	\end{align*}
	with $C'(s,\psi,N,j)$ continuous with respect to $s$ and $\mathrm{g}_s$ given by \eqref{def_serie_entiere_dependant_du_temps_pour_majoration_cruciale_asymptotique_complete_QFT}. Recall that $\mathrm{\bf g}_s$ is defined like in \eqref{fct_analytique_op_nombre}. Eventually, the left-hand side of \eqref{asymptotique_complete} is dominated by \[\frac{1}{\eps} \int_0^{\abs{t}} \sum_{j = 0}^N \eps^{\frac{j}{2}} \, C(s,\psi,N,j) \cdot \eps^{\frac{N+3-j}{2}} \dd s = C(t,\psi,N) \cdot \eps^{\frac{N+1}{2}}\,,
	\]
	which concludes the proof of Theorem \ref{thm.general_asymptotique_complete}. 
	\hfill$\square$
	
	\begin{appendices}
	
	\section{Weyl operators and Number estimates}\label{app.weyl_op_nb_domain}
	
	\noindent The aim of this appendix is to establish the Number-Weyl estimates \eqref{controle_norme_methode_commut}. Such estimates are standard, see for instance \cite[Section 3.5]{DeGe99} for further discussion. However, the proof of Proposition \ref{prop.op_Weyl_preserve_presque_domaine_racine_serie_entiere} requires applying \eqref{controle_norme_methode_commut} with a sharp quantitative bound tailored to our needs. For this reason, we prefer to include a detailed proof. 
	
	\medskip
	Let us denote by 
	\begin{equation*}\label{def_right_complex_half_plane}
	 \C_+ = \{z \in \C, \Re(z) \geq 0\}
	\end{equation*}
	the closed right half-plane. For all $z \in \C_+$, let $D\Big((\nbop[1]+1)^z\Big)$ be the domain of the closure of the operator $\Big((\nbop[1]+1)^z,\focksalg\Big)$.
	
	\smallskip

	\begin{theorem}\label{thm.op_de_Weyl_preserve_domaines_puissances_nombre}
	 Let $z \in \C_+$, $u \in \hi\, \setminus\, \{0\}$ and $0 < \eps \leq 1$ be fixed. Then, the Weyl operator $\weylop[\eps]{u}$ preserves the domain $D\Big((\nbop[1]+1)^z\Big)$. Furthermore, we have an estimate of the form 
	 \begin{equation}\label{controle_norme_methode_commut}
	 \norm{(\nbop[1]+1)^z\, \weylop[\eps]{u}\, (\nbop[1]+1)^{-z}} \leq M(\normhi{u})\, \normhi{u}^{\Re(z)}\, 2^{\Re(z)^2}\,,
	 \end{equation}
	 where $M(\normhi{u})$ is an $\eps$-independent constant.
	\end{theorem}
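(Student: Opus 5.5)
We first reduce to real exponents, then prove the bound for integer exponents, and finally interpolate. Write $z = a + \rmi b$ with $a = \Re(z) \geq 0$. Since $(\nbop[1]+1)^{\rmi b}$ is unitary and commutes with $(\nbop[1]+1)^{a}$, the imaginary part plays no role in norms. More precisely, $\norm{(\nbop[1]+1)^z\, \weylop[\eps]{u}\, (\nbop[1]+1)^{-z}}$ equals the norm of $(\nbop[1]+1)^{a}\, \weylop[\eps]{u_b}'\, (\nbop[1]+1)^{-a}$ with a conjugated Weyl operator. Conjugating by $(\nbop[1]+1)^{\rmi b}$ does not preserve the Weyl form, so we avoid this route. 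Instead we use Stein–Hadamard three-lines interpolation on the analytic family $\zeta \mapsto (\nbop[1]+1)^{\zeta}\,\weylop[\eps]{u}\,(\nbop[1]+1)^{-\zeta}$, tested against vectors of $\focksalg$. This reduces everything to bounds on the vertical lines $\Re\zeta = k$, $k\in\N$. On such a line, $(\nbop[1]+1)^{\rmi b}$ is unitary on both sides, so the bound for integer $k$ gives the bound on the whole line. Interpolating between $k=\lfloor a\rfloor$ and $k+1$ yields the estimate for $a$, with constant $C_k^{1-\theta}C_{k+1}^{\theta}$. Any estimate of the form $M\,\normhi{u}^{k}\,2^{k^2}$ is log-convex in $k$, so it interpolates to the same form at real $a$ (up to enlarging $M$). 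Domain preservation then follows by a closure argument, since $\focksalg$ is a core for $(\nbop[1]+1)^z$.

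The core is the integer case. Set $W_s = \weylop[\eps]{s u}$ and $f_k(s) = \norm{(\nbop[1]+1)^{k/2}\, W_s\, \psi}^2$ for $\psi \in \focksalg$; Proposition~\ref{prop.quantification_Wick_polynomiale} (core $\wickbigcore$) makes this finite. Using half-integer steps $k/2$ is convenient, and we then take $k=2\lceil a\rceil$. We differentiate in $s$. By the translation formula \eqref{translation_Wick_poly}, $W_s^{*}(\nbop[1]+1)^{m}W_s = \Wick[\eps]{(\abs{\bullet + \rmi s u/\sqrt 2}^2\ldots)}$, or, more directly, $\frac{\dd}{\dd s}W_s^*(\nbop[1]+1)^kW_s = \rmi\,W_s^*[\fieldop[\eps]{u},(\nbop[1]+1)^k]W_s$. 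The commutator $[\fieldop[\eps]{u},(\nbop[1]+1)^k]$ shifts the particle number by $\pm1$, with coefficients $(n+1)^k-n^k$ or $(n+2)^k - (n+1)^k$, both of size $\leq k (n+2)^{k-1}$. Combined with $\norm{\anni[\eps]{u}\psi^{(n)}}\leq\sqrt{\eps n}\,\normhi{u}\norm{\psi^{(n)}}$ and $\eps\leq1$, this gives the differential inequality
\[
\abs{f_k'(s)} \leq C\, k\, \normhi{u}\, 2^{k}\, \big(f_k(s) + f_{k-1}(s)\big),
\]
obtained after absorbing $(n+2)^{k-1/2}\leq 2^{k}(n+1)^{k-1/2}$ and using Cauchy–Schwarz between the $(k-1)/2$ and $k/2$ weights. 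Grönwall in $s\in[0,1]$, together with induction on $k$, then yields $f_k(1) \leq M_k\, \norm{(\nbop[1]+1)^{k/2}\psi}^2$. Here $\log M_k$ grows like $\sum_{j\leq k} j\log 2 \sim k^2\log2/2$ plus $k\log\normhi{u}$ terms. Translating back to exponent $a=k/2$ gives a bound of the form $M(\normhi u)\,\normhi{u}^{a}\,2^{a^2}$, with $a^2 = k^2/4$. This matches the $2^{\Re(z)^2}$ in \eqref{controle_norme_methode_commut}.

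The main obstacle is the bookkeeping of the constants. The claimed bound has a specific shape: a factor $\normhi{u}^{\Re z}$ and a Gaussian $2^{\Re(z)^2}$, with $M$ independent of $\eps$ and $k$. Grönwall naturally produces $\e^{Ck2^k\normhi u}$, which would be far worse. So we do not use Grönwall crudely. We will instead expand $W_1$ into the series $\sum_m \frac{\rmi^m}{m!}\fieldop[\eps]{u}^m$ on $\focksalg$, as in the proof of Proposition~\ref{prop.op_Weyl_preserve_presque_domaine_racine_serie_entiere}, and estimate sector by sector. On $\psi^{(n)}$, each term maps to sectors $\leq n+m$. The weight ratio $(n+m+1)^{a}/(n+1)^{a} \leq (1+m)^{a}$ is handled by $(1+m)^a \leq 2^{a^2}\,\e^{c m}$-type splittings, for example $(1+m)^a\leq \max(2^{a^2}, (1+m)^{a}\mathbb 1_{m\geq 2^{a}})$. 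The factor $\frac{(\sqrt{2\eps}\normhi u)^m\sqrt{(n+m)!/n!}}{m!}$ must then be summed; its $n$-dependence is the delicate part. Since this naive series does not give an $n$-uniform bound, we fall back on the commutator/Grönwall route with the sharper weight $(1+n/(1+\ldots))$. We expect $M(\normhi u)$ to absorb the remaining $u$-dependence, and we would accept a constant of the form $\e^{C\normhi u^2}$.
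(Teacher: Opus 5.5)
\textbf{There is a genuine gap in the integer case.}

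Your outer structure is fine and coincides with the paper's. You use Hadamard three-lines interpolation between the lines $\Re\zeta=k$ and $\Re\zeta=k+1$, on which $(\nbop[1]+1)^{\rmi y}$ is unitary. Convexity of $k\mapsto k^2$ costs only a factor $2^{1/4}$. Domain preservation follows from closedness.

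The gap is the integer case, which carries all the content of the theorem. The inequality you derive, $\abs{f_k'}\le Ck\normhi{u}2^k(f_k+f_{k-1})$, couples $f_k$ to itself with a coefficient of size $k2^k$. Grönwall on $s\in[0,1]$ then only gives $f_k(1)\le \e^{C'k2^k\normhi{u}}f_k(0)$. Your first claim, that $\log M_k\sim\sum_{j\le k}j\log 2$, therefore does not follow from it.

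This bound is doubly exponential in $k$, so it is not of the form $M(\normhi{u})\normhi{u}^{k}2^{k^2}$ with $M$ independent of $k$. The difference matters: Proposition~\ref{prop.op_Weyl_preserve_presque_domaine_racine_serie_entiere} sums this bound against $\e^{-\alpha_0\lambda^k}$ for an arbitrary $\lambda>1$, and that summation fails for such growth when $\lambda\le 2$.

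Neither fallback repairs this. The term-by-term series estimate loses unitarity and is not uniform in the particle number, as you observe yourself. The ``sharper weight'' is never specified. So nothing in the proposal actually produces a $k$-uniform Gaussian bound.

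The missing idea, which the paper uses, is to replace the ODE in $s$ by an exact algebraic identity. By the translation formula, $\weylop[\eps]{u}^{*}(\nbop[1]+1)^k\weylop[\eps]{u}=A^k$ on $\wickbigcore$, with $A=\nbop[1]+1+\fieldop[\eps]{\rmi u}+\frac{\eps}{2}\normhi{u}^2$. Hence the integer-case norm equals $\norm{T_k}$, where $T_k=A^kB^k$ and $B=(\nbop[1]+1)^{-1}$.

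Iterating $\comm{A}{B}=-\rmi B\fieldop[\eps]{u}B$ gives $\ad[B]^j(A)=(-\rmi)^jB^j\fieldop[\eps]{u_j}B^j$. Leibniz' formula $AB^k=\sum_{j}\binom{k}{j}B^{k-j}\ad[B]^j(A)$ then yields $T_{k+1}=T_kT_1+\sum_{j=1}^k(-\rmi)^j\binom{k}{j}T_k\fieldop[\eps]{u_j}B^{j+1}$. It follows that $\norm{T_{k+1}}\le\norm{T_k}\big(\norm{T_1}+2^k\sqrt{2}\normhi{u}\big)$. Each step costs a factor $O(2^k)$ rather than $\e^{O(2^k)}$, and the product over $j<k$ is $O(2^{k^2}\normhi{u}^k)$ with a constant depending only on $\normhi{u}$.

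Alternatively, you can keep your $s$-derivative, provided you do not bound $(n+2)^k-(n+1)^k$ by $k(n+2)^{k-1}$. Instead:
\begin{itemize}
\item expand $(\nbop[1]+2)^k-(\nbop[1]+1)^k=\sum_{i<k}\binom{k}{i}(\nbop[1]+1)^i$;
\item use $\norm{(\nbop[1]+1)^{i/2}\anni[\eps]{u}\chi}\le\sqrt{\eps}\normhi{u}\norm{(\nbop[1]+1)^{(i+1)/2}\chi}$;
\item use $\sqrt{f_if_{i+1}}\le\sqrt{f_i}\sqrt{f_k}$ for $i<k$.
\end{itemize}
This gives $\dv{s}\sqrt{f_k}\le C\normhi{u}\sum_{i<k}\binom{k}{i}\sqrt{f_i}$. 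This is a triangular system with no self-coupling. Integrating it by induction on $k$ gives constants of order $D^k\,k!$, well within $2^{k^2}$.
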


	 \medskip
		
	\noindent We state a technical lemma that provides the estimate \eqref{controle_norme_methode_commut} for $z \in \N$. Then, we deduce the general case thanks to a complex interpolation argument.
	
	\subsection*{Proof for integers}
	
	\begin{lemma}\label{lemma.op_Tk_borne}
	 For all $u \in \hi\, \setminus\, \{0\}$, $k \in \N$ and $0 < \eps \leq 1$, the operator 
	 \begin{equation}\label{def_opérateur_Tk}
	 T_k = \left(\nbop[1] + 1 + \fieldop[\eps]{\rmi u} + \frac{\eps}{2}\, \normhi{u}^2\right)^k\, (\nbop[1] + 1)^{-k}
	 \end{equation}
	 is bounded on $\focks$. Furthermore, we have an estimate of the form 
	 \begin{equation}\label{premier_controle_norme_methode_commut}
	 \norm{T_k} \leq M'(\normhi{u})\, \normhi{u}^k\, 2^{k^2}\,,
	 \end{equation}
	 where $M'(\normhi{u})$ is an $\eps$-independent constant.
	 \end{lemma}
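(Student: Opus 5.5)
We will prove boundedness of $T_k$ by expanding $X_u := \nbop[1] + 1 + \fieldop[\eps]{\rmi u} + \frac{\eps}{2}\normhi{u}^2$ raised to the power $k$ and commuting every field operator to the right of the number operators. Write $N := \nbop[1]+1$ and $\Phi := \fieldop[\eps]{\rmi u} + \frac{\eps}{2}\normhi{u}^2$, so that $X_u = N + \Phi$. The identity we will exploit is that $\Phi$ is a Wick polynomial of total order at most one with $\eps$-dependent coefficients. The relevant commutator is
\[
[N,\fieldop[\eps]{\rmi u}] = \frac{1}{\sqrt2}\Big(\crea[1]{\rmi u}\sqrt{\eps} - \sqrt{\eps}\,\anni[1]{\rmi u}\Big),
\]
since $\crea[\eps]{v} = \sqrt{\eps}\,\crea[1]{v}$. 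This commutator is again a field-type operator of order one in creation and annihilation operators. Repeated commutators with $N$ therefore stay within the span of $\crea[\eps]{\rmi u}$ and $\anni[\eps]{\rmi u}$, with coefficients $\pm1$.

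Expanding $(N+\Phi)^k$ gives $2^k$ words in the letters $N$ and $\Phi$. For each word containing $j$ copies of $\Phi$, we move all $N$'s to the right. Each move past a $\Phi$ produces a commutator, which is a creation or annihilation term of the same size, so the number of generated terms per word is at most combinatorial in $k$. A crude count gives at most $k!\le 2^{k^2}$ terms, or $\binom{k}{j}$-type bounds per step. The result is a finite sum of terms of the form $P_j\, N^{k-j'}$, where $P_j$ is a product of $j$ operators, each either $\crea[\eps]{\rmi u}$, $\anni[\eps]{\rmi u}$ or the scalar $\frac{\eps}{2}\normhi{u}^2$. This needs $j'\le j$; in fact the power of $N$ left on the right is exactly $k-j$.

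Each such term is then controlled by the number estimates \eqref{number_estimate_Wick_poly}. Indeed, $P_j$ is a Wick polynomial of total order at most $j$: by the composition formula \eqref{formule_compo_Wick_poly}, reordering only lowers the degree and produces factors $\eps\normhi{u}^2$. The kernel norms are bounded by $\normhi{u}^j$ and $\eps\le 1$. Hence $\norm{P_j N^{-j}}\le C(j)\,\max(1,\normhi{u})^{j}$, and each term satisfies $\norm{P_jN^{k-j}N^{-k}} = \norm{P_j N^{-j}}$. To stay $\eps$-independent we only use $\eps\le1$.

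Summing the terms gives $\norm{T_k}\le \sum_j (\#\text{terms})\,C(j)\max(1,\normhi{u})^j$. We will aim for the stated form $M'(\normhi{u})\normhi{u}^k 2^{k^2}$, for instance with $M'$ of the shape $\max(1,\normhi{u}^{-1})^k\cdot$const, absorbed appropriately. If that is too crude, we allow $M'$ to depend on $\normhi{u}$ through $\max(\normhi{u},\normhi{u}^{-1})$, choosing $M'$ so that the factor becomes $\normhi{u}^k$ times something $\le 2^{k^2}$ uniformly in $k$.

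The main obstacle is bookkeeping the constants $C(j)$ from \eqref{number_estimate_Wick_poly} together with the number of words. We need them to combine into at most $2^{k^2}$. The constants $C(p,q,\alpha,\beta)$ grow roughly like $\sqrt{(j)!}$, and the term count is at most $2^k\,k!$, so $2^k k!\sqrt{k!}\le c\,2^{k^2}$ is comfortably true. That is the inequality to check. A rigorous justification on $\focksalg$, followed by extension by density, completes the argument.
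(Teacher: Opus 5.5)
Your argument is correct in substance, but it takes a genuinely different route from the paper.

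\textbf{The paper's route.} The paper never expands $(N+\Phi)^k$ into words. Setting $A=N+\Phi$ and $B=N^{-1}$, it computes $\ad[B]^j(A)=(-\rmi)^jB^j\fieldop[\eps]{u_j}B^j$ with $u_{j+1}=\rmi u_j$. The Leibniz formula $AB^k=\sum_{j}\binom{k}{j}B^{k-j}\ad[B]^j(A)$ then turns $T_{k+1}=A^k(AB^k)B$ into $T_kT_1$ plus $\sum_{j\ge1}(-\rmi)^j\binom{k}{j}\,T_k\,\fieldop[\eps]{u_j}B^{j+1}$. Combined with the single bound $\norm{\fieldop[\eps]{v}N^{-1/2}}\le\sqrt{2\eps}\,\normhi{v}$, this gives the recursion $\norm{T_{k+1}}\le\norm{T_k}\,(\norm{T_1}+2^k\sqrt2\,\normhi{u})$. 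Boundedness follows by induction. Iterating gives $\norm{T_k}\le\norm{T_1}\,e^{a/b}\,2^{k^2/2}b^k$ with $b=\sqrt2\,\normhi{u}$, so the factor $\normhi{u}^k$ and the $\normhi{u}$-dependence of $M'$ come out with no combinatorics at all.

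\textbf{What your route costs and buys.} Normal-ordering every $N$ to the right and applying number estimates term by term requires counting words and commutator patterns. It also requires the explicit growth in the order of the constants in \eqref{number_estimate_Wick_poly}, which the cited proposition does not state. In exchange you get the much sharper growth $2^{O(k\log k)}$. That slack against $2^{k^2}$ is exactly what lets you absorb factors like $\max(\normhi{u},\normhi{u}^{-1})^k$ into a $k$-independent $M'(\normhi{u})$.

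\textbf{Details to fix (none fatal).}
\begin{enumerate}
\item Taking the commutator term in $NL=LN+[N,L]$ consumes that $N$. A term with $c$ commutators therefore ends with $N^{k-j-c}$, not exactly $N^{k-j}$. This only helps, since $\norm{P_jN^{-j-c}}\le\norm{P_jN^{-j}}$. What you actually need is that the exponent of $N^{-1}$ be at least $j/2$, so your condition $j'\le j$ is stated backwards.
\item Letters equal to the scalar $\frac{\eps}{2}\normhi{u}^2$ contribute $\normhi{u}^2$. The per-term bound is therefore $\max(1,\normhi{u})^{2j}$, not $\max(1,\normhi{u})^{j}$; this is again absorbed by the slack.
\item The bookkeeping is cleanest as follows:
\begin{itemize}
\item Split $\Phi$ into $\crea[\eps]{\rmi u}/\sqrt2$, $\anni[\eps]{\rmi u}/\sqrt2$ and the scalar.
\item Use $N\crea[\eps]{v}=\crea[\eps]{v}(N+1)$ and $N\anni[\eps]{v}=\anni[\eps]{v}(N-1)$ to write each of the $4^k$ words as $P_j\prod_{i=1}^{k-j}(N+m_i)$ with $\abs{m_i}\le k$.
\item Bound $P_j$ sector by sector rather than Wick-reordering it; this avoids needing the growth of $C(p,q,\alpha,\beta)$.
\end{itemize}
This gives $\norm{T_k}\le 4^k(k+1)^k k^{k/2}\max(1,\normhi{u})^{2k}$. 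That is $\normhi{u}^k$ times $4^k(k+1)^kk^{k/2}\max(\normhi{u},\normhi{u}^{-1})^k$, and the second factor is at most $M'(\normhi{u})\,2^{k^2}$ uniformly in $k$, which is the required form.
\end{enumerate}
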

	 
	 \medskip
	 
	 Before jumping into the proof of this result, let us explain how to obtain Theorem \ref{thm.op_de_Weyl_preserve_domaines_puissances_nombre} in the case $z = k \in \N$. We recall that the space $\wickbigcore$ defined by \eqref{def_bigcore_Wick_poly} is a core for the Wick operators $\Wick[\eps]{b},\, b \in \wickpol$. In particular, we have $\wickbigcore \subset \fieldop[\eps]{u}$ and $\wickbigcore \subset D(\nbop[\eps]^k)$ for all $u \in \hi$, $k \in \N$. The translation property \eqref{translation_Wick_poly} yields 
	 \begin{equation}
	 \weylop[\eps]{u}^{*} \fieldop[\eps]{v} \weylop[\eps]{u} = \fieldop[\eps]{v} + \eps\, \Im\braket{u}{v} \1
	 \end{equation}
	 and 
	 \begin{equation}
	 \weylop[\eps]{u}^{*} \nbop[\eps] \weylop[\eps]{u} = \nbop[\eps] + \eps\, \fieldop[\eps]{\rmi u} + \frac{\eps^2}{2}\, \normhi{u}^2 \1
	 \end{equation}
	 on $\wickbigcore$. Hence, the operators $\weylop[\eps]{u}^{*} \fieldop[\eps]{v} \weylop[\eps]{u}$ and $\weylop[\eps]{u}^{*} \nbop[\eps] \weylop[\eps]{u}$ preserve $\focksalg$, so $\fieldop[\eps]{v}$ and $\nbop[1]$ preserve $\wickbigcore$. This enables us to write, for all $k \in \N$, 
	 \begin{equation}
	 \weylop[\eps]{u}^{*}\, (\nbop[1] + 1)^k\, \weylop[\eps]{u} = T_k\, (\nbop[1]+1)^k \text{ on $\wickbigcore$}
	 \end{equation}
	 with $T_k$ given by \eqref{def_opérateur_Tk}. Hence, for all $\psi \in \wickbigcore$ and $\varphi \in D(\nbop[\eps]^k)$ such that $\norm{\varphi} = 1$, we have 
	 \begin{equation*}
	 \abs{\braket{\weylop[\eps]{u}^{*}\, (\nbop[1] + 1)^k\, \varphi}{\psi}} \leq \norm{T_k}\, \norm{(\nbop[1] + 1)^k \psi}\,.
	 \end{equation*}
	 Extending this inequality to $\psi \in D(\nbop[\eps]^k)$ yields, for all $\chi \in \focks$, 
	 \begin{equation*}
	 \abs{\braket{(\nbop[1] + 1)^k \varphi}{\weylop[\eps]{u} (\nbop[1] + 1)^{-k}\, \chi}} \leq \norm{T_k} \norm{\chi}\,.
	 \end{equation*}
	 Finally, we obtain that $\weylop[\eps]{u}\, (\nbop[1] + 1)^{-k}\, \chi \in D\Big([(\nbop[1] + 1)^k]^{*}\Big) = D\Big((\nbop[1] + 1)^k\Big)$, and the estimate \eqref{controle_norme_methode_commut} for $z = k \in \N$ follows from \eqref{premier_controle_norme_methode_commut}.
	
	\bigskip
	
	We now turn to the proof of Lemma \ref{lemma.op_Tk_borne}, which is inspired from the number-energy estimates \cite[Section 3.5]{DeGe99} (see also \cite[Proposition 3.5]{AmFaOl23} for an application of such estimates in another context). Let us define \[A = \nbop[1] + 1 + \fieldop[\eps]{\rmi u} + \frac{\eps}{2}\, \normhi{u}^2 \text{ and } B = (\nbop[1] + 1)^{-1}\] so that $T_k = A^k B^k$. We define recursively 
			\begin{equation}
				\left\{\begin{array}{l}
					\ad[B]^0(A) = A,\\
					\forall\, j \in \N, \quad \ad[B]^{j+1}(A) = \comm{\ad[B]^j(A)}{B}\,.
				\end{array}\right.
			\end{equation}
			
			From the formulas $\comm{\fieldop[\eps]{v}}{\nbop[\eps]} = \rmi \, \eps\, \fieldop[\eps]{\rmi v}$ and \[\comm{A}{B} = B \comm{B^{-1}}{A} B\,,\] we deduce that the equality \begin{equation}\label{expression_commut_methode_des_commutateurs}
				\ad[B](A) = \comm{A}{B} = -\rmi\, B\, \fieldop[\eps]{u} B
			\end{equation}
			holds on $\wickbigcore$ and extends to $\mathcal{L}(\focks)$. This enables us to obtain by induction on $j \in \N^*$ the formula \begin{equation}\label{expression_commut_multiple_methode_des_commutateurs}
				\ad[B]^j(A) = (-\rmi)^j\, B^j\, \fieldop[\eps]{u_j} B^j \text{ where } u_1 = u \text{ and } u_{j+1} = \rmi\, u_j\,.
			\end{equation}
		We now show that the operators $T_k,\, k \in \N^*$ are bounded. For $k = 1$, the inequality \begin{equation}\label{inégalité_norme_champ_et_nombre}
		 \norm{\fieldop[\eps]{u} \psi}^2 \leq 2\, \eps\, \normhi{u}^2 \norm{(\nbop[1]+1)^{1/2}\, \psi}^2
		\end{equation} yields $\norm{T_1} \leq Q(\normhi{u})^{1/2}$ where $Q$ is an $\eps$-independent polynomial provided that $0 < \eps \leq 1$. Furthermore, the Leibniz formula 
		\begin{equation}
				A B^k = \sum_{j = 0}^k \binom{k}{j}\, B^{k-j}\, \ad[B]^j(A)
			\end{equation}
			and the formula \eqref{expression_commut_multiple_methode_des_commutateurs} allow us to write \begin{align*}
				T_{k+1} = A^{k+1} B^{k+1} = &A^k (A B^k) B\\
				= &A^k B^k A B + \sum_{j=1}^k \binom{k}{j} A^k B^{k-j}\ \ad[B]^j(A)\, B\\
				= &T_1 T_k + \sum_{j=1}^k\, (-\rmi)^j\, \binom{k}{j}\, T_k\, \fieldop[\eps]{u_j}\, B^{j+1}\,.
			\end{align*} 
			Hence, provided that $T_k$ is bounded, we obtain that $T_{k+1}$ is bounded with 
			\begin{align*}
			 \norm{T_{k+1}} \leq &\norm{T_k} \big(\norm{T_1} + \sum_{j=1}^k \binom{k}{j}\ \norm{\fieldop[\eps]{u_j}(\nbop[1] + 1)^{-1}}\big)\\
			 \leq &\norm{T_k}\big(Q(\normhi{u})^{1/2} + 2^k \cdot \sqrt{2} \, \normhi{u}\big) \text{ by \eqref{inégalité_norme_champ_et_nombre}}\,.
			\end{align*}
			From the last inequality, we deduce \eqref{premier_controle_norme_methode_commut}. Indeed, if we define 
			\begin{equation*}
			 a = Q(\normhi{u})^{1/2} \geq 0,\quad b = \sqrt{2}\, \normhi{u} > 0\,,
			\end{equation*}
			we have for $k \in \N^*$
			\begin{align*}
			 \log\big(\norm{T_k}\big) - \log\big(\norm{T_1}\big) \leq &\sum_{j=1}^{k-1}\, \log(a + 2^j\, b)\\
			 = &\sum_{j=1}^{k-1}\, \left[j\,\log(2) + \log(b) + \log\left(1 + 2^{-j}\, \frac{a}{b}\right)\right]\\
			 \leq &\,\frac{k^2}{2}\, \log(2) + k\, \log(b) + \frac{a}{b}\,,
			\end{align*}
			where we have used $\log(1+x) \leq x$. We finally obtain 
			\begin{equation*}
			 \norm{T_k} \leq M'(\abs{u})\, 2^{k^2}\, \normhi{u}^k\,,
			\end{equation*}
			whence \eqref{premier_controle_norme_methode_commut} for all $k \in \N^*$. This estimate extends to the case $k = 0$ up to picking $M'(\abs{u}) \geq 1$.
		
	\subsection*{Proof in the general case}
	
	Let $u \in \hi\, \setminus\, \{0\}$ and $k \in \N$ be fixed. For all $\varphi, \psi \in \focksalg$ with $\norm{\varphi} = 1$, the function \[F_{\varphi,\psi}(w) = \braket{\varphi}{(\nbop[1]+1)^w\, \weylop[\eps]{u}\, (\nbop[1]+1)^{-w}\, \psi}\] can be expressed as a finite linear combination of terms of the form 
	\[(n+1)^w\, (m+1)^{-w} \braket{\varphi^{(n)}}{\weylop[\eps]{u} \psi^{(m)}}, \quad m,n \in \N\,.\] 
	Hence, $F_{\varphi,\psi}$ is entire on $\C$. Applying the three-line Hadamard interpolation theorem \cite[Theorem~12.3, page~187]{Si11} and taking the supremum over $\varphi$, we obtain 
	\begin{equation*}
	 \norm{(\nbop[1]+1)^{k+w}\, \weylop[\eps]{u}\, (\nbop[1]+1)^{-(k+w)}} \leq \norm{T_k}^{1-\Re(w)} \norm{T_{k+1}}^{\Re(w)}
	\end{equation*}
	for all $w \in \C$ such that $0 \leq \Re(w) \leq 1$. This yields, for all $z \in \C$ with $\Re(z) \geq 0$, 
	\begin{equation*}
	 \norm{(\nbop[1]+1)^z\, \weylop[\eps]{u}\, (\nbop[1]+1)^{-z}} \leq \norm{T_k}^{1-x} \norm{T_{\ell+1}}^x \,,
	\end{equation*}
	where $k = \ent{\Re(z)}$ and $k + x = \Re(z)$. Applying \eqref{premier_controle_norme_methode_commut}, we finally obtain 
	\begin{align*}
	 \norm{(\nbop[1]+1)^z\, \weylop[\eps]{u}\, (\nbop[1]+1)^{-z}} \leq &M'(\normhi{u})\, \normhi{u}^{k(1-x)+(k+1)x}\, 2^{k^2(1-x)+(k+1)^2 x}\\
	 = &M'(\normhi{u})\, \normhi{u}^{\Re(z)}\, 2^{\Re(z)^2 + (x-x^2)}\\
	 \leq &M(\normhi{u})\, \normhi{u}^{\Re(z)}\, 2^{\Re(z)^2} \text{ where } M(\normhi{u}) = 2^{\frac{1}{4}}\, M'(\normhi{u})\,.
	\end{align*}
	This concludes the proof of Theorem \ref{thm.op_de_Weyl_preserve_domaines_puissances_nombre}.
	\end{appendices}

\medskip\noindent 
{\small\textbf{Acknowledgments: }{\scriptsize The first author
acknowledges the financial support of the ANR-DFG project ANR-22-CE92-0013. Some of the ideas developed in this work were discussed during the visit of the second and third authors to IRMAR, Université de Rennes 1, in January 2025, and we gratefully acknowledge their kind hospitality.}

\printbibliography

\end{document}